\newcommand{\refToRule}[1]{\textsc{\small (#1)}}
\newcommand{\refItem}[2]{\cref{#1}(\ref{#1:#2})}
\newcommand{\meta}[1]{\colorbox{lightgray}{$#1$}}
\newcommand{\Space}{\hskip 1em}
\newcommand{\BigSpace}{\hskip 2em}
\newcommand{\emath}[1]{\ensuremath{#1}} 
\newcommand{\ple}[1]{\emath{\langle #1 \rangle}}
\newcommand{\Pair}[2]     {\ple{{#1},{#2}}}
\newcommand{\Triple}[3]     {\ple{{#1}{,}{#2}{,}{#3}}}
\newcommand{\fv}[1]{\aux{fv}(#1)}
\newcommand{\dom}[1]{\aux{dom}(#1)}
\newcommand{\NamedRule}[4]{\scriptstyle{\textsc{(#1)}}\
\displaystyle                  
\frac{#2}{#3}         
\begin{array}{l}
#4     
\end{array}
}
\newcommand{\NamedCoRule}[4]{\scriptstyle{\textsc{(#1)}}\
\displaystyle                  
\genfrac{}{}{1.5pt}{0}{#2}{#3}         
\begin{array}{l}
#4     
\end{array}
}
\newcommand{\fun}[3]{\ensuremath{#1 : #2 \rightarrow #3}}
\newcommand{\RR}{\mathit{R}} 
\newcommand{\RSet}{\RC\RR}  
\newcommand{\RC}[1]{\emath{|#1|}} 
\newcommand{\rgr}{r}
\newcommand{\sgr}{s}
\newcommand{\tgr}{t} 
\newcommand{\rsum}{+}
\newcommand{\rmul}{{\cdot}}
\newcommand{\rzero}{\mathbf{0}}
\newcommand{\rone}{\mathbf{1}} 
\newcommand{\rord}{\preceq}
\newcommand{\RS}{\emath{\mathit{S}}}
\newcommand{\SSet}{\RC\RS} 
\newcommand{\N}{\mathbb{N}} 
\newcommand{\NN}{\emath{\mathsf{Nat}}}
\newcommand{\Triv}{\emath{\mathsf{Triv}}} 
\newcommand{\RRPos}{\mathsf{R}_{\ge 0}^\infty}
\newcommand{\RPos}{[0,\infty]}
\newcommand{\LL}{\mathsf{L}} 
\newcommand{\Extend}[1]{#1^{\infty}} 
\newcommand{\Int}[1]{\mathsf{Int}(#1)} 
\newcommand{\smprod}{\owedge} 
\newenvironment{grammatica}{$\begin{array}{lcll}}{\end{array}$}
\newcommand{\produzione}[3]{#1&::=&#2&\mbox{#3}}
\newcommand{\seguitoproduzione}[2]{&\mid&#1&\mbox{#2}}
\newcommand{\terminale}[1]{\texttt{#1}}
\newcommand{\aux}[1]{\mathsf{#1}}
\newcommand{\mvar}[1]{\mathit{#1}}
\newcommand{\produzioneinline}[2]{#1::=#2}
\newcommand{\e}{\mvar{e}}
\newcommand{\x}{\mvar{x}}
\newcommand{\y}{\mvar{y}} 
\newcommand{\z}{\mvar{z}}
\newcommand{\f}{\mvar{f}}
\newcommand{\unit}{\terminale{unit}}
\newcommand{\App}[2]{#1\, {\!#2}}
\newcommand{\Fun}[2]{\lambda #1.#2}
\newcommand{\PairExp}[4]{\langle^{#1}#2,#3\tensor*[^{#4}]{\rangle}{}}
\newcommand{\Inl}[2]{\terminale{inl}^{#1}#2}
\newcommand{\Inr}[2]{\terminale{inr}^{#1}#2}
\newcommand{\Case}[5]{\terminale{match}\ #1\ \terminale{with} \ \terminale{inl}\, #2 \rightarrow #3 \ \terminale{or} \ \terminale{inr}\, #4 \rightarrow #5}
\newcommand{\Match}[4]{\terminale{match}\  #3 \ \terminale{with}\ \langle#1,#2\rangle\rightarrow #4}
\newcommand{\RecFun}[3]{\terminale{rec}\,#1.\Fun{#2}{#3}}
\newcommand{\MatchUnit}[2]{\terminale{match}\ #1\ \terminale{with} \ \unit \ \rightarrow #2}
\newcommand{\Seq}[2]{#1\terminale{;}#2}
\newcommand{\Return}[1]{\terminale{return}\ #1} 
\newcommand{\Let}[3]{\terminale{let}\ #1 = #2\ \terminale{in}\ #3} 
\newcommand{\private}{\aux{private}}
\newcommand{\public}{\aux{public}}
\newcommand{\mutable}{\aux{mutable}}
\newcommand{\readonly}{\aux{readonly}}
\newcommand{\priv}{\aux{priv}}
\newcommand{\pub}{\aux{pub}}
\newcommand{\xtt}{\texttt{x}}
\newcommand{\ytt}{\texttt{y}}
\newcommand{\ptt}{\texttt{p}}
\newcommand{\un}{\terminale{u}}
\newcommand{\diverge}{\aux{div}}
\newcommand{\val}{\mathbf{v}}
\newcommand{\cctx}{\gamma}
\newcommand{\redval}[5]{\reduce{\ConfP{#1}{#2}}{#3}{\ConfP{#4}{#5}}}
\newcommand{\redvalS}[5]{\reduce{\ConfP{#1}{#2}}{#3}{\ConfP{#4}{#5}}}
\newcommand{\ve}{\mvar{v}}
\newcommand{\reduce}[3]{#1\Rightarrow_{#2}#3}
\newcommand{\reduceNarrow}[3]{#1{\Rightarrow_{#2}}#3}
\newcommand{\Conf}[2]{#1\mid#2}
\newcommand{\ConfP}[2]{#1{\mid}#2}
\newcommand{\env}{\rho}
\newcommand{\Subst}[3]   {#1[#2/#3]}
\newcommand{\AddToEnv}[4]{#1,\EnvElem{#2}{#3}{#4}}
\newcommand{\EnvElem}[3]{#1:\Pair{#2}{#3}}
\newcommand{\Unit}{\terminale{Unit}}
\newcommand{\Un}{\terminale{U}}
\newcommand{\funType}[3]{#1 \rightarrow_{#2} #3}
\newcommand{\PairT}[2]{#1\otimes #2}
\newcommand{\SumT}[2]{#1 + #2}
\newcommand{\T}{\mvar{T}} 
\newcommand{\ST}{\mvar{S}}
\newcommand{\Graded}[2]{{#1}^{#2}}
\newcommand{\GradedInd}[3]{#1_#2^{#3_#2}}
\newcommand{\ctxord}{\preceq}
\newcommand{\ctxsum}{+}
\newcommand{\ctxmul}{\rmul}
\newcommand{\VarType}[2]{#1:#2}
\newcommand{\VarGradeType}[3]{#1 :_{#2} #3}
\newcommand{\VarGrade}[2]{#1:#2}
\newcommand{\IsWFExp}[3]{#1\vdash#2:#3}
\newcommand{\IsWFEnv}[3]{#1\vdash#2 \triangleright #3}
\newcommand{\IsWFConf}[4]{#1\vdash\Conf{#2}{#3}:#4}
\newcommand{\VariantT}[2]{#1{:}#2}
\newcommand{\Variant}[3]{#1^{#2} #3}
\newcommand{\lab}{\ell}
\newcommand{\APairExp}[4]{[^{#1}#2,#3\tensor*[^{#4}]{]}{}}
\newcommand{\Proj}[2]{\pi_{#1} #2}
\newcommand{\Fst}[1]{\Proj{1}{#1}} 
\newcommand{\Snd}[1]{\Proj{2}{#1}} 
\newcommand{\APairT}[2]{#1\times #2}
\newcommand{\If}[3]{\terminale{if}\ #1\ \terminale{then}\ #2\ \terminale{else}\ #3}
\newcommand{\True}{\terminale{true}}
\newcommand{\even}{\terminale{even}}
\newcommand{\ev}{\terminale{ev}\,}
\newcommand{\tru}{\terminale{t}}
\newcommand{\bvar}{\mvar{b}}
\newcommand{\n}{\terminale{n}}
\newcommand{\nottt}{\terminale{not}}
\newcommand{\Succ}[2]{\App{\terminale{s}^{#1}}{#2}}
\newcommand{\zero}{\terminale{z}}
\newcommand{\CaseNat}[4]{\terminale{if-z(}#1, #2, \Succ{}{#3} {\rightarrow} #4\terminale{)}}
\newcommand{\comb}[2]{#2{.}#1}
\newcommand{\bv}{\mathcal{V}}
\newcommand{\bvPair}[2]{#2^{#1}}
\newcommand{\NW}{\aux{nw}}
\newcommand{\NoWaste}[2]{{#1}\ctxord^\star_{{\tiny\NW}}{#2}}
\newcommand{\NoNoWaste}[2]{#1\not\ctxord^\star_{{\tiny\NW}}#2}
\newcommand{\NWConf}[2]{\vdash_{{\tiny\NW}}\Conf{#1}{#2}}
\newcommand{\NWreduce}[3]{#1\Rrightarrow_{#2}#3}
\newcommand{\red}[5]{\NWreduce{\Conf{#1}{#2}}{#3}{\Conf{#4}{#5}}}
\newcommand{\envSum}{+}
\newcommand{\OKSum}[2]{#1\|#2}
\newcommand{\ZeroOut}[2]{#1{\subseteq}#2}
\newcommand{\Disjoint}[2]{#1{\|}#2}
\newcommand*{\added}[1][]{\cctx^{a}_{#1}}
\newcommand*{\used}[1][]{\cctx^{c}_{#1}}
\newcommand{\E}{\mvar{E}}
\newcommand{\Erase}[1]{\aux{erase}(#1)}
\newcommand{\Vars}{\aux{V}} 
\newcommand{\CCTX}[1]{\cctx_{#1}}
\newcommand{\graph}{\aux{G}}
\newcommand{\Nodes}[1]{V(#1)} 
\newcommand{\Graph}[1]{\graph_{#1}}
\newcommand{\Id}{\mathbb{I}}
\newcommand{\GraphStar}[1]{\graph^\star_{#1}}
\newcommand{\Start}[1]{\aux{in}(#1)}
\newcommand{\End}[1]{\aux{out}(#1)}
\newcommand{\weight}[2]{#1(#2)}
\newcommand{\p}{\mvar{p}}
\newcommand{\Paths}[3][]{\mathsf{P}\ifblank{#1}{}{_{\!\!#1}}(#2,#3)} 
\newcommand{\Pathsnz}[3]{\mathsf{P}_{\!\!\!{\ne}\rzero}^{#1}(#2,#3)}
\newcommand{\Gstar}[1]{{#1}^{\star}}
\newcommand{\reacharrowOne}{\rightarrow}
\newcommand{\reacharrow}{\xrightarrow\star}
\newcommand{\reachOne}[3]{#1\reacharrowOne_#2#3}
\newcommand{\reach}[3]{#1\reacharrow_#2#3}
\newcommand{\notReach}[3]{#1\not\reacharrow_#2#3}
\newcommand{\funEx}{\terminale{f}}
\newcommand{\Row}[2]{#1_#2}
\newcommand{\supp}[1]{\mathsf{S}_{#1}} 
\newcommand{\mtx}{M} 
\newcommand{\conf}{\mvar{c}}
\newcommand{\res}{\textsc{r}}
\newcommand{\ind}[3]{\vdash\NWreduce{#1}{#2}{#3}}
\newcommand{\gen}[3]{\vdash_\infty\NWreduce{#1}{#2}{#3}}
\newcommand{\WT}{\mathit{WT}}
\newcommand{\WTInfty}{\WT^\infty}
\newcommand{\IsWFconf}[3]{#1\vdash#2:#3}
\newcommand{\CSet}{\mvar{C}}
\newcommand{\erase}[1]{\aux{erase}(#1)}
\newcommand{\envE}{\tilde{\env}}
\newcommand{\NWreduceE}[2]{#1\Rrightarrow#2}
\newcommand{\length}[1]{{\mid}#1{\mid}}
\theoremstyle{plain}
  \newtheorem{theorem}{Theorem}[section]
  \newaliascnt{lemma}{theorem}
\newtheorem{lemma}[lemma]{Lemma}
\crefname{lemma}{Lemma}{Lemmas}
\newaliascnt{proposition}{theorem}
\newtheorem{proposition}[proposition]{Proposition}
\crefname{proposition}{Proposition}{Propositions}
\newaliascnt{corollary}{theorem}
\newtheorem{corollary}[corollary]{Corollary}
\crefname{corollary}{Corollary}{Corollaries}
\theoremstyle{definition}
\newaliascnt{definition}{theorem}
\newtheorem{definition}[definition]{Definition}
\crefname{definition}{Definition}{Definitions}
\newaliascnt{example}{theorem}
\newtheorem{example}[example]{Example}
\crefname{example}{Example}{Examples}
\newaliascnt{remark}{theorem}
\newtheorem{remark}[remark]{Remark}
\crefname{remark}{Remark}{Remarks}
\renewcommand{\cite}{\citet}
\begin{document} 
%
%
%

\title{Don't exhaust, don't waste}
\subtitle{Resource-aware soundness for big-step semantics}

\author{Riccardo Bianchini}
\orcid{0000-0003-0491-7652}
\affiliation{%
  \institution{University of Genoa}
  \city{}
  \country{Italy}
  \authoremail{riccardo.bianchini@edu.unige.it}
}

\author{Francesco Dagnino}
\orcid{0000-0003-3599-3535}
\affiliation{%
  \institution{University of Genoa}
  \city{}
  \country{Italy}
  \authoremail{francesco.dagnino@dibris.unige.it}
}

\author{Paola Giannini}
\orcid{0000-0003-2239-9529}
\affiliation{%
  \institution{University of Eastern Piedmont}
  \city{}
  \country{Italy}
  \authoremail{paola.giannini@uniupo.it}
}

\author{Elena Zucca}
\orcid{0000-0002-6833-6470}
\affiliation{%
  \institution{University of Genoa}
  \city{}
  \country{Italy}
  \authoremail{paola.giannini@uniupo.it}
}

\begin{abstract}
We extend the semantics and type system of a lambda calculus equipped with common constructs to be \emph{resource-aware}. That is, the semantics keeps track of the usage of resources, and is stuck, besides in case of type errors, if either a needed resource is exhausted, or a provided resource would be wasted.  In such way, the type system guarantees, besides standard soundness, that for well-typed programs there is a computation where no resource gets either exhausted or wasted. 

The extension is parametric on an arbitrary \emph{grade algebra}, modeling an assortment of possible usages, and does not require ad-hoc changes to the underlying language. To this end, the semantics needs to be formalized in big-step style; as a consequence, expressing and proving (resource-aware) soundness is challenging, and is achieved by applying recent techniques based on coinductive reasoning.  
\end{abstract}

\maketitle


\section{Introduction}\label{sect:introduction}
An increasing interest has been devoted in research to \emph{resource-awareness}, that is, to formal techniques for reasoning about how programs use external resources, such as files, locks, and memory.
The original inspiration for such techniques has been \emph{linear logic}, introduced by \cite{Girard87} in his seminal paper, where logical statements are considered resources which cannot be duplicated or discarded. The idea has been firstly carried to programming languages by \emph{substructural} type systems  \citep{Walker05}, particularly \emph{linear} and \emph{affine} ones. Such type systems enjoy an extended form of soundness which we call \emph{resource-aware soundness}; that is, they statically approximate not only the expected result of a program, but also how each external resource, handled through a name (free variable),  is used: either unused, or used exactly/at most once, in the linear/affine case, respectively, or in an unrestricted way.  Note that, differently from linear ones, affine (and unrestricted) resources are \emph{discardable}, meaning that there is no constraint that they should be actually used by the program.

More recently, the notion has been generalized to an arbitrary assortment of usages, formally modeled by \emph{grades}, elements of an algebraic structure (\emph{grade algebra)} defined in slightly different ways in  the  literature\footnote{Essentially variants of an ordered semiring.}  \citep{BrunelGMZ14,GhicaS14,McBride16,Atkey18,GaboardiKOBU16,AbelB20,OrchardLE19,WoodA22,DalLagoG22}. For instance, grades can be natural numbers counting how many times a resource is used, or even express non-quantitative properties, e.g., that a resource should be used with a given  access  level.  In this way, linear/affine type systems can be generalized to type systems parametric on the grade algebra, and the notion of resource-aware soundness can be generalized as well.

In order to formally express and prove that a type system enjoys resource-aware soundness,  the execution model needs to take into account resource consumption. This has been achieved in some recent proposals \citep{ChoudhuryEEW21,BianchiniDGZ@ECOOP23,TorczonSAAVW23}, including the conference version of this paper \citep{BianchiniDGZ@OOPSLA23}. 
The basic feature of  the  \emph{resource-aware semantics} is that substitution of variables is not performed once and for all, as in the standard $\beta$-rule. Instead, program execution happens in an \emph{environment} of available resources, and each variable occurrence is replaced when needed, by  consuming in some way the associated resource. In this way, reduction is stuck if \emph{resources are exhausted}, that is, the current availability is not enough to cover the usage required by the program, and this situation is expected to be prevented by the type system. 
Formally, ``enough'' means that $\rgr\rsum\sgr'\rord\sgr$, where $\rgr$ and $\sgr$ are the required and available amount, respectively, and $\rsum, \rord$ are the sum and order of the grade algebra. Note that the amount $\sgr'$ is left available. 

In this way, the program cannot consume more resources than those available; however, the converse property does not hold, that is, resources can be \emph{wasted}. 
 For instance, there is no guarantee that a linear resource is actually used. 

The advantages of constraints which impose no-waste, such as linear type systems, have been firstly 
 illustrated by \cite{Wadler90} in his seminal paper. They include avoiding garbage collection or reference counting, ensuring the correct use of files\footnote{See the example in \cref{sect:examples}.  }, and, in session type systems, prevent a receiver  from  indefinitely wait on a channel. However, in the existing proposals mentioned above \citep{ChoudhuryEEW21,BianchiniDGZ@ECOOP23,TorczonSAAVW23,BianchiniDGZ@OOPSLA23},  there is no formal way to express that the type system ensures no-waste as well.
  

In this paper, we provide,  to the best of  our knowledge,  the first statement and proof of resource-aware soundness where well-typedness implies that \emph{all} the following situations (which  are  stuck in the semantics) are prevented:
\begin{enumerate}   \label{intr}
\item   \label{intr1} type errors, as in the standard soundness  
\item \label{intr2}  resource exhaustion, as in the resource-aware soundness in previous work 
\item  \label{intr3}  resource wasting (novelty of this work)  
\end{enumerate}
We share with the conference paper \citep{BianchiniDGZ@OOPSLA23} the aim of  adding  resource-awareness in a light, abstract and general way, without requiring ad-hoc changes to the underlying language.  To achieve this aim, we also share  the following features:
\begin{itemize}
\item The reference language is an extended lambda calculus, intended to be representative of typical language features.  A contribution of our work is that, differently from most graded operational models, we consider a call-by-value semantics. Moreover, we provide an  in-depth  investigation of \emph{resource consumption in recursive functions}, and our graded type system smoothly includes \emph{equi-recursive types}.  
\item We keep the original syntax, with \emph{no boxing/unboxing} constructs; to this end, there is no explicit promotion rule, but different grades can be assigned to an expression, assuming different contexts.
\item The resource-aware semantics is given in \emph{big-step} style, so that no annotations in subterms are needed, again  keeping  the underlying language unaffected. 
 A consequence of this choice is that proving and even expressing (resource-aware) soundness of the type system becomes challenging, since in big-step semantics non-terminating and stuck computations are indistinguishable \citep{CousotC92,LeroyG09}. To solve this problem, the big-step judgment is extended to model divergence 
  explicitly,  and  is  defined by a \emph{generalized inference system}  \citep{AnconaDZ@esop17,Dagnino19}, where rules are interpreted in an \emph{essentially coinductive}, rather than inductive, way, in the sense that infinite proof trees are allowed, in a controlled way.  
Our proof of resource-aware soundness is a significant application of  these  innovative techniques. 
\end{itemize}

On the other hand, as mentioned above, differently from the conference paper \citep{BianchiniDGZ@OOPSLA23}, we state and prove resource-aware soundness \emph{including no-waste}, thanks to the following novel features:
\begin{itemize}
\item Formal definition of no-waste (transitive) usage of an environment of resources, in terms of the associated graph of direct dependencies.
\item No-waste refinement of the resource-aware semantics, meaning that a computation is stuck, in addition to the situations (\ref{intr1}) and (\ref{intr2}), when it would be wasting, i.e.,  (\ref{intr3}).
\item Formal characterization of configurations which are no-waste, in the sense that they do not get stuck due to wasting errors.
\item Proof that well-typed configurations are no-waste.
\end{itemize}

 We provide first, in \cref{sect:informal}, a gentle introduction to resource-aware semantics by examples. Then,  we formally define grade algebras in \cref{sect:algebraic}.
To make the presentation simpler, and modular with respect to the conference version, in \cref{sect:calculus} we present resource-aware reduction as in  the conference paper \citep{BianchiniDGZ@OOPSLA23}, focusing on its general notions. In \cref{sect:typesystem} we provide the type system, and in \cref{sect:examples} more significant examples and discussions.  In \cref{sect:nw-sem} we present the refined resource-aware reduction which ensures no-waste. In \cref{sect:soundness} we prove resource-aware soundness. 
\cref{sect:related} surveys related work, and \cref{sect:conclu} summarizes the contributions and outlines future work. Omitted proofs can be found in the Appendix.


\section{Preliminary examples}\label{sect:informal}
In order to illustrate the key ideas of resource-aware evaluation, we illustrate its expected behaviour on some simple examples.

We  assume a minimal syntax of expressions:
\begin{quoting}
\begin{grammatica}
\produzione{\e}{\x \mid \unit\mid\PairExp{\rgr}{\e_1}{\e_2}{\sgr}\mid\Match{\x}{\y}{\e_1}{\e_2} }{}
\end{grammatica}
\end{quoting}
only comprising variables in an infinite set $\Vars$, a constant, and (graded) pairs, with the corresponding destructor.
The metavariables $\rgr$ and $\sgr$ denote \emph{grades}, which the reader can initially think as natural numbers. Thus, the pair constructor is decorated with a grade for each subterm, intuitively meaning  ``how many copies'' are contained in the compound term. For instance, a pair of shape
$\PairExp{2}{\e_1}{\e_2}{2}$ contains ``two copies'' of each component, hence to evaluate (one copy of) such pair we need to obtain $2$ copies of the results of $\e_1$ and $\e_2$.

execution model which takes into account resource consumption. To achieve this, the basic feature \citep{ChoudhuryEEW21,BianchiniDGZ@ECOOP23,TorczonSAAVW23} is that substitution of variables is not performed once and for all, as in the standard $\beta$-rule. Instead, program execution happens in an \emph{environment} of available resources, and each variable occurrence is replaced when needed, by  consuming in some way the associated resource. Formally, the resource-aware semantics is defined on \emph{configurations}, that is, pairs $\Conf{\e}{\env}$ where the environment $\env$ is a finite map associating to each resource (variable), besides its value, a grade modeling its available usage, represented as shown below:\\

The goal is to provide an execution model that explicitly accounts for resource consumption. To this end, a key idea, following \citep{ChoudhuryEEW21,BianchiniDGZ@ECOOP23,TorczonSAAVW23}, is to avoid performing variable substitution once and for all, as in the standard $\beta$-reduction rule. Instead, program execution takes place within an \emph{environment} of available resources. Each occurrence of a variable is replaced only when needed, and doing so consumes the resource associated with that variable. Formally, the resource-aware semantics is defined over \emph{configurations}, which are pairs $\Conf{\e}{\env}$. Here, the environment $\env$ is a finite map that associates each resource, i.e., variable, with both its value and a grade representing the amount of usage still available, as illustrated below. 
\begin{quoting}
\begin{grammatica}
\produzione{\env}{\x_1:\Pair{\rgr_1}{\val_1},\ldots,\x_n:\Pair{\rgr_n}{\val_n}}{}\\
\produzione{\val}{\unit\mid\PairExp{\rgr}{\val_1}{\val_2}{\sgr}}{}
\end{grammatica}
\end{quoting}
The judgment has shape $\reduce{\Conf{\e}{\env}}{\rgr}{\Conf{\val}{\env'}}$, meaning that the configuration $\Conf{\e}{\env}$ produces a value $\val$ and a final environment $\env'$. The reduction relation is \emph{graded}, that is, indexed by a grade $\rgr$, meaning that the resulting value can be used (at most) $\rgr$ times.
The grade of a variable in the environment decreases, each time the variable is used, by the amount  specified in the reduction grade. Of course, this can only happen if the current grade of the variable \emph{can} be reduced  by  such an amount. Otherwise, evaluation is stuck; formally, since we choose to give the reduction relation in big-step style\footnote{The motivation for this choice is given in \cref{sect:calculus}.}, no judgment can be derived. Correspondingly, introducing a variable means adding a new resource in the environment. 

\begin{example}\label{ex:ex1}
Let us consider the following expressions:
\begin{itemize}
\item $\e_1=\Match{\xtt}{\ytt}{\ptt}{\PairExp{}{\unit}{\unit}{}}$
\item $\e_2=\Match{\xtt}{\ytt}{\ptt}{\PairExp{}{\xtt}{\unit}{}}$
\item $\e_3=\Match{\xtt}{\ytt}{\ptt}{\PairExp{}{\xtt}{\ytt}{}}$
\item $\e_4=\Match{\xtt}{\ytt}{\ptt}{\PairExp{}{\xtt}{\xtt}{}}$
\end{itemize}
to be evaluated in the environment $\env= \ptt:\Pair{1}{\val}$ with $\val=\PairExp{}{\val_1}{\val_2}{}$. Assume, as said above, that grades are natural numbers, as will be formally defined in \refItem{ex:gr-alg}{nat}.
 In order to lighten the notation, $1$ annotations are considered the default, hence omitted.  Moreover, in figures we abbreviate $\unit$ by $\un$.
In the first proof tree in \cref{fig:ex-red-nat} we show the evaluation of $\e_1$. The resource $\ptt$ is consumed, and its available amount ($1$) is ``transferred'' to both the resources $\xtt$ and $\ytt$, which are added  to  the environment\footnote{Modulo renaming, omitted here for simplicity.}, and not consumed.

\begin{figure}
\begin{small}
\begin{math}
\begin{array}{ll}
\prooftree
	\prooftree
	\justifies
	\reduce{\Conf{\ptt}{\env}}{}{\Conf{\val}{\ptt:\Pair{0}{\val}}}
	\thickness=0.08em
	\shiftright 2em
	\using \scriptstyle{\textsc{(var)}}
	 \endprooftree

	\BigSpace

	\prooftree
	\BigSpace\BigSpace\ldots
	\justifies
	\reduce{\Conf{\PairExp{}{\un}{\un}{}}{\env'}}{}{\Conf{\PairExp{}{\un}{\un}{}}{\env'}}
	\thickness=0.08em
	\shiftright 2em
	\using \scriptstyle{\textsc{(pair)}}
	 \endprooftree
\justifies
\reduce{\Conf{\Match{\xtt}{\ytt}{\ptt}{\PairExp{}{\un}{\un}{}}}{\env}}{}{\Conf{\PairExp{}{\un}{\un}{}}{\env'}}
\thickness=0.08em
\shiftright 2em
\using \scriptstyle{\textsc{(match-p)}}
 \endprooftree
\begin{array}{l}
\env{=}\EnvElem{\ptt}{1}{\val}\ \mbox{with}\ \val=\PairExp{}{\val_1}{\val_2}{}\\
\env_0{=}\EnvElem{\ptt}{0}{\val},\EnvElem{\ytt}{1}{\val_2}\\
\env'{=}\env_0,\EnvElem{\xtt}{1}{\val_1}\\
\env''{=}\env_0,\EnvElem{\xtt}{0}{\val_1}
\end{array}
\\[6ex]
\prooftree
	\prooftree
	\justifies
	\reduce{\Conf{\ptt}{\env}}{}{\Conf{\val}{\ptt:\Pair{0}{\val}}}
	\thickness=0.08em
	\shiftright 2em
	\using \scriptstyle{\textsc{(var)}}
	\endprooftree

	\BigSpace

	\prooftree
		\prooftree
		\justifies
		\reduce{\Conf{\xtt}{\env'}}{}{\Conf{\val_1}{\env''}}
		\thickness=0.08em
		\shiftright 2em
		\using \scriptstyle{\textsc{(var)}}
		\endprooftree

		\BigSpace

		\prooftree
		
		\justifies
		\reduce{\Conf{\xtt}{\env''}}{}{?}
		\thickness=0.08em
		\shiftright 2em
		\using \scriptstyle{\textsc{(var)}}
		\endprooftree

	\justifies
	\reduce{\Conf{\PairExp{}{\xtt}{\xtt}{}}{\env'}}{}{?}
	\thickness=0.08em
	\shiftright 2em
	\using \scriptstyle{\textsc{(pair)}}
	\endprooftree
\justifies
\reduce{\Conf{\Match{\xtt}{\ytt}{\ptt}{\PairExp{}{\xtt}{\xtt}{}}}{\env}}{}{?}
\thickness=0.08em
\shiftright 2em
\using \scriptstyle{\textsc{(match-p)}}
 \endprooftree
\end{array}
\end{math}
\end{small}
\caption{Examples of resource-aware evaluation (counting usages)}
\label{fig:ex-red-nat}
\end{figure}


 Evaluating  $\e_2$ and $\e_3$  works in much the same way, except that for $\e_2$  the resource $\xtt$  gets consumed during the process and the evaluation of $\e_3$ uses up all the available resources. 
Finally, the evaluation of $\e_4$ is stuck, that is, no proof tree can be constructed: indeed, when the second occurrence of $\xtt$ is found, the resource is exhausted, as shown in the second (incomplete) proof tree in \cref{fig:ex-red-nat}.
A result could be obtained, instead, if the original grade of $\ptt$ was greater than $1$, e.g., $2$, since in this case $\xtt$ (and $\ytt$) would be added  with grade $2$, or, alternatively, if the value associated to $\ptt$ in the environment was, e.g., $\val=\PairExp{2}{\val_1}{\val_2}{}$. Indeed, we expect a component of a pair $\PairExp{\rgr}{\val_1}{\val_2}{\sgr}$ to be extracted with a grade obtained by multiplying the grade of the pair with the grade of the component inside the pair, in this example $2\rmul 1$ and $1\rmul 2$, respectively.
\end{example}

Until now, we have assumed grades to be natural numbers, modeling \emph{how many times} resources are used. 
However,  a different choice of grades can model a \emph{non-quantitative}  usage,  that is, track possible \emph{modes} in which a resource can be used. A  very simple example are grades expressing  access  levels; for instance, $\private$ and $\public$, besides the grade $0$ which always expresses no usage at all. 
Such grades are ordered as follows: $0\rord\private\rord\public$, meaning that a program which does not use a variable can be safely run in an environment where this variable is available as $\private$, and a program which uses a variable as $\private$ can be safely run in an environment where it is $\public$. Of course, the converse does not hold.
Moreover, multiplication is the meet, meaning that we obtain  an access level  which is more restrictive than both.

 Note that exactly the same structure could be used to model,  e.g., modifiers $\readonly$ and $\mutable$ in an imperative setting, rather than  access  levels.  Moreover, the structure can be generalized by adding a $0$ element to any distributive lattice, as will be formalized in \refItem{ex:gr-alg}{lattice}, where we will also discuss the difference with  the  models of privacy levels in  the  literature \citep{AbelB20}.

 \begin{example}\label{ex:ex2}
 In \cref{ex:ex1},  writing $\priv$ and $\pub$ (default, omitted in the annotations) for short, we  have, e.g., for $\env= \ptt:\Pair{\pub}{\val}$ with $\val=\PairExp{\priv}{\val_1}{\val_2}{}$, that the evaluation in mode $\pub$ of $\e_1$ is analogous to that in \cref{fig:ex-red-nat}; however, the evaluation in mode $\pub$ of $\e_2$, $\e_3$, and $\e_4$ is stuck, since it needs to use the resource $\xtt$, which gets a grade $\priv=\priv\rmul\pub$, hence cannot be used in mode $\pub$ since $\pub\not\rord\priv$, as we  show in the first (incomplete) proof tree for $\e_2$ in \cref{fig:ex-red-privacy}. 
  \begin{figure}
 \begin{small}
\begin{math}
\begin{array}{l}
\begin{array}{ll}
    \prooftree
    \prooftree
\justifies
\reduce{\Conf{\ptt}{\env}}{}{\Conf{\val}{\env}}
 \thickness=0.08em
      \shiftright 2em
      \endprooftree \BigSpace

    \prooftree
    \prooftree
\justifies
\reduce{\Conf{\xtt}{\env'}}{}{?}
 \thickness=0.08em
      \shiftright 2em
      \endprooftree
\justifies
\reduce{\Conf{\PairExp{}{\xtt}{\un}{}}{\env'}}{}{?}
 \thickness=0.08em
      \shiftright 2em
      \endprooftree
\justifies
\reduce{\Conf{\Match{\xtt}{\ytt}{\ptt}{\PairExp{}{\xtt}{\un}{}}}{\env}}{}{?}
 \thickness=0.08em
      \shiftright 2em
      \endprooftree
      \end{array}
      \begin{array}{l}
\env= \EnvElem{\ptt}{\pub}{\val}\ \mbox{with}\ \val=\PairExp{\priv}{\val_1}{\val_2}{}\\
\env'=\EnvElem{\ptt}{\pub}{\val},\EnvElem{\xtt}{\priv}{\val_1},\EnvElem{\ytt}{\pub}{\val_2}
\end{array}
\\[10ex]
   \prooftree
    \prooftree
\justifies
\reduce{\Conf{\ptt}{\env}}{}{\Conf{\val}{\ptt{:}\Pair{\pub}{\val}}}
 \thickness=0.08em
      \shiftright 2em
      \endprooftree \BigSpace

    \prooftree
    \prooftree
\justifies
\reduce{\Conf{\xtt}{\env'}}{\priv}{\Conf{\val_1}{\env'}}
 \thickness=0.08em
      \shiftright 2em
      \endprooftree \BigSpace
    \prooftree
\justifies
\reduce{\Conf{\ytt}{\env'}}{\priv}{\Conf{\val_2}{\env'}}
 \thickness=0.08em
      \shiftright 2em
      \endprooftree
\justifies
\reduce{\Conf{\PairExp{}{\xtt}{\ytt}{}}{\env'}}{\priv}{\Conf{\PairExp{}{\val_1}{\val_2}{}}{\env'}}
 \thickness=0.08em
      \shiftright 2em
      \endprooftree
\justifies
\reduce{\Conf{\Match{\xtt}{\ytt}{\ptt}{\PairExp{}{\xtt}{\ytt}{}}}{\env}}{\priv}{\Conf{\PairExp{}{\val_1}{\val_2}{}}{\env'}}
 \thickness=0.08em
      \shiftright 2em
      \endprooftree
\end{array}
\end{math}
\end{small}
\caption{Examples of resource-aware evaluation ( access  levels)}
\label{fig:ex-red-privacy}
\end{figure}
On the other hand, evaluation in mode $\priv$ can be safely performed; indeed, resource $\ptt$ can be used in mode $\priv$ since $\priv\rord\pub$, as shown in the second proof tree in \cref{fig:ex-red-privacy}. 
\end{example}

In both the previous examples, reduction  gets  stuck when \emph{a needed resource is exhausted}, as will be formalized in \cref{sect:calculus}. 
In \cref{sect:nw-sem},  we 
 introduce a refined notion of reduction that 
also  gets  stuck when \emph{a resource is wasted}, that is, 
 when some resource that \emph{should have been used} is left over after program execution. 

 This cannot happen with the choices of grades considered  until now.  That is because  the grade $0$, representing no usage, is less or equal than all  grades, that is, can be overapproximated by any  amount  of usage.
In other words, there are no grades which impose that a resource \emph{should be used}. 
An example where, instead, there is such  a  constraint, are natural numbers where the considered order is equality, as formalized in \refItem{ex:gr-alg}{nat}. This means that a resource with grade, e.g., $1$, should be used \emph{exactly once}: neither more than once, nor unused. Formally, $0$ is \emph{not} less than $1$.  The grades  which impose a lower bound on usage are called \emph{non-discardable}, see \cref{def:discard} in the following. 

Let us consider what would happen in \cref{ex:ex1} with  these grades.  We expect that only $\e_3$ can be successfully evaluated, since only in this case the resources are all exhausted; the evaluations of $\e_1$ and $\e_2$, instead, would waste both $\x$ and $\y$, and only $\y$, respectively, hence  they are  not allowed.  Instead  the evaluation of $\e_4$ gets stuck due to resource exhaustion exactly as before. 
The formal definition of a reduction which is no-waste, hence  gets stuck for   $\e_1$ and $\e_2$, requires some more elaborated definitions, which will be developed in \cref{sect:nw-sem}.


\section{Algebraic preliminaries: a taxonomy of grade algebras}
\label{sect:algebraic} 

In this section we introduce the algebraic structures we will use throughout the paper. 
At the core of our work there are \emph{grades}, namely, annotations in terms and types expressing how or how much resources can be used during the computation. 
As we will see, we need some operations and relations to properly combine and compare grades in  the resource-aware semantics and type system, hence we assume grades to form an algebraic structure called  a  \emph{grade algebra} defined below. 
Such a structure is a slight variant of others in  the  literature \citep{BrunelGMZ14,GhicaS14,McBride16,Atkey18,GaboardiKOBU16,AbelB20,OrchardLE19,ChoudhuryEEW21,WoodA22}, which are all instances of ordered semirings.

\begin{definition}[Ordered Semiring] \label{def:ord-sr}
An \emph{ordered semiring}  is a tuple $\RR = \ple{\RSet,\rord,\rsum,\rmul,\rzero,\rone}$ such that: 
\begin{itemize}
\item \ple{\RSet,\rord} is a partially ordered set; 
\item \ple{\RSet,\rsum,\rzero} is a commutative monoid; 
\item \ple{\RSet,\rmul,\rone} is a monoid; 
\end{itemize}
and the following axioms are satisfied: 
\begin{itemize}
\item $\rgr\rmul(\sgr\rsum\tgr) = \rgr\rmul \sgr\rsum\rgr\rmul\tgr$ and $(\sgr\rsum\tgr)\rmul\rgr = \sgr\rmul\rgr\rsum\tgr\rmul\rgr$, for all $\rgr,\sgr,\tgr\in\RSet$; 
\item $\rgr\rmul\rzero = \rzero$ and $\rzero\rmul\rgr = \rzero$, for all $\rgr\in\RSet$; 
\item if $\rgr\rord\rgr'$ and $\sgr\rord\sgr'$ then  $\rgr\rsum\sgr\rord\rgr'\rsum\sgr'$ and $\rgr\rmul\sgr\rord\rgr'\rmul\sgr'$, for all $\rgr,\rgr'\sgr,\sgr'\in\RC\RR$; 
\end{itemize}
\end{definition}

Essentially, an ordered semiring is
a semiring  with a partial order on its carrier which makes addition and multiplication monotonic with respect to it. 
Roughly, addition and multiplication (which is not necessarily commutative) provide parallel and sequential composition of usages, 
$\rone$ models the unitary or default usage and 
$\rzero$ models  no use. 
Finally, the partial order models overapproximation in the resource usage,  giving  flexibility. For instance we can have  different usage in the branches of an if-then-else construct. 
  
In an ordered semiring  there can be elements $\rgr\rord\rzero$, which, however, make no sense in our context, since $\rzero$ models no usage. 
Hence, in a grade algebra we forbid such grades. 

\begin{definition}[Grade Algebra]\label{def:gr-alg}
An ordered semiring $\RR = \ple{\RC\RR, \rord, \rsum, \rmul, \rzero, \rone}$ is a \emph{grade algebra} if 
$\rgr\rord\rzero$ implies $\rgr = \rzero$, for all $\rgr\in\RC\RR$. 
\end{definition}

This property is not technically needed, but  it is intuitively expected to hold, and allows  to simplify some definitions. 
Moreover, it can be forced in any ordered semiring, just noting that the set $I_{\rord\rzero} = \{ \rgr\in\RC\RR \mid \rgr\rord\rzero \}$ is a two-sided ideal and so the quotient semiring $\RR/I_{\rord\rzero}$ is well-defined and is a grade algebra. 

We now give some examples of grade algebras adapted from the literature. 

\begin{example}\label{ex:gr-alg}\
\begin{enumerate}
\item\label{ex:gr-alg:nat}
The simplest way of measuring resource usage is by counting and can be done using natural numbers with their usual operations. 
We consider two grade algebras over natural numbers: 
one for \emph{bounded counting} $\NN^\le = \ple{\N,\le,+,\cdot,0,1}$, taking the natural ordering, and 
another for \emph{exact counting}   $\NN^= = \ple{\N,=,+,\cdot,0,1}$, taking equality as order, thus basically forbidding approximations of resource usage. 
\item\label{ex:gr-alg:lin}
The \emph{linearity} grade algebra $\ple{\{0,1,\infty\},\le,+,\cdot,0,1}\}$ is obtained from  $\NN^=$ above by identifying all natural numbers strictly greater than $1$ and taking as order $0\le\infty$ and $1\le\infty$; the \emph{affinity} grade algebra only differs for the order, which is $0\le 1 \le \infty$. 
\item\label{ex:gr-alg:triv}
In the trivial grade algebra $\Triv$ the carrier is a singleton set $\RC\Triv = \{\infty\}$, the partial order is the equality, addition and multiplication are defined in the trivial way and $\rzero_\Triv = \rone_\Triv = \infty$. 
 \item\label{ex:gr-alg:relas} 
The grade algebra of extended non-negative real numbers is 
$\RRPos{=}{\ple{\RPos,\le,+,\cdot,0,1}}$, where usual order and operations are extended  to $\infty$ in the expected way.
\item\label{ex:gr-alg:lattice} 
A distributive lattice $\LL = \ple{\RC\LL, \le, \lor,\land,\bot,\top}$, where  $\le$ is the order, $\lor$ and $\land$ the join and meet, and $\bot$ and $\top$ the bottom and top element, respectively, is a grade algebra. 
 These  grade algebras do not carry quantitative information, as the addition is idempotent, but rather express how/in which mode a resource can be used. They are called \emph{informational} by  \mbox{\cite{AbelB20}. } 
\item\label{ex:gr-alg:prod} 
Given the grade algebras $\RR = \ple{\RSet,\rord_\RR,\rsum_\RR,\rmul_\RR,\rzero_\RR,\rone_\RR}$ and $\RS = \ple{\SSet,\rord_\RS,\rsum_\RS,\rmul_\RS,\rzero_\RS,\rone_\RS}$,
the \emph{product} $\RR\times\RS = \ple{\{\Pair{\rgr}{\sgr}\ |\ \rgr\in\RSet\ \wedge\ \sgr\in\SSet\},\rord,\rsum,\rmul,\Pair{\rzero_\RR}{\rzero_\RS},\Pair{\rone_\RR}{\rone_\RS}}$, where operations are the pairwise application of the operations for $\RR$ and $\RS$, is a grade algebra. 
\item\label{ex:gr-alg:ext} 
Given a  grade algebra $\RR = \ple{\RSet,\rord_\RR,\rsum_\RR,\rmul_\RR,\rzero_\RR,\rone_\RR}$, 
set
$\Extend{\RR} = \ple{\RSet + \{\infty\},\rord,\rsum,\rmul,\rzero_\RR,\rone_\RR}$ where $\rord$ extends $\rord_\RR$ by adding
$\rgr\rord\infty$ for all $\rgr\in\RC{\Extend{\RR}}$ and $\rsum$ and $\rmul$ extend $\rsum_\RR$ and $\rmul_\RR$ by 
$\rgr\rsum\infty = \infty\rsum\rgr = \infty$, for all $\rgr\in\RC{\Extend\RR}$, and 
$\rgr\rmul\infty = \infty\rmul\rgr = \infty$, for all $\rgr\in\RC{\Extend\RR}$ with $\rgr\ne\rzero_\RR$, and 
$\rzero_\RR\rmul\infty = \infty\rmul\rzero_\RR = \rzero_\RR$. 
Then, $\Extend{\RR}$ is a grade algebra, where $\infty$ models \emph{unrestricted usage}. 
\item\label{ex:gr-alg:interval}
Given $\RR$ as above, set $\RC{\Int\RR} = \{ \ple{\rgr,\sgr} \in \RC\RR\times\RC\RR \mid \rgr\rord_\RR\sgr\}$, the set of \emph{intervals} between two points in $\RC\RR$, with  
$\ple{\rgr,\sgr}\rord\ple{\rgr',\sgr'}$ iff $\rgr'\rord_\RR\rgr$ and $\sgr \rord_\RR \sgr'$. 
Then, $\Int\RR = {\ple{\RC{\Int\RR},\rord,\rsum,\rmul,\ple{\rzero_\RR,\rzero_\RR},\ple{\rone_\RR,\rone_\RR}}}$ is a grade algebra, with $\rsum$ and $\rmul$ defined pointwise. 
\end{enumerate}
\end{example} 

We say that a grade algebra is \emph{trivial} if it is isomorphic to $\Triv$, that is, it contains a single point. 
 It is easy to see that a grade algebra is trivial iff $\rone\rord\rzero$, as this implies $\rgr\rord\rzero$, hence $\rgr = \rzero$, for all $\rgr\in\RC\RR$, by the axioms of ordered semiring and \cref{def:gr-alg}.

A grade algebra is still a quite wild structure, notably 
we can get $\rzero$ by summing or multiplying non-zero grades. This does not match the intuition that $\rzero$ represents no usage at all, hence a combination of non-zero usages cannot give no usage, as expressed by the following definition.

\begin{definition}\label{def:gr-alg-int-red}
Let $\RR = \ple{\RC\RR,\rord,\rsum,\rmul,\rzero,\rone}$ be a grade algebra. 
We say that 
\begin{itemize}
\item $\RR$ is \emph{integral} if $\rgr\rmul\sgr = \rzero$ implies $\rgr = \rzero$ or $\sgr = \rzero$, for all $\rgr,\sgr\in\RC\RR$; 
\item $\RR$ is \emph{reduced} if $\rgr\rsum\sgr = \rzero$ implies $\rgr = \sgr = \rzero$, for all $\rgr,\sgr\in\RC\RR$. 
\end{itemize}
\end{definition}

  The grade algebras considered as examples in the paper will be integral and reduced. Besides the modeling reasons mentioned above, the requirement to be integral  is technically useful, e.g., in \refItem{lem:promotion}{exp}.

All the grade algebras in \cref{ex:gr-alg} are reduced, provided that the parameters are reduced,  for (\ref{ex:gr-alg:prod})-(\ref{ex:gr-alg:interval}). 
Similarly, they are all integral except (\ref{ex:gr-alg:lattice}) and (\ref{ex:gr-alg:prod}). 
Indeed, in the former there can be elements different from $\bot$ whose meet is $\bot$ (e.g., disjoint subsets in the powerset lattice), 
while in the latter there are 
pairs \ple{\rgr,\rzero} and \ple{\rzero,\sgr} whose product is \ple{\rzero,\rzero} even if both $\rgr\ne\rzero$ and $\sgr\ne\rzero$. 
Fortunately, there is an easy construction making a grade algebra both reduced and integral. 
\begin{definition}\label{def:gr-alg-0}
Set $\RR = \ple{\RC\RR,\rord,\rsum,\rmul,\rzero,\rone}$ an ordered semiring. We denote by $\RR_\rzero$ the ordered semiring $\ple{\RC\RR + \{\hat\rzero\}, \rord, \rsum, \rmul, \hat\rzero, \rone}$ where we add a new point $\hat\rzero$ and extend order and operations as follows: 
\begin{quoting}
$\hat\rzero \rord\rgr$ iff $\rzero\rord\rgr$, \\
$\rgr\rsum\hat\rzero = \hat\rzero\rsum\rgr = \rgr$, \\
$\rgr\rmul\hat\rzero = \hat\rzero\rmul\rgr = \hat\rzero$, for all $\rgr \in \RC\RR + \{\hat\rzero\}$. 
\end{quoting}
\end{definition}
It is easy to check that the following proposition holds. 

\begin{proposition}\label{prop:gr-alg-0}
If $\RR = \ple{\RC\RR,\rord,\rsum,\rmul,\rzero,\rone}$ is a grade algebra, then 
$\RR_\rzero$ is a reduced and integral grade algebra. 
\end{proposition}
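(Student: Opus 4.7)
The plan is to verify each required property of $\RR_\rzero$ by straightforward case analysis on whether the elements involved lie in $\RC\RR$ or equal the fresh point $\hat\rzero$. Since $\hat\rzero$ is disjoint from $\RC\RR$, both the carrier partition $\RC\RR + \{\hat\rzero\}$ and the defining equations for the extended order and operations give a clean dichotomy at every point.

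First I would check that $\RR_\rzero$ is an ordered semiring. For the additive commutative monoid with identity $\hat\rzero$, all cases with operands in $\RC\RR$ reduce to the semiring axioms of $\RR$ (noting crucially that a sum of two elements of $\RC\RR$ remains in $\RC\RR$, so $\hat\rzero$ never appears spuriously as a result), and all cases involving $\hat\rzero$ reduce to the defining equation $\rgr\rsum\hat\rzero = \hat\rzero\rsum\rgr = \rgr$. For the multiplicative monoid with identity $\rone$, the argument is identical, using that $\rone\in\RC\RR$ and that $\rgr\rmul\hat\rzero = \hat\rzero\rmul\rgr = \hat\rzero$. Distributivity and the zero-absorption law $\rgr\rmul\hat\rzero = \hat\rzero = \hat\rzero\rmul\rgr$ are verified by the same case split; for the ``old'' zero, the axiom $\rzero\rmul\rgr = \rgr\rmul\rzero = \rzero$ is inherited from $\RR$ on $\RC\RR$ and reduces to $\hat\rzero$-absorption when $\rgr = \hat\rzero$. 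For the order, reflexivity on $\hat\rzero$ comes from $\rzero\rord_\RR\rzero$ via the defining biconditional, antisymmetry and transitivity follow from those of $\rord_\RR$, and monotonicity of $\rsum$ and $\rmul$ reduces either to monotonicity in $\RR$ or to the fact that $\hat\rzero$ is neutral/absorbing.

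Next, to show $\RR_\rzero$ is a grade algebra, I need $\rgr\rord\hat\rzero \Rightarrow \rgr = \hat\rzero$. By the defining clauses, the only way to obtain $\rgr\rord\hat\rzero$ is via reflexivity at $\hat\rzero$, since no rule produces $\rgr\rord\hat\rzero$ for $\rgr\in\RC\RR$ (the rule $\hat\rzero\rord\rsgr$ iff $\rzero\rord_\RR\rsgr$ goes the other direction). Reducedness and integrality are equally direct: if $\rgr\rsum\sgr = \hat\rzero$ and either operand is in $\RC\RR$, then the sum remains in $\RC\RR$ (applying either inherited addition in $\RR$ or the identity equation), contradicting equality with $\hat\rzero$; hence both are $\hat\rzero$. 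Similarly, if $\rgr\rmul\sgr = \hat\rzero$ and both are in $\RC\RR$, then the product lies in $\RC\RR$ and is therefore distinct from $\hat\rzero$, so at least one factor must equal $\hat\rzero$.

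The only delicate point is bookkeeping around the status of $\rzero$ in the extended structure: it is no longer the additive identity (that role is taken by $\hat\rzero$, and uniqueness of the identity is preserved because $\rzero\rsum\hat\rzero = \rzero\ne\hat\rzero$), but it remains a perfectly ordinary element of $\RC\RR$ that is multiplicatively absorbing and is used to define discardability of extended grades via the biconditional $\hat\rzero\rord\rgr \iff \rzero\rord_\RR\rgr$. Once this is kept straight, all verifications are mechanical and the proposition follows.
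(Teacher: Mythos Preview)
Your proposal is correct and is exactly the routine verification the paper has in mind; the paper itself gives no proof beyond ``It is easy to check that the following proposition holds.'' Your case analysis and the bookkeeping around the old $\rzero$ versus the new $\hat\rzero$ are accurate, so nothing further is needed.
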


Applying this construction to (\ref{ex:gr-alg:lattice}) and (\ref{ex:gr-alg:prod}) we get reduced and integral grade algebras.

For the former, note that we get an informational grade algebra where the grade meaning the ``lowest'' mode of usage is different from the zero grade, meaning no usage at all. 
For instance, two  access  levels are modeled by a grade algebra where $0\rord\private\rord\public$. This is different from models of privacy levels in  the  literature \citep{AbelB20}, where $\rzero$ coincides with the lowest mode of usage. Indeed, in such models the meaning of the zero grade is ``irrelevant''. That is, the usage of the resource does not affect the result of the computation, hence we cannot infer any information about the resource by observing such result: in this sense the resource is $\private$.

For (\ref{ex:gr-alg:prod}), the result is not yet satisfactory. 
Indeed, the resulting grade algebra still has spurious elements which are difficult to interpret. 
Thus we consider the following refined  construction. 

\begin{example}\label{ex:gr-alg-smash}
Let $\RR$ and $\RS$ be non-trivial, reduced and integral grade algebras. 
The \emph{smash product} of $\RR$ and $\RS$ is 
$\RR \smprod \RS = \ple{\RC{\RR\smprod\RS},\rord,\rsum,\rmul,\hat\rzero,\ple{\rone_\RR,\rone_\RS}}$, where 
$\RC{\RR\smprod\RS} = \RC{(\RR\times\RS)_\rzero} \setminus(\RC\RR\times\{\rzero_\RS\} \cup \{\rzero_\RR\}\times\RC\RS)$, 
$\rord$, $\rsum$ and $\rmul$ are the restrictions of the order and operations of $(\RR\times\RS)_\rzero$, as in \cref{def:gr-alg-0}, to the subset $\RC{\RR\smprod\RS}$, and 
$\hat\rzero$ is the zero of $(\RR\times\RS)_\rzero$.
It is easy to see $\RR\smprod\RS$ is a non-trivial, reduced and integral grade algebra.
\end{example}

Finally, to express no-wasting, as will be done in \cref{sect:nw-sem}, we need the following definition. 

 \begin{definition}[Discardable grade and affine grade algebra]\label{def:discard}
 In a grade algebra $\RR$, a grade $\rgr$ such that $\rzero\rord\rgr$ is said \emph{discardable}.  
A grade algebra is said to be \emph{affine} if $\rzero\rord\rgr$ holds for all $\rgr\in\RC\RR$, that is, all grades are discardable.
\end{definition}

Intutively, resources with discardable grades can be safely wasted,  since they can be reduced to $\rzero$ through the approximation order. 
For instance, in the linearity grade algebra of \refItem{ex:gr-alg}{lin}, the element $\infty$ can be discarded, while the element $1$ cannot. 
Note that the affinity condition is equivalent to requiring  just $\rzero\rord\rone$ again thanks to the axioms of ordered semiring. 
Instances of affine grade algebras from \cref{ex:gr-alg} are 
bounded usage (\ref{ex:gr-alg:nat}),  the extended non-negative real numbers (\ref{ex:gr-alg:relas})  and distributive lattices (\ref{ex:gr-alg:lattice}), while 
exact usage (\ref{ex:gr-alg:nat}) and linearity (\ref{ex:gr-alg:lin}) are not affine.

The subset of discardable elements in a grade algebra\footnote{Actually in any ordered semiring.} 
forms a \emph{two-sided ideal}. 
This means that the following  two properties hold: 
 \begin{itemize}
\item if $\rzero\rord\rgr$ and $\rzero\rord\sgr$, then $\rzero\rord\rgr\rsum\sgr$,
\item if $\rzero\rord\rgr$, then $\rzero\rord\rgr\rmul\sgr$ and $\rzero\rord\sgr\rmul\rgr$, for every $\sgr\in\RSet$. 
\end{itemize}
These properties easily follow from the monotonicity of $\rsum$ and $\rmul$ and  from the fact that multiplying by $\rzero$ yields $\rzero$. 


\section{Resource-aware semantics}\label{sect:calculus}

We define, for a standard functional calculus, an instrumented semantics which keeps track of resource usage, hence, in particular, it gets stuck if some needed resource is  insufficient.  

 The syntax, shown in \cref{fig:fine-grained}, is given in a fine-grained style. 
\begin{figure}
\begin{grammatica}
\produzione{\ve}{\x \mid \RecFun{\f}{\x}{\e}\mid\unit\mid\PairExp{\rgr}{\ve_1}{\ve_2}{\sgr}\mid\Inl{\rgr}{\ve}\mid\Inr{\rgr}{\ve} 
}{value expression} \\ 
\produzione{\e}{ \Return\ve \mid \Let\x{\e_1}{\e_2}  \mid \App{\ve_1}{\ve_2}}{expression} \\ 
\seguitoproduzione{ \MatchUnit{\ve}{\e} }{}\\
\seguitoproduzione{ \Match{\x}{\y}{\ve}{\e} }{}\\
\seguitoproduzione{ \Case{\ve}{\x_1}{\e_1}{\x_2}{\e_2} }{}
\end{grammatica}
\caption{Fine-grained syntax}
\label{fig:fine-grained}
\end{figure}
 This long-standing approach \citep{LevyPT03} is used to clearly separate effect-free from effectful expressions (\emph{computations}), and to make the evaluation strategy,  relevant for the latter, explicit through the sequencing construct (\terminale{let-in}), rather than fixed a-priori. In our calculus, the computational effect is \emph{divergence}, so the effectful expressions  are possibily diverging, whereas those effect-free will be called \emph{value expressions}\footnote{They are often called just ``values'' in literature, though, as already noted in \citep{LevyPT03}, they are not values in the operational sense, that is, results of the evaluation; here we keep the two notions distinct. Values turn out to be value expressions with no free variables, except 
  those  under a lambda.}. Note that, as customary,  expressions are defined on top of value expressions, whereas the converse does not hold;  this stratification makes the technical development modular,  as will be detailed in the following.

The constructs are pretty standard: the $\unit$ constant, pairs, left and right injections, and three variants of \terminale{match} construct playing as destructors of units, pairs, and injections, respectively. 
 Instead  of standard lambda expressions and a \terminale{fix} operator for recursion, we have a  single  construct $\RecFun{\f}{\x}{\e}$, meaning a function with parameter $\x$ and body $\e$ which can recursively call itself through the variable $\f$.
Standard lambda expressions can be recovered as those where $\f$ does not occur free in $\e$, that is, when the function is non-recursive, and we will use the abbreviation $\Fun{\x}{\e}$ for such expressions. The motivation for such unique construct is that in the resource-aware semantics there is no immediate parallel substitution as in standard rules for application and \terminale{fix}, but occurrences of free variables are replaced  one at a time, when needed, by their value stored in an environment. Thus, application of a (possibly recursive) function can be nicely modeled by generalizing what  is  expected for a non-recursive one, that is, it leads to the evaluation of the body in an environment where both $\f$ and $\x$ are added as resources, as formalized in  rule \refToRule{app} in \cref{fig:semantics}.

 As anticipated, the  pair and injection constructors are decorated with a grade for each subterm, intuitively meaning  ``how many copies'' are contained in the compound term. 
For instance, taking as grades the natural numbers as in \refItem{ex:gr-alg}{nat}, a pair of shape
$\PairExp{2}{\e_1}{\e_2}{2}$ contains ``two copies'' of each component.
 In the resource-aware semantics, this is reflected by the fact that, to evaluate (one copy of) such pair, we need to obtain $2$ copies of the results of $\e_1$ and $\e_2$; correspondingly, when matching such result with a pair of variables, both are made available in the environment with grade $2$.

We will sometimes use, rather than $\MatchUnit{\e_1}{\e_2}$, the alternative syntax $\Seq{\e_1}{\e_2}$, emphasising  that there is a sequential evaluation of the two subterms.

The resource-aware semantics is formally defined in \cref{fig:semantics}.  Corresponding to the two syntactic categories, the semantics is expressed by two distinct judgments, $\redval{\ve}{\env}{\rgr}{\val}{\env'}$ in the top section, and $\reduce{\Conf{\e}{\env}}{\rgr}{\Conf{\val}{\env'}}$ in the bottom section.  Note that there is no mutual recursion: the latter is defined on top of the former.  Hence, the metarules in \cref{fig:semantics} can be equivalently seen as

\begin{itemize}
\item a single inductive definition
\item a stratified definition, where the judgment $\reduce{\Conf{\e}{\env}}{\rgr}{\Conf{\val}{\env'}}$ is inductively defined, assuming the definition of  $\redval{\ve}{\env}{\rgr}{\val}{\env'}$.
\end{itemize}
For simplicity, we use the same notation for the two judgments, and in the bottom section of \cref{fig:semantics} we write both judgments as premises, taking the first view. However, the stratified view will be useful later to allow a modular  technical  development, notably for the construction adding divergence given in \cref{sect:soundness}.

\begin{figure}
\begin{math}
\begin{array}{l}
\NamedRule{var}{ }
{ \redvalS{\x}{\AddToEnv{\env}{\x}{\sgr}{\val}}{\rgr}{\val}{\AddToEnv{\env}{\x}{\sgr'}{\val}} }
{  
\rgr\rsum \sgr' \rord \sgr }
\NamedRule{fun}{ }
{ \redvalS{\RecFun\f\x\e}{\env}{\rgr}{\RecFun\f\x\e}{\env} }
{}
\\[6ex]
\NamedRule{unit}{ }
{ \redval{\unit}{\env}{\rgr}{\unit}{\env} } 
{} 
\BigSpace
\NamedRule{pair}{
  \redval{\ve_1}{\env}{\rgr\rmul\rgr_1}{\val_1}{\env_1}
  \Space
  \redval{\ve_2}{\env_1}{\rgr\rmul\rgr_2}{\val_2}{\env_2} 
}{ \redval{\PairExp{\rgr_1}{\ve_1}{\ve_2}{\rgr_2}}{\env}{\rgr}{\PairExp{\rgr_1}{\val_1}{\val_2}{\rgr_2}}{\env_2} } 
{}
\\[4ex]
\NamedRule{inl}{
  \redval{\ve}{\env}{\rgr\rmul\sgr}{\val}{\env'} 
}{ \redval{\Inl{\sgr}{\ve}}{\env}{\rgr}{\Inl{\sgr}{\val}}{\env'} }
{} 
\BigSpace
\NamedRule{inl}{
  \redval{\ve}{\env}{\rgr\rmul\sgr}{\val}{\env'} 
}{ \redval{\Inr{\sgr}{\ve}}{\env}{\rgr}{\Inr{\sgr}{\val}}{\env'} }
{} 
\\[4ex]
\end{array}
\end{math}

\hrule

\begin{math}
\begin{array}{l}
\\
\NamedRule{ret}{\redval{\ve}{\env}{\sgr}{\val}{\env'} }
{ \reduce{\Conf{\Return\ve}{\env}}{\rgr}{\Conf{\val}{\env'}} }
{ 
\!\! \rgr\rord\sgr\ne\rzero 
} 
\ 
\NamedRule{let}{
  \begin{array}{l}
    \reduceNarrow{\ConfP{\e_1}{\env}}{\sgr}{\Conf{\val}{\env''}} \\ 
    \reduceNarrow{\ConfP{\Subst{\e_2}{\x'}{\x}}{\AddToEnv{\env''}{\x'}{\sgr}{\val}}}{\rgr}{\ConfP{\val'}{\env'}} 
  \end{array}
}{ \reduceNarrow{\ConfP{\Let{\x}{\e_1}{\e_2}}{\env}}{\rgr}{\ConfP{\val'}{\env'}} }
{\!\x'\ \mbox{fresh}} 
\\[4ex] 
\NamedRule{app}{
  \begin{array}{l}
    \redval{\ve_1}{\env}{\sgr}{\RecFun\f\x\e}{\env_1} \Space 
    \redval{\ve_2}{\env_1}{\tgr}{\val_2}{\env_2} \\
    \reduce{\Conf{\Subst{\Subst{\e}{\f'}{\f}}{\x'}{\x}}{\AddToEnv{\AddToEnv{\env_2}{\f'}{\sgr_2}{\RecFun{\f}{\x}{\e}}}{\x'}{\tgr}{\val_2}}}{\rgr}{\Conf{\val}{\env'}}
  \end{array}
}{ \reduce{\Conf{\App{\ve_1}{\ve_2}}{\env}}{\rgr}{\Conf{\val}{\env'}} }
{ \sgr_1 \rsum \sgr_2 \rord \sgr\\
  \sgr_1 \neq \rzero \\ 
  \f',\x'\ \mbox{fresh} 
}
\\[5ex]
\NamedRule{match-u}{\redval{\ve}{\env}{\sgr}{\unit}{\env''} \Space
  \reduce{\Conf{\e}{\env''}}{\rgr}{\Conf{\val}{\env'}} 
}{ \reduce{\Conf{\MatchUnit{\ve}{\e}}{\env}}{\rgr}{\Conf{\val}{\env}} } 
{  
  \sgr\ne\rzero 
}
\\[5ex]
\NamedRule{match-p}{
\begin{array}{l}
\redval{\ve}{\env}{\sgr}{\PairExp{\rgr_1}{\val_1}{\val_2}{\rgr_2}}{\env''} \\ 
  \reduce{\Conf{\Subst{\Subst{\e}{\x'}{\x}}{\y'}{\y}}{\AddToEnv{\AddToEnv{\env''}{\x'}{\sgr\rmul\rgr_1}{\val_1}}{\y'}{\sgr\rmul\rgr_2}{\val_2}}}{\rgr}{\Conf{\val}{\env'}}
  \end{array}
}{ \reduce{\Conf{\Match{\x}{\y}{\ve}{\e}}{\env}}{\rgr}{\Conf{\val}{\env'}} }
{ \sgr\ne\rzero \\ 
  \x',\y'\ \mbox{fresh}
}
\\[5ex]
\NamedRule{match-l}{
\begin{array}{l}
 \redval{\ve}{\env}{\tgr}{\Inl\sgr\val}{\env''} \\ 
 \reduce{\Conf{\Subst{\e_1}{\y}{\x_1}}{\AddToEnv{\env''}{\y}{\tgr\rmul\sgr}{\val}}}{\rgr}{\Conf{\val'}{\env'}}
  \end{array}
}{ \reduce{\Conf{\Case{\ve}{\x_1}{\e_1}{\x_2}{\e_2}}{\env}}{\rgr}{\Conf{\val'}{\env'}} } 
{   \tgr\ne\rzero \\ 
  \y\ \mbox{fresh} 
}
\\[5ex]
\NamedRule{match-r}{
\begin{array}{l}
  \redval{\ve}{\env}{\tgr}{\Inr\sgr\val}{\env''} \\ 
\reduce{\Conf{\Subst{\e_2}{\y}{\x_1}}{\AddToEnv{\env''}{\y}{\tgr\rmul\sgr}{\val}}}{\rgr}{\Conf{\val'}{\env'}}
  \end{array}
}{ \reduce{\Conf{\Case{\ve}{\x_1}{\e_1}{\x_2}{\e_2}}{\env}}{\rgr}{\Conf{\val'}{\env'}} } 
{   \tgr\ne\rzero \\ 
  \y\ \mbox{fresh} 
}
\end{array}
\end{math}
\caption{Resource-aware semantics}
\label{fig:semantics} 
\end{figure}
 Rules for value expressions just replace variables by values; the reduction cannot diverge, but is resource-consuming, hence it can get stuck.  
 
In particular, in \refToRule{var}, which is the key rule where resources are consumed, 
a variable is replaced by its associated value, 
and its grade $\sgr$ decreases to $\sgr'$, burning an amount  $\rgr$  of resource which has to be at least the reduction grade. 
The side condition  $\rgr\rsum\sgr'\rord\sgr$  ensures that the initial grade  is allowed to  consume the  $\rgr$  amount, \mbox{leaving a residual grade $\sgr'$.  }
Note that  such  a  condition could hold for different residual grades.  Hence, reduction is largely non-deterministic; it will be the responsibility of the type system to ensure that there is at least one reduction  which  does not get stuck.   

The other rules for value expressions propagate rule \refToRule{var} to subterms which are variables. In rules for ``data containers'' \refToRule{pair}, \refToRule{in-l}, and \refToRule{in-r}, the components are evaluated with the evaluation grade of the compound value expression, multiplied by that of the component. 

Whereas evaluation of value expressions may have grade $\rzero$, when they are actually \emph{used}, that is, are subterms of expressions, they should be evaluated with a non-zero grade, as required by a side condition in the corresponding rules in the bottom section of \cref{fig:semantics}. 
 
In rule \refToRule{ret}, the evaluation grade of the value expression should be enough to cover the current evaluation grade.
In rule \refToRule{let}, expressions $\e_1$ and $\e_2$ are evaluated sequentially, the latter in an environment where the local variable $\x$ has been added  as available resource, modulo renaming with a fresh variable to avoid clashes, with the value and grade obtained by the evaluation of $\e_1$.

In rule \refToRule{app}, an application $\App{\ve_1}{\ve_2}$ is evaluated by first consuming the resources needed to obtain a value from $\ve_1$ and $\ve_2$, with the former expected to be a (possibly recursive) function. Then, the function body is evaluated in an environment where the function name and the parameter have been added as available resources, modulo renaming with fresh variables. 
 The function should be produced in a ``number of copies'', that is, with a grade $\sgr$, enough to cover both all the future recursive calls ($\sgr_2$) and the current use ($\sgr_1$); in particular, for a non-recursive call, $\sgr_2$ could be $\rzero$. Instead, the current use should be non-zero since we are actually using the function.
 

 Note also that, in this rule as in others, there is no required relation between the reduction grades of some premises (in this case, $\sgr$ and $\tgr$) and that of the consequence, here $\rgr$. 
 Of course, depending on the choice of such grades, reduction could either proceed or get stuck due to resource exhaustion; the role of the type system is exactly to show that there is a choice which prevents the latter case.

Rules for match constructs, namely \refToRule{match-u}, \refToRule{match-p}, \refToRule{match-l}, and \refToRule{match-r}, all follow the same pattern. The  resources needed to obtain a value from the value expression to be matched are consumed, and then the continuation is evaluated.  In rule \refToRule{match-u}, there is no value-passing from the matching expression to the continuation, hence their evaluation grades are independent. In rule \refToRule{match-p}, instead, the continuation is evaluated in an environment where the two variables in the pattern have been added as available resources, again modulo renaming. The values associated to the two variables are  the ones   of the corresponding component of the expression to be matched, whereas the grades are the evaluation grade of  the  expression, multiplied by the grade of the component. Rules \refToRule{match-l} and \refToRule{match-r} are analogous. 

Besides the standard typing errors, an evaluation graded $\rgr$ can get stuck (formally, no judgment can be derived) when rule \refToRule{var} cannot be applied since  its side condition does  not hold. Informally, some resource (variable) is exhausted, that is, can no longer be replaced by its value.
Also note that the instrumented reduction is non-deterministic, due to rule \refToRule{var}. That is, when  a resource is needed, it can be consumed in different ways; hence, soundness of the type system will be \emph{soundness-may}, meaning that \emph{there exists} a computation which does not get stuck.
 
 We end this section by illustrating resource consumption in a non-terminating computation.  
 \begin{example}\label{ex:div}
 Consider the function $\diverge=\RecFun{\f}{\x}{\Seq{\y}{\App{\f}{\x}}}$, which clearly diverges  on any argument, by using infinitely many times the external resource $\y$. In \cref{fig:ex-div} we show the (incomplete) proof tree for the evaluation of an application $\Seq{\y}{\App{\f}{\x}}$, in an environment where $\f$ denotes the function $\diverge$, and $\y$ and $\x$ denote $\unit$. For simplicity, as in previous examples, we omit $\rone$ grades, and renaming of variables; moreover, we abbreviate by $\Triple{\rgr}{\sgr}{\tgr}$ the environment $\EnvElem{\y}{\rgr}{\unit},\EnvElem{\f}{\sgr}{\diverge},\EnvElem{\x}{\tgr}{\unit}$.
 \begin{figure}
 \begin{footnotesize}
 \begin{math}
 \begin{array}{l}
\prooftree
      \prooftree
       \justifies      
      \reduceNarrow{\ConfP{\y}{\Triple{\rgr_0}{\sgr_0}{\rone}}}{}{\ConfP{\un}{\Triple{\rgr_1}{\sgr_0}{\rone}}}
     \thickness=0.08em
      \shiftright 0em
      \using
      \endprooftree
  \!\!
\prooftree
      \prooftree
       \justifies      
      \reduceNarrow{\ConfP{\f}{\Triple{\rgr_1}{\sgr_0}{\rone}}}{\sgr_0}{\ConfP{\diverge}{\Triple{\rgr_1}{\rzero}{\rone}}}
     \thickness=0.08em
      \shiftright 0em
      \using
      \endprooftree
      \!\!\!
      \prooftree
       \justifies      
      \reduceNarrow{\ConfP{\x}{\Triple{\rgr_1}{\rzero}{\rone}}}{}{\ConfP{\un}{\Triple{\rgr_1}{\rzero}{\rzero}}}
     \thickness=0.08em
      \shiftright 0em
      \using
      \endprooftree
      \!\!\!
      \prooftree 
      \ldots
      \justifies
\reduceNarrow{\ConfP{\Seq{\y}{\App{\f}{\x}}}{\Triple{\rgr_1}{\sgr_1}{\rone}}}{}{?}
      \thickness=0.08em
      \shiftright 0em
      \using
      \endprooftree
      \justifies      
\reduceNarrow{\Conf{\App{\f}{\x}}{\Triple{\rgr_1}{\sgr_0}{\rone}}}{}{?}
      \thickness=0.08em
      \shiftright 0em
      \using
              \scriptstyle{\textsc{(app)}}
      \endprooftree
      \justifies              
\reduceNarrow{\Conf{\Seq{\y}{\App{\f}{\x}}}{\Triple{\rgr_0}{\sgr_0}{\rone}}}{}{?}
      \thickness=0.08em
      \shiftright 0em
      \using
              \scriptstyle{\textsc{(match-u)}}
      \endprooftree
\end{array}
\end{math}
\end{footnotesize}
\caption{Example of resource-aware evaluation: consumption/divergency}
\label{fig:ex-div}
\end{figure}

In this way, we can focus on the key feature that the (tentative) proof tree shows: the body of $\diverge$ is evaluated infinitely many times, in a sequence of environments, starting from the root, where the grades of the external resource $\y$, assuming that each time it is consumed by $\rone$,  are as follows:
\begin{quoting}
$\rgr_0=\rgr_1\rsum\rone$\BigSpace$\rgr_1=\rgr_2\rsum\rone$\BigSpace\ldots\BigSpace $\rgr_k=\rgr_{k+1}\rsum\rone$\BigSpace\ldots
\end{quoting}
 In the case of the resource $\f$, at each recursive call, the function must be produced with a grade which is the sum  of its current usage (assumed again to be $\rone$) and the grade which will be associated to (a fresh copy of) $\f$ in the environment, to evaluate the body. As a consequence, we also get:
\begin{quoting}
$\sgr_0=\sgr_1\rsum\rone$\BigSpace$\sgr_1=\sgr_2\rsum\rone$\BigSpace\ldots\BigSpace $\sgr_k=\sgr_{k+1}\rsum\rone$\BigSpace\ldots
\end{quoting}

Let us now see what happens depending on the underlying grade algebra, considering $\y$ (an analogous reasoning applies to $\f$).
Taking the grade algebra of natural numbers of \refItem{ex:gr-alg}{nat}, it is easy to see that the above sequence of constraints can be equivalently expressed as:
\begin{quoting}
$\rgr_1=\rgr_0-1$\BigSpace$\rgr_2=\rgr_1-1$\BigSpace\ldots\BigSpace $\rgr_{k+1}=\rgr_{k}-1$\BigSpace\ldots
\end{quoting}
Thus, in a finite number of steps, the grade of $\y$ in the environment becomes $0$, hence the proof tree cannot be continued since we can no longer extract the associated value by rule \refToRule{var}.
In other words, the computation is stuck due to resource consumption.

Assume now to take, instead, natural numbers extended with $\infty$, as defined in \refItem{ex:gr-alg}{ext}.
In this case, if we start with $\rgr_0=\infty$, intuitively meaning that $\y$ can be used infinitely many times, evaluation can proceed forever by taking $\rgr_k=\infty$ for all $k$. The same happens if we take a non-quantitative grade algebra, e.g., that of  access  levels; we can have $\rgr_k=\public$ for all $k$. However, interpreting the rules in \cref{fig:semantics} in the standard inductive way, 
the semantics we get \emph{does not} formalize such non-terminating evaluation, since we only consider judgments with a \emph{finite} proof tree. We will see in \cref{sect:soundness} how to extend the semantics to model \mbox{non-terminating computations as well.}
\end{example}


\section{Resource-aware type system}\label{sect:typesystem}
Types, defined in \cref{fig:types}, are those expected for the constructs in the syntax: functional, $\Unit$,  (tensor) product, sum, and (equi-)recursive types, obtained by interpreting the productions \emph{coinductively}, so that infinite\footnote{More precisely, \emph{regular} terms, that is, those with finitely many distinct subterms.} terms are allowed. However, they are \emph{graded}, that is, decorated with a grade, and the type subterms are graded.   Moreover, accordingly with the fact that functions are possibly recursive, arrows in functional types are decorated with a grade as well,  called  the  \emph{recursion grade} in the following, expressing the recursive usage of the function; thus, functional types decorated with $\rzero$ are non-recursive.  

\begin{figure}
\begin{grammatica}
\produzione{\tau,\sigma}{ \funType{\T}{\sgr}{\ST}\mid\Unit\mid\PairT{\T}{\ST}\mid\SumT{\T}{\ST}}{non-graded type}\\
\produzione{\T,\ST}{ \Graded{\tau}{\rgr}}{graded type}\\
\produzione{\cctx}{\VarGrade{\x_1}{\rgr_1}, \ldots, \VarGrade{\x_n}{\rgr_n}}{grade context}\\
\produzione{\Gamma,\Delta}{\VarGradeType{\x_1}{\rgr_1}{\tau_1}, \ldots, \VarGradeType{\x_n}{\rgr_n}{\tau_n}}{(type-and-grade) context}
\end{grammatica}
\caption{Types and (type-and-grade) contexts}
\label{fig:types}
\end{figure}
 In (type-and-grade) contexts, also defined in \cref{fig:types}, order is immaterial and $\x_i\neq\x_j$ for $i\neq j$; hence, they represent maps from variables  to  pairs consisting of a grade, sometimes called \emph{coeffect} \citep{PetricekOM13,PetricekOM14,BrunelGMZ14} when used in this position, and a (non-graded) type. We write $\dom{\Gamma}$ for $\{\x_1,\ldots,\x_n\}$.  Equivalently, a type-and-grade context can be seen as a pair consisting of the standard type context $\VarType{\x_1}{\tau_1}\ldots,\VarType{\x_n}{\tau_n}$, and the grade context $\VarGrade{\x_1}{\rgr_1}, \ldots, \VarGrade{\x_n}{\rgr_n}$, the latter representing a function $\gamma$ from variables into grades with finite support, that is, $\cctx(\x)\ne\rzero$ for finitely many $\x\in\Vars$.   Grade contexts will play a key role in \cref{sect:nw-sem} to express the no-waste version of the semantics.

We define the following operations on contexts:
 
\begin{itemize}
\item a partial order $\ctxord$ 
\begin{align*} 
\emptyset &\ctxord \emptyset  & \\ 
\VarGradeType{\x}{\sgr}{\tau}, \Gamma &\ctxord \VarGradeType{\x}{\rgr}{\tau}, \Delta &&\text{if $\sgr \rord \rgr$  and $\Gamma \ctxord \Delta$} \\ 
\Gamma &\ctxord \VarGradeType{\x}{\rgr}{\tau}, \Delta  &&\text{if $\x\not\in\dom{\Gamma}$  and $\Gamma \ctxord \Delta$ and $\rzero\rord\rgr$} 
\end{align*} 
\item a sum $\ctxsum$ 
\begin{align*} 
\emptyset \ctxsum \Gamma &= \Gamma  \\ 
(\VarGradeType{\x}{\sgr}{\tau}, \Gamma) \ctxsum ( \VarGradeType{\x}{\rgr}{\tau}, \Delta) &= \VarGradeType{\x}{\sgr \rsum \rgr}{\tau}, (\Gamma \ctxsum\Delta) \\ 
(\VarGradeType{\x}{\sgr}{\tau}, \Gamma) \ctxsum \Delta &= \VarGradeType{\x}{\sgr}{\tau}, (\Gamma \ctxsum \Delta)  
  &\text{if $\x \notin \dom{\Delta}$} 
\end{align*} 
\item a scalar multiplication $\ctxmul$ 
\begin{align*}  
\sgr \ctxmul \emptyset = \emptyset 
  && 
\sgr \ctxmul (\VarGradeType{\x}{\rgr}{\tau},\Gamma) =  \VarGradeType{\x}{\sgr \rmul \rgr}{\tau}, (\sgr {\ctxmul} \Gamma) 
\end{align*} 
\end{itemize}  
These operations on type-and-grade contexts are obtained by lifting the corresponding operations  
 on grade contexts, which are the pointwise extension of those on grades,
to handle types. 
In this step, the addition becomes partial since a variable in the domain of both contexts is required to have the same type.

\begin{remark}\label{rem:order}
It is important to note that, when overapproximating contexts with the partial order, new resources can be added only with a discardable grade. For instance, in the linearity grade algebra of \refItem{ex:gr-alg}{lin}, $\Gamma\not\ctxord(\Gamma,\VarGradeType{\y}{1}{\sigma})$. In other words, the partial order models \emph{no-waste} approximation.
\end{remark}

In \cref{fig:typing}, we give the typing rules,  which are \emph{parameterized} on the underlying grade algebra. 
 As for instrumented reduction, the resource-aware type system is formalized by two judgments, $\IsWFExp{\Gamma}{\ve}{\T}$ and $\IsWFExp{\Gamma}{\e}{\T}$, for value expressions and expressions, respectively. However, differently from reduction, the two judgments are mutually recursive, due to rule \refToRule{t-fun}, hence the metarules in \cref{fig:typing} define a unique judgment which is their union. 
We only comment  on  the most significant points. 
\begin{figure}
\begin{math}
\begin{array}{l}
\NamedRule{t-sub-v}{
  \IsWFExp{\Gamma}{\ve}{\Graded{\tau}{\rgr}} 
}{ \IsWFExp{\Delta}{\ve}{\Graded{\tau}{\sgr}} } 
{ \sgr \rord \rgr\\
 \Gamma \ctxord \Delta
}
\BigSpace
\NamedRule{t-var}{ }
{ \IsWFExp{\VarGradeType{\x}{\rgr}{\tau}}{\x}{\Graded{\tau}{\rgr}} } 
{} 
\\[4ex]
\NamedRule{t-fun}{ 
  \IsWFExp{\Gamma,\VarGradeType{\f}{\sgr}{\funType{\Graded{\tau_1}{\rgr_1}}{\sgr}{\Graded{\tau_2}{\rgr_2}}},\VarGradeType{\x}{\rgr_1}{\tau_1}}{\e}{\Graded{\tau_2}{\rgr_2}} 
}{ \IsWFExp{\rgr\ctxmul\Gamma}{\RecFun{\f}{\x}{\e}}{\Graded{(\funType{\Graded{\tau_1}{\rgr_1}}{\sgr}{\Graded{\tau_2}{\rgr_2}})}{\rgr}} } 
{} 
\BigSpace

\NamedRule{t-unit}{}
{ \IsWFExp{\emptyset}{\unit}{\Graded{\Unit}{\rgr}} }
{}

\BigSpace
\\[4ex]
\NamedRule{t-pair}{
  \IsWFExp{\Gamma_1}{\ve_1}{\GradedInd{\tau}{1}{\rgr}} 
  \Space 
  \IsWFExp{\Gamma_2}{\ve_2}{\GradedInd{\tau}{2}{\rgr}} 
}{ \IsWFExp{\rgr\ctxmul(\Gamma_1 \ctxsum \Gamma_2)}{\PairExp{\rgr_1}{\ve_1}{\ve_2}{\rgr_2}}{\Graded{(\PairT{\GradedInd{\tau}{1}{\rgr}}{\GradedInd{\tau}{2}{\rgr}})}{\rgr} } } 
{} 
\\[4ex]
\NamedRule{t-inl}{
  \IsWFExp{\Gamma}{\ve}{\GradedInd{\tau}{1}{\rgr}}
}{ \IsWFExp{\rgr\ctxmul\Gamma}{\Inl{\rgr_1}{\ve}}{\Graded{(\SumT{\GradedInd{\tau}{1}{\rgr}}{\GradedInd{\tau}{2}{\rgr}})}{\rgr}} }
{}
\BigSpace
\NamedRule{t-inr}{
  \IsWFExp{\Gamma}{\ve}{\GradedInd{\tau}{2}{\rgr}}
}{ \IsWFExp{\rgr\ctxmul\Gamma}{\Inr{\rgr_2}{\ve}}{\Graded{\SumT{(\GradedInd{\tau}{1}{\rgr}}{\GradedInd{\tau}{2}{\rgr}})}{\rgr}} } 
{} 
\\[4ex]
\end{array}
\end{math}

\hrule

\begin{math}
\begin{array}{l}
\\
\NamedRule{t-sub}{
  \IsWFExp{\Gamma}{\e}{\Graded{\tau}{\rgr}} 
}{ \IsWFExp{\Delta}{\e}{\Graded{\tau}{\sgr}} } 
{ \sgr \rord \rgr\\
 \Gamma \ctxord \Delta
}
\Space
\NamedRule{t-ret}{ 
  \IsWFExp{\Gamma}{\ve}{\Graded\tau\rgr} 
}{ \IsWFExp{\Gamma}{\Return\ve}{\Graded\tau\rgr} }
{ \rgr\ne\rzero } 
\\[4ex]
\NamedRule{t-let}{
  \IsWFExp{\Gamma_1}{\e_1}{\Graded{\tau_1}{\rgr_1}} 
  \ \
  \IsWFExp{\Gamma_2, \VarGradeType{\x}{\rgr_1}{\tau_1}}{\e_2}{\Graded{\tau_2}{\rgr_2}} 
}{ \IsWFExp{\Gamma_1\ctxsum\Gamma_2}{\Let{\x}{\e_1}{\e_2}}{\Graded{\tau_2}{\rgr_2}} } 
{} 
\\[4ex]
\NamedRule{t-app}{
  \IsWFExp{\Gamma_1}{\ve_1}{\Graded{(\funType{\Graded{\tau_1}{\rgr_1}}{\sgr}{\Graded{\tau_2}{\rgr_2}})}{(\rgr\rsum\rgr\rmul\sgr)}} 
  \
  \IsWFExp{\Gamma_2}{\ve_2}{\Graded{\tau_1}{\rgr\rmul\rgr_1}} 
}{ \IsWFExp{\Gamma_1 \ctxsum \Gamma_2}{\App{\ve_1}{\ve_2}}{\Graded{\tau_2}{\rgr\rmul\rgr_2}} }
{\rgr\neq\rzero}
\\[4ex]
\NamedRule{t-match-u}{
  \IsWFExp{\Gamma_1}{\ve}{\Graded{\Unit}{\rgr}} 
  \Space
  \IsWFExp{\Gamma_2}{\e}{\T}
}{ \IsWFExp{\Gamma_1 \ctxsum \Gamma_2}{\MatchUnit{\ve}{\e}}{\T} } 
{\rgr\ne\rzero} 
\\[4ex]
\NamedRule{t-match-p}{
  \IsWFExp{\Gamma_1}{\ve}{\Graded{(\PairT{\GradedInd{\tau}{1}{\rgr}}{\GradedInd{\tau}{2}{\rgr}})}{\rgr}}
  \Space
  \IsWFExp{\Gamma_2,\VarGradeType{\x}{\rgr\rmul\rgr_1}{\tau},\VarGradeType{\y}{\rgr\rmul\rgr_2}{\tau_2}}{\e}{\T}
}{ \IsWFExp{\Gamma_1\ctxsum\Gamma_2}{\Match{\x}{\y}{\ve}{\e}}{\T} } 
{ \rgr\ne\rzero } 
\\[4ex]
\NamedRule{t-match-in}{
  \IsWFExp{\Gamma_1}{\ve}{\Graded{(\SumT{\GradedInd{\tau}{1}{\rgr}}{\GradedInd{\tau}{2}{\rgr}})}{\rgr}}\Space
  \IsWFExp{\Gamma_2,\VarGradeType{\x}{\rgr\rmul\rgr_1}{\tau_1}}{\e_1}{\T}\Space
  \IsWFExp{\Gamma_2,\VarGradeType{\x}{\rgr\rmul\rgr_2}{\tau_2}}{\e_2}{\T}
}{ \IsWFExp{\Gamma_1\ctxsum\Gamma_2}{\Case{\ve}{\x}{\e_1}{\x}{\e_2}}{\T} }
{ \rgr\ne\rzero } 
\end{array}
\end{math}
\caption{Typing rules}
\label{fig:typing} 
\end{figure}
In rule \refToRule{t-sub-v} and \refToRule{t-sub},   the context can be made more general, and the grade of the type more specific. This means that, on one hand, variables can get less constraining grades. For instance, assuming  affinity grades as in \refItem{ex:gr-alg}{lin}, an expression which can be typechecked assuming to use a given variable at most once (grade 1) can be typechecked with no constraints (grade $\omega$). On the other hand, an expression can get a more constraining grade. For instance, an expression of grade $\omega$ can be used where a grade 1 is required.

If we take $\rgr=\rone$, then rule \refToRule{t-var} is the standard rule for variable in graded systems, where the grade context is the map where the given variable is used once, and no other  variable  is used.  Here, more generally, the variable can get an arbitrary grade $\rgr$, provided that  the context is multiplied by $\rgr$.  The same ``local promotion'' can be applied in  the other structural rules for value expressions. 

In rule \refToRule{t-fun}, a (possibly recursive) function can get a graded functional type, provided that its body can get the result type in the context enriched by assigning the functional type to the function name, and the parameter type to the parameter.  As mentioned, we expect  the recursion grade  $\sgr$ to be $\rzero$ for a non-recursive function;  for a recursive function, we expect  $\sgr$ to be  an ``infinite'' grade, in a sense which will be clarified in  \cref{ex:div-type} below. 

In the rules in the bottom section, when a value expression is used as subterm of an expression, its grade is required to be non-zero, since it is evaluated in the computation, hence its resource consumption should be taken into account. 

In rule \refToRule{t-app}, the function should be produced with a grade which is the sum of that required for the current usage ($\rgr$) and that corresponding to the recursive calls: the latter are the grade required for a single usage multiplied by  the recursion grade  ($\sgr$). For a non-recursive function ($\sgr=\rzero$), the rule turns out to be as expected for a usual application.

\begin{example}\label{ex:div-type}

As an example of type derivation, we consider the function ${\diverge=\RecFun{\f}{\x}{\Seq{\y}{\App{\f}{\x}}}}$ introduced in \cref{ex:div}.
 In \cref{fig:ex-div-type}, we show a (parametric) proof tree deriving for $\diverge$  a type of the shape $\Graded{(\funType{\Graded{\Unit}{\rgr_1}}{\sgr}{\Graded{\Unit}{\rgr_2}})}{}$, in a context providing the external resource $\y$. 
\begin{figure}
\begin{math}
\begin{array}{l}
\prooftree
	\prooftree
		\prooftree
			\prooftree	
			\justifies
			\IsWFExp{\VarGradeType{\y}{\rgr'}{\Un}}{\y}{\Graded{\Un}{\rgr'}}
			\thickness=0.08em
			\shiftright 0em
			\using
			\scriptstyle{\textsc{(t-var)}}
			\endprooftree
			\prooftree
				\prooftree	
				\justifies
				\IsWFExp{\VarGradeType{\f}{\rgr\rsum\rgr\rmul\sgr}{\tau}}{\f}{\Graded{\tau}{\rgr\rsum\rgr\rmul\sgr}}				
				\thickness=0.08em
				\shiftright 0em
				\using
				\scriptstyle{\textsc{(t-var)}}
				\endprooftree
				\prooftree	
				\justifies
				\IsWFExp{\VarGradeType{\x}{\rgr\rmul\rgr_1}{\Un}}{\x}{\Graded{\Un}{\rgr\rmul\rgr_1}}
				\thickness=0.08em
				\shiftright 0em
				\using
				\scriptstyle{\textsc{(t-var)}}
				\endprooftree
			\justifies
			\IsWFExp{\VarGradeType{\f}{\rgr\rsum\rgr\rmul\sgr}{\tau},\VarGradeType{\x}{\rgr\rmul\rgr_1}{\Graded{\Un}{}}}{\App{\f}{\x}}{\Graded{\Un}{\rgr\rmul\rgr_2}}
			\thickness=0.08em
			\shiftright 0em
			\using
			\scriptstyle{\meta{\rgr\neq\rzero} \ \textsc{(t-app)}}
			\endprooftree		
		\justifies
		\IsWFExp{\VarGradeType{\y}{\rgr'}{\Un},\VarGradeType{\f}{\rgr\rsum\rgr\rmul\sgr}{\tau},\VarGradeType{\x}{\rgr\rmul\rgr_1}{\Graded{\Un}{}}}{\Seq{\y}{\App{\f}{\x}}}{\Graded{\Un}{\rgr\rmul\rgr_2}}
		\thickness=0.08em
		\shiftright 0em
		\using
		\scriptstyle{\meta{\rgr'\neq\rzero} \ \textsc{(t-match-u)}}
		\endprooftree
    \justifies
		\IsWFExp{\VarGradeType{\y}{\rgr'}{\Un},\VarGradeType{\f}{\sgr}{\tau},\VarGradeType{\x}{\rgr_1}{\Graded{\Un}{}}}{\Seq{\y}{\App{\f}{\x}}}{\Graded{\Un}{\rgr_2}}
		\thickness=0.08em
		\shiftright 0em
    \using
	\scriptstyle{\meta{(\rgr\rsum\rgr\rmul\sgr)\rord\sgr\quad\rgr\rmul\rgr_1\rord\rgr_1\quad\rgr_2\rord\rgr\rmul\rgr_2 } }
	\endprooftree
	\justifies
	\IsWFExp{\VarGradeType{\y}{\rgr'}{\Un}}{\RecFun{\f}{\x}{\Seq{\y}{\App{\f}{\x}}}}{\Graded{\tau}}{}
	\thickness=0.08em
	\shiftright 0em
	\using
	\scriptstyle{\textsc{(t-fun)}} 
         \hspace{9ex}
        \displaystyle{\tau=\funType{\Graded{\Un}{\rgr_1}}{\sgr}{\Graded{\Un}{\rgr_2}}}
	\endprooftree
      \hspace{-12pt}
\end{array}
\end{math}
\caption{Example of type derivation: recursive function}
\label{fig:ex-div-type}
\end{figure}
 Consider, first of all, the condition that the recursion grade $\sgr$ should satisfy, that is $(\rgr\rsum\rgr\rmul\sgr)\rord\sgr$, for some $\rgr\neq\rzero$, meaning that it should be enough to cover the recursive call in the body and all the further recursive calls.\footnote{Note that $\rgr$ is arbitrary, since there is no sound default grade, and  it is  only required to be non-zero since the function \mbox{is actually used.}}
Assuming the grade algebra of natural numbers of \refItem{ex:gr-alg}{nat}, there is no grade $\sgr$ satisfying this condition. In other words, the type system correctly rejects the function since its application would  get stuck due to resource consumption, as illustrated in \cref{ex:div}.  On the other hand, for, e.g., $\sgr=\infty$,  with natural numbers extended as in \refItem{ex:gr-alg}{ext}, $(\rgr\rsum\rgr\rmul\sgr)\rord\sgr$ would hold for any $\rgr\neq 0$, hence the function would be well-typed. 
Moreover, there would be no constraints on the parameter and return type, since the conditions $ \rgr\rmul\rgr_1\rord\rgr_1$ and $\rgr_2\rord\rgr\rmul\rgr_2$ would be always satified taking $\rgr=1$.
Assuming the grade algebra of  access  levels  introduced before \cref{ex:ex2},  where $\rone=\public$, for $\sgr=\public$ the condition is satisfied analogously, again with no constraints on $\rgr_1$ \mbox{and $\rgr_2$.} For $\sgr=\private$, instead, it only holds for $\rgr=\private$, hence the condition $\rgr_2\rord\rgr\rmul\rgr_2$ prevents the return type of the function to be $\public$, accordingly with the intuition that a function used in $\private$ mode cannot return a $\public$ result. 
In such cases, the type system correctly accepts the function since its application to a value never gets stuck.  Finally note that, to type an application of the function, e.g., to derive that  
$\App{\diverge\,}{\un}$ has type $\Graded{\Un}{\rgr_2}$, $\diverge$ should get grade $\rone\rsum\sgr$, hence the grade of the external resource $\y$ should be $(\rone\rsum\sgr)\rmul\rgr'$, that is, it should be usable infinitely many times as well. 
\end{example}

 A similar reasoning applies in general; namely, for recursive calls in a function's body we get a condition of shape $(\rgr\rsum\rgr\rmul\sgr)\rord\sgr$, with $\rgr\neq\rzero$, forcing the grade $\sgr$ of the function to be ``infinite''.  \cite{BrunelGMZ14} and the subsequent work of \cite{AbelDE23} assume the existence of fixed-point operators whose typing rules explicitly impose a restriction analogous to ours. In our system, instead, this restriction arises from the application of the typing rules. As already observed by \cite{BrunelGMZ14}, the condition $(\rgr\rsum\rgr\rmul\sgr)\rord\sgr$, with $\rgr\neq\rzero$, forces the grade $\sgr$ of the function to be ``infinite''.   
This happens regardless of whether the recursive function is actually always diverging, as in the example above, or terminating on some/all arguments. 
On the other hand, in the latter case the resource-aware semantics does terminate, as expected, provided that the initial amount of resources is sufficient to cover the (finite number of) recursive calls.  This behavior is perfectly reasonable,  since the type system is a static overapproximation of  the  evaluation. We will show an example  of this terminating resource-aware evaluation in \cref{sect:examples} (\cref{fig:term-rec}).

\medskip

The following lemmas show that the promotion rule, usually explicitly stated in graded type systems, is admissible in our system (\cref{lem:promotion}), and also a converse holds for value expressions (\cref{lem:demotion}). 
Note that we can promote an expression only using a non-zero grade, to ensure that non-zero constraints on grades in typing rules for expressions are preserved.\footnote{Without assuming the grade algebra to be integral, we would need to use grades which are not zero-divisors. }
These lemmas also show that we can assign to a value expression any grade provided that the context is  appropriately adjusted:
by demotion we can always derive $\rone$ (taking $\rgr = \rone$) and then by promotion \mbox{any grade. } 

\begin{lemma}[Promotion] \label{lem:promotion} \
\begin{enumerate}
\item\label{lem:promotion:val} 
If $\IsWFExp{\Gamma}{\ve}{\Graded{\tau}{\rgr}}$ then 
$\IsWFExp{\sgr\ctxmul\Gamma}{\ve}{\Graded{\tau}{\sgr\rmul\rgr}}$
\item\label{lem:promotion:exp} 
If $\IsWFExp{\Gamma}{\e}{\Graded{\tau}{\rgr}}$ and $\sgr \neq \rzero$, then 
$\IsWFExp{\sgr\ctxmul\Gamma}{\e}{\Graded{\tau}{\sgr\rmul\rgr}}$. 
\end{enumerate}
\end{lemma}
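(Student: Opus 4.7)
The plan is to prove the two statements simultaneously by mutual induction on the structure of the typing derivations for $\ve$ and $\e$, proceeding case by case on the last rule used. The overall strategy is that in each rule, promotion amounts to ``factoring in'' the new scalar $\sgr$: either reapplying the same rule after scaling the concluding grade, or promoting the premises by the inductive hypothesis and then reapplying.

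First I handle part 1. The key observation is that in \refToRule{t-var}, \refToRule{t-fun}, \refToRule{t-pair}, \refToRule{t-inl}, \refToRule{t-inr} the concluding context has the shape $\rgr\ctxmul\Gamma'$ and the concluding type has grade $\rgr$, while the premises do not mention $\rgr$. So in each of these cases we can just reapply the rule with $\rgr$ replaced by $\sgr\rmul\rgr$, using associativity $\sgr\ctxmul(\rgr\ctxmul\Gamma') = (\sgr\rmul\rgr)\ctxmul\Gamma'$ on the resulting context. The case \refToRule{t-unit} is immediate since $\sgr\ctxmul\emptyset = \emptyset$. For \refToRule{t-sub-v}, the inductive hypothesis gives a promoted judgment from which we reapply \refToRule{t-sub-v}, using monotonicity of $\rmul$ and $\ctxmul$ (which lifts from the grade algebra axioms) to preserve the side conditions $\sgr\rmul\rgr \rord \sgr\rmul\rgr'$ and $\sgr\ctxmul\Gamma'\ctxord\sgr\ctxmul\Delta$.

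For part 2, I proceed analogously. In \refToRule{t-ret}, I invoke part 1 of the IH on $\ve$ and reapply \refToRule{t-ret}; in \refToRule{t-let}, \refToRule{t-match-u}, \refToRule{t-match-p}, \refToRule{t-match-in}, I apply part 1 of the IH to the value-expression premise and part 2 of the IH to the continuation(s), then reassemble using distributivity of $\ctxmul$ over $\ctxsum$, namely $\sgr\ctxmul(\Gamma_1\ctxsum\Gamma_2) = \sgr\ctxmul\Gamma_1\ctxsum\sgr\ctxmul\Gamma_2$, together with associativity to match grades of bound variables. In \refToRule{t-app} the delicate bookkeeping is the type of $\ve_1$: after promoting $\ve_1$ by $\sgr$, the outer grade becomes $\sgr\rmul(\rgr\rsum\rgr\rmul\sgr') = (\sgr\rmul\rgr)\rsum(\sgr\rmul\rgr)\rmul\sgr'$ by distributivity, which is exactly the shape required to reapply \refToRule{t-app} with the ``current usage'' parameter replaced by $\sgr\rmul\rgr$. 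The \refToRule{t-sub} case is identical in spirit to \refToRule{t-sub-v}.

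The only nontrivial obstacle is preserving the non-zero side conditions that decorate the rules for expressions (e.g.\ $\rgr\neq\rzero$ in \refToRule{t-ret}, \refToRule{t-app}, and the match rules). After promotion the relevant grade becomes $\sgr\rmul\rgr$, and we need $\sgr\rmul\rgr\neq\rzero$. This is precisely where the hypothesis $\sgr\neq\rzero$ in part 2 together with the standing assumption that $\RR$ is integral is used: the product of two non-zero grades is non-zero, as noted just after \cref{def:gr-alg-int-red}. Part 1 needs no such hypothesis because none of the rules for value expressions carry non-zero constraints, and the non-zero grade of a \refToRule{t-var} conclusion is obtained from the variable's grade in the context, which is already scaled by $\sgr$. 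Beyond this one subtlety, the argument is pure algebraic bookkeeping with the monoid, distributivity and monotonicity properties of the grade algebra.
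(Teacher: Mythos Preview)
Your approach is correct and essentially identical to the paper's: induction on the typing derivation, reapplying the introduction rules for value expressions with the outer grade replaced by $\sgr\rmul\rgr$ (using associativity of $\rmul$), handling subsumption via the IH plus monotonicity, and invoking integrality to preserve the non-zero side conditions in part~2. One small slip: in \refToRule{t-let} there is no value-expression premise---both $\e_1$ and $\e_2$ are expressions---so you need part~2 of the IH on $\e_1$ as well, not part~1; with $\sgr\neq\rzero$ in hand this is immediate and the rest goes through unchanged.
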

\begin{proof}
By induction on the typing rules. We show only some cases.
\begin{description}

\item [\refToRule{t-sub-v}]
We have  $\IsWFExp{\Delta}{\ve}{\Graded{\tau}{\sgr'}}$, $\IsWFExp{\Gamma}{\ve}{\Graded{\tau}{\rgr}}$, $\rgr\rord\sgr'$ and $\Delta \ctxord\Gamma$.
By induction hypothesis  $\IsWFExp{\sgr\rmul\Delta}{\ve}{\Graded{\tau}{\sgr\rmul\sgr'}}$. By monotonicity  $\sgr\rmul\rgr\rord\sgr\rmul\sgr'$ and $\sgr\ctxmul\Delta\ctxord\sgr\ctxmul\Gamma $, so, by  \refToRule{t-sub} we get $\IsWFExp{\sgr\rmul\Gamma}{\e}{\Graded{\tau}{\sgr\rmul\rgr}}$, that is, the thesis.
\item [\refToRule{t-var}] By rule \refToRule{t-var} we get the thesis.
\item [\refToRule{t-fun}] 
We have 
 $\Gamma=\rgr\ctxmul\Gamma'$, $\ve={\RecFun{\f}{\x}{\e'}}$, 
$\tau={\funType{\Graded{\tau_1}{\rgr_1}}{\sgr'}{\Graded{\tau_2}{\rgr_2}}} $ and
$\IsWFExp{\Gamma',\VarGradeType{\f}{\sgr'}{\funType{\GradedInd{\tau}{1}{\rgr}}{\sgr'}{\GradedInd{\tau}{2}{\rgr}}},\VarGradeType{\x}{\rgr_1}{\tau_1}}{\e'}{\GradedInd{\tau}{2}{\rgr}}$. By rule \refToRule{t-fun}, $\IsWFExp{(\sgr\rmul\rgr)\ctxmul\Gamma'}{\ve}{\Graded{\tau}{\sgr\rmul\rgr}}$.
From $(\sgr\rmul\rgr)\ctxmul\Gamma'=\sgr\rmul(\rgr\ctxmul\Gamma')=\sgr\ctxmul\Gamma$ \mbox{we get the thesis.}
\item [\refToRule{t-pair}, \refToRule{t-inl} and \refToRule{t-inr}]  Similar to \refToRule{t-fun}.  
\item [\refToRule{t-sub}]
We have  $\IsWFExp{\Delta}{\e}{\Graded{\tau}{\sgr'}}$, $\IsWFExp{\Gamma}{\e}{\Graded{\tau}{\rgr}}$, $\rgr\rord\sgr'$ and $\Delta \ctxord\Gamma$.
By induction hypothesis  $\IsWFExp{\sgr\rmul\Delta}{\e}{\Graded{\tau}{\sgr\rmul\sgr'}}$. By monotonicity  $\sgr\rmul\rgr\rord\sgr\rmul\sgr'$ and $\sgr\ctxmul\Delta\ctxord\sgr\ctxmul\Gamma $, so, by \refToRule{t-sub} we get $\IsWFExp{\sgr\rmul\Gamma}{\e}{\Graded{\tau}{\sgr\rmul\rgr}}$, that is, the thesis.
\item [\refToRule{t-app}]
We have  $\IsWFExp{\Gamma_1\ctxsum\Gamma_2}{\App{\ve_1}{\ve_2}}{\Graded{\tau_2}{\rgr'\rmul\rgr_2}}$.
By induction hypothesis  $\IsWFExp{\sgr\rmul\Gamma_1}{\ve_1}{\Graded{(\funType{\GradedInd{\tau}{1}{\rgr}}{\sgr'}{\GradedInd{\tau}{2}{\rgr}})}{\sgr\rmul(\rgr'\rsum\rgr'\rmul\sgr')}}$ and $\IsWFExp{\sgr\rmul\Gamma_2}{\ve_2}{\Graded{\tau}{\sgr\rmul\rgr'\rmul\rgr_1}}$. Since $\sgr\neq\rzero$ and $\rgr'\neq\rzero$ and, since the algebra is integral,  $\sgr\rmul\rgr\neq\rzero$. By rule \refToRule{t-app}, $ \IsWFExp{\sgr\rmul(\Gamma_1 \ctxsum \Gamma_2)}{\App{\ve_1}{\ve_2}}{\Graded{\tau_2}{\sgr\rmul\rgr'\rmul\rgr_2}}$, that is, the thesis.
\item [\refToRule{t-match-p}]
By induction hypothesis  $\IsWFExp{\sgr\rmul\Gamma_1}{\ve}{\Graded{(\PairT{\Graded{\tau_1}{\rgr_1}}{\GradedInd{\tau}{2}{\rgr}})}{\sgr\rmul\sgr'}}
$ and $\IsWFExp{\sgr\rmul\Gamma_2,\VarGradeType{\x}{(\sgr\rmul\sgr')\rmul\rgr_1}{\tau},\VarGradeType{\y}{(\sgr\rmul\sgr')\rmul\rgr_2}{\tau_2}}{\e}{\Graded{\tau}{\sgr\rmul\rgr}}$. Since $\sgr\neq\rzero$ and $\sgr'\neq\rzero$ and, since the algebra is integral, $\sgr\rmul\sgr'\neq\rzero$. By rule \refToRule{t-match-p}, $\IsWFExp{\sgr\rmul(\Gamma_1\ctxsum\Gamma_2)}{\Match{\x}{\y}{\ve}{\e}}{\Graded{\tau}{\sgr\rmul\rgr}}$, that is, the thesis.
\end{description}
\end{proof}

\begin{lemma}[Demotion] \label{lem:demotion} \
If $\IsWFExp{\Gamma}{\ve}{\Graded{\tau}{\sgr\rmul\rgr}}$ then
$ \sgr\ctxmul \Gamma'\ctxord\Gamma$
and $\IsWFExp{\Gamma'}{\ve}{\Graded{\tau}{\rgr}}$,  for some $\Gamma'$ .
\end{lemma}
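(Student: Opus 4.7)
The plan is to prove a slight strengthening of the statement by induction on the typing derivation, then recover the lemma as the special case. Specifically, I would first establish: \emph{if $\IsWFExp{\Gamma}{\ve}{\Graded{\tau}{\rgr''}}$ and $\sgr\rmul\rgr \rord \rgr''$, then there is $\Gamma'$ with $\sgr\ctxmul\Gamma' \ctxord \Gamma$ and $\IsWFExp{\Gamma'}{\ve}{\Graded{\tau}{\rgr}}$.} The original demotion statement follows by instantiating $\rgr'' = \sgr\rmul\rgr$, for which $\sgr\rmul\rgr \rord \sgr\rmul\rgr$ holds trivially.

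I would then case-split on the last rule used. For \refToRule{t-var}, \refToRule{t-unit}, \refToRule{t-fun}, \refToRule{t-pair}, \refToRule{t-inl}, and \refToRule{t-inr}, the conclusion grade $\rgr''$ appears in a very rigid way in the context: either as the grade of the unique variable (\refToRule{t-var}), as an empty context (\refToRule{t-unit}), or as an outer scalar multiplier $\rgr''\ctxmul\Gamma_0$ in front of a context that comes from premises \emph{not} involving $\rgr''$ (the other four). In each such case I define $\Gamma'$ by simply replacing $\rgr''$ with $\rgr$ in the chosen structure (e.g., $\Gamma' = \rgr\ctxmul\Gamma_0$, or $\Gamma' = \VarGradeType{\x}{\rgr}{\tau}$, or $\Gamma' = \emptyset$) and re-apply the same rule with grade $\rgr$; the premises remain derivable unchanged. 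The inequality $\sgr\ctxmul\Gamma' \ctxord \Gamma$ then reduces to the hypothesis $\sgr\rmul\rgr \rord \rgr''$ by monotonicity of $\ctxmul$ in its scalar argument (itself a consequence of monotonicity of $\rmul$). For \refToRule{t-sub-v}, where the derivation concludes $\IsWFExp{\Gamma}{\ve}{\Graded{\tau}{\rgr''}}$ from $\IsWFExp{\Gamma_0}{\ve}{\Graded{\tau}{\rgr'''}}$ with $\rgr'' \rord \rgr'''$ and $\Gamma_0 \ctxord \Gamma$, transitivity of $\rord$ gives $\sgr\rmul\rgr \rord \rgr'''$, so the strengthened induction hypothesis applies directly to the subderivation, and the resulting $\sgr\ctxmul\Gamma' \ctxord \Gamma_0$ composes with $\Gamma_0 \ctxord \Gamma$ by transitivity of $\ctxord$.

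The main obstacle is precisely the \refToRule{t-sub-v} case: if one tried to prove the lemma as stated, an induction step under subsumption would leave a subderivation with grade $\rgr'''$ which need not factor as $\sgr\rmul(\cdot)$, so the induction hypothesis would not apply. Introducing an arbitrary upper bound $\rgr''$ in the antecedent, rather than requiring the grade to be literally $\sgr\rmul\rgr$, is exactly what is needed to absorb this slack and let the induction go through uniformly. No further subtlety is expected: the structural cases are routine rebuilds of the rule with a different outer grade, supported by the operations and monotonicity facts already established for $\ctxsum$, $\ctxmul$, and $\ctxord$.
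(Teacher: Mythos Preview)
Your proof is correct, and the strengthening you introduce to push the induction through \refToRule{t-sub-v} is exactly the right fix. The paper, however, takes a different route: rather than inducting on the typing derivation, it does a flat case analysis on the \emph{syntax} of $\ve$ and invokes the inversion lemma (\cref{lem:inversion}) in each case. The inversion lemma already absorbs any chain of \refToRule{t-sub-v} applications into a single slack $\rgr \rord \rgr'$ on the grade and $\Gamma_0 \ctxord \Gamma$ on the context, so the subsumption case never has to be handled explicitly. For example, for $\ve = \PairExp{\rgr_1}{\ve_1}{\ve_2}{\rgr_2}$ the paper obtains from inversion that $\rgr'\ctxmul(\Gamma_1\ctxsum\Gamma_2)\ctxord\Gamma$ with $\sgr\rmul\rgr\rord\rgr'$, then takes $\Gamma' = \rgr\ctxmul(\Gamma_1\ctxsum\Gamma_2)$ and rebuilds via \refToRule{t-pair}.

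The two arguments are essentially equivalent in content: your strengthened hypothesis ``$\sgr\rmul\rgr \rord \rgr''$'' plays the same role as the slack ``$\rgr \rord \rgr'$'' that the inversion lemma packages up. Your version is more self-contained (no auxiliary lemma needed), while the paper's version is shorter once inversion is available and avoids restating the grade-slack machinery. Either approach is perfectly acceptable here.
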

\begin{proof}
By case analysis on $\ve$. We show only some cases.
\begin{description}
\item[$\ve = \x$] By \cref{lem:inversion}(\ref{lem:inversion:x}) $\VarGradeType{\x}{\rgr'}{\tau}\ctxord\Gamma$ and $\sgr\rmul\rgr\rord\rgr'$. Let $\Gamma'=\VarGradeType{\x}{\rgr}{\tau}$. 
By monotonicity of $\rmul$ and, by transitivity of $\ctxord$, $(\sgr\rmul\rgr)\ctxmul\VarGradeType{\x}{\rone}{\tau} = \sgr\ctxmul\Gamma' \ctxord\Gamma$. By \refToRule{t-var} $\IsWFExp{\Gamma'}{\x}{\Graded{\tau}{\rgr}}$.
\item[$\ve = \PairExp{\rgr_1}{\ve_1}{\ve_2}{\rgr_2}$] By \cref{lem:inversion}(\ref{lem:inversion:pair}) $\rgr'\ctxmul(\Gamma_1\ctxsum\Gamma_2)\ctxord\Gamma$ and $\sgr\rmul\rgr\rord\rgr'$ and $\IsWFExp{\Gamma_1}{\ve_1}{\GradedInd{\tau}{1}{\rgr}}$ and ${\IsWFExp{\Gamma_2}{\ve_2}{\GradedInd{\tau}{2}{\rgr}}}$.   Let $\Gamma'=\rgr\ctxmul(\Gamma_1\ctxsum\Gamma_2)$.
By monotonicity of $\rmul$ and by transitivity of $\ctxord$ we have $\sgr\rmul\Gamma'\ctxord\Gamma$. By \refToRule{t-pair} ${\IsWFExp{\Gamma'}{\ve}{\Graded{\tau}{\rgr}}}$. 
\end{description}
\end{proof}

In \cref{fig:typing-conf} we give the typing rules for environments and configurations. 
\begin{figure}
\begin{math}
\begin{array}{l}

\NamedRule{t-env}{
\IsWFExp{\Gamma_i}{\val_i}{\tau_i^\rone}  \Space \forall i \in 1..n
}
{
\IsWFEnv{\Gamma}{\EnvElem{\x_1}{\rgr_1}{\val_1},\ldots,\EnvElem{\x_n}{\rgr_n}{\val_n}}{\Delta}
}{
\Gamma = \VarGradeType{\x_1}{\rgr_1}{\tau_1},\ldots,\VarGradeType{\x_n}{\rgr_n}{\tau_n}\\
(\rgr_1\ctxmul\Gamma_1 \ctxsum \ldots \ctxsum \rgr_n\ctxmul\Gamma_n\ctxsum\Delta)\ctxord \Gamma
}
\\[3ex]
\NamedRule{t-conf}{
\IsWFExp{\Delta}{\E}{\T} \BigSpace
\IsWFEnv{\Gamma}{\env}{\Delta}
}
{
\IsWFConf{\Gamma}{\E}{\env}{\T}
}{ \produzioneinline{\E}{\ve\mid\e}
}
\end{array}
\end{math}
\caption{Typing rules for environments and configurations}
\label{fig:typing-conf}
\end{figure}

The typing judgement for environments has shape $\IsWFEnv{\Gamma}{\env}{\Delta}$. In rule \refToRule{t-env},  $\Gamma$ is the context  \emph{extracted} from the environment, in the sense that, if typechecking succeeds, each variable will get exactly its grade, and the type of its value, in $\env$. In other words, $\Gamma$ gives type and grade information about the resources (variables) provided by the environment. 
 The context $\Gamma$ should be enough to cover both the sum of the contexts needed to typecheck each provided resource, and a \emph{residual} context $\Delta$ left for the program (last side condition).
 The typing judgment for configurations has shape $\IsWFConf{\Gamma}{\E}{\env}{\T}$, with $\E$ either value expression or expression. In rule \refToRule{t-conf}, the residual context obtained by typechecking the environment should be exactly that required by the program.

\begin{remark}\label{rem:t-env}
Note that, in this way, rule \refToRule{t-conf} imposes \emph{no-waste}.
 For instance, taking the linearity grade algebra of \refItem{ex:gr-alg}{lin}, $\Gamma$ cannot offer  resources which are unused in both the expression and the codomain of the environment. 
Hence, e.g., $\Conf{\x}{\EnvElem{\x}{\infty}{\unit},\EnvElem{\y}{1}{\unit}}$, is ill-typed, since the linear resource $\y$ is wasted, even though in the resource-aware semantics such configuration safely reduces to $\Conf{\unit}{\EnvElem{\x}{\infty}{\unit},\EnvElem{\y}{1}{\unit}}$. 
 In other words, the type system checks that available (non-discardable) resources are fully consumed, whereas in the instrumented semantics there is no such check.
 This mismatch between type system and semantics can be eliminated in two ways:
 \begin{itemize}
 \item as  done by  \cite{BianchiniDGZ@OOPSLA23},  with the  more permissive typing rule for configurations shown below:
 \[
 \NamedRule{t-conf}{
\IsWFExp{\Delta}{\E}{\T} \BigSpace
\IsWFEnv{\Gamma}{\env}{\widehat{\Gamma}}
}
{
\IsWFConf{\Gamma}{\E}{\env}{\T}
}{\produzioneinline{\E}{\ve\mid\e}\\
\Delta\ctxsum\widehat{\Gamma}\ctxsum\Theta\ctxord\Gamma 
}
 \]
The additional context $\Theta$ allows the environment to contain resources which will be wasted; e.g., the configuration $\Conf{\x}{\EnvElem{\x}{\infty}{\unit},\EnvElem{\y}{1}{\unit}}$ becomes well-typed with $\Theta=\VarGradeType{\y}{1}{\Unit}$.  
 
\item Conversely, we can refine the instrumented semantics to check no-waste as well; e.g., the configuration $\Conf{\x}{\EnvElem{\x}{\infty}{\unit},\EnvElem{\y}{1}{\unit}}$ is stuck. This is the approach we will develop in \cref{sect:nw-sem}.
\end{itemize}
\end{remark}


\section{Programming examples and discussions}
\label{sect:examples} 
In this section, for readability, we  use the surface syntax and, moreover,  assume some standard additional constructs and syntactic conventions. Notably, we generalize  (tensor)  product types to tuple types, with the obvious extended syntax, and sum types  to variant types, written ${\VariantT{\lab_1}{\T_1}+\ldots+\VariantT{\lab_n}{\T_n}}$  for some \emph{tags} $\lab_1$, \ldots, $\lab_n$.  Injections are generalized to tagged variants $\Variant{\lab}{\rgr}{\e}$, and the corresponding match constructs are of  shape
$\texttt{match}\ \e\ \texttt{with}\ \Variant{\lab_1}{}{\x_1}\ \texttt{or} \ldots\ \texttt{or}\ \Variant{\lab_n}{}{\x_n}$. We write just $\lab$ as an abbreviation for an addend $\VariantT{\lab}{\Graded{\Unit}{\rzero}}$ in a variant type,   and also for the corresponding tagged variants $\Variant{\lab}{\rzero}{\unit}$ and patterns $\Variant{\lab}{}{\x}$ in a matching construct. 
Moreover, we will use type and function definitions (that is, synonyms  for function and type expressions), and, as customary, represent (equi-)recursive types
by equations.  Finally, we will omit $\rone$ annotations as in the previous examples, and we will consider $\rzero$ as default, hence omitted, as  recursion grade  (that is, functions are by default non-recursive). 

\paragraph*{Natural numbers and lists}
The encoding of  Booleans,  natural numbers, lists of natural numbers, and optional natural numbers, is given at the top of \cref{fig:boolNatList} followed by the definition of some standard functions. 
\begin{figure}
\begin{lstlisting} [basicstyle=\ttfamily\footnotesize]
Bool = true $+$ false 
Nat = zero $+$ succ:Nat
NatList = empty + cons:(Nat $\otimes$ NatList)
OptNat = none $+$ some:Nat

not: Bool -> Bool
not = \b.match b with true -> false or false -> true 

even: Nat ->$_\infty$ Bool
even = rec ev.\n.match n with zero -> true or succ m -> not (ev m)
 
_+_: Nat ->$_\infty$ Nat -> Nat
_+_ = rec sum.\n.\m. match n with zero -> m or succ x -> succ (sum x m) 
        
double: Nat$^2$ -> Nat
double n = n + n

_*_: Nat ->$_\infty$ Nat$^\infty$ -> Nat  
_*_ = rec mult.\n.\m.match n with zero -> zero or succ x -> (mult x m) + m 

length : NatList ->$_\infty$ Nat 
length = rec len.
         \ls.match ls with empty -> zero 
             or cons ls1 -> match ls1 with <_,tl> -> succ (len tl)   
              
get : NatList ->$_\infty$ Nat -> OptNat
get = rec g.
         \ls.\i.match ls with empty -> none 
                or cons ls1 -> 
                  match ls1 with <hd,tl> -> match i with zero -> some hd 
                                            or succ j -> g j tl         
 \end{lstlisting}
\caption{ Examples of type and function definitions}\label{fig:funNatList}\label{fig:boolNatList}
\end{figure}
 Assume, firstly, the grade algebra of natural numbers with bounded usage, \refItem{ex:gr-alg}{nat}, extended with $\infty$, as in \refItem{ex:gr-alg}{ext}, needed as annotation of recursive functions, as has been illustrated in \cref{ex:div-type}; we discuss below what happens taking exact usage instead.

 As a first comment, note that in types of recursive functions the  recursion grade needs to be  $\infty$, as expected. 
On the other hand, most function parameters are graded $1$, since they are used at most once in each branch of the function's body. The second parameter of  the function \lstinline{*}, instead, needs to be graded $\infty$. Indeed, it is used in the body of the function both as argument of sum, and \emph{inside} the recursive call. Hence, its grade $\rgr$ should satisfy the equation $(\rone\rsum\rgr)\leq\rgr$, analogously to what happens for the  recursive grade; compare with the parameter of function \lstinline{double}, which is used twice as well, and can be graded $2$.
In the following alternative definitions:

\begin{lstlisting}[basicstyle=\ttfamily\footnotesize]
double: Nat$^1$ -> Nat
double n = n * succ succ zero

double: Nat$^\infty$ -> Nat
double n = succ succ zero * n
\end{lstlisting}
the parameter needs to be graded differently depending on  whether it  is  used as either first or second argument of \lstinline{*}.  In other words, the resource-aware type system captures, as expected, non-extensional properties. 

Assume now the grade algebra of natural numbers with exact usage, again extended with $\infty$. Interestingly enough, the functions \lstinline{length} and \lstinline{get} above are no longer typable. 

In \lstinline{length}, this is due to the fact that, when the list is non-empty, the head is unused,  whereas it should be used exactly once as the tail, since the grade of a pair is ``propagated'' to both the components. 
The function would be typable assuming for lists the type \lstinline{NatList=empty+cons:(Nat$^\rzero\otimes$NatList)}, which, however, would mean to essentially handle a list as a natural number.  

Function \lstinline{get}, analogously, cannot be typed since, in the last line, only one component of a non-empty list (either the head or the tail) is used in  a  branch of the match, whereas both should be used exactly once. 
Both functions could be typed, instead, grading with $\infty$ the list parameter; this would mean to allow an arbitrary usage in the body. These examples suggest that, in a grade algebra with exact usage, such as that of natural numbers, or the simpler linearity grade algebra, see \refItem{ex:gr-alg}{lin}, there is often no middle way between imposing severe limitations on code, to ensure linearity (or, in general, exactness), and allowing code which is essentially non-graded.

\paragraph*{ Cartesian  product} 
The product type we consider in \cref{fig:types} is the \emph{tensor} product, also called \emph{multiplicative}, following Linear Logic terminology \citep{Girard87}. 
In the destructor construct for such  a  type, both components are simultaneously extracted, each one with a grade which is (a multiple of) that of the pair, see the semantic\footnote{Typing rules follow the same pattern.} rule \refToRule{match-p} in \cref{fig:semantics}. Thus, as shown in the examples above, programs which discard the use of either component cannot be typed in a non-affine grade algebra. Correspondingly, the resource consumption for constructing a (multiplicative) pair is the \emph{sum} of those for constructing the two components, corresponding to a sequential evaluation, see rule \refToRule{pair} in \cref{fig:semantics}. The  \emph{Cartesian}  product, instead, 
formalized in \cref{fig:cartesian}, has one destructor for each component, so that which is not extracted is discarded. Correspondingly, the resource consumption for constructing  a   pair is \emph{an upper bound} of those for constructing the two components, corresponding in a sense to a non-deterministic evaluation. 
The \lstinline{get} example, rewritten using the constructs of the  Cartesian  product, becomes typable even in a non-affine grade algebra. 
In an affine grade algebra, programs can always be rewritten replacing the cartesian product with the tensor, and conversely; in particular,  $\Proj{i}{\ve}$ can be encoded as 
$\Match{\x_1}{\x_2}{\ve}{\x_i}$,  even though, as said above, resources are consumed differently (sum versus upper bound). An interesting remark is that record/object calculi, where generally width subtyping is allowed, 
meaning that components can be discarded at runtime, and object construction happens by sequential evaluation of the fields, need to be modeled by multiplicative product and affine grades. In future work we plan to investigate object calculi which are \emph{linear}, or, more generally, use resources in \mbox{an exact way.}

\begin{figure}
\begin{grammatica}
\produzione{\ve}{\ldots\mid\APairExp{\rgr_1}{\ve_1}{\ve_2}{\rgr_2}\mid \Fst\ve \mid \Snd\ve }{} \\ 
\produzione{\tau,\sigma}{ \ldots\mid\APairT{\T}{\ST}}{}\\
\\
\end{grammatica}

\begin{math}
\begin{array}{l}
\NamedRule{apair}{
  \redval{\ve_1}{\env}{\rgr\rmul\rgr_1}{\val_1}{\env'} 
  \Space
  \redval{\ve_2}{\env}{\rgr\rmul\rgr_2}{\val_2}{\env'} 
}{ \redval{\APairExp{\rgr_1}{\ve_1}{\ve_2}{\rgr_2}}{\env}{\rgr}{\APairExp{\rgr_1}{\val_1}{\val_2}{\rgr_2}}{\env'} }
{} 
\Space 
\NamedRule{proj}{ 
  \redval{\ve}{\env}{\sgr}{\APairExp{\rgr_1}{\val_1}{\val_2}{\rgr_2}}{\env'}
}{ \reduce{\Conf{\Proj{i}{\ve}}{\env}}{\rgr}{\Conf{\val_i}{\env'}} }
{ i\in\{1,2\}\\
\rgr\rord \sgr\rmul\rgr_i  }
\\[3ex]
\NamedRule{t-apair}{
  \IsWFExp{\Gamma}{\ve_1}{\Graded{\tau_1}{\rgr_1}} 
  \Space 
  \IsWFExp{\Gamma}{\ve_2}{\Graded{\tau_2}{\rgr_2}} 
}{ \IsWFExp{\rgr\ctxmul \Gamma}{\APairExp{\rgr_1}{\ve_1}{\ve_2}{\rgr_2}}{\Graded{(\APairT{\Graded{\tau_1}{\rgr_1}}{\Graded{\tau_2}{\rgr_2}})}{\rgr} } } 
{} 
\Space 
\NamedRule{t-proj}{
  \IsWFExp{\Gamma}{\ve}{\Graded{\APairT{(\GradedInd{\tau}{1}{\rgr}}{\GradedInd{\tau}{2}{\rgr}})}{\rgr}} 
}{ \IsWFExp{\Gamma}{\Proj{i}{\ve}}{\Graded{\tau_i}{\rgr\rmul\rgr_i}} }
{ \rgr\ne\rzero }
\end{array}
\end{math}
\caption{Cartesian product}\label{fig:cartesian}
\end{figure}
 
\paragraph*{Terminating recursion} As anticipated, even though the type system can only derive, for recursive functions,  recursion  grades which are ``infinite'', their calls which terminate in standard semantics can terminate also in resource-aware semantics, provided that the initial amount of the function resource is enough to cover the (finite number of) recursive calls, as shown in \cref{fig:term-rec}.

For brevity, we write $\zero$, $\Succ{}{}$, and $\tru$ for \terminale{zero}, \terminale{succ}, and \terminale{true}, respectively, $\CaseNat{\ve}{\e_1}{\x}{\e_2}$ for $\terminale{match}\ \ve\ \terminale{with} \ \zero\,  \rightarrow \e_1 \ \terminale{or} \ \Succ{}{\x} \rightarrow \e_2$, and $\Triple{\rgr}{\sgr}{\val}$ for the environment ${\EnvElem{\ev}{\rgr}{\even},\EnvElem{\n}{\sgr}{\val}}$.
\begin{figure}
 \begin{scriptsize}
 \begin{math}
 \begin{array}{l}

\prooftree
	\prooftree
	\justifies      
	\reduce{\ConfP{\ev}{\Triple{1}{1}{\zero}}}{1}{\ConfP{\even}{\Triple{0}{1}{\zero}}}
	\thickness=0.08em
	\shiftright 0em
	\endprooftree

	\BigSpace
         
	\prooftree
	\justifies      
	\reduce{\ConfP{\n}{\Triple{0}{1}{\zero}}}{}{\ConfP{\zero}{\Triple{0}{0}{\zero}}}
	\thickness=0.08em
	\shiftright 0em
	\endprooftree

	\BigSpace
         
	\prooftree
	\reduce{\ConfP{\tru}{\Triple{0}{0}{\zero}}}{}{\ConfP{\tru}{\Triple{0}{0}{\zero}}}
	\justifies      
	\reduce{\ConfP{\CaseNat{\zero}{\tru}{\n}{\App{\nottt}{(\App{\ev}{\n})}}}{\Triple{0}{0}{\zero}}}{}{\ConfP{\tru}{\Triple{0}{0}{\zero}}}
	\thickness=0.08em
	\shiftright 0em
	\endprooftree
\justifies
\reduce{\ConfP{\App{\ev}{\n}}{\Triple{1}{1}{\zero}}}{}{\ConfP{\tru}{\Triple{0}{0}{\zero}}}
\thickness=0.08em
\shiftright 0em
\endprooftree

 \\[7ex]

\prooftree
	\prooftree
	\justifies      
	\reduceNarrow{\ConfP{\ev}{\Triple{n}{1}{\Succ{n}{\zero}}}}{n}{\ConfP{\even}{\Triple{n{-}1}{1}{\zero}}}
	\thickness=0.08em
	\shiftright 0em
	\endprooftree

	\ 
		\prooftree
	\justifies      
	\reduceNarrow{\ConfP{\n}{\Triple{n{-}1}{1}{\zero}}}{}{\ConfP{\zero}{\Triple{{n{-}1}}{0}{\zero}}}
	\thickness=0.08em
	\shiftright 0em
	\endprooftree

	\ 
	\prooftree
	 \reduceNarrow{\Conf{\App{\nottt}{(\App{\ev}{\n})}}{\Triple{{n{-}1}}{1}{\zero}}}{}{\Conf{\overline{\bvar}}{\Triple{0}{0}{\zero}}}
	\justifies      
	\reduceNarrow{\ConfP{\CaseNat{\zero}{\tru}{\n}{\App{\nottt}{(\App{\ev}{\n})}}}{\Triple{n{-}1}{1}{\zero}}}{}{\ConfP{\bvar}{\Triple{0}{0}{\zero}}}
	\thickness=0.08em
	\shiftright 0em
	\endprooftree
\justifies
\reduceNarrow{\ConfP{\App{\ev}{\n}}{\Triple{n}{1}{\Succ{n}{\zero}}}}{}{\ConfP{\bvar}{\Triple{0}{0}{\zero}}}
\thickness=0.08em
\shiftright 0em
\endprooftree

\end{array}
\end{math}
\end{scriptsize}
\caption{Example of resource-aware evaluation: terminating recursion. (Labels of rules are omitted.)}
\label{fig:term-rec}
\end{figure}
In the top part of the figure we show the proof tree for the evaluation of $\App{\ev}{\n}$ in the base case, that is, in an environment where the value of $\n$  is \lstinline{zero}. 
In this case, both $\ev$ and $\n$ can be graded $1$, since they are used only once. In the  bottom part, we show the proof tree in an environment where the value of $\n$ is $\Succ{n}{\zero}$, for $n\neq 0$.
Here, $\bvar$ stands for either \lstinline{true} or \lstinline{false}, and $\overline{\bvar}$ for its complement.

\paragraph*{Processing data from/to files} The following example illustrates how we can simultaneously track  access  levels, as introduced in \cref{ex:ex2},  and linearity information.  Linearity grades guarantee the correct use of files, whereas  access  levels  are used to ensure that data are handled without leaking information. 
We combine the two grade algebras with the smash product of \cref{ex:gr-alg-smash}. 
So there are five grades: $0$ (meaning unused), $\comb{1}{\priv}$ and $\comb{1}{\pub}$ (meaning used linearly in either $\priv$ or $\pub$ mode), and $\comb{\omega}{\priv}$, $\comb{\omega}{\pub}$ (meaning used an arbitrary number of times in either $\priv$ or $\pub$ mode).  The partial order is graphically shown in  \cref{fig:algebra}. The neutral element for multiplication is $\comb{1}{\pub}$, which therefore will be omitted. 

\begin{figure*}
\begin{minipage}[t]{.4\textwidth}
\begin{footnotesize}
\centering
\begin{tikzpicture}
\node (OmegaPb) at (0ex,20ex) {$\comb{\omega}{\pub}$};
\node (OmegaPr) at (0ex,10ex) {$\comb{\omega}{\priv}$};
\node (OnePb) at (12ex,10ex) {$\comb{1}{\pub}$};
\node (OnePr) at (12ex,0ex) {$\comb{1}{\priv}$};
\node (Zero) at (0ex,0ex) {$0$};
\draw [arrows={-latex}] (Zero) -- (OmegaPr);
\draw [arrows={-latex}] (OnePr) -- (OnePb);
\draw [arrows={-latex}] (OnePr) -- (OmegaPr);
\draw [arrows={-latex}] (OnePb) -- (OmegaPb);
\draw [arrows={-latex}] (OmegaPr) -- (OmegaPb);
\end{tikzpicture}
\end{footnotesize}
\caption{Grade algebra of  access  levels
and linearity}\label{fig:algebra}
\end{minipage}
\hspace{1em}
\begin{minipage}[t]{.5\textwidth}
\vspace{-7.5em}
\begin{lstlisting}[mathescape=true,basicstyle=\ttfamily\footnotesize]  
Result = success$\SumT$failure
OptChar = none$\SumT$some:Char$^{\comb{\omega}{\priv}}$
fnType= Char$^{\comb{\omega}{\priv}}\rightarrow$ (ok:Char$^{\comb{\omega}{\priv}}\SumT$ error)

open:$\!\!\!$String$^{\comb{\omega}{\pub}}$  $\rightarrow$ FileHandle
read:$\!\!\!$FileHandle $\!\!\!\rightarrow\!\!\!$ (OptChar$^{\comb{\omega}{\priv}} \otimes\!\!\!$ FileHandle)
write:$\!\!\!$(Char$^{\comb{\omega}{\priv}} \otimes\!\!\!$ FileHandle) $\!\!\!\rightarrow\!\!\!$ FileHandle
close:$\!\!\!$FileHandle $\rightarrow$ Unit$^0$ 
\end{lstlisting}
\caption{Types of data and filesystem interface}\label{fig:data}
\end{minipage}
\end{figure*}
In \cref{fig:data} are the types of the data, the processing function and the functions of a filesystem interface, assuming types \lstinline{Char}, \lstinline{String}, and \lstinline{FileHandle} to be given. 
The type \Q@Result@ indicates success or failure. The type \Q@OptChar@ represents the presence or absence of a \Q@Char@ and is used in the function reading from a file;
\Q@fnType@ is the type of a function processing a private \Q@Char@ and returning either a private \Q@Char@ or \Q@error@.  The signatures of the functions of the filesystem interface specify that file handlers are used in a linear way. Hence, after opening a file and doing a number of read or write operations, \mbox{the file must be closed.} 

In the code of \cref{fig:server} we use, rather than $\MatchUnit{\e_1}{\e_2}$, the 
alternative syntax $\Seq{\e_1}{\e_2}$  mentioned in \cref{sect:calculus}.  
\begin{figure*}
\begin{lstlisting}[mathescape=true,numbersep=5pt,xleftmargin=+2em,
numbers=left,basicstyle=\ttfamily\small]
fileRW:FileHandle $\rightarrow_{\comb{\omega}{\pub}}$ FileHandle $\rightarrow$ fnType$^{\comb{\omega}{\pub}}\rightarrow$ Result$^{\comb{\omega}{\pub}}$=
rec fileRdWr.
   \inF.\outF.\fn.
      match (read inF) with <oC,inF1> ->
         match oC with none ->
            close inF1;close outF;success
         or (some c) ->
            match (fn c) with (ok c1) ->
               let outF1 = write <c1,outF> in
                  fileRdWr inF1 outF1 fn
            or error ->
               close inF1;close outF;failure
 \end{lstlisting} 
\caption{Processing data from/to files}\label{fig:server}
\end{figure*}
The function \Q@fileRW@
takes as parameters an input and an output file handler and a function that processes the character read from the input file. The result indicates whether all the characters of the input file have been successfully processed and written in the output file or there was an error in processing some character. 

The function starts by reading a character from the input file. 
If \Q@read@ returns \Q@none@, then all the characters from the input file have been read and so both files are closed and the function returns \Q@success@ (lines 5---6).  Closing the files is necessary in order to typecheck this branch of the match.
If \Q@read@ returns a character  (lines 7---12), then the processing function \Q@fn@ is applied to that character. Then, if  \Q@fn@  returns a character, then this result is written to the output file using the file handler passed as a parameter and, finally, the function  is recursively called  with the file handlers returned by the  \Q@read@ and \Q@write@ functions as arguments.  If, instead,  \Q@fn@  returns \Q@error@,  then both files are closed and the function returns \Q@failure@ (lines 11---12).  

Observe that, given the order of  \cref{fig:algebra}, we have $0\not\rord\comb{1}{\pub}$. Therefore the variables of type \Q@FileHandle@, which have grade $\comb{1}{\pub}$, must be used exactly once in every branch of the matches in their scope.

A call to \Q@fileRW@ could be: \Q@fileRW (open "inFile") (open "outFile") (rec f.\x.x)@.\\
Note that , with 
\begin{lstlisting}
  write: (Char$^{\comb{\omega}{\pub}}\otimes$ FileHandle) $\rightarrow$ FileHandle
\end{lstlisting} 
the function \Q@fileRW@ would not be well-typed, since a $\priv$ character cannot be the input of \Q@write@. On the other hand, using subsumption, we can apply  \Q@fileRW@ to a processing function  with type
\begin{lstlisting}
  Char$^{\comb{\omega}{\priv}}\rightarrow$ (ok:Char$^{\comb{\omega}{\pub}}\SumT$ error)
\end{lstlisting} 
Finally,  in the type of \Q@fileRW@, the grade of the first arrow says that the function  is recursive and it is internally used in an unrestricted way. The function could also be typed with: 
\begin{lstlisting}
  FileHandle $\rightarrow_{\comb{\omega}{\priv}}$ FileHandle $\rightarrow$ fnType$^{\comb{\omega}{\pub}}\rightarrow$ Result$^{\comb{\omega}{\priv}}$
\end{lstlisting} 
However, in this case its final result would be private and therefore less usable.


\section{No-waste semantics}\label{sect:nw-sem}
 This section is structured as follows. We first informally describe no-waste usage in \cref{sect:nw}.   In \cref{sect:gr-graph} we give some preliminary definitions and results related to \emph{grade graphs}, needed in the following sections.  In \cref{sect:nws} we define the no-waste refinement of the resource-aware semantics, and examples of reduction are shown in \cref{sect:nw-ex}. 
Finally, in \cref{sect:nw-res} we prove the related results.  Notably,  the refined semantics actually ensures no-waste (\cref{theo:nw-sem}), garbage in the environment is discardable (\cref{prop:GID}), and well-typed configurations are no-waste (\cref{theo:no-waste-typing}).

\subsection{No-waste usage}\label{sect:nw}
In the resource-aware semantics in \cref{fig:semantics}, a program cannot consume more resources than those available. That is, reduction is stuck if the current availability of a resource is not enough to cover the required usage by the program, and this situation is expected to be prevented by the type system. 

However, the converse property does not hold, that is, resources can be \emph{wasted}, even though, see \cref{rem:t-env} at the end of \cref{sect:typesystem}, the type system prevents this situation as well. For instance,  assuming linearity grades, the configuration ${\Conf{\x}{\EnvElem{\x}{1}{\unit},\EnvElem{\y}{1}{\unit}}}$, where both $\x$ and $\y$ are provided as linear, reduces (with grade $1$) to $\Conf{\unit}{\EnvElem{\y}{1}{\unit}}$, by wasting the resource $\y$.  In this section, we provide a refinement of resource-aware semantics where \emph{resources cannot be wasted}, hence the configuration above is stuck. In this way,  resource-aware soundness including no-waste can be stated and proved.

To formally express that ``resources cannot be wasted'', we will rely on the \emph{no-waste relation} $\NoWaste{\cctx}{\env}$, meaning that the resources provided by $\env$ are (transitively) used with no-waste by $\cctx$. 
The formal definition of the no-waste relation will be provided in \cref{def:nw} (\cref{sect:nws}); the following examples illustrate its expected behaviour.

\begin{example}\label{ex:nw1}
In the simple case when values in the environment are closed, as in the example above, there is no need of transitivity; for instance, $\NoWaste{\VarGrade{\x}{1},\VarGrade{\y}{1}}{\EnvElem{\x}{1}{\unit},\EnvElem{\y}{1}{\unit}}$ holds, whereas $\NoWaste{\VarGrade{\x}{1},\VarGrade{\y}{0}}{\EnvElem{\x}{1}{\unit},\EnvElem{\y}{1}{\unit}}$ does not hold,  since for linearity grades $0\not\rord 1$. 
\end{example}

On the other hand, when values have free variables, dependencies in the environment should be taken into account. To this end, as shown in \cref{fig:annotated}, the no-waste semantics is instrumented by \emph{annotating} values with a grade context ``declaring'' the \emph{direct usage} of resources in the value, so that transitive usage can be computed. 

\begin{figure}
\begin{grammatica}
\produzione{\env}{\EnvElem{\x_1}{\rgr_1}{\bv_1},\ldots,\EnvElem{\x_n}{\rgr_n}{\bv_n}}{environment}\\
\produzione{\bv}{\bvPair{\cctx}{\val}}{(annotated) value}\\
\produzione{\cctx}{\VarGrade{\x_1}{\rgr_1}, \ldots, \VarGrade{\x_n}{\rgr_n}}{grade context}
\\
\produzione{\E}{\ve\mid\e}{value expression or expression}
\end{grammatica}
\caption{Environments and (annotated) values}
\label{fig:annotated}
\end{figure}


\begin{example}\label{ex:nw2}
For instance, $\NoWaste{\VarGrade{\x}{1}}{\EnvElem{\x}{1}{\bvPair{\VarGrade{\y}{1},\VarGrade{\z}{1}}{(\Fun{\_}{\PairExp{}{\y}{\z}{}})},\EnvElem{\y}{1}{\bvPair{\emptyset}{\unit}},\EnvElem{\z}{1}{\bvPair{\emptyset}{\unit}}}}$ holds, since $\VarGrade{\x}{1}$ directly uses $\x$, and transitively $\y$ and $\z$, exactly once. Note that the relation of no-waste usage, being based on the partial order, still allows approximation which is no-waste, see \cref{rem:order}.
Hence, for instance, $\NoWaste{\VarGrade{\x}{1}}{\EnvElem{\x}{1}{\bvPair{\emptyset}{\unit}},\EnvElem{\y}{\infty}{\bvPair{\emptyset}{\unit}}}$ holds, since the resource $\y$, which is not used by the grade context, is discardable. 
\end{example}

 Set $\fv{\val}$  to be  the free variables of $\val$, defined in the standard way. 
We assume environments to be \emph{closed}, that is, for each $\EnvElem{\x}{\rgr}{\bvPair{\cctx}{\val}}\in\env$, $\fv{\val}\subseteq\dom{\env}$. 

 The grade context annotating a value is meant to be a ``guess'' about the resource consumption which would be actually triggered when the value is used.   Reduction will be non-stuck when the guess is correct (and no standard typing errors occur); as expected, the type system will check such correctness. We only  assume that, in an annotated value  $\bvPair{\cctx}{\val}$, $\cctx$ is \emph{honest} for $\val$,  that is,   it does not declare a resource consumption which could not possibly be triggered,  as formally defined below:
\begin{definition}[Honest]
A grade context $\cctx$ is honest for $\E$ if  $\rzero\not\rord\cctx(\x)$ implies $\x\in\fv{\E}$.
\end{definition}
For instance,  $\bvPair{\VarGrade{\y}{1},\VarGrade{\z}{1}}{(\Fun{\_}{\y})}$ is an ill-formed value,  since $\z$ is not a free variable of the term,  whereas  the grade context says that it is used exactly once.  The honesty condition is intuitively reasonable,  and  allows us to prove the important property that garbage,  that is, resources which are not reachable from the program,  can be discarded (\cref{prop:GID}).

The no-waste relation $\NoWaste{\cctx}{\env}$ informally introduced by the above examples will be formalized in \cref{sect:nws}. However, such formalization requires some technical definitions and results which will be provided in \cref{sect:gr-graph}. Notably, to express the property, the first step is the notion of the \emph{grade graph} $\Graph{\env}$ extracted from an environment $\env$. Intuitively, $\Graph{\env}$ expresses the
(direct) dependencies among variables in $\env$, that is,  for each pair $\x$ and $\y$, how $\y$ is used in 
(the grade contex annotating) the value associated to $\x$.
For instance, considering
the first environment of \cref{ex:nw2}, that is, $\env={\EnvElem{\x}{1}{\bvPair{\VarGrade{\y}{1},\VarGrade{\z}{1}}{(\Fun{\_}{\PairExp{}{\y}{\z}{}})},\EnvElem{\y}{1}{\bvPair{\emptyset}{\unit}},\EnvElem{\z}{1}{\bvPair{\emptyset}{\unit}}}}$,
we expect to have $\Graph{\env}(\x,\y)=\Graph{\env}(\x,\z)=1$, and $\Graph{\env}$ to give $0$ on all the other pairs.

\subsection{Grade graphs: an algebraic digression}\label{sect:gr-graph}

As anticipated, the aim of this section is to provide the technical definitions and results needed in the following  sections.  In detail:
\begin{enumerate}
\item Definition of \emph{grade graphs} (\cref{def:gr-graph}).
\item Grade graphs are a special case of \emph{grade  matrices } (\cref{def:gr-matrix}). The  \emph{reflexive and transitive closure $\Gstar{\graph}$} of an (acyclic) grade graph $\graph$ (\cref{def:closure}) is a grade matrix.
\item \emph{Fixed-point characterization} of the closure: $\Gstar\graph=\Id  \rsum \Gstar\graph\rmul \graph$ (\cref{cor:GStarId}).
\item \emph{Induction principle}: for each $\cctx$, $\cctx\ctxmul\GraphStar{}$ is the least fixed point of the transformation $\phi_{\cctx}(\cctx') = \cctx\rsum \cctx'\rmul\graph$ (\cref{prop:implication}).
\end{enumerate}

The key points where these notions will be used  are the following:
\begin{itemize}
\item The no-waste relation is defined in terms of $\Gstar{\graph}$
(\cref{def:nw}). 
\item Resource balance (\cref{theo:balance}) relies on the fixed point characterization, and the  result that well-typed configurations and results are no-waste (\cref{theo:no-waste-typing}) relies on the induction principle.
\end{itemize}

\subsubsection{Grade graphs} Recall that we denote by $\Vars$ the set of variables and that, for a function $\fun{f}{X}{\RC\RR}$, the support of $f$ is the set 
$\supp{f} = \{ x \in X \mid f(x) \ne \rzero \}$. 

\begin{definition}[Grade graph]  \label{def:gr-graph} 
A \emph{grade graph} is a function $\fun{\graph}{\Vars \times \Vars}{\RC\RR}$ with finite support. 
\end{definition}
The finite support condition ensures that 
$\graph(x,y)\ne\rzero$ for finitely many pairs $\ple{x,y}\in\Vars\times\Vars$. 
As the name suggests, 
these functions can be represented by finite graphs where nodes are variables and edges are labeled by grades, 
analogously to functions from variables into grades with finite support which are represented by grade contexts. 
More precisely, given a grade graph $\graph$, 
we have an edge from $\x$ to $y$ with label $\rgr$ exactly when $\graph(\x,\y) = \rgr$. 
Intuitively, such an edge means that $\x$ depends on ``$\rgr$ copies'' of $\y$; for $\rgr=\rzero$, $\x$ does \emph{not} depend on $\y$, and the edge can be omitted.  
Finally, again following graph terminology, we will call \emph{source} in $\graph$ a node (variable) $\y$ such that $\graph(\x,\y)=\rzero$ for all $\x$.
We will write $\Nodes\graph$ for the set of \emph{non-source} nodes, i.e., those with at least one incoming edge: 
\[ \Nodes\graph = \{ \y\in\Vars \mid  \graph(\x,\y)\ne\rzero \text{ for some } \x \in \Vars \} \]
Since $\graph$ has finite support, the set $\Nodes\graph$ is clearly finite, as its   cardinality is bounded \mbox{by that of $\supp\graph$. }

\begin{example}\label{ex:gradeGraph}
Consider the grade graph $\graph$ with support $\{\x,\y,\z\}$ defined by:
\begin{quoting}
 $\graph(\x,\y)=2\quad\graph(\x,\z)=1\quad\graph(\y,\x)=0\quad\graph(\y,\z)=1\quad\graph(\z,\x)=\graph(\z,\y)=0 $
\end{quoting}
We have $\Nodes\graph=\{\y,\z\}$. This grade graph could correspond, for instance, to the environment:
\begin{quoting} 
$\EnvElem{\x}{\_}{\bvPair{\VarGrade{\y}{2},\VarGrade{\z}{1}}{(\Fun{\_}{\PairExp{}{\y}{\z}{}})}},\EnvElem{\y}{\_}{\bvPair{\VarGrade{\z}{1}}{(\Fun{\_}{\z})}},\EnvElem{\z}{\_}{\bvPair{\emptyset}{\unit}}$
\end{quoting}
where we left the grades unspecified as they are not relevant.
\end{example}

In order to model transitive dependencies, we have to consider paths in a grade graph. 
The next definition introduces path-related notions, adapting them from the standard graph-theoretic ones. 

\begin{definition}[Paths, cycles and acyclic grade graphs] \label{def:paths} 
A \emph{path} is a sequence $\p=\x_1\ldots \x_n$, with $n\geq 1$; $\x_1, \x_n$ are denoted $\Start{\p}$ and $\End{\p}$, respectively.
We set $\length\p = n-1$ the \emph{length} of the path $\p$.\footnote{Note that the length counts the number of edges  which is the number of nodes minus $1$. } 
A path $\x_1\ldots \x_n$ is a \emph{cycle} if $n>1$ and $\x_1 = \x_n$. 
A grade graph $\graph$ induces a function $\fun{\graph}{\Vars^+}{\RR}$ inductively defined as follows: 
\begin{quoting}
$\weight{\graph}{\x} = \rone$\\
$\weight{\graph}{\x\cdot\p}=\graph(\x,\Start{\p})\rmul\weight{\graph}{\p}$
\end{quoting}
Then, $\graph$ is \emph{acyclic} if, for all cycles $\p$,  $\weight{\graph}{\p}=\rzero$. 
\end{definition}

\begin{example}\label{ex:paths}
For the grade graph of \cref{ex:gradeGraph} the paths having non zero grade are:
\begin{quoting}
 $\graph(\x)=\graph(\y)=\graph(\z)=1\quad\graph(\x\y)=2\quad\graph(\x\z)=1\quad\graph(\x\y\z)=2\quad\graph(\y\z)=1$
\end{quoting}
As expected, $\graph$ is acyclic.
\end{example}

We will write $\Paths\x\y$ for the set of paths $\p$ such that 
$\Start\p = \x$ and $\End\p = \y$. The following are immediate consequences of the above definitions. 

\begin{proposition}\label{prop:paths}
Let $\graph$ be a grade graph, $\x,\y\in\Vars$. 
Set 
\mbox{$\Pathsnz\graph\x\y {=} \{ \p\in\Paths\x\y \mid \weight\graph\p {\ne}\rzero \}$. }
\begin{enumerate}
\item\label{prop:paths:0}
If $\p = \x_1\cdots\x_{n+1}\in\Paths\x\y$ with $n\geq 1$ and there is $j \in 2..n+1$ such that $\x_j\notin\Nodes\graph$, then $\weight\graph\p = \rzero$. 
\item\label{prop:paths:1}
If $\x\ne\y$ and $\y\notin\Nodes\graph$, then $\Pathsnz\graph\x\y = \emptyset$. 
\item\label{prop:paths:02}
If $\graph$ is acyclic and $n$ is the cardinality of $\Nodes\graph$, then 
$\length\p \geq n+1$ implies $\weight\graph\p = \rzero$. 
\item\label{prop:paths:2}
If $\graph$ is acyclic, then the set 
$\Pathsnz\graph\x\y$ is finite. 
\end{enumerate}
\end{proposition}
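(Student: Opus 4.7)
My plan is to attack the four items in order, since each subsequent one builds on the earlier ones.

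For (1), I would simply unfold the inductive definition of $\weight\graph{\cdot}$, observing that $\weight\graph{\x_1\cdots\x_{n+1}} = \graph(\x_1,\x_2)\rmul\graph(\x_2,\x_3)\rmul\cdots\rmul\graph(\x_n,\x_{n+1})\rmul\rone$. If $\x_j\notin\Nodes\graph$ for some $j\in 2..n+1$, then by the very definition of $\Nodes\graph$ no variable has an edge into $\x_j$, so in particular $\graph(\x_{j-1},\x_j)=\rzero$; since $\rzero$ absorbs in an (integral) grade algebra, the whole product is $\rzero$. Item (2) is then immediate: any path from $\x\ne\y$ ends at $\y$, hence has length $\geq 1$ and has $\y=\x_{n+1}$ as a non-starting node that is not in $\Nodes\graph$, so (1) gives weight $\rzero$, showing $\Pathsnz\graph\x\y=\emptyset$.

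For (3), the idea is a pigeonhole argument combined with acyclicity. Let $\p=\x_1\cdots\x_m$ with $m=\length\p+1\geq n+2$. By (1) I may assume that all of $\x_2,\ldots,\x_m$ lie in $\Nodes\graph$, else the weight is already $\rzero$. But these are $m-1\geq n+1$ elements of a set of size $n$, so there exist indices $2\leq j<k\leq m$ with $\x_j=\x_k$; the subsequence $\x_j\cdots\x_k$ is then a cycle. The key routine observation is that the weight splits as a product
\[
\weight\graph\p \;=\; \weight\graph{\x_1\cdots\x_j}\rmul\weight\graph{\x_j\cdots\x_k}\rmul\weight\graph{\x_k\cdots\x_m},
\]
which follows by induction on $j$ using the recursive definition and associativity of $\rmul$. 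By acyclicity the middle factor is $\rzero$, hence $\weight\graph\p=\rzero$. I expect this factorisation lemma to be the main technical step; it is not difficult, but it is where one must be careful that the definition of $\weight\graph\cdot$, though asymmetric (it recurses on the head), does factor at any interior node.

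Finally, (4) is a combinatorial finiteness argument using (1) and (3). Any $\p=\x_1\cdots\x_m\in\Pathsnz\graph\x\y$ satisfies $\x_1=\x$, $\x_m=\y$, and, by (1), $\x_2,\ldots,\x_m\in\Nodes\graph$; by (3), $m\leq n+1$. Hence $\Pathsnz\graph\x\y$ is contained in the set of sequences of length at most $n+1$ starting with $\x$ and drawing intermediate letters from the finite set $\Nodes\graph$, which is a finite set. The only mild subtlety to mention is the edge case $\x=\y$ of length-$1$ paths, where the weight is $\rone\ne\rzero$ (unless the algebra is trivial) and (1) is vacuous; this single path does not affect finiteness.
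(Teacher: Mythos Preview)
Your proposal is correct and follows essentially the same approach as the paper: expand the weight as a product of edge grades, use the definition of $\Nodes\graph$ to find a $\rzero$ factor for (1) and (2), apply pigeonhole plus acyclicity for (3), and derive finiteness from the length bound for (4). One trivial remark: absorption of $\rzero$ under $\rmul$ is an axiom of any ordered semiring, so the parenthetical ``(integral)'' in your item (1) is unnecessary.
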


\subsubsection{Reflexive and transitive closure}
For acyclic grade graphs we can define the reflexive and transitive closure, as detailed below. 

\begin{definition}[Reflexive and transitive closure of a grade graph]\label{def:closure}
Let $\graph$ be an acyclic grade graph. 
Its \emph{reflexive and transitive closure}, denoted $\Gstar{\graph}$, is defined as follows: 
\begin{quoting}
for all $\x,\y\in\Vars$, 
$\displaystyle \Gstar{\graph}(\x,\y) = \sum_{\p\in\Paths\x\y} \weight{\graph}{\p}$
\end{quoting}
\end{definition}  
The definition of $\Gstar{\graph}(x,y)$, for each pair of nodes $x,y$, requires a summation over an infinite  set. 
However, \refItem{prop:paths}{2} ensures that only finitely many addends are different from $\rzero$, hence the summation is well-defined.

\begin{example}\label{ex:graphStar}
Given $\graph$ of \cref{ex:gradeGraph} and \cref{ex:paths}, $\Gstar{\graph}$ is:
\begin{quoting}
$
\begin{array}{c}
\Gstar{\graph}(\x,\x)=\Gstar{\graph}(\y,\y)=\Gstar{\graph}(\z,\z)=1\\
\begin{array}{lllll}
 \Gstar{\graph}(\x,\y)& = &\graph(\x\y)+\graph(\x\z\y)& =& 2+0=2\\
  \Gstar{\graph}(\x,\z)& = & \graph(\x\z)+\graph(\x\y\z)& =& 1+2=3\\
    \Gstar{\graph}(\y,\x)& = & \graph(\y\x)+\graph(\y\z\x)& =& 0+0=0\\
    \Gstar{\graph}(\y,\z)& = & \graph(\y\z)+\graph(\y\x\z)& =& 1+0=1\\
    \Gstar{\graph}(\z,\x)& = & \graph(\z\x)+\graph(\z\y\x)& =& 0+0=0\\
    \Gstar{\graph}(\z,\y)& = & \graph(\z\y)+\graph(\z\x\y)& =& 0+0=0\\
\end{array}
\end{array}
$
\end{quoting}
Since $\graph$ is acyclic, by \refItem{prop:paths}{02} we can consider only paths of length strictly less than 4.
\end{example}

Note that the reflexive and transitive closure of a grade graph $\graph$ is \emph{neither acyclic, nor a grade graph}. 
Indeed, we have $\Gstar\graph(\x,\x) = \rone$ for all $\x\in\Vars$, hence 
cycles of the form $\x\cdot\x$ have a non-zero grade  and $\supp{\Gstar\graph}$ is infinite. 

 However, $\Gstar\graph$ is an instance of a more general notion, which we call \emph{grade matrix}. 
Whereas for grade graphs we require a function $\fun\mtx{\Vars\times\Vars}{\RC\RR}$ to have finite support (in the graph view, finitely many non-isolated nodes), for grade matrices we require a weaker constraint, namely, 
that every row has finite support (every node has finitely many non-zero outgoing edges), as formalized below. 

Let $\fun\mtx{\Vars\times\Vars}{\RC\RR}$ be a function. 
For every variable $\x\in\Vars$, we define the function $\fun{\Row\mtx\x}{\Vars}{\RC\RR}$ as follows: 
$\Row\mtx\x(\y) = \mtx(\x,\y)$. 

\begin{definition}[Grade matrix]  \label{def:gr-matrix} 
A \emph{grade matrix} is a function $\fun\mtx{\Vars\times\Vars}{\RC\RR}$ such that, for every variable $\x\in\Vars$, 
$\Row\mtx\x$ has a finite support, i.e., it is a grade context. 
\end{definition}
  The grade context $\Row\graph\x$ 
is the row expressing the direct dependencies of $\x$.
Note that grade contexts turn out to be grade  matrices  with only one row, and 
grade graphs are grade  matrices  with a finite number of non-zero entries.
As expected, $\Gstar\graph$ is a grade matrix. 
Indeed, thanks to \refItem{prop:paths}{1}, for all variables $\x\in\Vars$, we have 
$\supp{\Row{\Gstar\graph}{\x}} \subseteq \Nodes\graph\cup\{\x\}$, which thus is finite. 

Grade matrices support the standard operations of \emph{addition} and \emph{(row-by-column) multiplication}, as formally detailed below.

Addition is the pointwise extension of the addition on grades. It is immediate to see that the result is still a grade matrix, and that on grade contexts we get the addition as previously defined.

Row-by-column multiplication is a summation of multiplications.  For finite  matrices,   this summation has a finite number of addends. We show that this still holds for  matrices  where each row has a finite number of non-zero entries, leading, in turn, to a grade matrix. To this end,
consider first the multiplication of a grade matrix $\mtx$ by a single row, i.e., a grade context $\cctx$,  on the left, giving the grade context $\cctx\ctxmul\mtx$ defined by\footnote{Vector-matrix multiplication is also used by \cite{VollmerMEO25}.}:
\begin{quoting}
 for all $\x\in\Vars$, $\displaystyle (\cctx\rmul\mtx)(\x) = \sum_{\y\in\Vars} \cctx(\y)\cdot\mtx(\y,\x)$  
\end{quoting}

The infinite summation above has only finitely many non-zero addends, as $\cctx$ has a finite support, hence it is well-defined. 
Moreover, it is easy to see that
$\cctx\rmul\mtx = \sum_{\x\in\supp\cctx} \cctx(\x)\rmul\Row\mtx\x$, hence it is a well-defined grade context. 

Now, given the grade matrices $\mtx_1,\mtx_2$, we set
\begin{quoting}
for all $\x,\y\in\Vars$, 
$\displaystyle (\mtx_1\cdot\mtx_2)(\x,\y) = (\Row{{\mtx_1}}{\x}\rmul \mtx_2)(\y) = \sum_{z\in\Vars} \mtx_1(\x,z)\rmul\mtx_2(z,\y)$ 
\end{quoting} 
This is obviously well-defined because  we have 
$\Row{(\mtx_1\rmul\mtx_2)}{\x} = \Row{{\mtx_1}}{\x}\rmul\mtx_2$ which has a finite support, being a grade context. 
It is easy to see that this operation is associative and monotone and has as neutral element the identity grade matrix $\Id$ defined by 
\begin{quoting}
 $\Id(\x,\y)=
 \begin{cases}
 \rone&\mbox{if}\ \x=\y\\
 \rzero&\mbox{otherwise} 
 \end{cases}
 $
\end{quoting}

Altogether, taking as ordering the pointwise extension of the ordering on grades, grade matrices with addition carry an ordered commutative monoid structure, and with multiplication another ordered monoid structure.

\subsubsection{Fixed point characterization and induction principle}

 
The main theorems on grade graphs (the fixed point characterization of the closure and the induction principle) rely on the fact that the closure can be expressed by a finite summation (\cref{lem:finitestar}).
To prove this we need some preliminary results.

 The following lemma states that, being the graph acyclic, multiplying a row with the matrix expressing transitive dependencies we get zero on the component corresponding to the variable $\x$ itself. 
Equivalently, the matrix obtained by multiplying the graph $\graph$ with the matrix of transitive dependencies has $\rzero$ on the diagonal. 

\begin{lemma}\label{lem:nocycles}
If $\graph$ is acyclic, then 
$(\graph\rmul\Gstar\graph)(\x,\x) = (\Row{\graph}{\x}\rmul\Gstar{\graph})(\x)=\rzero$.
\end{lemma}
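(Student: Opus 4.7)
The plan is to unfold the definition of matrix multiplication and the definition of $\Gstar\graph$, rearrange the double summation, and use acyclicity to zero out each term.

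First, the equality $(\graph\rmul\Gstar\graph)(\x,\x) = (\Row{\graph}{\x}\rmul\Gstar{\graph})(\x)$ is immediate from the definition of the product of two grade matrices, which says exactly that $(\mtx_1\rmul\mtx_2)(\x,\y) = (\Row{\mtx_1}{\x}\rmul\mtx_2)(\y)$. So the real content is showing the right-hand side is $\rzero$.

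Next, I would unfold:
\[
(\graph\rmul\Gstar\graph)(\x,\x)
\;=\; \sum_{z\in\Vars} \graph(\x,z)\rmul\Gstar\graph(z,\x)
\;=\; \sum_{z\in\Vars}\sum_{\p\in\Paths{z}{\x}} \graph(\x,z)\rmul\weight\graph\p,
\]
using distributivity of $\rmul$ over $\rsum$. These sums are well-defined with only finitely many non-zero addends: $\graph$ has finite support, so $\graph(\x,z)\ne\rzero$ for finitely many $z$, and $\Pathsnz\graph{z}{\x}$ is finite for each such $z$ by \refItem{prop:paths}{2}.

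The key observation is that for every $z\in\Vars$ and every $\p\in\Paths{z}{\x}$, the concatenation $\x\cdot\p$ is a path starting and ending at $\x$ of length $\length\p + 1 \geq 1$, hence it is a cycle in the sense of \cref{def:paths}. Moreover, by the inductive definition of $\weight\graph{-}$, we have $\weight\graph{\x\cdot\p} = \graph(\x,\Start\p)\rmul\weight\graph\p = \graph(\x,z)\rmul\weight\graph\p$. Since $\graph$ is acyclic, $\weight\graph{\x\cdot\p}=\rzero$ for every such cycle, so each addend in the double sum is $\rzero$, and the whole sum collapses to $\rzero$.

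No real obstacle is expected here: the argument is essentially a direct calculation, with the only subtle point being confirming that the summations are finite so that the rearrangement via distributivity is legitimate; this is handled by the finite support of $\graph$ together with \refItem{prop:paths}{2}.
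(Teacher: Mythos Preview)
Your proposal is correct and follows essentially the same approach as the paper: unfold the matrix product and $\Gstar\graph$, observe that each addend has the form $\graph(\x,z)\rmul\weight\graph\p = \weight\graph{\x\cdot\p}$ where $\x\cdot\p$ is a cycle, and conclude using acyclicity. The paper organizes the calculation slightly differently (showing each term $\graph(\x,\y)\rmul\Gstar\graph(\y,\x)$ is $\rzero$ before summing over $\y$), but the argument is the same.
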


Intuitively, this happens because  the grade matrix $\graph\rmul\Gstar\graph$  considers transitive dependencies   through non-empty paths and,  since the graph is acyclic, there are no non-empty paths from a variable to itself.  

Given a grade graph $\graph$, in the following we will write 
$\graph^n$  to denote the multiplication of $\graph$ with itself $n$ times. 
Moreover, we will denote by $\Paths[n]\x\y$ the subset of $\Paths\x\y$ of those paths of length $n$. 

\begin{proposition}\label{prop:graph-iterate}
Let $\graph$ be a grade graph. 
For all $n\in\N$ and $\x,\y\in\Vars$, we have 
$\graph^n(\x,\y) = \sum_{\p\in\Paths[n]\x\y} \weight\graph\p$. 
\end{proposition}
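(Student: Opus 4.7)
The plan is to proceed by induction on $n$, exploiting the recursive definition of matrix multiplication and the inductive definition of path weight.

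For the base case $n=0$, I would use the convention $\graph^0 = \Id$ so that $\graph^0(\x,\y) = \rone$ if $\x=\y$ and $\rzero$ otherwise. On the path side, $\Paths[0]\x\y$ consists of paths of length zero (a single node), which is $\{\x\}$ when $\x=\y$ and empty otherwise; by \cref{def:paths} the weight of $\x$ is $\rone$. The two sides match.

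For the inductive step, assume the claim for $n$. Unfolding matrix multiplication gives
\[
\graph^{n+1}(\x,\y) = (\graph\rmul\graph^n)(\x,\y) = \sum_{z\in\Vars}\graph(\x,z)\rmul\graph^n(z,\y),
\]
which is well-defined since $\Row\graph\x$ has finite support, so only finitely many addends are non-zero. By the inductive hypothesis, $\graph^n(z,\y) = \sum_{\p\in\Paths[n]{z}{\y}}\weight\graph\p$, so distributing multiplication over (the finitely supported) addition yields
\[
\graph^{n+1}(\x,\y) = \sum_{z\in\Vars}\sum_{\p\in\Paths[n]{z}{\y}} \graph(\x,z)\rmul\weight\graph\p.
\]
Now observe that every $\p'\in\Paths[n+1]\x\y$ decomposes uniquely as $\p'=\x\cdot\p$ with $\p\in\Paths[n]{z}{\y}$ and $z=\Start\p$, and by the recursive clause of \cref{def:paths} we have $\weight\graph{\x\cdot\p} = \graph(\x,\Start\p)\rmul\weight\graph\p = \graph(\x,z)\rmul\weight\graph\p$. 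Hence the double sum coincides with $\sum_{\p'\in\Paths[n+1]\x\y}\weight\graph{\p'}$, closing the induction.

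The only delicate point is that $\Paths[n+1]\x\y$ is in general infinite (since the second node of the path can be any variable), so the sum on the right-hand side must be read as a sum with only finitely many non-zero summands. The key observation that tames this is that whenever $\graph(\x,z)=\rzero$ all extended paths $\x\cdot\p$ starting at $\x$ and passing through $z$ have weight $\rzero$, so the outer sum over $z$ effectively ranges over $\supp{\Row\graph\x}$, which is finite. This reconciles the path-based summation with the matrix product, and is the one spot where a bit of care is needed in the write-up.
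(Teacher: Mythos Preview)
Your proposal is correct and follows essentially the same approach as the paper: induction on $n$, with the base case handled via $\graph^0=\Id$ and the inductive step by unfolding $\graph^{n+1}=\graph\rmul\graph^n$, applying the induction hypothesis, and reindexing the double sum using the bijection between paths of length $n{+}1$ from $\x$ to $\y$ and pairs $(z,\p)$ with $\p\in\Paths[n]{z}{\y}$. Your additional paragraph on why the formally infinite sums are well-defined is a welcome clarification that the paper leaves implicit.
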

Combining \cref{prop:graph-iterate} and \refItem{prop:paths}{02}, 
it is immediate to see that, if $n$ is the cardinality of $\Nodes\graph$, then 
$\graph^k$ is the zero grade matrix, for all $k \geq n+1$. 
As a consequence we get the following result. 

\begin{lemma}\label{lem:finitestar}
Let $\graph$ be an acyclic grade graph and let $n$ be the cardinality of $\Nodes\graph$. 
Then the following holds: 
\[ \Gstar\graph = \sum_{i=0}^{n}\graph^{i} \] 
\end{lemma}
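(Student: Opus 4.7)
The plan is to unfold both sides of the equation and match them using the two tools already at hand: \cref{prop:graph-iterate}, which rewrites $\graph^k(\x,\y)$ as a sum over paths of length exactly $k$, and \refItem{prop:paths}{02}, which tells us that every path of length at least $n+1$ has weight $\rzero$.

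Concretely, I would fix variables $\x,\y\in\Vars$ and compute $\Gstar\graph(\x,\y)$ directly from \cref{def:closure}. Since every path $\p\in\Paths\x\y$ has a well-defined length $\length\p \geq 0$, the set $\Paths\x\y$ can be partitioned as $\bigcup_{k\geq 0}\Paths[k]\x\y$. The key point is that by \refItem{prop:paths}{2} the set $\Pathsnz\graph\x\y$ is finite, so rearranging the (a priori infinite) sum into a double sum indexed first by length and then by paths of that length is justified: only finitely many addends are nonzero. Thus
\[
\Gstar\graph(\x,\y) \;=\; \sum_{\p\in\Paths\x\y} \weight\graph\p \;=\; \sum_{k\geq 0}\ \sum_{\p\in\Paths[k]\x\y} \weight\graph\p.
\]

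Next, applying \cref{prop:graph-iterate} termwise turns the inner sum into $\graph^k(\x,\y)$, yielding $\Gstar\graph(\x,\y) = \sum_{k\geq 0} \graph^k(\x,\y)$. To cut this off at $n$, I would invoke \refItem{prop:paths}{02}: for every $k\geq n+1$ and every $\p\in\Paths[k]\x\y$ we have $\length\p = k \geq n+1$, hence $\weight\graph\p = \rzero$, and therefore $\graph^k(\x,\y) = \rzero$ by \cref{prop:graph-iterate} again. Consequently all addends with $k\geq n+1$ vanish, leaving
\[
\Gstar\graph(\x,\y) \;=\; \sum_{k=0}^{n} \graph^k(\x,\y) \;=\; \Bigl(\sum_{k=0}^{n}\graph^k\Bigr)(\x,\y),
\]
which, since $\x,\y$ were arbitrary, gives the equality of grade matrices claimed.

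I do not anticipate a serious obstacle: the statement is essentially the standard fact that the Kleene star of a nilpotent adjacency matrix equals the finite geometric sum, transported to the setting of grade matrices. The only mild care needed is to respect that $\Gstar\graph$ was \emph{defined} as a sum over paths, not inductively as a series of matrix powers, so the argument must go through \cref{prop:graph-iterate} to bridge the two descriptions; one also has to remark that the rearrangement of the path sum by length is legitimate because only finitely many paths carry nonzero weight.
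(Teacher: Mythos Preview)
Your proposal is correct and follows essentially the same approach as the paper: expand $\Gstar\graph(\x,\y)$ as the sum over $\Paths\x\y$, partition by path length, apply \cref{prop:graph-iterate} to identify the inner sums with $\graph^k(\x,\y)$, and then invoke \refItem{prop:paths}{02} to kill all terms with $k\geq n+1$. Your version is slightly more explicit in justifying the rearrangement of the sum, but the argument is the same.
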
  
\begin{proof}
Using \cref{prop:graph-iterate}, we get the following 
\[
\Gstar\graph(\x,\y) 
  = \sum_{\p\in\Paths\x\y)} \weight\Graph\p 
  = \sum_{k\in\N} \sum_{\p\in\Paths[k]\x\y} \weight\graph\p 
  = \sum_{k\in\N} \graph^k(\x,\y) 
\]
As already noticed, by \refItem{prop:paths}{02}, we have that 
$\graph^k(\x,\y) = \rzero$ for all $k \geq n+1$, hence we have the thesis. 
\end{proof}

Considering the grade graph $\graph$ of \cref{ex:gradeGraph},  $\graph^0=\Id$, $\graph^1=\graph$
and in $\graph^2=\graph\rmul\graph$ the only non zero grade is 
$\graph^2(\x,\z)=2$. Indeed, as seen in \cref{ex:paths}, the only path of length 2 with non zero grade is $\x\y\z$ and 
$\graph(\x\y\z)=2$. Finally,  $\Gstar{\graph}=\graph^0+\graph^1+\graph^2$ coincides with
$\Gstar{\graph}$ as described in \cref{ex:graphStar}.

 We can now prove the main results of this section. The first one states that the reflexive and transitive closure behaves similarly to the Kleene closure, by satisfying an analogous equation.

\begin{corollary} \label{cor:GStarId} 
Let $\graph$ be an acyclic grade graph. 
Then, $\Gstar\graph = \Id  \rsum \Gstar\graph\rmul \graph = \Id \rsum \graph\rmul\Gstar\graph$. 
\end{corollary}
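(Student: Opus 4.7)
The plan is to derive the corollary directly from \cref{lem:finitestar} together with the observation, noted just before that lemma, that $\graph^k$ is the zero grade matrix whenever $k\geq n+1$, where $n$ is the cardinality of $\Nodes\graph$. The key point is that $\graph^0 = \Id$, so the finite sum expression for $\Gstar\graph$ can be split into its initial $\Id$ term plus a shifted sum that factors as a single multiplication by $\graph$.

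First I would fix $n$ as the cardinality of $\Nodes\graph$ and write, using \cref{lem:finitestar},
\[
\Gstar\graph = \Id \rsum \sum_{i=1}^{n} \graph^i.
\]
Then I would factor a $\graph$ on the right out of the tail sum, obtaining $\sum_{i=1}^{n} \graph^i = \bigl(\sum_{i=0}^{n-1}\graph^i\bigr)\rmul\graph$, which uses only the associativity and distributivity of the row-by-column multiplication of grade matrices stated earlier. The remaining task is to identify $\sum_{i=0}^{n-1}\graph^i$ with $\Gstar\graph$: by \cref{lem:finitestar} we have $\Gstar\graph = \sum_{i=0}^{n}\graph^i$, and the missing term $\graph^n$ differs from $\sum_{i=0}^{n-1}\graph^i$ by a zero summand when we multiply by $\graph$ on the right, since $\graph^{n+1}$ is the zero grade matrix by the remark preceding the lemma. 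Hence $\Gstar\graph\rmul\graph = \sum_{i=1}^{n}\graph^i$, giving the first equation $\Gstar\graph = \Id \rsum \Gstar\graph\rmul\graph$.

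The second equation is obtained symmetrically: factor $\graph$ on the left, yielding $\sum_{i=1}^{n}\graph^i = \graph\rmul\bigl(\sum_{i=0}^{n-1}\graph^i\bigr)$, and again use $\graph^{n+1}=\mathbf{0}$ to replace $\sum_{i=0}^{n-1}\graph^i$ by $\Gstar\graph$ under left-multiplication by $\graph$. This yields $\Gstar\graph = \Id \rsum \graph\rmul\Gstar\graph$.

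I do not foresee a real obstacle: the argument is essentially a telescoping/index-shifting calculation on a finite sum, and all the nontrivial content (namely, the finiteness of the sum and the algebraic properties of grade matrix multiplication, including associativity, distributivity, and the neutrality of $\Id$) has already been established in \cref{lem:finitestar} and in the discussion of the monoid structure on grade matrices. The only mild subtlety is to remember to apply $\graph^{n+1}=\mathbf{0}$ so that extending the sum from $n-1$ to $n$ on the factor adjacent to $\graph$ does not change the product, which is what lets us recognize $\Gstar\graph$ on the right-hand side.
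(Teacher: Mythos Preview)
Your proposal is correct and follows essentially the same approach as the paper: both use \cref{lem:finitestar} to write $\Gstar\graph$ as the finite sum $\sum_{i=0}^{n}\graph^i$, then perform an index shift and appeal to $\graph^{n+1}=\mathbf{0}$ to identify $\Id\rsum\Gstar\graph\rmul\graph$ (and symmetrically $\Id\rsum\graph\rmul\Gstar\graph$) with $\Gstar\graph$. The only cosmetic difference is the direction of the computation---the paper starts from $\Id\rsum\Gstar\graph\rmul\graph$ and simplifies, while you start from $\Gstar\graph$ and factor---but the algebraic content is identical.
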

\begin{proof}
Let $n$ be the cardinality of $\Nodes\graph$. 
The thesis is immediate by \cref{lem:finitestar}: 
\[
\Id \rsum \Gstar\graph\rmul\graph 
  = \graph^0 \rsum \sum_{k\in 0..n} \graph^{k+1}
  = \sum_{k\in0..n} \graph^k \rsum \graph^{n+1}
  = \Gstar\graph 
\]
because $\graph^{n+1}$ is the zero grade matrix. 
\end{proof}

 The second main result is the following induction principle, allowing to show that an upper bound holds for $\cctx\ctxmul\GraphStar{}$ by proving that it holds for $\cctx\ctxsum(\cctx'\ctxmul\graph)$.

\begin{proposition}\label{prop:implication}
Given an acyclic graph $\graph$, and two grade contexts $\cctx,\cctx'$:
\begin{quoting}
   $\cctx\ctxsum(\cctx'\ctxmul\graph)\ctxord\cctx'$ implies $\cctx\ctxmul\GraphStar{}\ctxord\cctx'$.
\end{quoting}
\end{proposition}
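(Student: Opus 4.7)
The plan is to exploit \cref{lem:finitestar} to turn $\Gstar\graph$ into a finite sum and then iterate the hypothesis. Let $n$ be the cardinality of $\Nodes\graph$. By \cref{lem:finitestar}, $\Gstar\graph = \sum_{i=0}^{n}\graph^{i}$ and hence $\cctx\ctxmul\Gstar\graph = \sum_{i=0}^{n}\cctx\ctxmul\graph^{i}$, so it suffices to prove that this finite sum is $\ctxord\cctx'$. Moreover, as noted after \cref{prop:graph-iterate} (via \refItem{prop:paths}{02}), $\graph^{n+1}$ is the zero grade matrix, and therefore $\cctx'\ctxmul\graph^{n+1} = \rzero$.

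The key step is to establish, by induction on $k\in\N$, the stronger statement
\[
  \sum_{i=0}^{k}\cctx\ctxmul\graph^{i}\ \ctxsum\ \cctx'\ctxmul\graph^{k+1}\ \ctxord\ \cctx'. \qquad (\star)
\]
The base case $k=0$ is exactly the assumption, using $\graph^{0}=\Id$ and $\cctx\ctxmul\Id=\cctx$. For the inductive step, I multiply the hypothesis $\cctx\ctxsum\cctx'\ctxmul\graph\ctxord\cctx'$ on the right by $\graph^{k+1}$; using distributivity of matrix multiplication over addition and monotonicity with respect to $\ctxord$, this yields $\cctx\ctxmul\graph^{k+1}\ctxsum\cctx'\ctxmul\graph^{k+2}\ctxord\cctx'\ctxmul\graph^{k+1}$. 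Adding $\sum_{i=0}^{k}\cctx\ctxmul\graph^{i}$ to both sides and chaining with the induction hypothesis $(\star)$ at $k$ gives $(\star)$ at $k+1$.

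Specialising $(\star)$ to $k=n$ and using $\cctx'\ctxmul\graph^{n+1}=\rzero$, the residual term vanishes and we obtain $\sum_{i=0}^{n}\cctx\ctxmul\graph^{i}\ctxord\cctx'$, which, combined with the first paragraph, is the thesis $\cctx\ctxmul\Gstar\graph\ctxord\cctx'$. There is no serious obstacle: the main care point is just checking that the algebraic manipulations (distributivity $(\cctx_1\ctxsum\cctx_2)\ctxmul\graph=\cctx_1\ctxmul\graph\ctxsum\cctx_2\ctxmul\graph$, right-monotonicity of $\ctxmul\graph$, additive monotonicity) are legitimate on grade contexts and grade matrices, which is granted by the ordered-semiring axioms and the finite-support conditions ensuring that all row-by-matrix products occurring are well-defined grade contexts.
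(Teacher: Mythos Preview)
Your proof is correct and follows essentially the same approach as the paper: both establish the auxiliary inequality $\sum_{i=0}^{k}\cctx\ctxmul\graph^{i}\ctxsum\cctx'\ctxmul\graph^{k+1}\ctxord\cctx'$ by induction on $k$, and then specialise to $k=n$ using \cref{lem:finitestar} and the vanishing of $\graph^{n+1}$. Your write-up is in fact more detailed than the paper's, which leaves the inductive step implicit.
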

\begin{proof} 
By a straightforward induction on $k\in\N$, we can prove that 
$\cctx_1\rmul \left(\sum_{i\in 0..k} \graph^k\right) \rsum \cctx_2\rmul\graph^{k+1} \rord \cctx_2$, 
for all $k \in \N$. 
Taking $n$ to be the cardinality of $\Nodes\graph$, by \cref{lem:finitestar},  we get the thesis because 
$\Gstar\graph = \sum_{i\in 0..n} \graph^i$ and $\graph^{n+1}$ is the zero matrix. 
\end{proof} 

 Note that the above property means that,  for every grade context $\cctx$, the monotone function  on grade contexts 
$\phi_{\cctx}(\cctx') = \cctx\rsum \cctx'\rmul\graph$ 
has a least pre-fixed point given by $\cctx\rmul\Gstar\graph$. 
Actually, this is also a fixed point thanks to \cref{cor:GStarId}.

\subsection{No-waste semantics}\label{sect:nws}

In this section we provide the no-waste refinement of the resource-aware semantics. First of all, we define the grade context and grade graph extracted from an environment. 
Recall from \cref{fig:annotated} that in the no-waste setting values in the environment are annotated with a grade context. 
\begin{definition}\label{def:extract-graph}
Given an environment $\env=\EnvElem{\x_1}{\rgr_1}{\bvPair{\cctx_1}{\val_1}},\ldots,\EnvElem{\x_n}{\rgr_n}{\bvPair{\cctx_n}{\val_n}}$
\begin{itemize}
\item the grade context $\CCTX{\env}$ is $\VarGrade{\x_1}{\rgr_1}, \ldots, \VarGrade{\x_n}{\rgr_n}$
\item the grade graph $\Graph{\env}$ has support $\{\x_1, \ldots, \x_n\}$ and is defined  by $\Row{(\Graph{\env})}{{\x_i}}=\cctx_i$, for $i\in 1..n$.
 \end{itemize}
\end{definition}
Intuitively, the grade context $\CCTX{\env}$ expresses the availability of each resource (variable) in the environment, whereas the grade graph $\Graph{\env}$ collects the direct dependencies among such variables.

In the following, we will assume acyclic environments, in the sense that there is no dependency path from a variable to itself. 
 Formally, this means that we assume that the grade graph $\Graph\env$ is acyclic (see \cref{def:paths}).  
It is important to note this \emph{does not} prevent provided resources to be recursive; indeed, this is handled in the semantics by adding a resource, under a fresh name, for each recursive call,  as seen in rule \refToRule{app} in \cref{fig:semantics} and the example in \cref{fig:term-rec}.  
We discuss  at the end of the section  whether allowing cyclic environments rather than modeling recursion, essentially, by unfolding, makes some relevant difference, and how to adjust the technical treatment. 

As the reader can expect, the reflexive and transitive closure $\GraphStar{\env}$ of the grade graph extracted from an environment $\env$ expresses the \emph{(full) dependencies} among resources in $\dom{\env}$. Indeed, for each $\x,\y\in\dom{\env}$, a dependency path from $\x$ to $\y$ expresses \emph{one} transitive (graded) usage of $\y$ from (the value associated to) $\x$; thus, the sum of  the weights of  such paths represents the whole usage of $\y$ from $\x$.

We can finally define the relation between grade contexts and environments modeling no-waste usage. 

\begin{definition}[No-waste usage]\label{def:nw}
The  grade  context $\cctx$ \emph{(transitively) uses $\env$ with no waste}, written $\NoWaste{\cctx}{\env}$, if  $\cctx\rmul{\GraphStar{\env}}\rord\CCTX{\env}$.
\end{definition}

The relation $\NoWaste{\cctx}{\env}$ says that the resources provided by $\env$ are (transitively) used with no-waste by $\cctx$. 
Indeed,  for each resource $\x$, $(\cctx\rmul{\GraphStar{\env}})(\x)$ gives the (transitive) usage of $\x$ from $\cctx$, obtained as the sum of the usages of $\x$ from any variable $\y$, each one multiplied by the direct usage of $\x$ in $\cctx$. On the other hand, $\CCTX{\env}(\x)$ gives the availability of $\x$ in $\env$.
As previously illustrated in \cref{ex:nw1,ex:nw2}, the relation of no-waste usage only allows the approximation given by the partial order; in cases where the partial order is the equality, as in the grade algebra of exact counting defined in \refItem{ex:gr-alg}{nat}, usage and availability are required to coincide.

For example,  consider  the environment 
\begin{quoting} 
$\env=\EnvElem{\x}{1}{\bvPair{\VarGrade{\y}{2},\VarGrade{\z}{1}}{(\Fun{\_}{\PairExp{}{\y}{\z}{}})}},\EnvElem{\y}{4}{\bvPair{\VarGrade{\z}{1}}{(\Fun{\_}{\z})}},\EnvElem{\z}{5}{\bvPair{\emptyset}{\unit}}$
\end{quoting}
whose grade graph $\Graph{\env}$  is defined in \cref{ex:gradeGraph} with reflexive and transitive closure $\GraphStar{\env}$  presented in \cref{ex:graphStar}. The grade context $\cctx=\VarGrade{\x}{1},\VarGrade{\y}{2}$ uses $\env$ with no waste, since $\cctx\rmul{\GraphStar{\env}}=\CCTX{\env}$. Indeed 
\begin{itemize}
\item $\x$ is used once only in $\cctx$
\item $\y$ is used twice in $\cctx$ and twice through $\x$
\item $\z$ is not used in $\cctx$ but it is used through $\x$ three times, one directly and two by the use of $\y$, and twice by $\y$ since each use of $\y$ in $\cctx$ uses $\z$ once.
\end{itemize}

We can now formally define the no-waste semantics. The reduction judgement for expressions has shape $\NWreduce{\Conf{\e}{\env}}{\rgr}{\Conf{\bv}{\env'}}$, and is defined in \cref{fig:nw-sem} on top of an analogous one for value expressions. As noted for the previous semantics in \cref{fig:semantics}, for simplicity we use the same notation for the two judgments, and in the bottom section of \cref{fig:nw-sem} we write both judgments as premises. 

In the reduction rules, $\env_1\envSum\env_2$ denotes the \emph{sum} of environments, defined as follows, analogously to the sum of contexts in \cref{sect:typesystem}:
\begin{align*} 
\emptyset \ctxsum \env &= \env \\ 
(\EnvElem{\x}{\sgr}{\bv}, \env_1) \envSum ( \EnvElem{\x}{\rgr}{\bv}, \env_2) &= \EnvElem{\x}{\sgr \rsum \rgr}{\bv}, (\env_1 \envSum\env_2) \\ 
(\EnvElem{\x}{\sgr}{\bv}, \env_1) \envSum \env_2 &= \EnvElem{\x}{\sgr}{\bv}, (\env_1 \envSum \env_2)  
  &\text{if $\x \notin \dom{\env_2}$} 
\end{align*} 
We use an analogous notation for grade contexts. Note that,  similarly to the sum of  contexts, sum of environments is partial, since a variable in the
domain of both is required to have \mbox{the same associated value. }
\begin{figure}
\begin{small}
\begin{math}
\begin{array}{l}
\NamedRule{var}{ }
{ \red{\x}{\env,\EnvElem{\x}{\sgr}{\bv}}{\rgr}{\bv}{\env,\EnvElem{\x}{\sgr'}{\bv}} }
{\bv=\bvPair{\cctx}{\val}\\
\rgr\rsum\sgr'\rord \sgr\\
\NoWaste{\rgr\rmul\cctx}{\env,\EnvElem{\x}{\sgr'}{\bv}}
}
\\[5ex]
\NamedRule{fun}{ }
{ \red{\RecFun\f\x\e}{\env}{\rgr}{\bvPair{\cctx}{(\RecFun{\f}{\x}{\e})}}{\env}}
{
\NoWaste{\rgr\rmul\cctx}{\env}
}
\
\NamedRule{unit}{ }
{ \red{\unit}{\env}{\rgr}{\bvPair{\cctx}{\unit}}{\env} } 
{\NoWaste{\rgr\rmul\cctx}{\env}
} 
\\[4ex]
\NamedRule{pair}{
  \red{\ve_1}{\env_1}{\rgr\rmul\rgr_1}{\bvPair{\cctx_1}{\val_1}}{\env_1'}
  \Space
  \red{\ve_2}{\env_2}{\rgr\rmul\rgr_2}{\bvPair{\cctx_2}{\val_2}}{\env_2'} 
}{ \red{\PairExp{\rgr_1}{\ve_1}{\ve_2}{\rgr_2}}{\env}{\rgr}{\bvPair{\cctx}{\PairExp{\rgr_1}{\val_1}{\val_2}{\rgr_2}}}{\env_1'\envSum\env_2'} } 
{\cctx=\rgr_1\rmul\cctx_1\ctxsum\rgr_2\rmul\cctx_2\\
\env_1\envSum\env_2\ctxord\env
}
\\[4ex]
\NamedRule{inl}{
  \red{\ve}{\env}{\rgr\rmul\sgr}{\bvPair{\cctx}{\val}}{\env'} 
}{ \red{\Inl{\rgr}{\ve}}{\env}{\rgr}{\bvPair{\sgr\rmul\cctx}{(\Inl{\rgr}{\val})}}{\env'} }
{} 
\BigSpace
\NamedRule{inr}{
  \red{\ve}{\env}{\rgr\rmul\sgr}{\bvPair{\cctx}{\val}}{\env'} 
}{ \red{\Inr{\rgr}{\ve}}{\env}{\rgr}{\bvPair{\sgr\rmul\cctx}{(\Inr{\rgr}{\val})}}{\env'} }
{} 
\\[4ex]
\end{array}
\end{math}

\hrule

\begin{math}
\begin{array}{l}
\\
\NamedRule{ret}{\red{\ve}{\env}{\sgr}{\bv}{\env'} }
{ \NWreduce{\Conf{\Return\ve}{\env}}{\rgr}{\Conf{\bv}{\env'}} }
{ 
  \rgr\rord\sgr\ne\rzero 
} 
\\[4ex]
\NamedRule{let}{
  \begin{array}{l}
    \NWreduce{\Conf{\e_1}{\env_1}}{\sgr}{\Conf{\bv_1}{\env_1'}} \\ 
    \NWreduce{\Conf{\Subst{\e_2}{\x'}{\x}}{\AddToEnv{(\env_1'\envSum\env_2)}{\x'}{\sgr}{\bv_1}}}{\rgr}{\Conf{\bv}{\env'}}
  \end{array}
}{ \NWreduce{\Conf{\Let{\x}{\e_1}{\e_2}}{\env}}{\rgr}{\Conf{\bv}{\env'}} }
{\env_1\envSum\env_2\ctxord\env\\
\x'\ \mbox{fresh}
} 

\\[4ex]

\NamedRule{app}{
\begin{array}{l}
\NWreduce{\Conf{\ve_1}{\env_1}}{\sgr}{\Conf{\bvPair{\cctx_1}{\RecFun{\f}{\x}{\e}}}{\env_1'}}\Space
\NWreduce{\Conf{\ve_2}{\env_2}}{\tgr}{\Conf{\bv_2}{\env_2'}}\\
\NWreduce{\Conf{\Subst{\Subst{\e}{\x'}{\x}}{\f'}{\f}}{\AddToEnv{\AddToEnv{(\env_1'\envSum\env_2')}{\x'}{\tgr}{\bv_2}}{\f'}{\sgr_2}{\bvPair{\cctx_1}{\RecFun{\f}{\x}{\e}}}}}{\rgr}{\Conf{\bv}{\env'}}
\end{array}}
{
\NWreduce{\Conf{\App{\ve_1}{\ve_2}}{\env}}{\rgr}{\Conf{\bv}{\env'}}
}{
\sgr_1 \rsum \sgr_2 \rord \sgr\\
\sgr_1 \neq \rzero\\
\env_1\envSum\env_2\ctxord\env\\
  \f',\x'\ \mbox{fresh} 
}
\\[4ex]
\NamedRule{match-u}{
\NWreduce{\Conf{\ve}{\env_1}}{\sgr}{\Conf{\bvPair{\cctx}{\unit}}{\env_1'}}\Space
\NWreduce{\Conf{\e}{(\env_1'\envSum\env_2)}}{\rgr}{\Conf{\bv}{\env'}}
}
{
\NWreduce{\Conf{\MatchUnit{\ve}{\e}}{\env}}{\rgr}{\Conf{\bv}{\env'}}
}
{
  \sgr\ne\rzero \\
\env_1\envSum\env_2\ctxord\env
}
\\[4ex]
\NamedRule{match-p}{
\begin{array}{l}
\NWreduce{\Conf{\ve}{\env_1}}{\sgr}{\Conf{\bvPair{\cctx}{\PairExp{\rgr_1}{\val_1}{\val_2}{\rgr_2}}}{\env_1'}} \\
\NWreduce{\Conf{\Subst{\Subst{\e}{\x'}{\x}}{\y'}{\y}}{\AddToEnv{\AddToEnv{\env}{\x'}{\sgr\rmul\rgr_1}{\bvPair{\cctx_1}{\val_1}}}{\y'}{\sgr\rmul\rgr_2}{\bvPair{\cctx_2}{\val_2}}}}{\rgr}{\Conf{\bv}{\env'}}
\end{array}
}
{
\NWreduce{\Conf{\Match{\x}{\y}{\ve}{\e}}{\env}}{\rgr}{\Conf{\bv}{\env'}}
}
{  \sgr\ne\rzero \\
\cctx\ctxord\rgr_1\rmul\cctx_1\ctxsum\rgr_2\rmul\cctx_2\\
\env_1\envSum\env_2\ctxord\env\\
\x,\y\ \mbox{fresh}
}
\\[5ex]
\NamedRule{match-l}{
\begin{array}{l}
\NWreduce{\Conf{\ve}{\env_1}}{\tgr}{\Conf{\bvPair{\cctx}{(\Inl{\sgr}{\val_1})}}{\env_1'}} \\
\NWreduce{\Conf{\Subst{\e_1}{\y}{\x_1}}{\AddToEnv{(\env_1'\envSum\env_2)}{\y}{\sgr\rmul\rgr}{\bvPair{\cctx_1}{\val_1}}}}{\rgr}{\Conf{\bv}{\env'}}
\end{array}
}
{
\NWreduce{\Conf{\Case{\ve}{\x_1}{\e_1}{\x_2}{\e_2}}{\env}}{\tgr}{\Conf{\bv}{\env'}}
}
{
  \tgr\ne\rzero \\
\cctx\ctxord\sgr\ctxmul\cctx_1\\
\env_1\envSum\env_2\ctxord\env\\
\y\ \mbox{fresh}
}
\\[4ex]
\NamedRule{match-r}{
\begin{array}{l}
\NWreduce{\Conf{\ve}{\env_1}}{\sgr}{\Conf{\bvPair{\cctx}{(\Inr{\sgr}{\val_1})}}{\env_1'}} \\
\NWreduce{\Conf{\Subst{\e_2}{\y}{\x_2}}{\AddToEnv{(\env_1'\envSum\env_2)}{\y}{\sgr\rmul\rgr}{\bvPair{\cctx_1}{\val_1}}}}{\tgr}{\Conf{\bv}{\env'}}
\end{array}
}
{
\NWreduce{\Conf{\Case{\ve}{\x_1}{\e_1}{\x_2}{\e_2}}{\env}}{\tgr}{\Conf{\bv}{\env'}}
}
{
  \tgr\ne\rzero \\
\cctx\ctxord\sgr\ctxmul\cctx_1\\
\env_1\envSum\env_2\ctxord\env\\
\y\ \mbox{fresh}
}
\end{array}
\end{math}
\end{small}
\caption{No-waste semantics}
\label{fig:nw-sem}
\end{figure}

The distinguishing features with respect to the previous resource-aware semantics are the following:

\begin{enumerate}
\item Configurations obtained as results, of shape $\Conf{\bvPair{\cctx}{\val}}{\env}$, are enforced to be \emph{no-waste}, that is, the resources left in $\env$ should be only those needed (modulo approximation) by the (grade context $\cctx$ annotating the) value $\val$.
\item When reducing a term with more than one subterm, the available resources are \emph{split} among subterms, analogously to what happens in resource-aware type systems. 
\end{enumerate}
In this way, a configuration $\Conf{\E}{\env}$ can be successfully reduced only if it is possible to inductively split $\env$, until reaching the leaves of the proof tree, so that, in every leaf, the the part of environment occurring in every leaf is used with no waste, that is, a no-waste result is obtained. In this way, altogether the whole $\env$ is used with no waste. 

Examples illustrating when successful reduction is possibile and when it is not are given in \cref{sect:nw-ex}. 

Rules which are axioms, that is, \refToRule{var}, \refToRule{fun}, and \refToRule{unit}, are designed to accomplish (1). The most significant case is \refToRule{var}, where a variable occurrence is replaced by the associated value.

As in the previous rule in \cref{fig:semantics}, its current amount should be enough to cover that required by the reduction grade (second side condition).  However, there is the additional side condition that the result should be no-waste; since it is produced in $\rgr$ copies, this condition should hold for a grade context obtained by multiplying by $\rgr$ that declared in the annotated value. 
 Note that this condition implies that  the residual amount $\sgr'$ \emph{should be discardable}, that is, $\rzero\rord\sgr'$. Indeed, set $\env'=\env,\EnvElem{\x}{\sgr'}{\bv}$, we know that $(\rgr\rmul\cctx)\rmul{\GraphStar{\env'}}\rord\CCTX{\env'}$, hence, in particular $((\rgr\rmul\cctx)\rmul{\GraphStar{\env'}})(\x)\rord\sgr'$. Since $\cctx=\Row{\graph}{\x}$, by \cref{lem:nocycles} we get $\cctx\rmul{\GraphStar{\env'}}(\x)=\rzero$, hence $(\rgr\rmul\cctx)\rmul{\GraphStar{\env'}}(\x)=\rzero$ as well.

With this formulation, a resource can be used only if its current amount is enough to cover the required use, and no resource amount would be wasted. 
An alternative formulation could even \emph{remove garbage}, see \cref{prop:GID} in the following. 

The side condition requiring the result to be no-waste is also added to the other axioms \refToRule{fun} and \refToRule{unit}.  Since no resource is consumed, the condition is checked on the initial environment. 

 Rules handling constructs with more than one subterm, that is,  \refToRule{pair}, \refToRule{let}, \refToRule{app}, and all the \refToRule{match} rules, are designed to accomplish (2).
For instance, in rule \refToRule{pair}, the environment, rather than  being  sequentially used by the two components as in the previous rule in  in \cref{fig:semantics},  is split in two parts, used to evaluate the first and the second component, respectively. 
A similar splitting is done in the other mentioned rules.

Besides the standard typing errors, and resource exhaustion, as in the previous resource-aware semantics, reduction can get stuck if  some available amount would be wasted. For instance, considering linearity grades, the configuration $\Conf{\x}{\EnvElem{\x}{1}{\bvPair{\emptyset}{\unit}},\EnvElem{\y}{1}{\bvPair{\emptyset}{\unit}}}$ is stuck. Indeed, rule \refToRule{var} is not applicable since there is no way to obtain a no-waste result. Formally, since $\cctx=\emptyset$, it should be $\NoWaste{\emptyset}{\EnvElem{\y}{1}{\bvPair{\emptyset}{\unit}}}$; this does not hold, since, for the resource $\y$, the usage is $0$, the availability is $1$, and $0\not\leq 1$. On the other hand, reduction would be possibile with an environment $\EnvElem{\x}{1}{\bvPair{\emptyset}{\unit}},\EnvElem{\y}{\omega}{\bvPair{\emptyset}{\unit}}$, and the final result would be $\Conf{\bvPair{\emptyset}{\unit}}{\EnvElem{\y}{\omega}{\bvPair{\emptyset}{\unit}}}$, or even $\Conf{\bvPair{\emptyset}{\unit}}{\emptyset}$ in a semantics removing garbage.

When the environment is split in two parts,  there are generally many possible ways to do the splitting; depending on such choices, when reaching the base cases, the corresponding axiom could either be applicable or not.  This is an additional source of non-determinism besides those already present in the previous semantics. 
Hence, as already mentioned, soundness of the type system will be \emph{soundness-may}\footnote{The terminology of \emph{may} and \emph{must} properties is very general and comes originally from \cite{DeNicolaH84}; the specific names soundness-may and soundness-must  were  introduced  by  \cite{DagninoBZD20,Dagnino22} in the context of  big-step semantics.}, meaning that, for a well-typed configuration, \emph{there exists} a computation which does not get stuck; in particular, there  exists  a way to split the environment at each node such that the proof tree can be \mbox{successfully completed.} 

\subsection{Examples of reduction}\label{sect:nw-ex}
Consider the configuration $\Conf{\App{\y}{\PairExp{3}{\x}{\x}{5}}}{\AddToEnv{\EnvElem{\y}{1}{\bvPair{\VarGrade{\x}{ 1}}{\funEx}}}{\x}{ 9 }{\bvPair{\emptyset}{\un}}}$, where $\funEx=\Fun{\z}{\Seq{\x}{\z}}$, with the grade algebra for bounded counting of \refItem{ex:gr-alg}{nat}.
The configuration can be reduced, as we show in  \cref{fig:ex-red-1}. In this figure and the following we write $\un$ for $\unit$ to save space, and $\bv=\bvPair{\emptyset}{\PairExp{3}{\un}{\un}{5}}$. 
\begin{figure}[h]
\begin{small}
\begin{math}
\begin{array}{lc}
\mathcal{D}_1= &
\prooftree
\prooftree
\Space
\justifies
\red{\x}{\EnvElem{\x}{3}{\bvPair{\emptyset}{\un}}}{3}{\bvPair{\emptyset}{\un}}{\emptyset}
	\thickness=0.08em
	\shiftright 2em
	\using \scriptstyle{\textsc{(var)}}
\endprooftree
\BigSpace
\prooftree
\Space
\justifies
\red{\x}{\EnvElem{\x}{5}{\un}}{5}{\bvPair{\emptyset}{\un}}{\emptyset}
	\thickness=0.08em
	\shiftright 2em
	\using \scriptstyle{\textsc{(var)}}
\endprooftree
\justifies
\red{\PairExp{3}{\x}{\x}{5}}{\EnvElem{\x}{ 8}{\un}}{1}{\bv}{ \emptyset }\Space
	\thickness=0.08em
	\shiftright 2em
	\using \scriptstyle{\textsc{(pair)}}
\endprooftree
\\[7ex]
\mathcal{D}_2= &

\prooftree
\prooftree
\justifies
\red{\x}{ \EnvElem{\x}{1}{\bvPair{\emptyset}{\un}} }{1}{\bvPair{\emptyset}{\un}}{\emptyset}
	\thickness=0.08em
	\shiftright 2em
	\using \scriptstyle{\textsc{(var)}}
\endprooftree
\BigSpace
\prooftree
\justifies
\red{ z}{\EnvElem{ z}{1}{\bv}}{1}{\bv}{\emptyset}
	\thickness=0.08em
	\shiftright 2em
	\using \scriptstyle{\textsc{(var)}}
\endprooftree
\justifies
\red{\Subst{\Subst{\e}{\x'}{\y}}{\funEx'}{\funEx}}{\EnvElem{\x}{5}{\bvPair{\emptyset}{\un}},\EnvElem{\x'}{1}{\bvPair{\emptyset}{\PairExp{3}{\un}{\un}{5}}}}{\rgr}{\Conf{\bv}{\env'}}
\justifies
\red{\Seq{\x}{ z}}{\AddToEnv{\EnvElem{\x}{1}{\bvPair{\emptyset}{\un}}}{ z}{1}{\bv}}{1}{\bv }{\emptyset}
	\thickness=0.08em
	\shiftright 2em
	\using \scriptstyle{\textsc{(match-u)}}
\endprooftree

\end{array}
\end{math}
\begin{math}
\begin{array}{cl}
\mathcal{D}=&
\prooftree
\prooftree
\Space
\justifies
\red{\y} { \EnvElem{\y}{ 1}{\bvPair{\VarGrade{\x}{1}}{ \funEx}}, \EnvElem{\x}{1}{\bvPair{\emptyset}{\un}} }{1}{ \bvPair{\VarGrade{\x}{1} }{\funEx}}{\EnvElem{\x}{1}{\bvPair{\emptyset}{\un}} }
\thickness=0.08em
	\shiftright 2em
	\using \scriptstyle{\textsc{(var)}}
	 \endprooftree
\BigSpace
\mathcal{D}_1
\BigSpace
\mathcal{D}_2
\justifies
\red{\App{\y}{\PairExp{3}{\x}{\x}{5}}}{\AddToEnv{\EnvElem{\y}{1}{\bvPair{\VarGrade{\x}{1}}{\funEx}}}{\x}{ 9}{\bvPair{\emptyset}{\un}}}{1}{ \bv }{\emptyset}
	\thickness=0.08em
	\shiftright 2em
	\using \scriptstyle{\textsc{(app)}}
	 \endprooftree
\end{array}
\end{math}
\end{small}
\caption{Example of  non-wasting  reduction}
\label{fig:ex-red-1}
\end{figure}
The same reduction is derivable by considering the grade algebra for exact counting of \refItem{ex:gr-alg}{nat}. On the contrary, the configuration ${\Conf{\App{\y}{\PairExp{3}{\x}{\x}{5}}}{\AddToEnv{\EnvElem{\y}{ 2}{\bvPair{\VarGrade{\x}{1}}{\funEx}}}{\x}{9}{\bvPair{\emptyset}{\un}}}}$ illustrates how the grade algebra affects the semantics.
Indeed, with bounded counting this configuration can be reduced,  as shown  at  the top section of \cref{fig:ex-red-2}, with $\mathcal{D}_1$ and $\mathcal{D}_2$  as in \cref{fig:ex-red-1}.  
\begin{figure}[h]
\begin{small}
\begin{math}
\begin{array}{cl}
\mathcal{D}=&
\prooftree
\prooftree
\Space
\justifies
\red{\y}{ \EnvElem{\y}{2}{ \bvPair{\VarGrade{\x}{1}}{\funEx} }, \EnvElem{\x}{1}{\bvPair{\emptyset}{\un}} }{1}{\bvPair{\VarGrade{\x}{1}}{\funEx}}{ \EnvElem{\y}{1}{\bvPair{\VarGrade{\x}{1}}{\funEx}}, \EnvElem{\x}{1}{\bvPair{\emptyset}{\un}}}
\thickness=0.08em
	\shiftright 2em
	\using \scriptstyle{\textsc{(var)}}
	 \endprooftree
\BigSpace
\mathcal{D}_1
\BigSpace
\mathcal{D}_2
\justifies
\red{\App{\y}{\PairExp{3}{\x}{\x}{5}}}{\AddToEnv{\EnvElem{\y}{ 2}{\bvPair{\VarGrade{\x}{1}}{\funEx}}}{\x}{9}{\bvPair{\emptyset}{\un}}}{1}{ \bv }{\EnvElem{\y}{1}{\bvPair{\VarGrade{\x}{1}}{\funEx}}} 
	\thickness=0.08em
	\shiftright 2em
	\using \scriptstyle{\textsc{(app)}}
	 \endprooftree
\\[6ex]
\end{array}
\end{math}
\end{small}

\hrule

\begin{small}
\begin{math}
\begin{array}{lc}
\\
\mathcal{D}'_1= &
\prooftree
\prooftree
\justifies
\red{\x}{\EnvElem{\y}{1}{\bvPair{\VarGrade{\x}{1}}{\funEx}},\EnvElem{\x}{3}{\bvPair{\emptyset}{\un}}}{3}{\bvPair{\emptyset}{\un}}{\EnvElem{\y}{1}{\bvPair{\VarGrade{\x}{1}}{\funEx}}}
	\thickness=0.08em
	\shiftright 2em
	\using \scriptstyle{\textsc{(var)}}
\endprooftree
\BigSpace
\prooftree
\Space
\justifies
\red{\x}{\EnvElem{\x}{5}{\un}}{5}{\bvPair{\emptyset}{\un}}{\emptyset}
	\thickness=0.08em
	\shiftright 2em
	\using \scriptstyle{\textsc{(var)}}
\endprooftree
\justifies
\red{\PairExp{3}{\x}{\x}{5}}{\EnvElem{\y}{1}{\bvPair{\VarGrade{\x}{1}}{\funEx}},\EnvElem{\x}{8}{\un}}{1}{ \bv }{\EnvElem{\y}{1}{\bvPair{\VarGrade{\x}{1}}{\funEx}}}\Space
	\thickness=0.08em
	\shiftright 2em
	\using \scriptstyle{\textsc{(pair)}}
\endprooftree
\end{array}
\end{math}
\begin{math}
\begin{array}{cl}
\mathcal{D'}=&
\prooftree
\prooftree
\Space
\justifies
\red{\y}{ \EnvElem{\y}{1}{ \bvPair{\VarGrade{\x}{1}}{\funEx} }, \EnvElem{\x}{1}{\bvPair{\emptyset}{\un}} }{1}{\bvPair{\VarGrade{\x}{1}}{\funEx}}{ \EnvElem{\x}{1}{\bvPair{\emptyset}{\un}}}
\thickness=0.08em
	\shiftright 2em
	\using \scriptstyle{\textsc{(var)}}
	 \endprooftree
\BigSpace
\mathcal{D'}_1
\BigSpace
\mathcal{D}_2
\justifies
\red{\App{\y}{\PairExp{3}{\x}{\x}{5}}}{\AddToEnv{\EnvElem{\y}{ 2}{\bvPair{\VarGrade{\x}{1}}{\funEx}}}{\x}{9}{\bvPair{\emptyset}{\un}}}{1}{ \bv }{\EnvElem{\y}{1}{\bvPair{\VarGrade{\x}{1}}{\funEx}}} 
	\thickness=0.08em
	\shiftright 2em
	\using \scriptstyle{\textsc{(app)}}
	 \endprooftree
\end{array}
\end{math}
\end{small}

\caption{Example of  wasting  reduction}
\label{fig:ex-red-2}
\end{figure}

On the other hand, if we consider exact counting, the configuration is stuck, since  there is no way to split the environment $\AddToEnv{\EnvElem{\y}{2}{\bvPair{\VarGrade{\x}{1}}{\funEx}}}{\x}{ 9 }{\bvPair{\emptyset}{\un}}$ to obtain a proof tree. We consider, for instance, two possibilities. 

If  we use both the copies of $\y$ to reduce the function, as shown  at  the top section of \cref{fig:ex-red-2}, the computation is stuck when we try to apply rule \refToRule{var} in $\mathcal{D}$, since  $\NoNoWaste{1\rmul(\VarGrade{\x}{1})}{\EnvElem{\y}{1}{\bvPair{\VarGrade{\x}{1}}{\funEx}},\EnvElem{\x}{1}{\bvPair{\emptyset}{\un} }}$ as $0\not\leq 1$. 
That is,  we cannot consume only one copy of $\y$ since the remaining grade would be non-discardable. 

 If, instead, we  allocate one copy of $\y$ to reduce the function and one to reduce its argument, as shown at the bottom section of \cref{fig:ex-red-2}, the computation is also stuck, when in $\mathcal{D}'_1$ we try to apply one of the instances of rule  \refToRule{var}. Indeed, here $\y$ is not used and its grade is non-discardable, that is, we would waste $\y$.

\subsection{Results}\label{sect:nw-res} 
As expected, no-waste semantics can be seen as a refinement of resource-aware semantics. 
More precisely, given a no-waste reduction, there exist infinitely many resource-aware reductions  obtained by  adding a ``wasted'' portion of environment. This is formalized below, where
$\Erase{\env}$ denotes the environment obtained from $\env$ by removing annotations from values.
\newcommand{\wasted}{\hat{\env}}
\begin{theorem}\label{theo:nw-refines}
If $\red{\E}{\env_1}{\rgr}{\bvPair{\cctx}{\val}}{\env_2}$ then, for all $\wasted$ such that $\Erase{\env_1}\envSum\wasted$ is defined, $\reduce{\Conf{\E}{\Erase{\env_1}\envSum\wasted}}{\rgr}{\Conf{\val}{\Erase{\env_2}\envSum\wasted}}$. 
\end{theorem}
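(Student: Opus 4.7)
The plan is to proceed by induction on the derivation of the no-waste reduction $\red{\E}{\env_1}{\rgr}{\bvPair{\cctx}{\val}}{\env_2}$, simultaneously for both judgments (value expressions and expressions), and prove the stronger statement that for every $\wasted$ (compatible with $\Erase{\env_1}$, in particular assigning the same values on common variables) there is a corresponding resource-aware reduction. Quantifying universally over $\wasted$ is needed for the induction to go through, since in the recursive cases we must instantiate it with pieces of neighbouring subderivations.

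First I would dispatch the axiomatic cases \refToRule{var}, \refToRule{fun}, \refToRule{unit}. Each of these matches a resource-aware axiom whose only relevant side condition ($\rgr\rsum\sgr'\rord\sgr$ for \refToRule{var}, and none for the others) is already present in the no-waste rule; the additional no-waste side conditions $\NoWaste{\rgr\rmul\cctx}{-}$ are simply dropped, and attaching $\wasted$ to the environment is harmless because these rules leave the environment untouched (or merely decrement one variable). Erasure commutes with the environment constructors, so the target judgement holds verbatim.

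Next I would handle the compound rules, where the no-waste semantics splits $\env$ into $\env_1 \envSum \env_2$ (with $\env_1 \envSum \env_2 \ctxord \env$) while the resource-aware semantics threads the environment sequentially. Take \refToRule{pair} as the representative case. By IH on the first premise, for \emph{any} $\wasted_1$, we have a resource-aware reduction of $\ve_1$ in $\Erase{\env_1}\envSum\wasted_1$ producing $\val_1$ and leaving $\Erase{\env_1'}\envSum\wasted_1$. Instantiate $\wasted_1 := \Erase{\env_2}\envSum\wasted$; symmetrically apply IH on the second premise with $\wasted_2 := \Erase{\env_1'}\envSum\wasted$. Using $\Erase{\env_1}\envSum\Erase{\env_2}=\Erase{\env_1\envSum\env_2}$, the two reductions compose via resource-aware \refToRule{pair} to give a derivation from $\Erase{\env_1\envSum\env_2}\envSum\wasted$ to $\Erase{\env_1'\envSum\env_2'}\envSum\wasted$. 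The same recipe — splitting the ambient environment using the companion subderivation and $\wasted$ — covers \refToRule{inl}, \refToRule{inr}, \refToRule{let}, \refToRule{app}, and the four \refToRule{match-*} rules, with the usual care for the extra bindings introduced by pattern variables, whose grades and values coincide between the two semantics.

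The main obstacle is the slack introduced by the side condition $\env_1\envSum\env_2\ctxord\env$: what we assemble via the IHs starts in $\Erase{\env_1\envSum\env_2}\envSum\wasted$, not in $\Erase{\env}\envSum\wasted$. The gap consists of extra discardable entries and/or enlarged grades, i.e.\ exactly what the order $\ctxord$ allows. I would close this gap by an auxiliary weakening lemma for the resource-aware semantics, stating that if $\reduce{\Conf{\E}{\env_-}}{\rgr}{\Conf{\val}{\env_-'}}$ and $\env_-\ctxord\env_+$, then one can derive $\reduce{\Conf{\E}{\env_+}}{\rgr}{\Conf{\val}{\env_+'}}$ for a suitable $\env_+'$ of the form $\Erase{\env_2}\envSum\wasted$ (modulo possibly enlarging the residual grades of \refToRule{var}). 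This lemma follows by a routine induction on the resource-aware derivation, exploiting the fact that every side condition there is of the form $\alpha\rord\beta$ or $\alpha\ne\rzero$, both monotone under $\ctxord$. Absorbing the gap into the residual effectively means it travels along unused throughout the computation and reappears in $\wasted$, which is precisely the intuition behind the statement.
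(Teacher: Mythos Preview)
Your overall strategy—induction on the no-waste derivation, universal quantification over $\wasted$, and instantiating the inductive hypotheses with the ``other half'' of the split plus $\wasted$ to turn the no-waste splitting into the resource-aware threading—is exactly the paper's approach. You also go further than the paper's sketch by explicitly flagging the slack introduced by the side condition $\env_1\envSum\env_2\ctxord\env$, which the paper's proof of the \refToRule{pair} case simply ignores (its displayed conclusion starts from $\env$ rather than from $\erase{\env_1}\envSum\erase{\env_2}$, without justification).

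However, your proposed weakening lemma does not close that gap for the theorem \emph{as literally stated}. The slack between $\env_1\envSum\env_2$ and $\env$ is threaded untouched through the resource-aware derivation and therefore \emph{remains in the output}: it cannot ``reappear in $\wasted$'' because $\wasted$ is the same on both sides of the conclusion. Concretely, take bounded-counting grades, $\env=\EnvElem{\x}{2}{\bvPair{\emptyset}{\unit}}$, and the no-waste derivation of $\PairExp{1}{\unit}{\unit}{1}$ that splits $\env$ as $\env_1=\env_2=\emptyset$ (valid since $\rzero\rord 2$); then the no-waste output is $\env'=\emptyset$, so the theorem claims a resource-aware reduction from $\EnvElem{\x}{2}{\unit}\envSum\wasted$ to $\wasted$, but the resource-aware \refToRule{unit} and \refToRule{pair} rules leave the environment unchanged. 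Your weakening lemma yields the correct weaker conclusion—an output of the form $\Erase{\env'}\envSum\wasted'$ with $\wasted\ctxord\wasted'$ (or equivalently some $\env_+'$ with $\Erase{\env'}\envSum\wasted\ctxord\env_+'$)—which is all that is needed for the paper's only use of this theorem (transferring soundness to the semantics of \cref{fig:semantics}), but you should state the lemma and the resulting conclusion in that form rather than claim the exact shape $\Erase{\env'}\envSum\wasted$.
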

\begin{proof}
Let us write $\OKSum{\env}{\env'}$ if $\env$ and $\env'$ are non-conflicting, that is, $\env\envSum\env'$ is defined. First of all note that, if $\red{\E}{\env}{\rgr}{\bvPair{\cctx}{\val}}{\env'}$, and $\OKSum{\Erase{\env}}{\wasted}$ holds, then $\OKSum{\Erase{\env'}}{\wasted}$ holds as well, since variables added in $\env'$ are fresh.\\
By induction on the no-waste reduction. We show one base and one inductive case. 
\begin{description}

\item[\refToRule{var}] We have $\red{\x}{\env,\EnvElem{\x}{\sgr}{\bv}}{\rgr}{\bv}{\env,\EnvElem{\x}{\sgr'}{\bv}}$, with $\bv=\bvPair{\cctx}{\val}$ and $\rgr\rsum\sgr'\rord \sgr$. Take an arbitrary $\wasted$ such that 
$\OKSum{\Erase{\env,\EnvElem{\x}{\sgr}{\bv}}}{\wasted}$. \\
If $\x\not\in\dom{\wasted}$, then the side condition $\rgr\rsum\sgr'\rord \sgr$ of rule \refToRule{var} in \cref{fig:semantics} holds, hence $\reduce{\Conf{\x}{(\Erase{\env},\EnvElem{\x}{\sgr}{\val}) \envSum\wasted}}{\rgr}{\Conf{\val}{\Erase{\env},\EnvElem{\x}{\sgr'}{\val} \envSum\wasted}}$.\\
If ${\wasted}(\x)=\Pair{\rgr'}{\val}$,  then the side condition $\rgr\rsum\sgr'\rsum\rgr'\rord\sgr\rsum\rgr'$ of rule \refToRule{var} in \cref{fig:semantics} holds, hence $\reduce{\Conf{\x}{(\Erase{\env},\EnvElem{\x}{\sgr}{\val}) \envSum\wasted}}{\rgr}{\Conf{\val}{\Erase{\env},\EnvElem{\x}{\sgr'}{\val} \envSum\wasted}}$.
In both cases we get the thesis. 
\item[\refToRule{pair}] We have 
\begin{quoting}
(1) $\red{\ve_1}{\env_1}{\rgr\rmul\rgr_1}{\bvPair{\cctx_1}{\val_1}}{\env_1'}$\\
(2) $\red{\ve_2}{\env_2}{\rgr\rmul\rgr_2}{\bvPair{\cctx_2}{\val_2}}{\env_2'}$\\
(3) ${\red{\PairExp{\rgr_1}{\ve_1}{\ve_2}{\rgr_2}}{\env}{\rgr}{\bvPair{\cctx_1\ctxsum\cctx_2}{\PairExp{\rgr_1}{\val_1}{\val_2}{\rgr_2}}}{\env_1'\envSum\env_2'}}$\\
(4) $\env_1\envSum\env_2\ctxord\env$
\end{quoting}
From (4) we get that $\OKSum{\erase{\env_1}}{\erase{\env_2}}$ holds. 
Hence, from (1) by inductive hypothesis, taking as additional environment $\erase{\env_2}$, we get 
\begin{quoting}
$\red{\ve_1}{\erase{\env_1}\envSum\erase{\env_2}}{\rgr\rmul\rgr_1}{\val_1}{\erase{\env'_1}\envSum\erase{\env_2}}$
\end{quoting}
Since $\OKSum{\erase{\env'_1}}{\erase{\env_2}}$, 
from (2) by inductive hypothesis, taking as additional environment $\erase{\env'_1}$, we get
\begin{quoting}
$\red{\ve_2}{\erase{\env_2}\envSum\erase{\env'_1}}{\rgr\rmul\rgr_1}{\val_1}{\erase{\env'_2}\envSum\erase{\env'_1}}$
\end{quoting}
Hence, we can apply rule \refToRule{pair} in \cref{fig:semantics}:
\begin{quoting}
${\reduce{\Conf{\PairExp{\rgr_1}{\ve_1}{\ve_2}{\rgr_2}}{\env}}{\rgr}{\Conf{\PairExp{\rgr_1}{\val_1}{\val_2}{\rgr_2}}}{\erase{\env_1'}\envSum\erase{\env_2'}}}$
\end{quoting}
and we get the thesis.

\end{description}
\end{proof}

We state now the main result of this section: the reduction relation is no-waste.
We say that \emph{$\Conf{\E}{\env}$} is no-waste, written $\NWConf{\E}{\env}$, if $\NoWaste{\cctx}{\env}$ holds for some $\cctx$ honest for $\E$.

\begin{theorem}[Semantics is no-waste]\label{theo:nw-sem}
If $\red{\E}{\env}{\rgr}{\bvPair{\cctx'}{\val}}{\env'}$ then $\NWConf{\E}{\env}$, and $\NoWaste{\rgr\ctxmul\cctx'}{\env'}$.
 \end{theorem}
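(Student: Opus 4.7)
I will proceed by induction on the derivation $\red{\E}{\env}{\rgr}{\bvPair{\cctx'}{\val}}{\env'}$, proving both conjuncts simultaneously. The second conjunct $\NoWaste{\rgr\ctxmul\cctx'}{\env'}$ is the easier one: in the three axioms \refToRule{var}, \refToRule{fun}, \refToRule{unit} it is essentially the explicit side condition on the annotation attached to the produced value; in the inductive rules, it will follow by applying the induction hypothesis to the premise that produces the overall result, combined with a monotonicity transport along the environment ordering $\env_1 \envSum \env_2 \ctxord \env$ appearing in the side conditions.

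For the first conjunct $\NWConf{\E}{\env}$, the plan is to construct an honest witness $\cctx$ for $\E$ and then check $\NoWaste{\cctx}{\env}$. In \refToRule{var}, take $\cctx = \VarGrade{\x}{\rgr}$, which is trivially honest for $\x$. Unfolding $\VarGrade{\x}{\rgr}\rmul\GraphStar{\env,\EnvElem{\x}{\sgr}{\bv}}$ via \cref{cor:GStarId} yields $\VarGrade{\x}{\rgr} \ctxsum \rgr\rmul\cctx'\rmul\GraphStar{\env,\EnvElem{\x}{\sgr'}{\bv}}$, using that the grade graph depends only on annotations, not on the grade attached to $\x$, so the two stars coincide. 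The rule's side condition bounds the second summand, and \cref{lem:nocycles} forces its $\x$-component to be $\rzero$; this together with $\rgr \rsum \sgr' \rord \sgr$ entails $\rzero \rord \sgr'$, and hence $\rgr \rord \sgr$ by monotonicity of $\rsum$, closing the $\x$-component of the required inequality. In \refToRule{fun} and \refToRule{unit}, take $\cctx = \rgr\rmul\cctx'$; no-waste is then precisely the rule's side condition, and honesty is preserved by scalar multiplication (contrapositively: if $\rgr\rmul\cctx'(\y)$ is non-discardable then so is $\cctx'(\y)$, by monotonicity), relying on the standing invariant that the annotation $\cctx'$ is honest for $\val$.

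For the inductive rules, the induction hypothesis on each premise yields an honest context $\cctx_i$ with $\NoWaste{\cctx_i}{\env_i}$. I will combine these using the same algebraic recipe by which the rule builds $\cctx'$ from the children's annotations: $\cctx = \rgr_1\rmul\cctx_1 \ctxsum \rgr_2\rmul\cctx_2$ for \refToRule{pair}, $\cctx = \sgr\rmul\cctx_1$ for \refToRule{inl} and \refToRule{inr}, and analogous combinations for the compound-expression rules in the bottom section. Honesty of the combined $\cctx$ for the enclosing $\E$ holds because both sum and scalar multiplication by non-zero grades preserve honesty, and because $\fv{\E}$ contains $\fv{\ve_i}$ for every subterm.

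The main obstacle will be the interaction between environment splitting and the transitive closure of the grade graph. The side conditions of the inductive rules split the environment as $\env_1 \envSum \env_2 \ctxord \env$, and we must transport the inductively established $\NoWaste{\cctx_i}{\env_i}$ to $\NoWaste{\cctx}{\env}$, and, symmetrically on the output side, obtain $\NoWaste{\rgr\ctxmul\cctx'}{\env_1' \envSum \env_2'}$ from the analogous facts on each $\env_i'$. I expect to need two auxiliary facts, proved separately: first, that $\Graph{\env_1 \envSum \env_2}$ coincides with $\Graph{\env_i}$ on rows indexed by $\dom{\env_i}$, since annotations on values are shared on the overlap by definition of $\envSum$; second, a compatibility lemma comparing the action $\cctx\rmul\GraphStar{\env_1\envSum\env_2}$ with $\cctx\rmul\GraphStar{\env_i}$ when $\cctx$ is supported in $\dom{\env_i}$, modulo the contribution of the other half. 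The rules \refToRule{app}, \refToRule{let}, and the match rules add a further wrinkle, since their last premise reduces in an environment extended by freshly bound variables whose annotations come from earlier premises, so the grade graph acquires new rows whose effect on the transitive closure must be carefully tracked.
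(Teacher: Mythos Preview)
Your plan has a genuine gap in the sequential rules (\refToRule{let}, \refToRule{app}, the match rules). The induction hypothesis you propose---just $\NWConf{\E}{\env}$ and $\NoWaste{\rgr\ctxmul\cctx'}{\env'}$---is too weak to close these cases. Consider \refToRule{let}: the second premise evaluates in $\hat\env=(\env_1'\envSum\env_2),\EnvElem{\x'}{\sgr}{\bv_1}$, which contains the \emph{residual} $\env_1'$ from evaluating $\e_1$, not the original $\env_1$. Your IH on that premise gives you an honest $\cctx_2$ with $\cctx_2\rmul\GraphStar{\hat\env}\ctxord\CCTX{\hat\env}$, but $\CCTX{\hat\env}$ involves $\CCTX{\env_1'}$, and nothing in your hypothesis relates $\CCTX{\env_1'}$ back to $\CCTX{\env_1}$ (hence to $\CCTX{\env}$ via the splitting). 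Moreover, $\cctx_2$ may assign a nonzero grade to the fresh $\x'$; to turn that into usage on $\dom\env$ you must expand the new row of the graph (annotation $\cctx_1'$ of $\bv_1$) and then bound the resulting $\sgr'\ctxmul\cctx_1'\ctxmul\GraphStar{\env_1'}$ by something living in $\env_1$, not $\env_1'$. Your two proposed auxiliary facts about splitting do not supply either of these missing links.

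The paper does not attempt a direct induction on this statement. Instead it derives \cref{theo:nw-sem} as an immediate corollary of the stronger \cref{theo:balance} (Resource balance), which carries, besides your two no-waste conditions, two additional balance inequalities involving auxiliary grade contexts $\used$ (consumed) and $\added$ (added, supported on fresh names): $\CCTX{\env'}\ctxsum\used\ctxord\CCTX{\env}\ctxsum\added$ and $\cctx\ctxmul\GraphStar{\env}\ctxsum\added\ctxord\rgr\ctxmul\cctx'\ctxmul\GraphStar{\env'}\ctxsum\used$. Condition~3 is exactly the bridge between $\CCTX{\env_1'}$ and $\CCTX{\env_1}$ that you are missing; condition~4 is what lets you trade the fresh $\x'$-component of $\cctx_2$ against the output no-waste of the first premise. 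The paper in fact proves a generalized version (quantifying over $\rgr'\rord\rgr$) and shows that condition~2 is implied by 1, 3, 4, so only those three need a direct induction. If you want to keep your overall architecture, you will need to strengthen your IH to something of this shape.
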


Here, $\NoWaste{\rgr\ctxmul\cctx'}{\env'}$  states that, as required, \emph{only no-waste results are produced}; since the final result is produced in $\rgr$ copies, the no-waste condition holds for a grade context obtained by multiplying by $\rgr$ that declared in the annotated value.

 In addition, the theorem provides a \emph{characterization of configurations which do not get stuck due to wasting errors}. In other words, configurations such that $\NWConf{\E}{\env}$ does not hold are stuck, since they cannot be reduced to a no-waste result. 

 Let us consider the configuration $\Conf{\x}{\EnvElem{\x}{1}{\bvPair{\VarGrade{\y}{1},\VarGrade{\z}{1}}{(\Fun{\_}{\PairExp{1}{\y}{\z}{1}})},\EnvElem{\y}{1}{\bvPair{\emptyset}{\unit}},\EnvElem{\z}{1}{\bvPair{\emptyset}{\unit}}}}$.  Assuming linearity grades, this  configuration is non-stuck, since  there exists a grade context (a way to assign grades to the variables), notably, $\VarGrade{\x}{1}$, which uses with no-waste the environment. Indeed, the three resources need to be used with grade $1$, and such grade context  uses, with grade $1$, directly $\x$, and transitively $\y$ and $\z$. 
On the contrary, the configuration $\Conf{\x}{\EnvElem{\x}{1}{\bvPair{\VarGrade{\y}{1}}{(\Fun{\_}{\y})},\EnvElem{\y}{1}{\bvPair{\emptyset}{\unit}},\EnvElem{\z}{1}{\bvPair{\emptyset}{\unit}}}}$ is stuck,  since \emph{there is no grade context} such that the no-waste condition holds. Indeed, since $\z$ is no longer transitively used, it is should be directly used in the grade context,  but this would not be honest for the value expression $\x$.

Of course we expect, and will prove, that well-typed configurations are no-waste; however, note that the characterization above is purely semantic, and, as usual, it can hold for ill-typed configurations. For instance, assuming to have  booleans and conditional, which can be encoded as shown in \cref{fig:boolNatList}, the configuration
$\Conf{\If{\True}{\x}{\unit}}{\EnvElem{\x}{1}{\bvPair{\emptyset}{\unit}}}$ is no-waste,  since $\NoWaste{\VarGrade{\x}{1}}{\EnvElem{\x}{1}{\unit}}$ holds, and indeed can be safely reduced to $\Conf{\bvPair{\emptyset}{\unit}}{\emptyset}$. However, this configuration is ill-typed, since the else alternative does not use $\x$.

\cref{theo:nw-sem} is an immediate corollary of the following one.

\begin{theorem}[Resource balance]\label{theo:balance}
If $\red{\E}{\env}{\rgr}{\bvPair{\cctx'}{\val}}{\env'}$ then there exist $\cctx,\used,\added$, with   $\cctx$ honest for $\E$,   $\cctx(\x)=\rzero$ for each $\x\not\in\dom{\env}$, 
and $\added(\x)=\rzero$ for all $\x\in\dom{\env}$,  such that
the following conditions hold.
\begin{enumerate}
\item\label{theo:nw-sem:1} $\rgr\ctxmul\cctx'\ctxmul\GraphStar{\env'}\ctxord\CCTX{\env'}$
\item\label{theo:nw-sem:2}  $\cctx\ctxmul\GraphStar{\env}\ctxord\CCTX{\env}$
\item\label{theo:nw-sem:3} $\CCTX{\env'}\ctxsum\used\ctxord\CCTX{\env}\ctxsum\added$
\item\label{theo:nw-sem:4} $\cctx\ctxmul\GraphStar{\env} \ctxsum\added\rord\rgr\ctxmul\cctx'\ctxmul\GraphStar{\env'}\ctxsum\used$
\end{enumerate}
\end{theorem}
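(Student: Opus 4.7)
I would proceed by induction on the derivation of $\red{\E}{\env}{\rgr}{\bvPair{\cctx'}{\val}}{\env'}$. For each rule, the plan is to exhibit explicit witnesses $\cctx$, $\used$, $\added$ (built from the rule's side conditions and, in the inductive cases, from the witnesses provided by the induction hypothesis) and verify the four conditions by algebraic manipulation, relying on monotonicity of the grade algebra operations together with the identity $\Gstar{\graph} = \Id \rsum \graph \rmul \Gstar{\graph}$ from \cref{cor:GStarId} and the induction principle of \cref{prop:implication}.

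In the base cases, witnesses are direct. For \refToRule{var} on $\x$ with residual grade $\sgr'$, I would take $\cctx = \VarGrade{\x}{\rgr}$, $\used = \VarGrade{\x}{\rgr}$, and $\added = \emptyset$; since only the grade of $\x$ changes (not its value annotation), $\Graph{\env} = \Graph{\env'}$, and \cref{cor:GStarId} yields $\VarGrade{\x}{\rgr}\ctxmul\Gstar{\env} = \VarGrade{\x}{\rgr}\ctxsum\rgr\ctxmul\cctx'\ctxmul\Gstar{\env'}$, from which all four conditions follow using $\rgr\rsum\sgr'\rord\sgr$ and the no-waste side condition. For \refToRule{fun} and \refToRule{unit}, I would take $\cctx = \rgr\ctxmul\cctx'$ restricted to $\dom{\env}$ and $\used=\added=\emptyset$; honesty of $\cctx$ is inherited from honesty of the value annotation (the standing well-formedness assumption on annotated values), and the four conditions collapse to the rule's no-waste side condition. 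For the compound value-expression rules \refToRule{pair}, \refToRule{inl}, \refToRule{inr}, the IH witnesses combine additively (suitably scaled by the subterm grades); honesty is preserved because the free variables of a compound term are the union of those of its subterms, well-definedness of the sum of environments follows from the rule's side condition $\env_1\envSum\env_2\ctxord\env$, and the four conditions are obtained by summing the corresponding instances of the IH using monotonicity.

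The inductive cases for expressions (\refToRule{ret}, \refToRule{let}, \refToRule{app}, and the four \refToRule{match-*} rules) are where the real difficulty lies, because each of these rules extends the environment with fresh bindings before evaluating a continuation. In \refToRule{let}, the continuation is evaluated in the extended environment $\AddToEnv{(\env_1'\envSum\env_2)}{\x'}{\sgr}{\bv_1}$, so the dependency graph gains a fresh row for $\x'$ equal to $\cctx_1$; if the IH on the continuation yields a witness $\cctx_2^{in}$ with $\cctx_2^{in}(\x') = \tgr$, the $\x'$-component must be folded back via \cref{cor:GStarId}, rewriting $\tgr$ copies of the $\x'$-row as $\tgr\ctxmul\cctx_1\ctxmul\Gstar{\env_1'\envSum\env_2}$, so as to recover a witness with support in $\dom{\env}$. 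Combining this with the IH witness for $\e_1$ (scaled by $\tgr$) yields the overall $\cctx$, and conditions 3 and 4 are verified by summing the two balances provided by the IH and telescoping the contribution of $\x'$. Rule \refToRule{app} follows the same template with two fresh bindings, one of which ($\f'$) carries the recursion grade $\sgr_2$; the extended graph remains acyclic because $\f'$ is fresh and $\cctx_1$, being honest for $\RecFun{\f}{\x}{\e}$, cannot refer to the bound name $\f'$. I expect \refToRule{app} to be the main obstacle, because closing the balance for the continuation requires simultaneously folding back the dependencies on both $\x'$ and $\f'$, and it is here that the full strength of \cref{prop:implication} is needed to pass from a one-step transitive inequality to one involving the whole closure. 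The \refToRule{match-*} cases then follow the same template as \refToRule{let} and \refToRule{app}, with the additional bookkeeping that the annotations of the variables bound in the continuation are related to the annotation of the matched value via the rule's side condition on $\cctx$.
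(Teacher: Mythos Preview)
Your overall plan---induction on the derivation, with explicit witnesses built from the induction hypotheses and folded back via \cref{cor:GStarId}---matches the paper's approach. But there is a genuine gap in the induction hypothesis you use.

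In the \refToRule{let} case (and analogously in \refToRule{app} and the \refToRule{match-*} rules), the continuation is evaluated in an environment where the fresh variable $\x'$ has grade $\sgr$. When you unfold the continuation's witness $\cctx_2^{in}$, its $\x'$-component $\tgr$ satisfies $\tgr \rord \sgr$ (this follows from condition~2 for the continuation), and folding back gives a term $\tgr\ctxmul\cctx'_1\ctxmul\GraphStar{\env_1'}$ on the left of the balance. To cancel this against something coming from the first premise, you need condition~4 for $\e_1$ \emph{at grade $\tgr$}, i.e.\ with right-hand side $\tgr\ctxmul\cctx'_1\ctxmul\GraphStar{\env_1'}\ctxsum\used_1$. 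But the induction hypothesis on $\e_1$, as stated, only gives you condition~4 at grade $\sgr$, and since $\tgr\rord\sgr$ the inequality goes the wrong way: you cannot pass from $\sgr\ctxmul\cctx'_1\ctxmul\GraphStar{\env'_1}$ on the right down to $\tgr\ctxmul\cctx'_1\ctxmul\GraphStar{\env'_1}$. Your phrase ``scaled by $\tgr$'' does not fix this in an arbitrary grade algebra, where there is no division. The paper handles this by strengthening the statement before the induction: it proves that the four conditions hold \emph{for every} $\rgr'\rord\rgr$, so that when the \refToRule{let} case discovers the value $\tgr\rord\sgr$, it can re-invoke the induction hypothesis on $\e_1$ at the smaller grade $\tgr$ and obtain a matching witness $\cctx_1$.

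Two smaller remarks. First, the paper also observes that condition~2 is a consequence of conditions~1, 3, 4 together with $\Disjoint{\added}{\env}$, which cuts the inductive workload. Second, your invocation of \cref{prop:implication} in the \refToRule{app} case is misplaced: that proposition is the induction principle used to show that \emph{well-typed} environments are no-waste, but it plays no role in the resource-balance proof, which relies only on \cref{cor:GStarId} and a restriction lemma for $\GraphStar{}$ under environment sums.
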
  
Notably, \cref{theo:nw-sem:1}  shows that the result is no-waste, and 
 $\cctx$ in \cref{theo:nw-sem:2} is the grade context
 used to prove that the configuration is no-waste. The grade contexts 
$\used$ and $\added$ denote, respectively, \emph{consumed} and \emph{added} resources during the computation, which are used to express balance (in)equations in \cref{theo:nw-sem:3,theo:nw-sem:4}.   Added resources are always fresh, hence   $\added(\x)=\rzero$ for all $\x\in\dom{\env}$  is expected to hold. 
The inequation of \cref{theo:nw-sem:3} states that remaining resources plus those consumed during the computation cannot exceed the sum of initially available resources and those  added  by the evaluation. 
This is analogous to a resource-balance result proved  by \cite{ChoudhuryEEW21},  but here it is expressed in purely semantic terms.
 \cref{theo:nw-sem:4} provides a strict version formulated using the more precise information carried by the grade graphs of the environments and is the key property to prove the no-waste result.  
 
The no-waste property implies, in particular, that any resource in the environment which is not reachable from the program (garbage) is discardable, as formalized below. As mentioned before, such  a  resource could be discarded in a semantics removing garbage.

Let us denote by $\reach{}{\env}{}$ the \emph{reachability relation} in $\env$, that is, the reflexive and transitive closure of the relation $\reachOne{}{\env}{}$ defined by:
\begin{quoting}
$\reachOne{\x}{\env}{\y}$ if $\env(\x)=\Pair{\_}{\bvPair{\cctx}{\val}}$ and $\y\in\fv{\val}$
\end{quoting}

The following lemma states that, as expected, the existence  of  a non-discardable path implies reachability.
\begin{lemma}\label{lem:GID}
If $\rzero\not\rord\Graph{\env}(\p)$, then $\reach{\Start{\p}}{\env}{\End{\p}}$.
\end{lemma}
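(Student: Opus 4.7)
The plan is to prove the claim by induction on the length $\length{\p}$ of the path $\p$. The base case $\length{\p} = 0$ is immediate: then $\p$ is a single node $\x$ with $\Start{\p} = \End{\p} = \x$, and $\reach{\x}{\env}{\x}$ holds by reflexivity of $\reacharrow$.

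For the inductive step, write $\p = \x \cdot \p'$ and let $\y = \Start{\p'}$, so by \cref{def:paths} we have $\weight{\Graph{\env}}{\p} = \Graph{\env}(\x,\y) \rmul \weight{\Graph{\env}}{\p'}$. The key move is to invoke the two-sided ideal property of the discardable elements, recalled immediately after \cref{rem:affine}: if $\rzero\rord\rgr$, then $\rzero\rord\rgr\rmul\sgr$ and $\rzero\rord\sgr\rmul\rgr$ for every $\sgr$. Taking the contrapositive, from $\rzero\not\rord\Graph{\env}(\x,\y)\rmul\weight{\Graph{\env}}{\p'}$ we obtain both $\rzero\not\rord\Graph{\env}(\x,\y)$ and $\rzero\not\rord\weight{\Graph{\env}}{\p'}$.

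From the first, since $\rzero\rord\rzero$ by reflexivity, we get $\Graph{\env}(\x,\y)\ne\rzero$, so by the definition of $\Graph{\env}$ we have $\x\in\dom{\env}$ with $\env(\x) = \Pair{\_}{\bvPair{\cctx}{\val}}$ and $\cctx(\y) = \Graph{\env}(\x,\y)$, hence $\rzero\not\rord\cctx(\y)$. Now the honesty assumption on annotated values (recall that we require $\cctx$ to be honest for $\val$ in every $\bvPair{\cctx}{\val}$) yields $\y\in\fv{\val}$, which is exactly $\reachOne{\x}{\env}{\y}$. Applying the induction hypothesis to $\p'$, which satisfies $\rzero\not\rord\weight{\Graph{\env}}{\p'}$, gives $\reach{\y}{\env}{\End{\p'}}$, and since $\End{\p'} = \End{\p}$, transitivity of $\reacharrow$ closes the case: $\reach{\x}{\env}{\End{\p}}$.

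I do not expect significant obstacles. The only subtle point worth double-checking is the contrapositive step turning non-discardability of a product into non-discardability of each factor; this relies crucially on the two-sided ideal structure of discardables, not on integrality (integrality would only let us conclude the factors are nonzero, which is weaker than what honesty consumes). Everything else is a routine induction on the path structure combined with the definitions of $\Graph{\env}$ and of the reachability relation $\reacharrow_{\env}$.
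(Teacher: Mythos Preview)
Your proof is correct and follows essentially the same approach as the paper's: induction on the length of $\p$, splitting the product $\Graph{\env}(\x,\y)\rmul\weight{\Graph{\env}}{\p'}$ via the discardable-ideal property, then invoking honesty to get $\reachOne{\x}{\env}{\y}$ and closing by the induction hypothesis and transitivity. If anything, your version is slightly more explicit than the paper's about why $\x\in\dom{\env}$ and about the role of the two-sided ideal property.
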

\begin{proof}
By induction on the length of $\p$. 
If $\p=\x$, then $\reach{\x}{\env}{\x}$ by reflexivity. Otherwise, $\p=\x\cdot\p'$, hence $\env(\x)=\Pair{\_}{\bvPair{\cctx}{\val}}$. From $\rzero\not\rord\Graph{\env}(\x\cdot\p)$ we get
$\rzero\not\rord\graph(\x,\Start{\p'})$ and $\rzero\not\rord\Graph{\env}(\p')$. 
Being $\cctx$ honest for $\val$, we get $\Start{\p'}\in\fv{\val}$, 
hence $\reachOne{\x}{\env}{\Start{\p'}}$, and we get the thesis by applying the inductive hypothesis to $\rzero\not\rord\Graph{\env}(\p')$, and then by transitivity. 

\end{proof}
We write $\reach{X}{\env}{\y}$ if $\reach{\x}{\env}{\y}$ holds for some $\x\in X$.
\begin{proposition}[No-waste $\Rightarrow$ Garbage is Discardable]\label{prop:GID}
If $\NoWaste{\E}{\env}$ holds:
\begin{quoting}
$\notReach{\fv{\E}}{\env}{\x}$ implies $\rzero\rord\CCTX{\env}(\x)$. 
\end{quoting}
\end{proposition}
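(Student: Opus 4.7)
The plan is to unfold the definition of no-waste usage and reduce everything to the ideal structure of discardable grades. By hypothesis, there exists $\cctx$ honest for $\E$ with $\cctx\rmul\GraphStar{\env}\rord\CCTX{\env}$. In particular, $(\cctx\rmul\GraphStar{\env})(\x)\rord\CCTX{\env}(\x)$, so, by transitivity of $\rord$, it suffices to show that $(\cctx\rmul\GraphStar{\env})(\x)$ is itself discardable, i.e., $\rzero\rord(\cctx\rmul\GraphStar{\env})(\x)$.

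Expanding the definition, $(\cctx\rmul\GraphStar{\env})(\x)=\sum_{\y\in\Vars}\cctx(\y)\rmul\GraphStar{\env}(\y,\x)$, a sum that is well-defined since only finitely many addends are non-zero (finite support of $\cctx$ together with \refItem{prop:paths}{2}). Recalling that the discardable elements form a two-sided ideal, in particular closed under sums, it is enough to show each summand $\cctx(\y)\rmul\GraphStar{\env}(\y,\x)$ is discardable. Fix $\y\in\Vars$. If $\rzero\rord\cctx(\y)$, discardability of the product is immediate by the ideal property. Otherwise, honesty of $\cctx$ for $\E$ gives $\y\in\fv{\E}$; I then claim that $\GraphStar{\env}(\y,\x)$ itself is discardable, which once more yields the conclusion via the ideal property.

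To prove the claim, suppose for contradiction that $\rzero\not\rord\GraphStar{\env}(\y,\x)=\sum_{\p\in\Paths{\y}{\x}}\weight{\Graph{\env}}{\p}$. Again by the ideal being closed under (finite) sums, contrapositively some $\p\in\Paths{\y}{\x}$ must satisfy $\rzero\not\rord\weight{\Graph{\env}}{\p}$. By \cref{lem:GID}, this gives $\reach{\y}{\env}{\x}$, and combined with $\y\in\fv{\E}$ we get $\reach{\fv{\E}}{\env}{\x}$, contradicting the hypothesis $\notReach{\fv{\E}}{\env}{\x}$.

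The main obstacle is the careful interplay between honesty and the ideal property of discardables: without honesty a non-discardable $\cctx(\y)$ would tell us nothing about $\E$, and without the ideal property we could not freely push the question of discardability through the sum-of-products that defines $(\cctx\rmul\GraphStar{\env})(\x)$. Notably, integrality of the grade algebra plays no role here — only the ideal structure of the set of discardable elements, together with \cref{lem:GID} linking non-discardable path weights to concrete reachability, is used.
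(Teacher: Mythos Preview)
Your proof is correct and follows essentially the same approach as the paper's: both arguments unfold the no-waste condition, use the ideal structure of discardable grades to push through the sum-of-products $(\cctx\rmul\GraphStar{\env})(\x)$, invoke honesty to locate a variable $\y\in\fv{\E}$, and appeal to \cref{lem:GID} to obtain reachability. The only cosmetic difference is that the paper takes the contrapositive of the whole statement upfront (assuming $\rzero\not\rord\CCTX{\env}(\x)$ and deriving $\reach{\fv{\E}}{\env}{\x}$), whereas you argue directly and defer the contradiction to the step where you show $\GraphStar{\env}(\y,\x)$ is discardable.
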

\begin{proof}
Set $\rgr=\CCTX{\env}(\x)$.
We prove that $\rzero\not\rord\rgr$ implies $\reach{\fv{\E}}{\env}{\x}$. \\
For some $\cctx$ honest for $\E$, we have $\cctx\rmul{\GraphStar{\env}}(\x)\rord\rgr$. Since $\rzero\not\rord\rgr$, it is necessarily $\rzero\not\rord\cctx\rmul{\GraphStar{\env}}(\x)$, that is, $\rzero\not\rord\sum_{\y\in\Vars} \cctx(\y)\cdot\GraphStar{\env}(\y,\x)$. Hence $\rzero\not\rord\cctx(\y)\cdot\GraphStar{\env}(\y,\x)$ for some $\y$, and this implies $\rzero\not\rord\cctx(\y)$ and $\rzero\not\rord\GraphStar{\env}(\y,\x)$, hence there is a path $\p$ from $\y$ to $\x$ with $\rzero\not\rord\Graph{\env}(\p)$.
Being $\cctx$ honest for $\E$,  we get $\y\in\fv{\E}$, hence $\reachOne{\x}{\env}{\y}$, and by \cref{lem:GID} and transitivity we get the thesis. 
\end{proof}

Let us consider the type system  in \cref{sect:typesystem},  with the only difference that, as shown in \cref{fig:nw-typing-conf}, annotations in values are taken into account. 
That is,  rule \refToRule{t-val} checks that annotations  are (overapproximated by) the direct usage obtained by typechecking the value.  
Moreover, since annotated values are not value expressions, we have to add a typing rule for results.  Here $\CCTX{\Gamma}$ is the grade context
extracted from $\Gamma$, i.e., $\CCTX{\Gamma}=\VarGrade{\x_1}{\rgr_1}, \ldots, \VarGrade{\x_n}{\rgr_n}$ for $\Gamma = \VarGradeType{\x_1}{\rgr_1}{\tau_1},\ldots,\VarGradeType{\x_n}{\rgr_n}{\tau_n}$. 
\begin{figure}
\begin{math}
\begin{array}{l}

\NamedRule{t-val}{
\IsWFExp{\Gamma}{\val}{\Graded{\tau}{\rone}}
}{
\IsWFExp{\rgr\rmul\Gamma}{\bvPair{\cctx}{\val}}{\Graded{\tau}{\rgr}}
}{
\cctx\ctxord\CCTX{\Gamma} 
}

\\[4ex]

\NamedRule{t-env}{
\IsWFExp{\Gamma_i}{\bv_i}{\GradedInd{\tau}{i}{\rgr}} \Space \forall i \in 1..n
}
{
\IsWFEnv{\Gamma}{\EnvElem{\x_1}{\rgr_1}{\bv_1},\ldots,\EnvElem{\x_n}{\rgr_n}{\bv_n}}{\Delta}
}{
\Gamma = \VarGradeType{\x_1}{\rgr_1}{\tau_1},\ldots,\VarGradeType{\x_n}{\rgr_n}{\tau_n}\\
\Gamma_1 \ctxsum \ldots \ctxsum \Gamma_n\ctxsum\Delta\ctxord \Gamma
}

\\[4ex]

\NamedRule{t-conf}{
\IsWFExp{\Delta}{\E}{\T} \BigSpace
\IsWFEnv{\Gamma}{\env}{\Delta}
}
{
\IsWFConf{\Gamma}{\E}{\env}{\T}
}{ }
\BigSpace\NamedRule{t-res}{
\IsWFExp{\Delta}{\bv}{\T} \BigSpace
\IsWFEnv{\Gamma}{\env}{\Delta}
}
{
\IsWFConf{\Gamma}{\bv}{\env}{\T}
}{ }

\end{array}
\end{math}
\caption{Typing rules for (annotated) values, environments, configurations, and results}
\label{fig:nw-typing-conf}
\end{figure}

 We  prove now that well-typed configurations  and results  are no-waste (\cref{theo:no-waste-typing}).  
To this end, the key property is \cref{prop:env-typing-star}, stating that, given a well-typed environment, the grade context extracted from the residual context uses the environment with no waste.
In other words, the resources left in the environment should be only those needed (modulo approximation) by the residual context. Thanks to the induction principle,  proved  in \cref{prop:implication}, this property 
is implied by the following   \cref{prop:env-typing},  related to grade graph rather than its closure, which can be proved by routine reasoning on the typing rules.  

\begin{proposition}\label{prop:env-typing}
If $\IsWFEnv{\Gamma}{\env}{\Delta}$ then $\CCTX{\Delta}\ctxsum\CCTX{\env}\ctxmul\Graph{\env}\ctxord\CCTX{\env}$.
\end{proposition}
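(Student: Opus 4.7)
The plan is to unfold the typing derivation using the rules \refToRule{t-env} and \refToRule{t-val}, then read off the required inequality componentwise in terms of ordinary grade-context inequalities.

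First, I would observe that for a grade context $\cctx$ and the grade graph $\Graph\env$, by the definition of matrix-vector multiplication, for every variable $\y$ we have
\[
  (\CCTX\env \ctxmul \Graph\env)(\y) \;=\; \sum_{i=1}^n \CCTX\env(\x_i)\cdot \Graph\env(\x_i,\y) \;=\; \sum_{i=1}^n \rgr_i\cdot\cctx_i(\y),
\]
where $\bv_i=\bvPair{\cctx_i}{\val_i}$. Summands indexed outside $\dom\env$ vanish because $\Graph\env(\x,\y)=\rzero$ whenever $\x\notin\dom\env$.

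Next I would invert the derivation of $\IsWFEnv{\Gamma}{\env}{\Delta}$. By \refToRule{t-env} we get, for each $i\in 1..n$, a derivation $\IsWFExp{\Gamma_i}{\bv_i}{\Graded{\tau_i}{\rgr_i}}$ together with $\Gamma=\VarGradeType{\x_1}{\rgr_1}{\tau_1},\ldots,\VarGradeType{\x_n}{\rgr_n}{\tau_n}$ (so in particular $\CCTX\Gamma = \CCTX\env$) and $\Gamma_1\ctxsum\cdots\ctxsum\Gamma_n\ctxsum\Delta\ctxord\Gamma$. Inverting each $\IsWFExp{\Gamma_i}{\bv_i}{\Graded{\tau_i}{\rgr_i}}$ via \refToRule{t-val} yields $\Gamma_i=\rgr_i\ctxmul\Gamma_i'$ for some $\Gamma_i'$ with $\IsWFExp{\Gamma_i'}{\val_i}{\Graded{\tau_i}{\rone}}$ and $\cctx_i\ctxord\CCTX{\Gamma_i'}$. (A small lemma, or an immediate observation using \refToRule{t-sub-v} if needed, gives this inversion up to subsumption.)

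Now I would compute pointwise. From $\cctx_i\ctxord\CCTX{\Gamma_i'}$, monotonicity of $\rmul$ gives $\rgr_i\cdot\cctx_i(\y)\rord\rgr_i\cdot\CCTX{\Gamma_i'}(\y)=\CCTX{\Gamma_i}(\y)$ for every $\y$. Summing over $i$ and adding $\CCTX\Delta(\y)$, and then using monotonicity of $\rsum$ together with the last side condition of \refToRule{t-env} (which, after taking grade contexts, yields $\CCTX{\Gamma_1}\ctxsum\cdots\ctxsum\CCTX{\Gamma_n}\ctxsum\CCTX\Delta\ctxord\CCTX\Gamma=\CCTX\env$), we obtain
\[
  \CCTX\Delta(\y)\,\rsum\,\sum_{i=1}^n\rgr_i\cdot\cctx_i(\y)
  \;\rord\;
  \CCTX\Delta(\y)\,\rsum\,\sum_{i=1}^n\CCTX{\Gamma_i}(\y)
  \;\rord\;\CCTX\env(\y).
\]
Combined with the identity for $(\CCTX\env\ctxmul\Graph\env)(\y)$ from the first step, this is exactly $\CCTX\Delta\ctxsum\CCTX\env\ctxmul\Graph\env\ctxord\CCTX\env$.

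I do not expect serious obstacles: the statement is essentially a bookkeeping lemma relating the row structure of $\Gamma$ (which feeds the multiplication by $\Graph\env$) to the column structure of the subderivations. The only mildly delicate point is making sure that the inversion of \refToRule{t-val} is available in exactly the form needed (it should be, because the consequence of \refToRule{t-val} has a context of the specific form $\rgr_i\ctxmul\Gamma_i'$); if direct inversion were problematic one could absorb a residual use of \refToRule{t-sub-v} into the $\Delta$-slack via monotonicity of $\ctxsum$ without affecting the argument.
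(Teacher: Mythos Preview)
Your proposal is correct and follows essentially the same approach as the paper's own proof: invert \refToRule{t-env} and \refToRule{t-val}, use the side condition $\cctx_i\ctxord\CCTX{\Gamma_i'}$ together with the global inequality from \refToRule{t-env}, and identify $\sum_i\rgr_i\ctxmul\cctx_i$ with $\CCTX\env\ctxmul\Graph\env$ pointwise. Your caution about subsumption is unnecessary, since \refToRule{t-val} is the only rule for annotated values, so the inversion is direct.
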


\begin{proposition}\label{prop:env-typing-star}
If $\IsWFEnv{\Gamma}{\env}{\Delta}$ then $\CCTX{\Delta}\ctxmul\GraphStar{\env}\ctxord\CCTX{\env}$.
\end{proposition}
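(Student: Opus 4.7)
The plan is to derive this as an immediate consequence of \cref{prop:env-typing} combined with the induction principle for $\GraphStar{\env}$ provided by \cref{prop:implication}. Specifically, the hypothesis of \cref{prop:implication}, instantiated with $\cctx = \CCTX{\Delta}$, $\cctx' = \CCTX{\env}$ and $\graph = \Graph{\env}$, reads
\[ \CCTX{\Delta}\ctxsum\CCTX{\env}\ctxmul\Graph{\env}\ctxord\CCTX{\env}, \]
which is exactly what \cref{prop:env-typing} delivers from the assumption $\IsWFEnv{\Gamma}{\env}{\Delta}$.

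Therefore, first I would observe that $\Graph{\env}$ is acyclic by our standing assumption on environments, so that the closure $\GraphStar{\env}$ and \cref{prop:implication} are applicable. Then I would invoke \cref{prop:env-typing} to obtain the pre-fixed-point inequality displayed above, and apply \cref{prop:implication} to conclude $\CCTX{\Delta}\ctxmul\GraphStar{\env}\ctxord\CCTX{\env}$, which is the thesis.

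There is no real obstacle here: all the work has already been done in proving \cref{prop:env-typing} (which encodes the ``one-step'' balance between the residual context $\Delta$, the direct dependencies in $\env$, and the declared grades of $\env$) and in establishing the induction principle \cref{prop:implication} for $\GraphStar{\graph}$ (which iterates this one-step information into the transitive closure). The present proposition is simply the composition of these two facts, and it is this composition that makes $\GraphStar{\env}$, rather than $\Graph{\env}$, the natural carrier of the no-waste usage relation employed in \cref{sect:nws}.
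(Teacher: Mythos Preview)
Your proposal is correct and matches the paper's own proof exactly: the paper simply states that the result is an immediate consequence of \cref{prop:implication} and \cref{prop:env-typing}, which is precisely the composition you describe. Your additional remark that the acyclicity of $\Graph{\env}$ is needed to apply \cref{prop:implication} is a helpful point the paper leaves implicit.
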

\begin{proof}
It is an immediate consequence of \cref{prop:implication,prop:env-typing}.
\end{proof}

The next lemma shows that the grade context extracted from the typing context of a well-typed expression is always honest for it.  

\begin{lemma}\label{lem:fv}
If $\IsWFExp{\Gamma}{\E}{\T}$ and $\rzero\not\rord\CCTX{\Gamma}(\x)$ then $\x\in\fv{\E}$.
\end{lemma}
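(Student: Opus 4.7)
The plan is to prove the lemma by induction on the derivation of $\IsWFExp{\Gamma}{\E}{\T}$. The crucial algebraic ingredient is the fact, already recorded in the paper just after \cref{rem:affine}, that the set of discardable grades $\{\rgr \mid \rzero \rord \rgr\}$ is a two-sided ideal of $\RR$. Contrapositively, if $\rzero \not\rord \rgr \rsum \sgr$ then $\rzero \not\rord \rgr$ or $\rzero \not\rord \sgr$, and if $\rzero \not\rord \sgr \rmul \rgr$ then in particular $\rzero \not\rord \rgr$ (since $\sgr\rmul\rzero = \rzero \rord \sgr\rmul\rgr$ otherwise). These two properties let non-discardability be traced back through the sum $\ctxsum$ and the scalar product $\ctxmul$ on contexts.

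For the base cases, rule \refToRule{t-var} is immediate because $\dom\Gamma = \{\x\} = \fv\E$, and rule \refToRule{t-unit} is vacuous since $\CCTX\emptyset$ is identically $\rzero$. For the two subsumption rules \refToRule{t-sub-v} and \refToRule{t-sub}, the premise has context $\Gamma$ with $\Gamma \ctxord \Delta$. Given $\rzero \not\rord \CCTX\Delta(\x)$, the third clause in the definition of $\ctxord$ forces $\x \in \dom\Gamma$ (otherwise $\CCTX\Delta(\x)$ would be discardable); then $\CCTX\Gamma(\x) \rord \CCTX\Delta(\x)$ together with transitivity yields $\rzero \not\rord \CCTX\Gamma(\x)$, and the induction hypothesis applied to the premise finishes the case.

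For the compound rules the pattern is the same: decompose the consequent context using the algebraic facts above to locate a subterm whose own context carries a non-discardable grade at $\x$, then apply the IH and close under $\fv{\cdot}$. Concretely, in \refToRule{t-pair} from $\rzero \not\rord \rgr\rmul(\CCTX{\Gamma_1}(\x) \rsum \CCTX{\Gamma_2}(\x))$ we get some $i\in\{1,2\}$ with $\rzero \not\rord \CCTX{\Gamma_i}(\x)$, hence $\x\in\fv{\ve_i}\subseteq\fv{\PairExp{\rgr_1}{\ve_1}{\ve_2}{\rgr_2}}$; \refToRule{t-inl}, \refToRule{t-inr}, \refToRule{t-ret}, \refToRule{t-let}, \refToRule{t-app}, \refToRule{t-match-u}, \refToRule{t-match-p}, and \refToRule{t-match-in} all follow the same recipe. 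The only subtlety is the treatment of bound variables in \refToRule{t-fun}, \refToRule{t-let}, and the match rules: here the premise extends the context with the binders (e.g.\ $\VarGradeType{\f}{\sgr}{\ldots},\VarGradeType{\x_0}{\rgr_1}{\tau_1}$ in \refToRule{t-fun}), but since by the standard Barendregt convention these binders do not occur in the outer $\Gamma$ appearing in the conclusion, our witness $\x$ is distinct from every such binder and therefore survives the passage from $\fv\e$ to $\fv{\RecFun{\f}{\x_0}{\e}}$ (resp.\ to $\fv{\Let{\x_0}{\e_1}{\e_2}}$, etc.).

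I expect no real obstacle: the proof is a direct structural induction, and the only non-cosmetic step is the appeal to the ideal property of discardable grades to invert sums and products of grades. The $\alpha$-renaming/freshness bookkeeping for bound variables is routine and, if desired, can be hidden by invoking the standard convention that bound names do not clash with names appearing in the ambient context.
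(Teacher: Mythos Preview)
Your proposal is correct and follows exactly the approach the paper takes: the paper's own proof is simply ``By induction on the typing rules,'' and what you have written is precisely the fleshed-out version of that induction, including the use of the two-sided ideal property of discardable grades (recorded after \cref{rem:affine}) to push non-discardability through $\ctxsum$ and $\ctxmul$, and the observation that well-formedness of extended contexts forces the bound names to be distinct from the variable $\x$ under consideration. There is nothing missing and nothing to compare; you have supplied the details the paper omits.
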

\begin{proof}
By induction on the typing rules.
\end{proof}

\begin{theorem}[Well-typedness implies no-waste]\label{theo:no-waste-typing}\
\begin{enumerate}
\item \label{theo:no-waste-typing:exp}
If $\IsWFConf{\Gamma}{\E}{\env}{\T}$ then $\NWConf{\E}{\env}$.
\item \label{theo:no-waste-typing:val}
If $\IsWFConf{\Gamma}{\bvPair{\cctx}{\val}}{\env}{\Graded{\tau}{\rgr}}$ then $\NoWaste{\rgr\ctxmul\cctx}{\env}$.
\end{enumerate}
\end{theorem}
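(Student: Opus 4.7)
The plan is to derive both parts as direct corollaries of \cref{prop:env-typing-star} together with \cref{lem:fv}, using the residual typing context produced by the configuration typing rule as the witness grade context. The conceptual content has already been packaged into those two results; what remains is essentially bookkeeping: extracting the right decomposition from the configuration judgment, checking honesty, and chasing one inequality through monotonicity of $\ctxmul$ and $\rmul$.

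For part (\ref{theo:no-waste-typing:exp}), I would first apply inversion on \refToRule{t-conf} to obtain a residual typing context $\Delta$ with $\IsWFExp{\Delta}{\E}{\T}$ and $\IsWFEnv{\Gamma}{\env}{\Delta}$. I then take $\cctx := \CCTX{\Delta}$ as the witness honest grade context for $\E$. Honesty, namely that $\rzero \not\rord \cctx(\x)$ implies $\x \in \fv{\E}$, is exactly the content of \cref{lem:fv} applied to $\IsWFExp{\Delta}{\E}{\T}$. The no-waste inequality $\cctx \rmul \GraphStar{\env} \ctxord \CCTX{\env}$ is precisely the conclusion of \cref{prop:env-typing-star} applied to $\IsWFEnv{\Gamma}{\env}{\Delta}$. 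Thus $\NoWaste{\cctx}{\env}$ holds, and so $\NWConf{\E}{\env}$.

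For part (\ref{theo:no-waste-typing:val}), I would use \refToRule{t-res} to split $\IsWFConf{\Gamma}{\bvPair{\cctx}{\val}}{\env}{\Graded{\tau}{\rgr}}$ into $\IsWFExp{\Delta}{\bvPair{\cctx}{\val}}{\Graded{\tau}{\rgr}}$ and $\IsWFEnv{\Gamma}{\env}{\Delta}$, and then invert the value typing (possibly through the inversion lemma, accounting for intervening applications of \refToRule{t-sub-v}) to obtain a context $\Gamma'$ with $\Delta = \rgr\ctxmul\Gamma'$, $\IsWFExp{\Gamma'}{\val}{\Graded{\tau}{\rone}}$, and $\cctx \ctxord \CCTX{\Gamma'}$. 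From the last fact and monotonicity of scalar multiplication I obtain $\rgr \ctxmul \cctx \ctxord \rgr \ctxmul \CCTX{\Gamma'} = \CCTX{\rgr\ctxmul\Gamma'} = \CCTX{\Delta}$, and multiplying on the right by $\GraphStar{\env}$ (which is monotone on grade contexts) I get $(\rgr \ctxmul \cctx) \rmul \GraphStar{\env} \ctxord \CCTX{\Delta} \rmul \GraphStar{\env} \ctxord \CCTX{\env}$, where the last inequality is again \cref{prop:env-typing-star}. This is exactly $\NoWaste{\rgr\ctxmul\cctx}{\env}$.

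The only potential obstacle is ensuring that the inversion step in part~(\ref{theo:no-waste-typing:val}) is tight enough: \refToRule{t-sub-v} can sit on top of \refToRule{t-val}, so one needs either an inversion lemma giving the canonical $\rgr\ctxmul\Gamma'$ shape up to $\ctxord$, or to propagate an inequality $\Delta' \ctxord \rgr\ctxmul\Gamma'$ through the final chain. Either way the argument goes through because all steps only require monotonicity in $\ctxord$. Apart from that, there is no real technical difficulty: the whole statement is essentially a direct reading of the two supporting results.
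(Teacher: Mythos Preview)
Your proposal is correct and matches the paper's proof essentially step for step: both parts invert the configuration rule, then invoke \cref{prop:env-typing-star} for the no-waste inequality and \cref{lem:fv} for honesty. Your caution about \refToRule{t-sub-v} sitting above \refToRule{t-val} is unnecessary, since annotated values $\bvPair{\cctx}{\val}$ form a separate syntactic category with \refToRule{t-val} as their only typing rule, so the inversion in part~(\ref{theo:no-waste-typing:val}) is direct.
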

\begin{proof}
\begin{enumerate}
\item
We have to show that $\NoWaste{\cctx}{\env}$ for some $\cctx$ honest for $\E$.
To derive $\IsWFConf{\Gamma}{\E}{\env}{\T}$,
we have necessarily applied rule \refToRule{t-conf}, hence $\IsWFExp{\Delta}{\E}{\T}$ and $\IsWFEnv{\Gamma}{\env}{\Delta}$ hold for some $\Delta$. By \cref{prop:env-typing-star} we have $\CCTX{\Delta}\ctxmul\GraphStar{\env}\ctxord\CCTX{\env}$, that is, $\NoWaste{\CCTX{\Delta}}{\env}$. Moreover, $\CCTX{\Delta}$ is honest for $\E$ by \cref{lem:fv}.

\item 
To derive $\IsWFConf{\Gamma}{\bvPair{\cctx}{\val}}{\env}{\Graded{\tau}{\rgr}}$,
we have necessarily applied rule \refToRule{t-res}, hence $\IsWFExp{\Delta}{\bvPair{\cctx}{\val}}{\Graded{\tau}{\rgr}}$ and $\IsWFEnv{\Gamma}{\env}{\Delta}$ hold for some $\Delta$. 
Moreover, to derive $\IsWFExp{\Delta}{\bvPair{\cctx}{\val}}{\Graded{\tau}{\rgr}}$ we have necessarily applied rule \refToRule{t-val}, hence $\Delta=\rgr\rmul\Delta'$ and $\cctx\ctxord\CCTX{\Delta'}$. Moreover, by definition $\CCTX{\Delta}=\rgr\rmul\CCTX{\Delta'}$, hence $\rgr\rmul\cctx\ctxord\CCTX{\Delta}$. By \cref{prop:env-typing-star} we have $\CCTX{\Delta}\ctxmul\GraphStar{\env}\ctxord\CCTX{\env}$, hence $(\rgr\rmul\cctx)\ctxmul\GraphStar{\env}\ctxord\CCTX{\env}$, that is, $\NoWaste{\rgr\ctxmul\cctx}{\env}$. 
\end{enumerate}
\end{proof}

We end this section by briefly discussing the acyclicity assumption on enviroments. Technically, this allows us to give \cref{def:closure}, where an infinite sum of  path grades can be reduced to a finite one thanks to  \refItem{prop:paths}{2}. 
However, infinite sums would be well-defined as well by assuming suitable conditions on the grade algebra \citep{Golan03,EsikK04,EsikL05,FengJ09}. 
A more serious problem is that  the proof of \cref{prop:implication} relies on acyclicity; that is, there is no guarantee that the condition ensured by well-typedness ($\CCTX{\Delta}\ctxsum\CCTX{\env}\ctxmul\Graph{\env}\ctxord\CCTX{\env}$) implies the no-waste condition ($\CCTX{\Delta}\ctxmul\GraphStar{\env}\ctxord\CCTX{\env}$). In a sense, the meaning of no-waste in presence of cyclic environments is unclear. Consider a cyclic resource which is ``garbage'', that is, not reachable from the program, as in, e.g., $\Conf{\val}{\EnvElem{\x}{\rgr}{\bvPair{\VarGrade{\x}{1}}{\Fun{z}{\x}}}}$, with $\val$ closed. This configuration would be well-typed for an arbitrary grade $\rgr$, since the resource is ``auto-consuming''. We will investigate meaning and formalization of no-waste for cyclic resources in future work.


\section{Type soundness}\label{sect:soundness}

In this section, we prove our main result: soundness of the type system with respect to the  no-waste  semantics\footnote{ Thanks to \cref{theo:nw-refines}, this implies soundness with respect to the semantics in \cref{fig:semantics} as well.}.
That is, for well-typed expressions there is a computation which is  not stuck for any reason, including  resource  exhaustion and waste. 
Note that this is a \emph{may} flavour of soundness \citep{DeNicolaH84,DagninoBZD20,Dagnino22}, which is the only one we can prove in this context, because the semantics is non-deterministic. 
We analyse separately type soundness for value expressions and expressions. 
 In the following, we will say that two judgements 
$\IsWFEnv{\Gamma_1}{\env_1}{\Delta_1}$ and $\IsWFEnv{\Gamma_2}{\env_2}{\Delta_2}$ are \emph{consistent} if 
in their derivations 
the values associated with variables in the common domain of $\env_1$ and $\env_2$  are typed with the same type and typing context. 
Formally, this means that 
for all $\x\in\dom{\env_1}\cap\dom{\env_2}$, there are $\bv = \bvPair\cctx\val$, $\tau$ and $\Delta$ such that 
$\env_1(\x) = \Pair{\_}{\bv}$, $\env_2(\x) = \Pair{\_}{\bv}$, 
$\Gamma_1(\x) = \Pair{\_}{\tau}$, $\Gamma_2(\x) = \Pair{\_}{\tau}$, and 
in the derivations of both judgements there is a premise 
$\IsWFExp{\Delta}{\val}{\Graded\tau\rone}$. 
Similarly, we say that two typing judgements for configurations are consistent if so are the judgements of the environments they contain. 
 
\paragraph*{Type soundness for value expressions}
Since  reduction of  value expressions cannot diverge, type soundness means that  well-typed value expressions reduce  to a value, \mbox{as stated below.}

\begin{theorem}[Soundness for value expressions]\label{theo:soundness-ve}
If $\IsWFConf{\Gamma}{\ve}{\env}{\Graded{\tau}{\rgr}}$, then $\NWreduce{\Conf{\ve}{\env}}{\rgr}{\Conf{\bv}{\env'}}$ for some $\bv$, $\env'$.
\end{theorem}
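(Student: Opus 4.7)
The plan is to reason by induction on the typing derivation for the value expression. First, inverting rule \refToRule{t-conf} on $\IsWFConf{\Gamma}{\ve}{\env}{\Graded{\tau}{\rgr}}$ yields a residual context $\Delta$ with $\IsWFExp{\Delta}{\ve}{\Graded{\tau}{\rgr}}$ and $\IsWFEnv{\Gamma}{\env}{\Delta}$. The induction is on the derivation of the first judgement, performing a case analysis on the last typing rule. The key observation that makes this approach work is that value-expression reduction cannot diverge (there are no recursive calls or sequencing), so an inductive argument directly produces a finite proof tree; the only way reduction could fail is if some resource is exhausted or wasted, and this is where typing will save us.

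Before attacking the cases, I would establish two preliminary environment lemmas. (a) \emph{Splitting}: if $\IsWFEnv{\Gamma}{\env}{\Delta_1 \ctxsum \Delta_2}$, then there exist $\env_1, \env_2$ with $\env_1 \envSum \env_2 \ctxord \env$ and $\IsWFEnv{\Gamma_i}{\env_i}{\Delta_i}$ for some suitable $\Gamma_i$. (b) \emph{Promotion on the residual}: if $\IsWFEnv{\Gamma}{\env}{\sgr \ctxmul \Delta}$ with $\sgr \neq \rzero$, one can exhibit an environment derivation against the unscaled residual $\Delta$. Both follow by induction on the derivation of \refToRule{t-env}, by splitting / rescaling the $\rgr_i$ annotations of variables and appealing to \cref{lem:promotion,lem:demotion} on the typed values in the environment. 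These lemmas cost some bookkeeping but are routine.

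Now the cases. For \refToRule{t-sub-v}, apply the IH to the more specific judgement $\IsWFExp{\Gamma'}{\ve}{\Graded{\tau}{\rgr'}}$ and observe that the side conditions of each no-waste reduction rule are stated in terms of $\rord$, so a reduction graded $\rgr'$ with residual of shape $\Gamma'$ can be replayed at the coarser $\rgr$ under the larger environment for $\Gamma$. For \refToRule{t-var}, $\Delta = \VarGradeType{\x}{\rgr}{\tau}$; inverting the environment typing gives $\env(\x) = \Pair{\sgr}{\bvPair{\cctx}{\val}}$ with $\rgr \rord \sgr$ and the rest of the resources enough to satisfy no-waste of the residual; rule \refToRule{var} applies. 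For \refToRule{t-fun} and \refToRule{t-unit}, the corresponding semantic rules \refToRule{fun} and \refToRule{unit} are axioms whose only obligation is the no-waste side condition; this is discharged by \cref{theo:no-waste-typing}(\ref{theo:no-waste-typing:val}) applied to the (trivially) well-typed result configuration whose typing context is read off from the typing of $\ve$. For \refToRule{t-pair}, use splitting lemma (a) to decompose $\env$ matching $\rgr \ctxmul \Gamma_1$ and $\rgr \ctxmul \Gamma_2$, apply the IH to each $\ve_i$ (using lemma (b) to cancel the scalar $\rgr$ so that the grade becomes $\rgr \rmul \rgr_i$ as required by rule \refToRule{pair}), and assemble the result. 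The cases \refToRule{t-inl}, \refToRule{t-inr} are similar but simpler, with a single recursive call.

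The main obstacle is the discharge of the no-waste side conditions at axioms and the grade-graph manipulations hidden behind rule \refToRule{pair}: the annotation context $\cctx = \rgr_1 \rmul \cctx_1 \ctxsum \rgr_2 \rmul \cctx_2$ on the produced pair has to satisfy $\NoWaste{\rgr \rmul \cctx}{\env_1' \envSum \env_2'}$, which is a statement about the transitive closure $\GraphStar{}$ of the sum of two sub-environments rather than the closures of the pieces. I would bridge this gap by first observing, via \cref{prop:env-typing-star}, that each piece $\env_i'$ is no-waste with respect to a typing-derived grade context, and then combining through \cref{prop:implication} to lift the no-waste property to the sum. A further mild annoyance is freshness: environments added by rule \refToRule{pair} (and analogues) must not clash, which is ensured throughout by the usual freshness convention together with the fact that splitting preserves domains.
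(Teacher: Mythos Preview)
Your plan is essentially the paper's approach: the paper derives \cref{theo:soundness-ve} as an immediate corollary of the stronger \cref{theo:adjustV}, which additionally asserts that the result $\Conf{\bv}{\env'}$ is well-typed (and consistent with the input); the proof of the latter is by induction on the structure of $\ve$, using an environment-splitting lemma (\cref{lem:ev-split}) corresponding to your lemma~(a), and discharging the no-waste side conditions at the axioms via \cref{theo:no-waste-typing}, exactly as you propose.

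Two small remarks. First, the paper carries the stronger invariant (well-typedness and consistency of the result) through the induction; you do not state this explicitly, and while for value expressions alone it is arguably dispensable (domains and values in $\env$ are unchanged by value-expression reduction, so sums like $\env_1'\envSum\env_2'$ are automatically defined), making it part of the IH is what later lets the argument plug into the soundness proof for expressions. Second, your worry about a hidden no-waste obligation in rule \refToRule{pair} is unfounded: that rule has no such side condition, only $\env_1\envSum\env_2\ctxord\env$ and the definition of the result annotation $\cctx$; the no-waste property of the pair result follows \emph{a posteriori} from \cref{theo:nw-sem}, not as a premise you must discharge.
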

This is a corollary of the following extended formulation which can be proved inductively on the structure of value expressions, relying on standard lemmas.
\begin{theorem}[Soundness for value expressions (extended)]\label{theo:adjustV}
If $\IsWFConf{\Gamma}{\ve}{\env}{\Graded\tau\rgr}$ then
$\NWreduce{\Conf\ve{\env}}{\rgr}{\Conf\bv{\env'}}$ and 
$\IsWFConf{\Gamma'}{\bv}{\env'}{\Graded\tau\rgr}$ 
 consistent with $\IsWFConf\Gamma\ve\env{\Graded\tau\rgr}$,   
for some $\bv$, $\env'$, $\Gamma'$.
\end{theorem}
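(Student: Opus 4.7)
The plan is to proceed by induction on the structure of the value expression $\ve$, exploiting the fact that reduction of value expressions is non-diverging and only fires top-section rules of \cref{fig:nw-sem}. For each syntactic form of $\ve$, I will first invert the typing derivation $\IsWFConf{\Gamma}{\ve}{\env}{\Graded\tau\rgr}$, factoring out iterated uses of the subsumption rule \refToRule{t-sub-v} via a standard inversion lemma; then exhibit a suitable semantic rule firing, verify the no-waste side conditions attached to axioms, and show that the resulting configuration $\Conf\bv{\env'}$ is typable with the same type and grade in a context $\Gamma'$ consistent with $\Gamma$.

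The three base cases are $\x$, $\RecFun\f\x\e$, and $\unit$. In the variable case, typing inversion together with \refToRule{t-env} yields $\env(\x)=\Pair{\sgr}{\bvPair\cctx\val}$ and, up to subsumption, a $\sgr' $ with $\rgr\rsum\sgr'\rord\sgr$, chosen by absorbing into $\sgr'$ the part of $\sgr$ not needed elsewhere in the environment. The crucial side condition $\NoWaste{\rgr\rmul\cctx}{\env,\EnvElem{\x}{\sgr'}{\bv}}$ is then discharged by applying \cref{theo:no-waste-typing}(\ref{theo:no-waste-typing:val}) to the well-typed result configuration that I construct (with $\Gamma'$ obtained by promoting/demoting the typing context of $\val$ via \cref{lem:promotion,lem:demotion}). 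The cases $\RecFun\f\x\e$ and $\unit$ fire rules \refToRule{fun} and \refToRule{unit} with no premises; the annotation $\cctx$ is taken as $\CCTX{\Delta}$ for the typing context $\Delta$ of the body/unit, and no-waste of the annotated result again reduces to \cref{theo:no-waste-typing}.

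For the compound cases I split both the typing context and the environment. For $\ve=\PairExp{\rgr_1}{\ve_1}{\ve_2}{\rgr_2}$, inversion of \refToRule{t-pair} yields $\Gamma \ctxord \rgr\ctxmul(\Gamma_1\ctxsum\Gamma_2)$ with $\IsWFExp{\Gamma_i}{\ve_i}{\Graded{\tau_i}{\rgr\rmul\rgr_i}}$. I construct a compatible splitting $\env_1\envSum\env_2\ctxord\env$ matching each $\Gamma_i$ on its support (letting any discardable surplus be absorbed by the partial order), invoke the induction hypothesis on each half to obtain reductions to $\bvPair{\cctx_i}{\val_i}$, and combine via rule \refToRule{pair} with output annotation $\rgr_1\rmul\cctx_1\ctxsum\rgr_2\rmul\cctx_2$. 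The cases $\Inl{\rgr'}\ve$ and $\Inr{\rgr'}\ve$ are single-subterm variants of the same pattern. Consistency of the output judgement holds in all cases because value-expression reduction never overwrites a value already stored in the environment; it only decreases grades and introduces freshly named copies via $\alpha$-renaming, which are disjoint from $\dom\Gamma$.

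The main obstacle will be the splitting step in the pair case: I must exhibit a concrete partition of $\env$ that simultaneously (i) typechecks each $\env_i$ against the context needed to type $\ve_i$ with an appropriate residual, and (ii) remains no-waste once recombined with the annotation of the final value. The key lever is \cref{prop:env-typing-star}, which bounds each column of $\GraphStar\env$ by $\CCTX\env$ so that the transitive cost of the resources demanded by each $\Gamma_i$ can be paid locally; honesty of the chosen $\cctx_i$ (guaranteed by \cref{lem:fv}) then ensures that no spurious resources leak across the split. With such a splitting fixed, monotonicity of $\ctxsum$ and $\ctxmul$ makes the no-waste side condition of \refToRule{pair} fall out automatically, and consistency follows because both sub-derivations factor through the same environment typing $\IsWFEnv\Gamma\env\Delta$.
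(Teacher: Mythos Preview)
Your proposal is correct and follows essentially the same approach as the paper: structural induction on $\ve$, inversion of the typing judgement, discharging the no-waste side conditions of the axioms via \cref{theo:no-waste-typing}, and, in the pair case, splitting the environment along the context decomposition coming from \refToRule{t-pair}. The paper packages your ``compatible splitting'' step into a dedicated environment-splitting lemma (whose proof indeed uses $\cctx_{\Delta_i}\ctxmul\GraphStar\env$ to define the two halves and appeals to \cref{prop:env-typing-star}), but the content is the same as what you sketch in your last paragraph.
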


 Note that, even though reduction of value expressions just performs substitution, in a resource-aware semantics this is a significant event, since it implies consuming some amount of resources.  \cref{theo:soundness-ve} states that no resource exhaustion  or waste  can happen. 

\paragraph*{Adding divergence}
For expressions, instead,  big-step semantics, as those defined in \cref{fig:semantics} and  \cref{fig:nw-sem}, suffer  from  the long-known drawback \citep{CousotC92,LeroyG09} that non-terminating and stuck computations are  indistinguishable, since in both cases no finite proof tree of a judgment can be constructed.   
This is an issue for our aim: to prove that for a well-typed expression  there is a computation which does not get stuck, that is, either produces a value or diverges.  
 To solve this problem, we extend the big-step semantics to explicitly model diverging computations, proceeding as follows, with $\produzioneinline{\conf}{\Conf{\e}{\env}}$.
\begin{itemize}
\item the shape of the judgment  for expressions  is generalized to $\NWreduce{\conf}{\rgr}{\res}$, where the \emph{result}  $\res$ is either a pair consisting of a value and a final environment,  or \emph{divergence} ($\infty$); 
\item this judgment is defined through a \emph{generalized inference system}, shown in \cref{fig:div}, consisting of the \emph{rules} from \refToRule{ret} to \refToRule{match-u/match-u-div}, and the \emph{corule} \refToRule{co-div} 
(differences with respect to the previous semantics in the bottom section of \cref{fig:nw-sem} are emphasized in grey\footnote{Recall that, since the evaluation judgment is stratified, premises involving the judgment for value expressions can be equivalently considered as side conditions.}). 
\end{itemize}
The key point here is that, in generalized inference systems, 
rules are interpreted in an \emph{essentially coinductive}, rather than inductive, way.
For details on generalized inference systems 
we refer to  the works by \cite{AnconaDZ@esop17,Dagnino19}; 
here, for the reader's convenience, we provide a self-contained presentation, instantiating general definitions on our specific case.   

\begin{figure}
\begin{small}
\begin{grammatica}
\produzione{\res}{\Conf{\bv}{\env}}{result}\\
\seguitoproduzione{\infty}{}
\\[2ex]
\end{grammatica}
\hrule
\begin{math}
\begin{array}{l}
\\
\NamedRule{ret}{\redval{\ve}{\env}{\sgr}{\bv}{\env'} }
{ \NWreduce{\Conf{\Return\ve}{\env}}{\rgr}{\Conf{\bv}{\env'}} }
{ 
  \rgr\rord\sgr\ne\rzero 
} 
\BigSpace
\meta{\NamedRule{let-div1}{
    \NWreduce{\Conf{\e_1}{\env}}{\sgr}{\infty}
}{ \NWreduce{\Conf{\Let{\x}{\e_1}{\e_2}}{\env}}{\rgr}{\infty} }
{} }
\\[4ex]

\NamedRule{let/let-div2}{
  \begin{array}{l}
    \NWreduce{\Conf{\e_1}{\env_1}}{\sgr}{\Conf{\bv_1}{\env_1'}} \\ 
    \NWreduce{\Conf{\Subst{\e_2}{\x'}{\x}}{\AddToEnv{(\env_1'\envSum\env_2)}{\x'}{\sgr}{\bv_1}}}{\rgr}{\meta{\res}}
  \end{array}
}{ \NWreduce{\Conf{\Let{\x}{\e_1}{\e_2}}{\env}}{\rgr}{\meta{\res}} }
{\env_1\envSum\env_2\ctxord\env\\
\x'\ \mbox{fresh}
} 

\\[4ex]

\NamedRule{app/app-div}{
\begin{array}{l}
\NWreduce{\Conf{\ve_1}{\env_1}}{\sgr}{\Conf{\bvPair{\cctx_1}{\RecFun{\f}{\x}{\e}}}{\env_1'}}\Space
\NWreduce{\Conf{\ve_2}{\env_2}}{\tgr}{\Conf{\bv_2}{\env_2'}}\\
\NWreduce{\Conf{\Subst{\Subst{\e}{\x'}{\x}}{\f'}{\f}}{\AddToEnv{\AddToEnv{(\env_1'\envSum\env_2')}{\x'}{\tgr}{\bv_2}}{\f'}{\sgr_2}{\bvPair{\cctx_1}{\RecFun{\f}{\x}{\e}}}}}{\rgr}{\meta{\res}}
\end{array}}
{
\NWreduce{\Conf{\App{\ve_1}{\ve_2}}{\env}}{\rgr}{\meta{\res}}
}{
\sgr_1 \rsum \sgr_2 \rord \sgr\\
\sgr_1 \neq \rzero\\
\env_1\envSum\env_2\ctxord\env\\
  \f',\x'\ \mbox{fresh} 
}

\\[4ex]

\NamedRule{match-u/match-u-div}{
\NWreduce{\Conf{\ve}{\env_1}}{\sgr}{\Conf{\bvPair{\cctx}{\unit}}{\env_1'}}\Space
\NWreduce{\Conf{\e}{(\env_1'\envSum\env_2)}}{\rgr}{\meta{\res}}
}
{
\NWreduce{\Conf{\MatchUnit{\ve}{\e}}{\env}}{\rgr}{\meta{\res}}
}
{
  \sgr\ne\rzero \\
\env_1\envSum\env_2\ctxord\env
}

\\[4ex]

\NamedRule{match-p/match-p-div}{
\begin{array}{l}
\NWreduce{\Conf{\ve}{\env_1}}{\sgr}{\Conf{\bvPair{\cctx_1\ctxsum\cctx_2}{\PairExp{\rgr_1}{\val_1}{\val_2}{\rgr_2}}}{\env_1'}} \\
\NWreduce{\Conf{\Subst{\Subst{\e}{\x'}{\x}}{\y'}{\y}}{\AddToEnv{\AddToEnv{\env}{\x'}{\sgr\rmul\rgr_1}{\bvPair{\cctx_1}{\val_1}}}{\y'}{\sgr\rmul\rgr_2}{\bvPair{\cctx_2}{\val_2}}}}{\rgr}{\meta{\res}}
\end{array}
}
{
\NWreduce{\Conf{\Match{\x}{\y}{\ve}{\e}}{\env}}{\rgr}{\meta{\res}}
}
{  \sgr\ne\rzero \\
\env_1\envSum\env_2\ctxord\env\\
\x,\y\ \mbox{fresh}
}

\\[5ex]

\NamedRule{match-l/match-l-div}{
\begin{array}{l}
\NWreduce{\Conf{\ve}{\env_1}}{\tgr}{\Conf{\bvPair{\cctx_1}{\Inl{\sgr}{\val_1}}}{\env_1'}} \\
\NWreduce{\Conf{\Subst{\e_1}{\y}{\x_1}}{\AddToEnv{(\env_1'\envSum\env_2)}{\y}{\sgr\rmul\rgr}{\bvPair{\cctx_1}{\val_1}}}}{\rgr}{\meta{\res}}
\end{array}
}
{
\NWreduce{\Conf{\Case{\ve}{\x_1}{\e_1}{\x_2}{\e_2}}{\env}}{\tgr}{\meta{\res}}
}
{
  \tgr\ne\rzero \\
\env_1\envSum\env_2\ctxord\env\\
\y\ \mbox{fresh}
}

\\[4ex]

\NamedRule{match-r/match-r-div}{
\begin{array}{l}
\NWreduce{\Conf{\ve}{\env_1}}{\sgr}{\Conf{\bvPair{\cctx_1}{\Inr{\rgr}{\val_1}}}{\env_1'}} \\
\NWreduce{\Conf{\Subst{\e_2}{\y}{\x_2}}{\AddToEnv{(\env_1'\envSum\env_2)}{\y}{\sgr\rmul\rgr}{\bvPair{\cctx_1}{\val_1}}}}{\tgr}{\meta{\res}}
\end{array}
}
{
\NWreduce{\Conf{\Case{\ve}{\x_1}{\e_1}{\x_2}{\e_2}}{\env}}{\tgr}{\meta{\res}}
}
{
  \tgr\ne\rzero \\
\env_1\envSum\env_2\ctxord\env\\
\y\ \mbox{fresh}
}

\\[4ex]
\meta{\NamedCoRule{co-div}{}{\NWreduce{\Conf{\e}{\env}}{\rgr}{\infty}}{}}
\end{array}
\end{math}
\end{small}

\caption{Adding divergence}
\label{fig:div}
\end{figure}

Rules in \cref{fig:div} handle divergence propagation.  Notably, for each rule in the bottom section of \cref{fig:nw-sem}, we add a divergence propagation rule for each of the possibly diverging premises. 
 The only rule with two possibly diverging premises is \refToRule{let}. Hence, divergence propagation for an expression $\Let{\x}{\e_1}{\e_2}$ is obtained by two (meta)rules: 
\refToRule{let-div1}  
when $\e_1$ diverges, and \refToRule{let-div2} when $\e_1$ converges and $\e_2$ diverges; in \cref{fig:div}, for brevity, this second metarule is merged with \refToRule{let}, using the metavariable $\res$.
All  the  other rules have only one possibly diverging premise, so one divergence propagation rule is added and merged with the original metarule, analogously to \refToRule{let-div2}.

In generalized inference systems, infinite proof trees are allowed. 
Hence, judgments $\NWreduce{\conf}{\rgr}{\infty}$ can be derived, as desired, even though there is no axiom introducing them, 
thus distinguishing diverging computations (infinite proof tree) from stuck computations (no proof tree).
However, a purely coinductive interpretation would allow the derivation of spurious judgements \citep{CousotC92,LeroyG09,AnconaDZ17}. 
To address this issue, generalized inference systems may include \emph{corules}, written with a thick line, only \refToRule{co-div} in our case, 
which refine the coinductive interpretation, filtering out some (undesired) infinite derivations. 
Intuitively, the meaning of \refToRule{co-div} is to allow infinite derivations only for divergence (see \cref{ex:div-infinite} below). 
Formally, we have the following definition instantiated from  those by \cite{AnconaDZ@esop17,Dagnino19}.  

\begin{definition}\label{def:deriv-gen}
A judgment $\NWreduce{\conf}{\rgr}{\res}$ is derivable in the generalized inference system in \cref{fig:div}, written $\gen{\conf}{\rgr}{\res}$,  if it has an infinite proof tree constructed using the rules where, moreover, each node has a \emph{finite} proof tree constructed using the rules \emph{plus the corule}. 
\end{definition}


\begin{example}\label{ex:div-infinite}
Let us consider again the expression  $\Seq{\y}{\App{\f}{\x}}$ of \cref{ex:div}. 
Now, its non-terminating evaluation in the environment $\EnvElem{\y}{\infty}{\unit},\EnvElem{\f}{\infty}{\diverge},\EnvElem{\x}{1}{\unit}$, abbreviated $\Triple{\infty}{\infty}{1}$ using the previous convention, is formalized by the infinite proof tree in \cref{fig:ex-div-infinite}, where instantiations of (meta)rules \refToRule{match-u} and \refToRule{app} have been replaced by those of the corresponding divergence propagation rule. 
It is immediate to see that each node in such infinite proof tree has a  finite proof tree constructed using the rules plus the corule: 
the only nodes which have no finite proof tree constructed using the rules are those in the infinite path, of shape  either $\NWreduce{\Conf{\Seq{\y}{\App{\f}{\x}}}{\Triple{\infty}{\infty}{1}}}{}{\infty}$ or $\NWreduce{\Conf{\App{\f}{\x}}{\Triple{\infty}{\infty}{1}}}{}{\infty}$, and such judgments are directly derivable by the corule. 
On the other hand, infinite proof trees obtained by 
using  \refToRule{match-u} and \refToRule{app} would derive $\NWreduce{\Conf{\Seq{\y}{\App{\f}{\x}}}{\Triple{\infty}{\infty}{1}}}{}{\Conf{\bv}{\Triple{\infty}{\infty}{1}}}$ for any $\bv$. 
However, 
 such judgments \emph{have no} finite proof tree using also the corule, which allows only to introduce divergence, since there is no rule deriving a value with divergence as a premise.  

\begin{figure}
 \begin{scriptsize}
 \begin{math}
 \begin{array}{l}
\prooftree
      \prooftree
       \justifies      
      \NWreduce{\ConfP{\y}{\Triple{\infty}{\infty}{1}}}{}{\ConfP{\un}{\Triple{\infty}{\infty}{1}}}
     \thickness=0.08em
      \shiftright 0em
      \endprooftree
      \!\!\!\!
  \BigSpace
\prooftree
      \prooftree
       \justifies      
      \NWreduce{\ConfP{\f}{\Triple{\infty}{\infty}{1}}}{\infty}{\ConfP{\diverge}{\Triple{\infty}{0}{1}}}
     \thickness=0.08em
      \shiftright 0em
      \endprooftree
      \!\!\!\!\!\!
      \prooftree
       \justifies      
      \NWreduce{\ConfP{\x}{\Triple{\infty}{0}{1}}}{}{\ConfP{\un}{\Triple{\infty}{0}{0}}}
     \thickness=0.08em
      \shiftright 0em
      \endprooftree
      \BigSpace
      \prooftree 
      \ldots
      \justifies
\NWreduce{\ConfP{\Seq{\y}{\App{\f}{\x}}}{\Triple{\infty}{\infty}{1}}}{}{\infty}
      \thickness=0.08em
      \shiftright 0em
      \endprooftree
      \BigSpace
      \justifies      
\NWreduce{\Conf{\App{\f}{\x}}{\Triple{\infty}{\infty}{1}}}{}{\infty}
      \thickness=0.08em
      \shiftright 0em
      \endprooftree
      \justifies              
\NWreduce{\Conf{\Seq{\y}{\App{\f}{\x}}}{\Triple{\infty}{\infty}{1}}}{}{\infty}
      \thickness=0.08em
      \shiftright 0em
      \endprooftree
\end{array}
\end{math}
\end{scriptsize}
\caption{Example of resource-aware evaluation: divergency with no exhaustion}
\label{fig:ex-div-infinite}
\end{figure}
\end{example}

The transformation from the inductive big-step semantics in \cref{fig:nw-sem} to that handling divergence in \cref{fig:div} is an instance of a general construction, taking as input an arbitrary big-step semantics,  fully formalized, and proved to be correct,  by \cite{Dagnino22}.  In particular, the construction is \emph{conservative}, that is, the semantics of converging computations is not affected, as stated in the following result,  which is an instance of Theorem 6.3  by  \cite{Dagnino22}. 

\begin{definition}\label{def:deriv-ind}
Let $\ind{\conf}{\rgr}{\Conf{\bv}{\env}}$ denote that the judgment can be derived by the rules in the bottom section of \cref{fig:nw-sem}, interpreted inductively.
\end{definition}
\begin{theorem}[Conservativity]\label{theo:conservativity}
$\gen{\conf}{\rgr}{\Conf{\bv}{\env'}}$  if and only if ${\ind{\conf}{\rgr}{\Conf{\bv}{\env'}}}$.
\end{theorem}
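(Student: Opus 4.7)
The plan is to prove both directions of the biconditional by analysing, rule by rule, which rules of \cref{fig:div} can appear in a derivation whose conclusion has a non-divergent result. The rules of \cref{fig:div} cleanly split into two groups: the \emph{value-producing} rules, which are exactly the rules of the bottom section of \cref{fig:nw-sem}, and the \emph{divergence-propagating} rules (like \refToRule{let-div1}) together with the corule \refToRule{co-div}, which can only ever conclude $\infty$. This structural split drives both directions.

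Forward direction ($\gen \Rightarrow \ind$). Suppose $\gen{\conf}{\rgr}{\Conf{\bv}{\env'}}$. By \cref{def:deriv-gen}, the root has a finite proof tree $T'$ built from rules plus the corule. I would establish the following key lemma by induction on the structure of $T'$: \emph{if the conclusion of $T'$ has a value result $\Conf{\bv}{\env'}$, then $T'$ uses only value-producing rules}. The root step proceeds by case analysis on the last rule applied: \refToRule{co-div} concludes $\infty$, hence is inapplicable; likewise \refToRule{let-div1} concludes $\infty$, hence is inapplicable; every remaining rule propagates the conclusion metavariable $\res$ to (some of) its expression premises, so when $\res = \Conf{\bv}{\env'}$ those premises also conclude value results, and the induction hypothesis applies. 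The result is a finite derivation using exactly the rules of \cref{fig:nw-sem}, i.e., $\ind{\conf}{\rgr}{\Conf{\bv}{\env'}}$.

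Backward direction ($\ind \Rightarrow \gen$). Let $T$ be the finite proof tree for $\ind{\conf}{\rgr}{\Conf{\bv}{\env'}}$. Because the rules of \cref{fig:nw-sem} are a subset of the rules of \cref{fig:div} (the value-producing fragment), $T$ is already a well-formed proof tree in the generalized system; being finite, it trivially qualifies as a ``possibly infinite'' proof tree built with rules only. Moreover, for every node of $T$, the subtree rooted at that node is itself a finite proof using rules alone, with no need to invoke the corule. Both clauses of \cref{def:deriv-gen} are therefore satisfied, yielding $\gen{\conf}{\rgr}{\Conf{\bv}{\env'}}$.

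I expect the main obstacle to be purely bookkeeping: carefully verifying, for each metarule in \cref{fig:div} whose conclusion contains the metavariable $\res$, that the rule's premises also propagate the same $\res$, so that instantiating $\res$ with a value result forces all the ``expression'' premises to have value results as well. Once this observation is in place, the induction for the forward direction goes through uniformly, and no further properties of the grade algebra or of resources are needed — the result is truly a syntactic, structural correspondence, which is exactly why it instantiates the general conservativity Theorem 6.3 of \citep{Dagnino22}.
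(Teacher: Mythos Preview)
Your argument is correct. The paper, however, does not actually give a proof of this theorem: it simply states that the result ``is an instance of Theorem~6.3 in \citep{Dagnino22}'', relying on the fact that the construction in \cref{fig:div} is obtained by the general transformation described there. So the comparison is: the paper appeals to a black-box meta-theorem about generalized inference systems, whereas you unfold that meta-theorem for this particular system and prove conservativity directly by a structural induction on finite derivations. Your route is more self-contained and makes explicit the (only) ingredient the general theorem needs here, namely that every rule and the corule whose conclusion is $\infty$ cannot contribute to a derivation whose root is a value result, while every rule whose conclusion is a value result has all its expression premises also concluding value results. The paper's route, by contrast, buys modularity: once the semantics is recognised as an instance of the construction in \citep{Dagnino22}, conservativity (and several other meta-properties) come for free without repeating the case analysis.
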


 Note that to achieve this result corules are essential since, as observed in \cref{ex:div-infinite}, 
a purely coinductive interpretation allows for infinite proof trees deriving values.  

\paragraph*{Type soundness for expressions}
The definition of the semantics  by the generalized inference system in \cref{fig:div}  allows a very simple and clean formulation of type soundness: well-typed configurations always reduce to a result (which can be possibly divergence).
Formally:

\begin{theorem}[Soundness]\label{theo:soundness}
If $\IsWFconf{\Gamma}{\conf}{\Graded{\tau}{\rgr}}$, then $\gen{\conf}{\rgr}{\res}$ for some $\res$.
\end{theorem}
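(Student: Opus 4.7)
The plan is to prove the statement via the bounded coinduction principle for generalized inference systems~\citep{AnconaDZ@esop17,Dagnino19,Dagnino22}. I will introduce a specification $\mathcal{S}$ of triples $(\conf,\rgr,\res)$, verify that $\mathcal{S}$ is \emph{bounded}, meaning each member is finitely derivable using the rules plus the corule, and \emph{consistent} with the rules, meaning each member is the conclusion of some rule whose premises are also in $\mathcal{S}$. Bounded coinduction then gives $\mathcal{S} \subseteq \{\gen{\conf}{\rgr}{\res}\}$, so it suffices to arrange that every well-typed configuration contributes at least one triple to $\mathcal{S}$.

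Concretely, for each well-typed $\conf=\Conf{\e}{\env}$ with grade $\rgr$, I place into $\mathcal{S}$ the divergence triple $(\conf,\rgr,\infty)$ when $\e$ is a compound expression (a $\Let$, an application, or a $\terminale{match}$ form), and the value triple $(\conf,\rgr,\Conf{\bv}{\env'})$ supplied by \cref{theo:soundness-ve} combined with rule \refToRule{ret} when $\e$ has the form $\Return\ve$; this asymmetric choice is forced because no rule derives $(\Conf{\Return\ve}{\env},\rgr,\infty)$. Boundedness is immediate: divergence triples are derived in one step by the corule \refToRule{co-div}, and the $\Return$ triples are derived by a one-step application of \refToRule{ret}, using the value-expression judgment guaranteed by \cref{theo:soundness-ve}.

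Consistency is the substantive step and proceeds by case analysis on the outer shape of $\e$. For $\Return\ve$, \cref{theo:soundness-ve} supplies the premise of \refToRule{ret} directly. For each compound form, I apply the appropriate divergence-propagation rule from \cref{fig:div}: its value-expression premises are discharged by \cref{theo:soundness-ve}, while its sub-configuration or continuation premises must be shown well-typed so that they also lie in $\mathcal{S}$. This will rely on inversion lemmas on the typing rules of \cref{fig:typing} to access the typings of subterms, and on a subject-reduction property extending the consistency clause of \cref{theo:adjustV}, ensuring that after a value-expression premise is evaluated and its result bound in the environment, the continuation-configuration remains well-typed with the enriched environment. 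The main obstacle is aligning the semantic splitting of $\env$ with the type-theoretic splitting of contexts under the no-waste discipline: in a purely divergent propagation step carrying the full $\env$, a sub-expression may be confronted with resources it does not consume, violating no-waste typing. This can be addressed either by preferring the merged rule variants whose side conditions split $\env$ (so the inherited sub-environment is no-waste), or by conducting the proof against the waste-tolerant configuration typing of \cref{rem:t-env}, which supplies an auxiliary context $\Theta$ absorbing any surplus. Once these pieces are in place, bounded coinduction delivers the theorem.
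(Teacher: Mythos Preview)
Your specification $\mathcal{S}$ is not consistent with the rules, so bounded coinduction does not apply. Take $\e=\Let{\x}{\Return\ve_1}{\Return\ve_2}$: this is compound, so by your recipe $(\Conf{\e}{\env},\rgr,\infty)\in\mathcal{S}$. The only rules whose conclusion has this shape are \refToRule{let-div1} and \refToRule{let/let-div2} instantiated with $\res=\infty$. The first requires the premise $\NWreduce{\Conf{\Return\ve_1}{\env_1}}{\sgr}{\infty}$, which you explicitly exclude from $\mathcal{S}$; the second requires $\NWreduce{\Conf{\Subst{(\Return\ve_2)}{\x'}{\x}}{\ldots}}{\rgr}{\infty}$, likewise excluded. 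Hence no rule witnesses this triple, and consistency fails. The deeper issue is that you are placing $\infty$-triples into $\mathcal{S}$ for compound configurations that in fact \emph{converge}; since such judgments are not in $\gen{}{}{}$, no consistent, bounded $\mathcal{S}$ can contain them.

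The paper resolves this by case-splitting on convergence \emph{before} choosing the result. Its specification is
\[
S=\{\NWreduce{\conf}{\rgr}{\infty}\mid\conf\in\WTInfty_\rgr\}\cup\{\NWreduce{\conf}{\rgr}{\res}\mid\res\neq\infty,\ \ind{\conf}{\rgr}{\res}\},
\]
where $\WTInfty_\rgr$ is the set of well-typed configurations for which \emph{no} converging judgment is inductively derivable. Consistency for the second component is immediate (inductive derivations have inductive sub-derivations); consistency for the first is exactly the \refToRule{progress-$\infty$} property (\cref{theo:progress}): a well-typed non-converging configuration is the conclusion of a rule whose converging premises are inductively derivable and whose $\infty$-premises again lie in $\WTInfty$. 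This last step is where the inversion lemmas, \cref{theo:adjustV}, \cref{theo:adjustE}, and the environment-splitting lemma do their work, and it subsumes the no-waste alignment you discuss. Your proposal has the right proof principle but the wrong specification; you need the convergence case-split to make consistency go through.
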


We describe now  the structure of the proof, which is interesting in itself; indeed,  the semantics being  big-step, there is no consolidated proof technique as the long-time known progress plus subject reduction for the small-step case \citep{WrightF94}.

The proof is driven by coinductive reasoning on the semantic rules,  following a schema firstly adopted  by \cite{AnconaDZ17},  as detailed below. 
First of all, it is convenient to turn to the following equivalent formulation of type soundness, stating that well-typed configurations which do not converge necessarily diverge.

\begin{theorem}[Completeness-$\infty$]\label{theo:complete-div}
If $\IsWFconf{\Gamma}{\conf}{\Graded{\tau}{\rgr}}$, and there is no $\Conf{\bv}{\env}$ s.t.\ $\gen{\conf}{\rgr}{\Conf{\bv}{\env}}$, \mbox{then $\gen{\conf}{\rgr}{\infty}$.}
\end{theorem}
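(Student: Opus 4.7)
The plan is to apply the \emph{bounded coinduction} principle for generalized inference systems from \citep{AnconaDZ@esop17,Dagnino19}. Consider the set of judgments
\[
S \;=\; \bigl\{\, \NWreduce{\conf}{\rgr}{\infty} \;\bigm|\; \IsWFconf{\Gamma}{\conf}{\Graded{\tau}{\rgr}}\ \text{for some}\ \Gamma,\tau,\ \text{and}\ \not\exists\,\Conf{\bv}{\env'}\ \text{s.t.}\ \gen{\conf}{\rgr}{\Conf{\bv}{\env'}}\,\bigr\}.
\]
To conclude $S \subseteq \{\,\NWreduce{\conf}{\rgr}{\infty} \mid \gen{\conf}{\rgr}{\infty}\,\}$, which is exactly the thesis, it suffices to verify the two conditions of bounded coinduction: (i) every judgment in $S$ is derivable in the system enriched with the corule, and (ii) $S$ is consistent with the rules, i.e., every judgment in $S$ is the conclusion of some rule whose premises are either already derivable by the rules or lie in $S$. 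Condition (i) is immediate because the corule \refToRule{co-div} derives any divergence judgment in one step.

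The core of the argument is condition (ii), which is a progress-like statement: for every well-typed, non-converging configuration $\Conf{\e}{\env}$, there exists a rule instance with that configuration as conclusion such that (a) all premises of shape $\redval{\ve}{\env'}{\sgr}{\bv}{\env''}$ are derivable, and (b) the unique "continuation" premise is itself a well-typed, non-converging configuration. I would proceed by case analysis on the typing rule used to derive $\IsWFconf{\Gamma}{\Conf{\e}{\env}}{\Graded{\tau}{\rgr}}$, which, by inversion, determines the shape of $\e$. For each case, the environment splitting required by the no-waste reduction rule is guided by the splitting $\Gamma_1 \ctxsum \Gamma_2$ appearing in the corresponding typing rule; a preliminary lemma is needed to lift a typing splitting of the context to a splitting of the environment that preserves typing of both parts (consistently with the sense of the word adopted in the excerpt). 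Premises involving value expressions can then be discharged by Theorem~\ref{theo:soundness-ve} (soundness for value expressions), which guarantees convergence of $\redval{\ve}{\cdot}{\cdot}{\cdot}{\cdot}$ for the appropriate sub-environment, producing a well-typed result $\bv$ via the extended version Theorem~\ref{theo:adjustV}.

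Once the value-expression premises have been reduced, I would invoke a \emph{preservation} lemma establishing that the continuation configuration remains well-typed in a suitable context (extended with the fresh variables introduced by the rule, with grades matching those demanded by typing). At this point the key observation is the contrapositive one: if the continuation converged, then by composing with the rule we would obtain a convergence judgment for the original $\Conf{\e}{\env}$, contradicting its membership in $S$. Hence the continuation does not converge, so it is itself in $S$. For rules with two possibly diverging premises (only \refToRule{let}), one first asks whether $\e_1$ converges: if not, apply \refToRule{let-div1} with $\e_1$'s premise in $S$; if it does, use \refToRule{let-div2} and place the typed continuation in $S$ by the same argument as above. The \refToRule{ret} case is vacuous, as a well-typed $\Return\ve$ always converges by Theorem~\ref{theo:soundness-ve}.

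I expect the main obstacle to be the preservation lemma, which must simultaneously track three independent pieces of information through reduction: the non-trivial grade annotations decorating values in the environment (used by rule \refToRule{t-val}), the residual typing context produced by \refToRule{t-env}, and the no-waste inequality $\cctx\rmul\GraphStar{\env}\rord\CCTX{\env}$ that well-typed environments are guaranteed to satisfy by Proposition~\ref{prop:env-typing-star}. The bookkeeping is intricate because the no-waste reduction splits environments and freshly extends them at every application, pair match, and sum match, and the newly inserted entries must be annotated with grade contexts coherent with both the typing of the stored value and the existing graph structure. Once this preservation result is in place for both the value-expression and expression judgments (and shown to produce typings \emph{consistent} with the original one, in the sense stated just before Theorem~\ref{theo:soundness-ve}), the consistency condition (ii) follows uniformly, and bounded coinduction delivers the thesis.
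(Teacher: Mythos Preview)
Your overall strategy---bounded coinduction, boundedness via \refToRule{co-div}, consistency via a progress-like case analysis backed by a preservation result---is the paper's strategy. Two points of difference are worth flagging.

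First, the paper factors the argument into an abstract part (\cref{theo:div-cons}) and a concrete part (\cref{theo:progress}). In the abstract part the set $S$ is taken to be $\{\NWreduce{\conf}{\rgr}{\infty}\mid\conf\in\CSet^\infty_\rgr\}\cup\{\NWreduce{\conf}{\rgr}{\res}\mid \res\neq\infty,\ \ind{\conf}{\rgr}{\res}\}$, i.e., it explicitly includes all inductively derivable \emph{converging} judgments. Your $S$ contains only divergence judgments, and you compensate with the clause ``premises are either already derivable or lie in $S$''. That clause is not the textbook consistency condition for bounded coinduction (which requires all premises to lie in $S$); it is justified precisely by enlarging $S$ as the paper does. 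So your formulation is morally right but needs this one-line fix to be formally an instance of the principle.

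Second, the paper does not prove a direct subject-reduction/preservation lemma for the graded semantics. Instead its workhorse is \cref{theo:adjustE}, which goes through the \emph{grade-erased} semantics: if a well-typed $\Conf{\e}{\env}$ converges after erasing grades, then there exist grades (and an annotation $\cctx$) such that it converges in the no-waste semantics to a result that is again well-typed, consistently with the original derivation. This detour is what lets one, in the progress step for \refToRule{let} (and similarly for \refToRule{app} and the matches), pick a converging run of the first subexpression that is guaranteed to produce a well-typed intermediate result; a naive preservation statement would have to quantify over \emph{all} non-deterministic graded runs, which is stronger than needed and harder to establish in this setting. Your sketch calls this simply a ``preservation lemma''; the paper's formulation via erasure is the concrete shape that makes the induction go through.
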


Indeed, with this formulation soundness of the type system can be seen as \emph{completeness} of the set of judgements $\NWreduce{\conf}{\rgr}{\infty}$ which are derivable with respect to the set of pairs $\Pair{\conf}{\rgr}$ such that $\conf$ is well-typed with grade $\rgr$, and  does  not converge. The standard technique to prove completeness of a coinductive definition with respect to a specification $S$ is the coinduction principle, that is, by showing that $S$ is \emph{consistent} with respect to the coinductive definition. This means that each element of $S$ should be the consequence of a rule whose premises are in $S$ as well.  In our case, since the definition of  $\gen{\conf}{\rgr}{\infty}$ is not purely coinductive, but refined by the corule, completeness needs to  be  proved by the \emph{bounded coinduction} principle \citep{AnconaDZ@esop17,Dagnino19}, a generalization of the coinduction principle. Namely, besides proving that $S$ is consistent, we have to prove that $S$ is \emph{bounded}, that is, each element of $S$ can be derived by the inference system consisting of the rules \emph{and the corules}, in our case, only \refToRule{co-div}, interpreted inductively.

The proof of \cref{theo:complete-div} modularly  relies on two results. The former (\cref{theo:div-cons}) is the instantiation of a general result proved  by \cite{AnconaDZ17}  (Theorem 3.3) by bounded coinduction. For  the  reader's convenience, to illustrate the proof technique in a self-contained way, we report here statement and proof for our specific case. 
Namely, \cref{theo:div-cons} states completeness of diverging configurations with respect to any family of configurations which satisfies the \emph{progress-$\infty$ property}  (this name is chosen to suggest that, for a non-converging well-typed configuration, the construction of a proof tree can never get stuck).  The latter (\cref{theo:progress}) is the progress-$\infty$ property  for our type system. 

\begin{theorem}[Progress-$\infty\Rightarrow$ Completeness-$\infty$]\label{theo:div-cons}
For each grade $\rgr$, let $\CSet_\rgr$ be a set of configurations, and set $\CSet_\rgr^\infty=\{\conf\in\CSet_\rgr\mid \not\exists\ \Conf{\bv}{\env}\ \mbox{such that}\ \ind{\conf}{\rgr}{\Conf{\bv}{\env}} \}$.
If the following condition holds:\footnote{Keep in mind that $\NWreduce{\conf}{\rgr}{\res}$ denotes just the judgment (a triple), whereas  $\ind{\conf}{\rgr}{\res}$ and  $\gen{\conf}{\rgr}{\res}$ denote derivability of the judgment (\cref{def:deriv-ind} and \cref{def:deriv-gen}, respectively). }  
\begin{quoting}
\refToRule{progress-$\infty$}\ $\conf\in\CSet_\rgr^\infty$ implies that $\NWreduce{\conf}{\rgr}{\infty}$ is the consequence of a  rule where, for all premises of shape $\NWreduce{\conf'}{\sgr}{\infty}$, $\conf'\in\CSet^\infty_\sgr$, and, for all premises of shape $\NWreduce{\conf'}{\sgr}{\res}$,  \mbox{with $\res\neq\infty$, $\ind{\conf'}{\sgr}{\res}$. }
\end{quoting}
then $\conf\in\CSet^\infty_\rgr$ implies $\gen{\conf}{\rgr}{\infty}$.
\end{theorem}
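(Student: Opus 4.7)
The plan is to apply the bounded coinduction principle for generalized inference systems \citep{AnconaDZ@esop17,Dagnino19}, which states that to show a set $U$ of judgements is contained in the set derivable by a generalized system, it suffices to check (i)~\emph{boundedness}: every element of $U$ admits a finite derivation using rules together with corules; and (ii)~\emph{consistency}: every element of $U$ is the consequence of some rule whose premises all belong to $U$. I would pick
\[
U \;=\; \{\, \NWreduce{\conf}{\rgr}{\infty} \mid \conf \in \CSet^\infty_\rgr \,\} \;\cup\; \{\, \NWreduce{\conf}{\rgr}{\Conf{\bv}{\env}} \mid \ind{\conf}{\rgr}{\Conf{\bv}{\env}} \,\},
\]
bundling the divergence judgements I want to derive together with the converging judgements already finitely derivable. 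Proving $U$ is contained in $\gen{\cdot}{\cdot}{\cdot}$ yields the thesis on its first component.

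For boundedness, the divergence judgements in $U$ are derived in a single step by the corule \refToRule{co-div}; the converging judgements in $U$ have, by definition of $\ind{}{}{\cdot}$, a finite derivation using the rules only, which is in particular a finite derivation using rules plus corule. For consistency I would split into two cases. If $\NWreduce{\conf}{\rgr}{\Conf{\bv}{\env}} \in U$, then its inductive derivation ends with some rule of the bottom section of \cref{fig:nw-sem}, whose premises are themselves $\ind{}{}{\cdot}$-derivable and thus lie in the second component of $U$. If $\NWreduce{\conf}{\rgr}{\infty} \in U$, i.e.\ $\conf \in \CSet^\infty_\rgr$, I would apply \refToRule{progress-$\infty$} to produce a rule of the generalized system whose diverging premises $\NWreduce{\conf'}{\sgr}{\infty}$ satisfy $\conf'\in\CSet^\infty_\sgr$ (so they lie in the first component of $U$), and whose converging premises $\NWreduce{\conf'}{\sgr}{\res}$ satisfy $\ind{\conf'}{\sgr}{\res}$ (so they lie in the second component of $U$). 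Bounded coinduction then gives $U\subseteq\gen{\cdot}{\cdot}{\cdot}$, and the conclusion for divergence follows.

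The main subtlety — though not really an obstacle — is that the generalized inference system mixes divergent and convergent premises within the same rule, so neither a pure coinductive set of divergence judgements nor a pure inductive set of converging ones is closed under the rules. The trick is precisely to take $U$ as their union: progress-$\infty$ then slots each premise into one of the two components, so consistency goes through uniformly for every rule. Conservativity (\cref{theo:conservativity}) is what, behind the scenes, makes this union into a coherent notion of derivability in the generalized system.
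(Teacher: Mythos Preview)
Your proof is correct and essentially identical to the paper's: you take the same set $U$ (the paper calls it $S$), apply bounded coinduction, and discharge boundedness and consistency with the same case split and the same arguments. The only extraneous remark is the appeal to \cref{theo:conservativity} at the end, which is not actually used in the argument; the proof is self-contained once $U$ is fixed.
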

\begin{proof}
We  set  $S=\{\NWreduce{\conf}{\rgr}{\infty}\mid\conf\in\CSet^\infty_\rgr\}\cup\{\NWreduce{\conf}{\rgr}{\res}\mid \res\neq\infty, \ind{\conf}{\rgr}{\res}\}$, and prove that, for each $\NWreduce{\conf}{\rgr}{\res}\in S$, we have $\gen{\conf}{\rgr}{\res}$, by bounded coinduction. We have to prove \mbox{two conditions.}
\begin{enumerate}
\item $S$ is consistent, that is, each $\NWreduce{\conf}{\rgr}{\res}$ in $S$ is the consequence of a rule whose premises are in $S$ as well. We reason by cases:
\begin{itemize}
\item For each $\NWreduce{\conf}{\rgr}{\infty}\in S$, by the \refToRule{progress-$\infty$} hypothesis it is the consequence of a rule where, for all premises of shape $\NWreduce{\conf'}{\sgr}{\infty}$, $\conf'\in\CSet^\infty_\sgr$, hence $\NWreduce{\conf'}{\sgr}{\infty}\in S$, and, for all premises of shape $\NWreduce{\conf'}{\sgr}{\res}$,  with $\res\neq\infty$, $\ind{\conf'}{\sgr}{\res}$, hence $\NWreduce{\conf'}{\sgr}{\res}\in S$ as well.
\item For each $\NWreduce{\conf}{\rgr}{\Conf{\bv}{\env}}\in S$, we have $\ind{\conf}{\rgr}{\Conf{\bv}{\env}}$, hence this judgment is the consequence of a rule  in \cref{fig:nw-sem} where for each premise, necessarily of shape $\NWreduce{\conf'}{\rgr'}{\Conf{\bv'}{\env'}}$, we have $\ind{\conf'}{\rgr'}{\Conf{\bv'}{\env'}}$, hence $\NWreduce{\conf'}{\rgr'}{\Conf{\bv'}{\env'}}\in S$.
\end{itemize}
\item $S$ is bounded, that is, each $\NWreduce{\conf}{\rgr}{\res}$ in $S$ can be inductively derived (has a finite proof tree) using the rules and the corule in \cref{fig:div}. This is trivial, since, for $\res=\infty$, the judgment can be directly derived by \refToRule{co-div}, and, for $\res\neq\infty$,  since $\ind{\conf}{\rgr}{\res}$, this holds by definition.
\end{enumerate}
\end{proof}

Thanks to the theorem above, to prove type soundness  (formulated as in \cref{theo:complete-div}) it is enough to prove the  progress-$\infty$ property  for well-typed configurations which do not converge. 

Set $\WT_\rgr=\{\conf\mid{\IsWFconf{\Gamma}{\conf}{\Graded{\tau}{\rgr}}}\ \mbox{for some}\ \Gamma,\tau\}$, and, accordingly with the notation in \cref{theo:div-cons}, $\WTInfty_\rgr=\{\conf\mid\conf\in\WT^\rgr\ \mbox{and}\ \not\exists\ \Conf{\bv}{\env}\ \mbox{such that}\ \gen{\conf}{\rgr}{\Conf{\bv}{\env}}\}$.

\begin{theorem}[Progress-$\infty$]\label{theo:progress} If $\conf\in\WTInfty_\rgr$, then $\NWreduce{\conf}{\rgr}{\infty}$ is the consequence of a  rule where, for all premises of shape $\NWreduce{\conf'}{\sgr}{\infty}$, $\conf'\in\WTInfty_\sgr$, and, for all premises of shape $\NWreduce{\conf'}{\sgr}{\res}$,  with $\res\neq\infty$, $\ind{\conf'}{\sgr}{\res}$. 
\end{theorem}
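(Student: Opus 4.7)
The plan is to proceed by case analysis on the expression $\e$ in $\conf=\Conf{\e}{\env}$. For each syntactic form I would exhibit a rule of \cref{fig:div} whose consequence matches the required shape and whose premises either converge (so are derivable in the inductive system of \cref{fig:nw-sem}) or themselves belong to $\WTInfty$. Two ingredients are used throughout: \cref{theo:adjustV}, by which value-expression premises always reduce, producing a consistently well-typed annotated result; and \cref{theo:conservativity}, which lets one freely switch between the inductive and the generalized derivability judgements on converging derivations.

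First I would dispose of the case $\e=\Return\ve$. By typing inversion on $\refToRule{t-ret}$, $\ve$ is well-typed, so \cref{theo:adjustV} yields $\red{\ve}{\env}{\sgr}{\bv}{\env'}$ with $\rgr\rord\sgr\ne\rzero$; rule \refToRule{ret} then produces a converging derivation, contradicting $\conf\in\WTInfty_\rgr$. This case therefore cannot arise.

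The non-trivial cases are those of \refToRule{let}, \refToRule{app}, and the four \terminale{match} constructs. Consider in detail the application case; the others are analogous. Suppose $\e=\App{\ve_1}{\ve_2}$. By typing inversion the context splits as $\Gamma=\Gamma_1\ctxsum\Gamma_2$ with $\ve_1$ and $\ve_2$ well-typed at the appropriate graded function and argument types. An environment-splitting lemma in the style of $\refToRule{t-env}$ would yield $\env_1\envSum\env_2\ctxord\env$ with each $\env_i$ typing the residual context for $\ve_i$. Applying \cref{theo:adjustV} to each $\ve_i$ gives $\red{\ve_i}{\env_i}{\sgr_i}{\bv_i}{\env_i'}$ together with consistently typed annotated results; in particular $\bv_1$ must have the shape $\bvPair{\cctx_1}{\RecFun{\f}{\x}{\e'}}$. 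Consider then the continuation configuration $\conf'$ obtained by extending $\env_1'\envSum\env_2'$ with fresh $\x',\f'$ bound to $\bv_2$ and $\bvPair{\cctx_1}{\RecFun{\f}{\x}{\e'}}$ respectively, and running the appropriately renamed body. A subject-reduction-style argument, combined with the consistent typings from \cref{theo:adjustV} and the typing rule \refToRule{t-app}, shows that $\conf'$ is well-typed at $\Graded\tau\rgr$. Two subcases remain: either $\conf'$ admits a converging derivation, in which case rule \refToRule{app} of \cref{fig:nw-sem} would yield a converging derivation for $\conf$, contradicting $\conf\in\WTInfty_\rgr$; or $\conf'$ does not converge, whence $\conf'\in\WTInfty_\rgr$, and rule \refToRule{app/app-div} derives $\NWreduce{\conf}{\rgr}{\infty}$ with value-expression premises inductively derivable (by \cref{theo:adjustV} and \cref{theo:conservativity}) and the expression premise in $\WTInfty_\rgr$, as required. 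The four \terminale{match} cases follow the same pattern. The \refToRule{let} case is the only one requiring a genuine two-way split: one first inspects the first premise $\Conf{\e_1}{\env_1}$ in isolation and, depending on whether it admits a converging derivation or not, selects rule \refToRule{let/let-div2} (using the inductively derivable first premise and showing the continuation is in $\WTInfty$) or rule \refToRule{let-div1} (with $\Conf{\e_1}{\env_1}$ directly in $\WTInfty_\sgr$).

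The principal obstacles I foresee are, first, establishing the environment-splitting lemma and a subject-reduction step for no-waste reduction of expressions, which is what ensures the continuation $\conf'$ is well-typed in the cases above; and second, carefully propagating the consistency (in the sense used in \cref{theo:adjustV}) between the typings of the annotated values returned by value-expression reduction and the typings used to rebuild the continuation configuration. Both are routine in spirit but delicate because of the ordered-semiring arithmetic on graded contexts and the freshness side-conditions on bound variables; these would be stated and proved as auxiliary lemmas in the appendix.
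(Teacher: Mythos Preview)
Your proposal is correct and follows essentially the same approach as the paper's proof: case analysis on the expression, typing inversion (\cref{lem:inversionE}), environment splitting (the paper's \cref{lem:ev-split}), reduction of value-expression premises via \cref{theo:adjustV}, and a two-way subcase split on convergence of the expression premise. The auxiliary results you anticipate needing---environment splitting and well-typedness of the continuation---are exactly the paper's \cref{lem:ev-split} and \cref{lem:env-ext}; the ``subject-reduction step'' you mention for the \refToRule{let} case (where the first premise is an expression, not a value expression) is discharged in the paper via \cref{theo:adjustE}, which detours through the grade-erased semantics to recover a well-typed result configuration.
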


We derive this theorem from the next one,  which needs the following notations:
\begin{itemize}
\item We use the metavariable $\envE$ for environments where grades have been erased, hence they are maps from variables into values.
\item We write  $\erase{\env}$ for  the environment obtained from $\env$ by erasing grades   and grade contexts.
\item The reduction relation $\NWreduceE{}{}$ over pairs $\Conf{\ve}{\envE}$ and $\Conf{\e}{\envE}$ is defined by the metarules in \cref{fig:nw-sem} where we remove side conditions involving grades   and grade contexts . That is, such relation models standard semantics. 
\end{itemize}

The following theorem states that, if a configuration is well-typed, and (ignoring the grades) reduces to a result, then there is a corresponding graded reduction, leading to a result which is well-typed as well.
\begin{theorem}\label{theo:adjustE}
If $\IsWFConf{\Gamma}{\e}{\env}{\Graded\tau\rgr}$ and, set $\envE=\erase{\env}$, 
$\NWreduce{\Conf\e{\envE}}{}{\Conf{\val}\envE'}$,  then there exist $\env'$, $\cctx$  and $\Gamma'$ such that  
$\NWreduce{\Conf\e{\env}}{\rgr}{\Conf{\bvPair{\cctx}{\val}}{\env'}}$ with $\erase{\env'}=\envE'$, and
$\IsWFConf{\Gamma'}{\bvPair{\cctx}{\val}}{\env'}{\Graded\tau\rgr}$ 
 consistent with $\IsWFConf\Gamma\e\env{\Graded\tau\rgr}$.  
\end{theorem}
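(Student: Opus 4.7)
}

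The plan is to proceed by induction on the derivation of the ungraded reduction $\NWreduceE{\Conf\e\envE}{\Conf\val\envE'}$. For each rule applied at the root, I would invert the typing judgement $\IsWFConf\Gamma\e\env{\Graded\tau\rgr}$, which by \refToRule{t-conf} splits into $\IsWFExp\Delta\e{\Graded\tau\rgr}$ and $\IsWFEnv\Gamma\env\Delta$, and then invert the typing of the expression $\e$ (using standard inversion lemmas, analogous to \cref{lem:inversion} for value expressions). This gives a decomposition of $\Delta$ into $\Gamma_1\ctxsum\Gamma_2\ctxsum\ldots$ matching the syntactic subterms of $\e$, to which I would apply the induction hypothesis. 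The value expression premises of each rule are already handled by \cref{theo:adjustV}.

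The key technical device I would develop is an \emph{environment-splitting lemma} stating that, whenever $\IsWFEnv\Gamma\env\Delta$ and $\Delta=\Delta_1\ctxsum\Delta_2$, there exist $\env_1,\env_2$ with $\env_1\envSum\env_2\ctxord\env$, such that $\IsWFEnv{\Gamma_1}{\env_1}{\Delta_1}$ and $\IsWFEnv{\Gamma_2}{\env_2}{\Delta_2}$ with $\Gamma_1\ctxsum\Gamma_2\ctxord\Gamma$, and moreover the two resulting environment judgements are consistent (in the sense defined just before \cref{theo:soundness-ve}). This lemma justifies the side condition $\env_1\envSum\env_2\ctxord\env$ appearing in every multi-premise rule of \cref{fig:nw-sem}, and it is what allows the graded reduction to proceed in lockstep with the ungraded one. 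Alongside this, I would need the usual \emph{substitution-by-renaming} lemma for the fresh variables $\x',\y',\f'$ introduced by rules \refToRule{let}, \refToRule{app}, \refToRule{match-p}, \refToRule{match-l}, \refToRule{match-r}, together with a straightforward \emph{extension} lemma saying that adding a fresh binding to both $\Gamma$ and $\env$ that types to the same graded type preserves well-typedness of the configuration.

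For the inductive step I would proceed case by case on the rule. Consider for example \refToRule{let}: from the typing I obtain $\Gamma_1$ typing $\e_1$ with type $\Graded{\tau_1}{\rgr_1}$ and $\Gamma_2,\VarGradeType\x{\rgr_1}{\tau_1}$ typing $\e_2$ with type $\Graded\tau\rgr$; applying the splitting lemma and the induction hypothesis to $\e_1$ yields $\env_1'$ and a well-typed annotated value $\bvPair{\cctx_1}{\val_1}$; then, extending $\env_1'\envSum\env_2$ with the fresh binding $\x':_{\rgr_1}\bvPair{\cctx_1}{\val_1}$, I would apply the induction hypothesis a second time to obtain the final result, assembling the proof via the graded \refToRule{let} rule. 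The \refToRule{app} case additionally requires invoking \cref{lem:promotion} to adjust the function body's typing after the renaming substitution. The \refToRule{match-p/l/r} cases are similar, once one observes, via inversion of \refToRule{t-val}, that the annotations on the pair/injection produced by the first premise correctly split so as to type the continuation.

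The main obstacle, in my view, is twofold. First, maintaining the \emph{consistency} condition across the induction: each time the environment is split, fresh variables are added, or the residual context changes, I must check that the same value is always assigned the same type by the same sub-derivation, which requires threading additional bookkeeping through the splitting and extension lemmas. Second, and more delicately, at the base rules (\refToRule{var}, \refToRule{fun}, \refToRule{unit}, \refToRule{pair}, \refToRule{inl}, \refToRule{inr} in the value-expression section, invoked via \refToRule{ret}), one must actually discharge the graded and no-waste side conditions, namely $\rgr\rsum\sgr'\rord\sgr$ and $\NoWaste{\rgr\rmul\cctx}{\env''}$. The first follows from inversion of \refToRule{t-env} and \refToRule{t-var} after choosing $\sgr'$ appropriately; the second, which is the genuinely new ingredient compared to prior work, should be extracted from \cref{theo:no-waste-typing}(\ref{theo:no-waste-typing:val}) applied to the residual well-typed configuration, relying on \cref{prop:env-typing-star} and the integrality of $\RR$ to rule out degenerate splittings. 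Orchestrating these choices uniformly across all cases is the most laborious part of the argument.
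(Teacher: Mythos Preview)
Your proposal is correct and follows essentially the same approach as the paper: the same environment-splitting lemma (\cref{lem:ev-split}), renaming lemma (\cref{lem:subst}), and extension lemma (\cref{lem:env-ext}) are invoked, and the inductive case for \refToRule{let} proceeds identically. One point where you are in fact more precise than the paper: you induct on the derivation of the ungraded reduction, whereas the paper states the induction is ``on the syntax of $\e$''---your choice is the appropriate one, since in the \refToRule{app} case the inductive hypothesis must be applied to the function body, which is not a subterm of the application.
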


\section{Related work}\label{sect:related}
As mentioned, the contributions closest to this work, since they present a resource-aware semantics,  are  those by \cite{ChoudhuryEEW21,BianchiniDGZ@ECOOP23,TorczonSAAVW23}, 
and the conference version of this paper \citep{BianchiniDGZ@OOPSLA23}.
 \cite{ChoudhuryEEW21} develop   \textsc{GraD}, a graded dependent type system that includes functions,
tensor products, additive sums, and a unit type. The instrumented semantics is defined on typed terms, with the only aim to show the role of the type system. 

 In the work of \cite{BianchiniDGZ@ECOOP23},  
 which adopts a Java-like language, the semantics, also given in small-step style, is defined 
\emph{independently}  of the type system, in order  to provide a simple execution model taking into account usage of resources.
 In  the conference version of this paper \citep{BianchiniDGZ@OOPSLA23},   
 the language and type system  coincide with those  of the current paper, and the reduction semantics is big-step.   As a consequence, annotating subterms is no longer necessary;  however, differently from the  semantics considered here, this  reduction allows resource waste.

 The work of \cite{TorczonSAAVW23}    also  employs a big-step semantics.  Values are closures that ``save'' the
environment in the  presence  of delayed evaluation, similarly to our annotated values.  In that setting,  resource annotations can only
count down, and provide an upper bound on  the resources required for the  computation. 

The type system  presented  in this paper follows the same design  principles  of those  introduced in  \cite{BianchiniDGZ22,BianchiniDGZ@TCS23},  which are, however,  based on  a  Java-like underlying calculus. Those works also address two interesting issues not considered here. The  first  is the definition of a canonical construction  yielding   a unique grade algebra of \emph{heterogeneous} grades from a family of grade algebras, thus enabling different notions of resource usage to coexist within the same program. 
The  second  is  the provision of  linguistic support
for specifying \emph{user-defined} grades; for instance, grade annotations could be written themselves in Java, analogously to what happens for exceptions.

 More generally, in the study of  resource-aware type systems, \emph{coeffects}, that is, grades used as annotations in contexts, were   first  introduced and later  further analyzed by \cite{PetricekOM13,PetricekOM14}.  They develop  a  generic  coeffect system which augments the simply-typed $\lambda$-calculus with context annotations indexed by \emph{coeffect shapes}. 
The framework is highly abstract, and the authors focus on two instances: 
structural (per-variable) and flat (whole context) coeffects,  which are obtained   by specific choices of  coeffect  shapes.

 Subsequently, the notion has been generalized to an arbitrary assortment of usages, formally modeled by \emph{grades}, elements of an algebraic structure (\emph{grade algebra)} defined in slightly different ways in literature\footnote{Essentially variants of an ordered semiring.}  \citep{BrunelGMZ14,GhicaS14,McBride16,Atkey18,GaboardiKOBU16,AbelB20,OrchardLE19,WoodA22,DalLagoG22}. For instance, grades can be natural numbers counting how many times a resource is used, or even express non-quantitative properties, e.g., that a resource should be used with a given privacy level.  In this way, linear/affine type systems can be generalized to type systems parametric on the grade algebra, and the notion of resource-aware soundness can be generalized as well.

Most of the  subsequent  literature  has focused  on structural coeffects (grades), for which there is a clear algebraic description in terms of semirings. 
This was first noticed by  \cite{BrunelGMZ14},  who developed a framework for structural coeffects 
for a functional language, and by \cite{GhicaS14}.  
This approach is inspired by a generalization of the exponential modality of linear logic, see, e.g., \citep{BreuvartP15}. 
That is, the distinction between linear and  unrestricted  variables of linear systems is generalized to have variables  
decorated by coeffects (grades), that determine how much they can be used. 

In this setting, many advances have been made to combine grades with other programming features, such as 
computational effects \citep{GaboardiKOBU16,OrchardLE19,DalLagoG22}, 
dependent types \citep{Atkey18,ChoudhuryEEW21,McBride16}, and 
polymorphism \citep{AbelB20}. 
In all these papers, tracking usage through grades has practical benefits like erasure
of irrelevant terms and compiler optimizations.

 \cite{McBride16,WoodA22}   observed that contexts in a structural coeffect system form a module over the semiring of grades, event though they 
restrict themselves to free modules, that is, to structural coeffect systems. Recently,  \cite{BianchiniDGZS22} show  a significant non-structural instance, namely, a coeffect system to track sharing in the imperative paradigm.

%


\section{Conclusion}\label{sect:conclu}
We illustrated, on a paradigmatic calculus, how to enrich  the  semantics and type system of a language, parametrically on an arbitrary grade algebra,  so  as  to express and prove resource-aware soundness including no-waste. 
 To this end, we defined  two instrumented semantics, the latter being a refinement of the former. In both formulations, the resources consumed by a computation are explicitly tracked, and programs cannot consume more resources than those available. In the refined formulation, in addition, programs cannot waste resources. Since the instrumented semantics are non-deterministic, \emph{soundness-may} needs to be proved, that is, for a well-typed program there is a computation which is non-stuck, as formally stated in \cref{theo:soundness}, since, besides type errors, both resource exhaustion and waste are prevented. 

More precisely,  a non-stuck computation either terminates or diverges. In either case,   resource exhaustion never happens.
However, the property  that no resources remain in the result  (except for those that can be discarded) can only be meaningfully stated  for terminating computations. We leave to future work the investigation of a possible notion of no-waste for a non terminating computation. 

The resource-aware semantics is given in big-step style, extending the judgment to  explicitly model divergence. We prove  (resource-aware) soundness, by a significant, complex application of a schema previously introduced by \cite{AnconaDZ17}.

Most works on graded type systems introduce box/unbox operators in the syntax. Hence, programs typed with the type system proposed in this paper could not even be \emph{written} in these systems. A formal comparison with a calculus/type system based on boxing/unboxing is a challenging topic for future work. However, it is not obvious how to make such a comparison, since works which present calculi with a similar expressive power to ours, e.g.,  those by \cite{BrunelGMZ14} and \cite{DalLagoG22},  do not include an instrumented semantics, so we should as a first step develop such a semantics.  

 We briefly discussed at the end of \cref{sect:nw-res} the relevance of the acyclicity assumption on enviroments. Besides being reasonable in the simple language considered in this paper, we expect such an assumption to still hold when adding more realistic language features, provided that the calculus remains pure. Indeed, cycles are typically introduced by assignment. An extension of our approach to the imperative case is interesting also because a different notion of resource usage should likely be considered, e.g., rather than any time a variable occurrence needs to be replaced, any time memory needs to be accessed, even possibily distinguishing read from write accesses.  In this way, we could characterize as grades some properties which are of paramount importance in  the imperative  context, such as \emph{mutability/immutability} or \emph{uniqueness} opposed to \emph{linearity}, revisiting what is discussed by \cite{MarshallVO22}.

Finally, some more general topics to be investigated are type inference, for which the challenging feature are recursive functions, and combination with \emph{effects}, as  investigated by \cite{DalLagoG22}. 

Concerning implementation, and mechanization of proofs, which of course would be beneficial developments as well, we just mention the Agda library, available at \url{https://github.com/LcicC/inference-systems-agda} and described  in  the paper of \cite{CicconeDZ21},  which allows one to specify (generalized) inference systems. One of the examples provided in  such a paper is, as here,  a big-step semantics including divergence.

\paragraph*{Acknowledgments}
The authors would like to thank the OOPSLA  and JFP  anonymous referees who provided useful and detailed comments on  previous versions  of the paper. 
This work 
has the financial support of the University of Eastern Piedmont.

\bibliographystyle{ACM-Reference-Format}
\bibliography{main}

@string{csl15="24th {EACSL} Annual Conference on Computer Science Logic, {CSL} 2015"}

@string{csl25="33rd {EACSL} Annual Conference on Computer Science Logic, {CSL} 2015"}

@string{ecoop23="European Conference on Object-Oriented Programming, {ECOOP} 2023"}

@string{esop14="European Symposium on Programming, {ESOP} 2013"}

@string{esop17="European Symposium on Programming, {ESOP} 2017"}

@string{esop20="European Symposium on Programming, {ESOP} 2020"}

@string{esop22="European Symposium on Programming, {ESOP} 2022"}

@string{icalp13="Automata, Languages and Programming, {ICALP} 2013"}

@string{icfp14="{ACM} International Conference on Functional Programming, {ICFP} 2014"}

@string{icfp16="{ACM} International Conference on Functional Programming, {ICFP} 2016"}

@string{ictcs22="ICTCS'22 - Italian Conf. on Theoretical Computer Science"}

@string{lics18="{IEEE} Symposium on Logic in Computer Science, {LICS} 2018"}

@string{popl92="{ACM} Symposium on Principles of Programming Languages, {POPL} 1992"}

@string{ic="Information and Computation"}

@string{lmcs="Logical Methods in Computer Science"}

@string{pacmpl="Proceedings of {ACM} on Programming Languages"}

@string{tcs="Theoretical Computer Science"}

@string{lncs="Lecture Notes in Computer Science"}

@string{acm="{ACM} Press"}

@string{ev="Elsevier"}

@string{mit="The MIT Press"}

@string{sv="Springer"}

@string{dags="Schloss Dagstuhl - Leibniz-Zentrum fuer Informatik"}

@article{AbelB20,
  author    = {Andreas Abel and
               Jean{-}Philippe Bernardy},
  title     = {A unified view of modalities in type systems},
  journal   = pacmpl,
  volume    = {4},
  number    = {{ICFP}},
  pages     = {90:1--90:28},
  year      = {2020},
  doi       = {10.1145/3408972},
}

@article{AbelDE23,
  author       = {Andreas Abel and
                  Nils Anders Danielsson and
                  Oskar Eriksson},
  title        = {A Graded Modal Dependent Type Theory with a Universe and Erasure,
                  Formalized},
  journal      = pacmpl,
  volume       = {7},
  number       = {{ICFP}},
  pages        = {920--954},
  year         = {2023},
   doi          = {10.1145/3607862},
}

@inproceedings{Atkey18,
  author    = {Robert Atkey},
  editor    = {Anuj Dawar and
               Erich Gr{\"{a}}del},
  title     = {Syntax and Semantics of Quantitative Type Theory},
  booktitle = lics18,
  pages     = {56--65},
  publisher = acm,
  year      = {2018},
  doi       = {10.1145/3209108.3209189},
}

@inproceedings{AnconaDZ@esop17,
  author    = {Davide Ancona and
               Francesco Dagnino and
               Elena Zucca},
  title     = {Generalizing Inference Systems by Coaxioms},
  booktitle = esop17,
  pages     = {29--55},
  year      = {2017},
   doi       = {10.1007/978-3-662-54434-1_2},
  editor    = {Hongseok Yang},
  series    = lncs,
  volume    = {10201},
  publisher = sv,
   address={Berlin},
}

@article{AnconaDZ17,
  author    = {Davide Ancona and
               Francesco Dagnino and
               Elena Zucca},
  title     = {Reasoning on Divergent Computations with Coaxioms},
  journal   = pacmpl,
  volume    = {1},
  number    = {{OOPSLA}},
  pages     = {81:1--81:26},
  year      = {2017},
   doi       = {10.1145/3133905},
}

@article{BianchiniDGZS22,
author = {Riccardo Bianchini and Francesco Dagnino and Paola Giannini and Elena Zucca and Marco Servetto},
title={Coeffects for sharing and mutation},
  journal   = pacmpl,
  number    = {{OOPSLA}},
  year      = {2022},
  volume    = {6},
  pages     = {870--898},
  doi       = {10.1145/3563319},
}

@inproceedings{BianchiniDGZ22,
  author    = {Riccardo Bianchini and
               Francesco Dagnino and
               Paola Giannini and
               Elena Zucca},
  editor    = {Ugo Dal Lago and
               Daniele Gorla},
  title     = {A {Java}-Like Calculus with User-Defined Coeffects},
  booktitle = ictcs22,
   volume       = {3284},
  pages        = {66--78},
  publisher    = {CEUR-WS.org},
  year         = {2022},
  url          = {https://ceur-ws.org/Vol-3284/8563.pdf},
}

@article{BianchiniDGZ@TCS23,
  author    = {Riccardo Bianchini and
               Francesco Dagnino and
               Paola Giannini and
               Elena Zucca},
title = {A {J}ava-like calculus with heterogeneous coeffects},
journal = tcs,
volume = {971},
pages = {114063},
year = {2023},
doi = {https://doi.org/10.1016/j.tcs.2023.114063},
}

@inproceedings{BianchiniDGZ@ECOOP23,
  author    = {Riccardo Bianchini and
               Francesco Dagnino and
               Paola Giannini and
               Elena Zucca},
  title     = {Multi-Graded {Featherweight Java}},
  booktitle = ecoop23,
series    = {LIPIcs},
  publisher = dags,
  year      = {2023},
  editor       = {Karim Ali and
                  Guido Salvaneschi},
  volume       = {263},
  pages        = {3:1--3:27},
   doi          = {10.4230/LIPIcs.ECOOP.2023.3},
}

@article{BianchiniDGZ@OOPSLA23,
  author       = {Riccardo Bianchini and
                  Francesco Dagnino and
                  Paola Giannini and
                  Elena Zucca},
  title        = {Resource-Aware Soundness for Big-Step Semantics},
  journal      = pacmpl,
  volume       = {7},
  number       = {{OOPSLA2}},
  pages        = {1281--1309},
  year         = {2023},
  doi          = {10.1145/3622843},
}

@inproceedings{BreuvartP15,
  author    = {Flavien Breuvart and
               Michele Pagani},
  editor    = {Stephan Kreutzer},
  title     = {Modelling Coeffects in the Relational Semantics of Linear Logic},
  booktitle = csl15,
  series    = {LIPIcs},
  volume    = {41},
  pages     = {567--581},
  publisher = dags,
  year      = {2015},
  doi       = {10.4230/LIPIcs.CSL.2015.567},
}

@inproceedings{BrunelGMZ14,
  author    = {Alo{\"{\i}}s Brunel and
               Marco Gaboardi and
               Damiano Mazza and
               Steve Zdancewic},
  editor    = {Zhong Shao},
  title     = {A Core Quantitative Coeffect Calculus},
  booktitle = esop14,
  series    = lncs,
  volume    = {8410},
  pages     = {351--370},
  publisher = sv,
  year      = {2014},
  doi       = {10.1007/978-3-642-54833-8\_19},
}

@article{ChoudhuryEEW21,
  author    = {Pritam Choudhury and
               Harley Eades III and
               Richard A. Eisenberg and
               Stephanie Weirich},
  title     = {A graded dependent type system with a usage-aware semantics},
  journal   = pacmpl,
  volume    = {5},
  number    = {{POPL}},
  pages     = {1--32},
  year      = {2021},
  doi       = {10.1145/3434331},
}

@inproceedings{CicconeDZ21,
  author       = {Luca Ciccone and
                  Francesco Dagnino and
                  Elena Zucca},
  editor       = {Liron Cohen and
                  Cezary Kaliszyk},
  title        = {Flexible Coinduction in Agda},
  booktitle    = {ITP 2021 - International Conference on Interactive Theorem Proving},
  series       = {LIPIcs},
  volume       = {193},
  pages        = {13:1--13:19},
  publisher    = dags,
  year         = {2021},
  doi          = {10.4230/LIPIcs.ITP.2021.13},
 }

@inproceedings{CousotC92,
  author    = {Patrick Cousot and
               Radhia Cousot},
  title     = {Inductive Definitions, Semantics and Abstract Interpretations},
  editor    = {Ravi Sethi},
  booktitle = popl92,
  year      = {1992},
  pages     = {83--94},
  publisher = acm,
  doi       = {10.1145/143165.143184},
  address   = {New York},
}

@article{Dagnino19,
  author    = {Francesco Dagnino},
  title     = {Coaxioms: flexible coinductive definitions by inference systems},
  journal   = lmcs,
  volume    = {15},
  number    = {1},
  year      = {2019},
  OPurl       = {https://doi.org/10.23638/LMCS-15(1:26)2019},
  doi       = {10.23638/LMCS-15(1:26)2019},
  address   = {Aarhus}, 
}

@article{Dagnino22,
  author    = {Francesco Dagnino},
  title     = {A Meta-theory for Big-step Semantics},
  journal   = {{ACM} Trans. Comput. Log.},
  volume    = {23},
  number    = {3},
  pages     = {20:1--20:50},
  year      = {2022},
  doi       = {10.1145/3522729},
}

@inproceedings{DagninoBZD20,
  author    = {Francesco Dagnino and
               Viviana Bono and
               Elena Zucca and
               Mariangiola Dezani{-}Ciancaglini},
  editor    = {Peter M{\"{u}}ller},
  title     = {Soundness Conditions for Big-Step Semantics},
  booktitle = esop20,
  series    = lncs,
  volume    = {12075},
  pages     = {169--196},
  publisher = sv,
  year      = {2020},
  doi       = {10.1007/978-3-030-44914-8\_7},
}

@article{DalLagoG22,
  author    = {Ugo {Dal Lago} and
               Francesco Gavazzo},
  title     = {A relational theory of effects and coeffects},
  journal   = pacmpl,
  volume    = {6},
  number    = {{POPL}},
  pages     = {1--28},
  year      = {2022},
  publisher = acm, 
  doi       = {10.1145/3498692},
}

@article{DeNicolaH84,
  author    = {Rocco {De Nicola} and Matthew Hennessy},
  title     = {Testing Equivalences for Processes},
  journal   = tcs,
  volume    = {34},
  number    = {1},
  pages     = {83 - 133},
  year      = {1984},
  publisher = ev, 
  issn      = {0304-3975},
  doi       = {10.1016/0304-3975(84)90113-0},
}

@article{EsikK04,
  author       = {Zolt{\'{a}}n {\'{E}}sik and
                  Werner Kuich},
  title        = {Inductive star-semirings},
  journal      = tcs,
  volume       = {324},
  number       = {1},
  pages        = {3--33},
  year         = {2004},
  doi          = {10.1016/J.TCS.2004.03.050},
}

@article{EsikL05,
  author       = {Zolt{\'{a}}n {\'{E}}sik and
                  Hans Lei{\ss}},
  title        = {Algebraically complete semirings and {Greibach} normal form},
  journal      = {Annals of Pure and Applied Logic},
  volume       = {133},
  number       = {1-3},
  pages        = {173--203},
  year         = {2005},
  doi          = {10.1016/J.APAL.2004.10.008},
}

@article{FengJ09, 
author = {Feng, Feng and Jun, Young Bae},
title = {Inductive semimodules and the vector modules over them},
year = {2009},
publisher = sv,
volume = {13},
number = {11},
doi = {10.1007/s00500-008-0384-y},
journal = {Soft Computing},
pages = {1113--1121},
}

@inproceedings{GaboardiKOBU16,
  author    = {Marco Gaboardi and
               Shin{-}ya Katsumata and
               Dominic A. Orchard and
               Flavien Breuvart and
               Tarmo Uustalu},
  editor    = {Jacques Garrigue and
               Gabriele Keller and
               Eijiro Sumii},
  title     = {Combining effects and coeffects via grading},
  booktitle = icfp16,
  pages     = {476--489},
  publisher = acm,
  year      = {2016},
  doi       = {10.1145/2951913.2951939},
}

@inproceedings{GhicaS14,
  author    = {Dan R. Ghica and
               Alex I. Smith},
  editor    = {Zhong Shao},
  title     = {Bounded Linear Types in a Resource Semiring},
  booktitle = esop14,
  series    =  lncs,
  volume    = {8410},
  pages     = {331--350},
  publisher = sv,
  year      = {2014},
  doi       = {10.1007/978-3-642-54833-8\_18},
}

@article{Girard87,
  author    = {Jean{-}Yves Girard},
  title     = {Linear Logic},
  journal   = tcs,
  volume    = {50},
  pages     = {1--102},
  year      = {1987},
  publisher = ev,
  doi       = {10.1016/0304-3975(87)90045-4},
}

@Book{Golan03,
author={Golan, Jonathan S.},
title="Complete semirings",
booktitle={Semirings and Affine Equations over Them: Theory and Applications},
year={2003},
publisher=sv,
pages={39--47},
doi={10.1007/978-94-017-0383-3_3},
}

@article{LeroyG09,
  author    = {Xavier Leroy and Herv\'e Grall},
  title     = {Coinductive big-step operational semantics},
  journal   = ic,
  volume    = {207},
  number    = {2},
  pages     = {284--304},
  publisher = ev,
  year      = {2009},
  doi       = {10.1016/j.ic.2007.12.004},
}

@article{LevyPT03,
  author    = {Paul Blain Levy and
               John Power and
               Hayo Thielecke},
  title     = {Modelling environments in call-by-value programming languages},
  journal   = ic,
  volume    = {185},
  number    = {2},
  pages     = {182--210},
  year      = {2003},
   doi       = {10.1016/S0890-5401(03)00088-9},
}

@inproceedings{MarshallVO22,
  author    = {Daniel Marshall and
               Michael Vollmer and
               Dominic Orchard},
  editor    = {Ilya Sergey},
  title     = {Linearity and Uniqueness: An Entente Cordiale},
  booktitle = esop22,
  series    = lncs,
  volume    = {13240},
  pages     = {346--375},
  publisher = sv,
  year      = {2022},
  doi       = {10.1007/978-3-030-99336-8\_13},
}

@inproceedings{McBride16,
  author    = {Conor McBride},
  editor    = {Sam Lindley and
               Conor McBride and
               Philip W. Trinder and
               Donald Sannella},
  title     = {I Got Plenty o' Nuttin'},
  booktitle = {A List of Successes That Can Change the World - Essays Dedicated to
               Philip Wadler on the Occasion of His 60th Birthday},
  series    = lncs,
  volume    = {9600},
  pages     = {207--233},
  publisher = sv,
  year      = {2016},
  doi       = {10.1007/978-3-319-30936-1\_12},
}

@article{OrchardLE19,
  author    = {Dominic Orchard and
               Vilem{-}Benjamin Liepelt and
               Harley Eades III},
  title     = {Quantitative program reasoning with graded modal types},
  journal   = pacmpl,
  volume    = {3},
  number    = {{ICFP}},
  pages     = {110:1--110:30},
  year      = {2019},
  doi       = {10.1145/3341714},
 }

@inproceedings{PetricekOM13,
  author    = {Tomas Petricek and
               Dominic A. Orchard and
               Alan Mycroft},
  editor    = {Fedor V. Fomin and
               Rusins Freivalds and
               Marta Z. Kwiatkowska and
               David Peleg},
  title     = {Coeffects: Unified Static Analysis of Context-Dependence},
  booktitle = icalp13,
  series    = lncs,
  volume    = {7966},
  pages     = {385--397},
  publisher = sv,
  year      = {2013},
  doi       = {10.1007/978-3-642-39212-2\_35},
}

@inproceedings{PetricekOM14,
  author    = {Tomas Petricek and
               Dominic A. Orchard and
               Alan Mycroft},
  editor    = {Johan Jeuring and
               Manuel M. T. Chakravarty},
  title     = {Coeffects: a calculus of context-dependent computation},
  booktitle = icfp14,
  pages     = {123--135},
  publisher = acm,
  year      = {2014},
  doi       = {10.1145/2628136.2628160},
  timestamp = {Thu, 24 Jun 2021 16:19:30 +0200},
 }

@article{TorczonSAAVW23,
  author       = {Cassia Torczon and
                  Emmanuel Su{\'{a}}rez Acevedo and
                  Shubh Agrawal and
                  Joey Velez{-}Ginorio and
                  Stephanie Weirich},
  title        = {Effects and Coeffects in Call-by-Push-Value},
  journal      = {Proc. {ACM} Program. Lang.},
  volume       = {8},
  number       = {{OOPSLA2}},
  pages        = {1108--1134},
  year         = {2024},
  url          = {https://doi.org/10.1145/3689750},
  doi          = {10.1145/3689750},
}

@inproceedings{VollmerMEO25,
  author       = {Victoria Vollmer and
                  Danielle Marshall and
                  Harley Eades III and
                  Dominic Orchard},
  editor       = {J{\"{o}}rg Endrullis and
                  Sylvain Schmitz},
  title        = {A Mixed Linear and Graded Logic: Proofs, Terms, and Models},
  booktitle    = csl25,
  series       = {LIPIcs},
  volume       = {326},
  pages        = {32:1--32:21},
  publisher    = dags,
  year         = {2025},
  doi          = {10.4230/LIPICS.CSL.2025.32},
}

@inproceedings{Wadler90,
  author       = {Philip Wadler},
  editor       = {Manfred Broy and
                  Cliff B. Jones},
  title        = {Linear Types can Change the World!},
  booktitle    = {Programming concepts and methods: Proceedings of the {IFIP} Working
                  Group 2.2, 2.3 Working Conference on Programming Concepts and Methods},
  pages        = {561},
  publisher    = "North-Holland",
  year         = {1990},
}

@incollection{Walker05,
author = {David Walker},
editor = {Benjamin C. Pierce},
booktitle = {Advanced Topics in Types and Programming Languages},
title = {Substructural Type Systems},
publisher = mit,
year = {2005}
}

@inproceedings{WoodA22,
  author    = {James Wood and
               Robert Atkey},
  editor    = {Ilya Sergey},
  title     = {A Framework for Substructural Type Systems},
  booktitle = esop22,
  series    = lncs,
  volume    = {13240},
  pages     = {376--402},
  publisher = sv,
  year      = {2022},
  doi       = {10.1007/978-3-030-99336-8\_14},
}

@article{WrightF94,
  author    = {Andrew K. Wright and
               Matthias Felleisen},
  title     = {A Syntactic Approach to Type Soundness},
  journal   = ic,
  volume    = {115},
  number    = {1},
  pages     = {38--94},
  year      = {1994},
  doi       = {10.1006/inco.1994.1093},
}


\section{Proofs of \cref{sect:nw-sem}}
\begin{proofOf}{prop:paths} 
(\ref{prop:paths:0}) 
We have that $\weight\graph\p = \prod_{i \in 1..n} \graph(\x_i,\x_{i+1})$. 
Then, we derive $\weight\graph\p = \rzero$, because 
$\graph(\x_{j-1},\x_j)$  is equal to $\rzero$. 

(\ref{prop:paths:1}) 
Let $\p\in\Paths\x\y$, then, because $\x\ne\y$ and $\y\notin\Nodes\graph$, by \cref{prop:paths:0}, we have $\weight\graph\p = \rzero$. 
Hence, this proves that $\p\notin\Pathsnz\graph\x\y$, as needed. 

(\ref{prop:paths:02})
Let $\p = \x_1\cdots\x_k \in\Paths\x\y$ be a path with $k\geq n+2$. 
Then we have 
$\weight\graph\p = \prod_{i\in 1..k-1} \graph(\x_i,\x_{i+1})$. 
We distinguish two cases.
If there is $j\in 2..k$ such that $\x_j\notin\Nodes\graph$, then the thesis follows by \cref{prop:paths:0}. 
Otherwise, we have that $\x_2,\ldots,\x_k\in \Nodes\graph$, but, since $k\geq n+2$ and $\Nodes\graph$ has $n$ elements, there must be a repeated variable, i.e., 
there are indices $2\le j < h \leq k$ such that $\x_j = \x_h$. 
Because $\graph$ is acyclic, we deduce that $\weight\graph{\x_j\cdots\x_h} = \prod_{i\in j..h-1} \graph(\x_i,\x_{i+1}) = \rzero$ and this implies $\weight\graph\p = \rzero$, as needed. 

(\ref{prop:paths:2})
It is an immediate consequence of (\ref{prop:paths:02}). 
\end{proofOf} 

\begin{proofOf}{lem:nocycles}
The fact that $(\graph\rmul\Gstar\graph)(\x,\x) = (\Row\graph\x\rmul\Gstar\graph)(\x)$ is immediate, 
hence we only prove that $(\Row\graph\x\rmul\Gstar\graph)(\x) = \rzero$. 
By definition, we have
\[
(\Row{\graph}{\x}\rmul\Gstar{\graph})(\x) 
  = \sum_{\y\in\Vars} \Row{\graph}{\x}(\y)\rmul\Gstar{\graph}(\y,\x)
  = \sum_{\y\in\Vars} \graph(\x,\y)\rmul\Gstar{\graph}(\y,\x)
\] 
To conclude the proof, it suffices to show that, 
for each $\y\in\Vars$, $\graph(\x,\y)\rmul\Gstar{\graph}(\y,\x) = \rzero$. 
Indeed, we have 
\[
\graph(\x,\y)\rmul\Gstar\graph(\y,\x) 
  = \graph(\x,\y)\rmul\sum_{\p\in\Paths\y\x} \weight\graph\p 
  = \sum_{\p\in\Paths\y\x} \graph(\x,\y)\rmul\weight\graph\p 
  = \sum_{\p\in\Paths\y\x} \weight\graph{\x\p}
\] 
because $\Start\p = \y$. 
Then, since $\End\p = \x$, the path $\x\p$ is  a cycle and so $\weight\graph{\x\p} = \rzero$, because $\graph$ is acyclic. 
Therefore, we conclude $\graph(\x,\y)\rmul\Gstar\graph(\y,\x) = \rzero$, as needed. 
\end{proofOf}

\begin{proofOf}{prop:graph-iterate}
We proceed by induction on $n$. 
If $n = 0$, then $\graph^0 = \Id$ and, since the only paths of length $0$ have shape $\x$, the thesis is immediate. 
Let $n = k+1$. 
Using the induction hypothesis, we derive 
\begin{align*} 
\graph^{k+1}(\x,\y) 
  &= (\graph\rmul\graph^k)(\x,\y) 
   = \sum_{z\in\Vars} \graph(\x,z)\rmul \graph^k(z,\y)
\\
  &= \sum_{z\in\Vars}\graph(\x,z)\rmul \sum_{\p\in\Paths[k]{z}{\y}} \weight\graph\p 
   = \sum_{z\in\Vars}\sum_{\p\in\Paths[k]{z}{\y}} \weight\graph{\x\cdot \p}
\\
  &= \sum_{\p'\in\Paths[k+1]\x\y} \weight\graph{\p'} 
\end{align*}
where the last step relies on the fact that 
$\p'\in\Paths[k+1]\x\y$ iff $\p' = \x\cdot \p$ and $\p \in \Paths[k]{z}{\y}$ for some $z\in\Vars$. 
\end{proofOf} 

 For the proof of \cref{theo:balance},  we introduce the following notations: $\ZeroOut{\cctx}{\env}$ if $\cctx(\x)=\rzero$ for each $\x\not\in\dom{\env}$, that is, the support of $\cctx$ is a subset of the domain of $\env$; 
$\Disjoint{\cctx}{\env}$ if $\added(\x)=\rzero$ for all $\x\in\dom{\env}$, that is, the support of $\cctx$ and the domain of $\env$ are disjoint.

\begin{lemma}\label{lem:restrict-env}
If $\ZeroOut{\cctx}{\env}$ then $\cctx\ctxmul\Graph{\env\envSum\env'}=\cctx\ctxmul\Graph{\env}$.
\end{lemma}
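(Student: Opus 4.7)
The plan is to unfold the definition of the product $\cctx\ctxmul\Graph{-}$ and reduce the equality to a pointwise comparison on the nodes where $\cctx$ is non-zero. Specifically, for every $\x\in\Vars$ we have
\[
(\cctx\ctxmul\Graph{\env\envSum\env'})(\x) = \sum_{\y\in\Vars}\cctx(\y)\rmul\Graph{\env\envSum\env'}(\y,\x),
\]
and similarly with $\env$ in place of $\env\envSum\env'$. By the hypothesis $\ZeroOut{\cctx}{\env}$, each summand with $\y\notin\dom{\env}$ vanishes, so both sums range effectively over $\y\in\dom{\env}$.

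It then suffices to show that, for every $\y\in\dom{\env}$ and every $\x\in\Vars$, $\Graph{\env\envSum\env'}(\y,\x)=\Graph{\env}(\y,\x)$. Here I would use the definition of $\envSum$: since $\env\envSum\env'$ is assumed defined (implicit in the statement, since $\Graph{\env\envSum\env'}$ appears), whenever $\y$ belongs to both $\dom{\env}$ and $\dom{\env'}$ the two environments must associate the \emph{same} annotated value $\bv=\bvPair{\cctx_\y}{\val_\y}$ to $\y$, only their grades may differ; and $(\env\envSum\env')(\y)$ carries the same $\bv$. If instead $\y\in\dom{\env}\setminus\dom{\env'}$, then $(\env\envSum\env')(\y)=\env(\y)$. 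In either case, since $\Graph{-}(\y,\cdot)$ is determined solely by the grade context annotating the value at $\y$, the two row-entries coincide.

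Putting the two steps together yields the desired equality termwise, hence the claimed identity of grade contexts. The only subtle point is the implicit well-formedness hypothesis that $\env\envSum\env'$ be defined, which is what forces the shared values (and therefore the shared annotations) on the common domain; once this is spelled out, the computation is routine. I do not foresee a real obstacle.
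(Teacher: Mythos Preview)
Your proposal is correct and follows essentially the same route as the paper: unfold the vector--matrix product, use $\ZeroOut{\cctx}{\env}$ to restrict the sum to $\y\in\dom{\env}$, and use well-definedness of $\env\envSum\env'$ to conclude that the annotated values (hence the rows $\Graph{-}(\y,\cdot)$) agree on those $\y$. Your write-up is in fact more explicit than the paper's, which compresses the same argument into a couple of lines.
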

\begin{proof}
Consider $\x\in\dom{\env'}$. If $\x\in\dom{\env}$, since $\OKSum{\env}{\env'}$, we have $\env(\x)=\_:\bv$ and $\env'(\x)=\_:\bv$, hence $\cctx\ctxmul\Graph{\env\envSum\env'}(\x)=\cctx\ctxmul\Graph{\env}(\x)$. If $\x\not\in\dom{\env}$ we have $\cctx\ctxmul\Graph{\env\envSum\env'}(\x)=\rzero$. Hence we get the thesis.
\end{proof}

\begin{proofOf}{theo:balance}
 
We prove the following generalized statement:
\begin{quoting}
If $\red{\E}{\env}{\rgr}{\bvPair{\cctx'}{\val}}{\env'}$ then, for each $\rgr'\rord\rgr$, there exist $\cctx,\used,\added$, with   $\cctx$ honest for $\E$,  $\ZeroOut{\cctx}{\env}$, 
and $\Disjoint{\added}{\env}$,  such that
the following conditions hold.
\begin{enumerate}
\item\label{theo:1} $\rgr'\ctxmul\cctx'\ctxmul\GraphStar{\env'}\ctxord\CCTX{\env'}$
\item\label{theo:2}  $\cctx\ctxmul\GraphStar{\env}\ctxord\CCTX{\env}$
\item\label{theo:3} $\CCTX{\env'}\ctxsum\used\ctxord\CCTX{\env}\ctxsum\added$
\item\label{theo:4} $\cctx\ctxmul\GraphStar{\env} \ctxsum\added\rord\rgr'\ctxmul\cctx'\ctxmul\GraphStar{\env'}\ctxsum\used$
\end{enumerate}
\end{quoting}

First, we show that \cref{theo:2} is implied by \cref{theo:1}, \cref{theo:3}, and \cref{theo:4}. 
Indeed
\begin{quoting}
$\cctx\ctxmul\GraphStar{\env} \ctxsum\added\rord$ by \cref{theo:4}\\
$\rgr'\ctxmul\cctx'\ctxmul\GraphStar{\env'}\ctxsum\used\rord$ by \cref{theo:1}\\
$\CCTX{\env'}\ctxsum\used\ctxsum\added\ctxord$ by \cref{theo:3}\\
$\CCTX{\env}\ctxsum\added$ 
\end{quoting}
Hence, $\cctx\ctxmul\GraphStar{\env} \ctxsum\added\rord\CCTX{\env}\ctxsum\added$, and this implies $\cctx\ctxmul\GraphStar{\env}\ctxord\CCTX{\env}$ since $\Disjoint{\added}{\env}$.

Then, we prove \cref{theo:1}, \cref{theo:3}, and \cref{theo:4} by induction on the reduction relation. We show the most relevant cases.
\begin{description}
\item[\refToRule{var}] We have 
\begin{quoting}
(1) $\red{\x}{\env}{\rgr}{\bv}{\env'}$ with $\bv=\bvPair{\cctx'}{\val}$, $\env=\hat{\env},\EnvElem{\x}{\sgr}{\bv}$, $\env'=\hat{\env},\EnvElem{\x}{\sgr'}{\bv}$\\
(2) $\rgr\rsum\sgr'\rord \sgr$\\
(3) $\rgr\rmul\cctx'\rmul{\GraphStar{\env'}}\rord\CCTX{\env'}$
\end{quoting}
Take $\rgr'\rord\rgr$. We choose $\cctx=\used=\VarGrade{\x}{\rgr'}$ and $\added=\emptyset$, hence $\cctx$ honest for $\x$, $\ZeroOut{\cctx}{\env}$, and $\Disjoint{\added}{\env}$ trivially hold. 
\begin{description}
\item[\cref{theo:1}]  We have $\rgr'\ctxmul\cctx'\ctxmul\GraphStar{\env'}\ctxord\rgr\ctxmul\cctx'\ctxmul\GraphStar{\env'}\ctxord\CCTX{\env'}$ from (3).
\item[\cref{theo:3}]  We have 
\begin{quoting}
$\CCTX{\env'}\ctxsum(\VarGrade{\x}{\rgr'})=\CCTX{\hat{\env}},\VarGrade{\x}{\sgr'}\ctxsum(\VarGrade{\x}{\rgr'})=$\\
$\CCTX{\hat{\env}},(\VarGrade{\x}{\rgr'\rsum\sgr'})\rord\CCTX{\hat{\env}},(\VarGrade{\x}{\rgr\rsum\sgr'})\ctxord$ from (2) \\
$\CCTX{\hat{\env}},(\VarGrade{\x}{\sgr})=\CCTX{\env}$
\end{quoting}
\item[\cref{theo:4}] Noting that  ${(\VarGrade{\x}{\rgr'})\ctxmul\Graph{\env}=\rgr'\ctxmul\cctx'}$, by \cref{cor:GStarId} we have 
\begin{quoting}
$(\VarGrade{\x}{\rgr'})\ctxmul\GraphStar{\env}={(\VarGrade{\x}{\rgr'})\ctxmul(\Id\ctxsum\Graph{\env}\ctxmul\GraphStar{\env})}=(\VarGrade{\x}{\rgr'})\ctxsum(\VarGrade{\x}{\rgr'})\ctxmul\Graph{\env}\ctxmul\GraphStar{\env}={\VarGrade{\x}{\rgr'}\ctxsum\rgr'\ctxmul\cctx'\ctxmul\GraphStar{\env}}$. 
\end{quoting}
\end{description}

\item[\refToRule{pair}] We have
\begin{quoting}
(1) $\red{\ve_1}{\env_1}{\rgr\rmul\rgr_1}{\bvPair{\cctx'_1}{\val_1}}{\env_1'}$\\
(2) $\red{\ve_2}{\env_2}{\rgr\rmul\rgr_2}{\bvPair{\cctx'_2}{\val_2}}{\env_2'}$\\
(3) ${\red{\PairExp{\rgr_1}{\ve_1}{\ve_2}{\rgr_2}}{\env}{\rgr}{\bvPair{\cctx'}{\PairExp{\rgr_1}{\val_1}{\val_2}{\rgr_2}}}{\env_1'\envSum\env_2'}}$ with $\cctx'=\rgr_1\ctxmul\cctx'_1\ctxsum\rgr_2\ctxmul\cctx'_2$\\
(4) $\env_1\envSum\env_2\ctxord\env$
\end{quoting}
Take $\rgr'\rord\rgr$, hence $\rgr'\rmul\rgr_1\ctxord\rgr\rmul\rgr_1$ and $\rgr'\rmul\rgr_2\ctxord\rgr\rmul\rgr_2$.\\
By inductive hypothesis we have that there exist $\CCTX{1}, \CCTX{2}, \used[1],\used[2], \added[1],\added[2]$, with $\cctx_1$ honest for $\ve_1$, $\cctx_2$ honest for $\ve_2$, $\ZeroOut{\cctx_1}{\env_1}$, $\ZeroOut{\cctx_2}{\env_2}$, $\Disjoint{\added[1]}{\env_1}$, and $\Disjoint{\added[2]}{\env_2}$, such that
\begin{enumerate}
\item[1a] $\rgr'\rmul\rgr_1\ctxmul\cctx'_1\ctxmul\GraphStar{\env'_1}\ctxord\CCTX{\env'_1}$
\item[1b] $\rgr'\rmul\rgr_2\ctxmul\cctx'_2\ctxmul\GraphStar{\env'_2}\ctxord\CCTX{\env'_2}$
\item[2a] $\CCTX{1}\ctxmul\GraphStar{\env_1}\ctxord\CCTX{\env_1}$
\item[2b] $\CCTX{2}\ctxmul\GraphStar{\env_2}\ctxord\CCTX{\env_2}$
\item[3a] $\CCTX{\env_1'}\ctxsum\used[1]\ctxord\CCTX{\env_1}\ctxsum\added[1]$
\item[3b] $\CCTX{\env'}\ctxsum\used[2]\ctxord\CCTX{\env_2}\ctxsum\added[2]$
\item[4a] $\CCTX{1}\ctxmul\GraphStar{\env_1} \ctxsum\added[1]\rord\rgr'\rmul\rgr_1\ctxmul\CCTX{1}'\ctxmul\GraphStar{\env_1'}\ctxsum\used[1]$
\item[4b] $\CCTX{2}\ctxmul\GraphStar{\env_2}\ctxsum\added[2]\rord\rgr'\rmul\rgr_2\ctxmul\cctx'_2\ctxmul\GraphStar{\env'_2}\ctxsum\used[2]$
\end{enumerate}
We choose $\cctx=\cctx_1\ctxsum\cctx_2$, $\used=\used[1]\ctxsum\used[2]$, and $\added=\added[1]\ctxsum\added[2]$. Clearly
$\cctx$ is honest for $\PairExp{\rgr_1}{\ve_1}{\ve_2}{\rgr_2}$, $\ZeroOut{\cctx}{\env}$, and $\Disjoint{\added}{\env}$. 
\begin{description}
\item[\cref{theo:1}] We have 
\begin{quoting}
$\rgr'\ctxmul(\rgr_1\ctxmul\cctx'_1\ctxsum\rgr_2\ctxmul\cctx'_2)\ctxmul\GraphStar{\env_1'\envSum\env'_2}=$\\
$\rgr'\ctxmul\rgr_1\ctxmul\cctx'_1\ctxmul\GraphStar{\env_1'\envSum\env'_2}\ctxsum\rgr'\ctxmul\rgr_2\ctxmul\cctx'_2\ctxmul\GraphStar{\env_1'\envSum\env'_2}=$ by \cref{lem:restrict-env} since $\ZeroOut{\cctx'_1}{\env'_1}$ and $\ZeroOut{\cctx'_2}{\env'_2}$\\
$\rgr'\ctxmul\rgr_1\ctxmul\cctx'_1\ctxmul\GraphStar{\env_1'}\ctxsum\rgr'\ctxmul\rgr_2\ctxmul\cctx'_2\ctxmul\GraphStar{\env'_2}\rord$ by [1a] and [1b]\\
$\CCTX{\env'_1}\ctxsum\CCTX{\env'_2}=\CCTX{\env_1'\envSum\env'_2}$
\end{quoting}

\item[\cref{theo:3}] 
We have to prove $\CCTX{\env'_1\ctxsum\env'_2}\ctxsum\used\ctxord\CCTX{\env}\ctxsum\added$. We have:
\begin{quoting}
$\CCTX{\env'_1\ctxsum\env'_2}\ctxsum\used[1]\ctxsum\used[2]=\CCTX{\env'_1}\ctxsum\CCTX{\env'_2}\ctxsum\used[1]\ctxsum\used[2]\ctxord$ by [3a] and [3b]\\
$\CCTX{\env_1}\ctxsum\added[1]\ctxsum\CCTX{\env_2}\ctxsum\added[2]=\CCTX{\env_1\ctxsum\env_2}\ctxsum\added$
\end{quoting}

\item[\cref{theo:4}] 
We have to prove $(\cctx_1\ctxsum\cctx_2)\ctxmul\GraphStar{\env} \ctxsum\added\ctxord\rgr'\ctxmul(\rgr_1\ctxmul\cctx'_1\ctxsum\rgr_2\ctxmul\cctx'_2)\ctxmul\GraphStar{\env'}\ctxsum\used$.
We have
\begin{quoting}
$(\cctx_1\ctxsum\cctx_2)\ctxmul\GraphStar{\env} \ctxsum\added[1]\ctxsum\added[2]=$\\
$\cctx_1\ctxmul\GraphStar{\env}\ctxsum\cctx_2\ctxmul\GraphStar{\env}\ctxsum\added[1]\ctxsum\added[2]=$ by \cref{lem:restrict-env} since $\ZeroOut{\cctx_1}{\env_1}$ and $\ZeroOut{\cctx_2}{\env_2}$ \\
$\cctx_1\ctxmul\GraphStar{\env_1}\ctxsum\cctx_2\ctxmul\GraphStar{\env_2}\ctxsum\added[1]\ctxsum\added[2]\rord$ by [4a] and [4b]\\
$\rgr'\rmul\rgr_1\ctxmul\CCTX{1}'\ctxmul\GraphStar{\env_1'}\ctxsum\used[1]\ctxsum\rgr'\rmul\rgr_2\ctxmul\cctx'\ctxmul\GraphStar{\env_2'}\ctxsum\used[2]$ \\
$\rgr'\rmul(\rgr_1\ctxmul\CCTX{1}'\ctxmul\GraphStar{\env_1'}\ctxsum\rgr_2\ctxmul\cctx_2'\ctxmul\GraphStar{\env_2'})\ctxsum\used=$ by \cref{lem:restrict-env} since $\ZeroOut{\cctx'_1}{\env'_1}$ and $\ZeroOut{\cctx'_2}{\env'_2}$ \\
$\rgr'\rmul(\rgr_1\ctxmul\CCTX{1}'\ctxmul\GraphStar{\env'}\ctxsum\rgr_2\ctxmul\cctx'\ctxmul\GraphStar{\env'})\ctxsum\used=$ \\
$\rgr'\ctxmul(\rgr_1\ctxmul\cctx'_1\ctxsum\rgr_2\ctxmul\cctx'_2)\ctxmul\GraphStar{\env'}\ctxsum\used$
\end{quoting}
\end{description}

\item[\refToRule{let}]
We have 
\begin{quoting}
(1) $\red{\e_1}{\env_1}{\sgr}{\bvPair{\CCTX{1}'}{\val_1}}{\env_1'}$\\
(2) $ \NWreduce{\Conf{\Subst{\e_2}{\x'}{\x}}{\hat{\env}}}{\rgr}{\Conf{\bvPair{\cctx'}{\val}}{\env'}}$ with $\hat{\env}=\AddToEnv{(\env_1'\envSum\env_2)}{\x'}{\sgr}{\bvPair{\CCTX{1}'}{\val_1}}$, $\x'$ fresh\\
(3) $\NWreduce{\Conf{\Let{\x}{\e_1}{\e_2}}{\env}}{\rgr}{\Conf{\bvPair{\cctx'}{\val}}{\env'}}$\\
(4) $\env_1\envSum\env_2\ctxord\env$
\end{quoting}

Take $\rgr'\rord\rgr$. By inductive hypothesis we have that, for each $\sgr'\rord\sgr$, there exist $\CCTX{1}, \CCTX{2}, \used[1],\used[2], \added[1],\added[2]$, with $\cctx_1$ honest for $\e_1$, $\cctx_2$ honest for $\Subst{\e_2}{\x'}{\x}$, $\ZeroOut{\cctx_1}{\env_1}$, $\ZeroOut{\cctx_2}{\hat{\env}}$, $\Disjoint{\added[1]}{\env_1}$, and $\Disjoint{\added[2]}{\hat{\env}}$, such that
\begin{enumerate}
\item[1a] $\sgr'\ctxmul\cctx'_1\ctxmul\GraphStar{\env'_1}\ctxord\CCTX{\env'_1}$
\item[1b] $\rgr'\ctxmul\cctx'\ctxmul\GraphStar{\env'}\ctxord\CCTX{\env'}$
\item[3a] $\CCTX{\env_1'}\ctxsum\used[1]\ctxord\CCTX{\env_1}\ctxsum\added[1]$
\item[3b] $\CCTX{\env'}\ctxsum\used[2]\ctxord\CCTX{\hat{\env}}\ctxsum\added[2]$
\item[4a] $\CCTX{1}\ctxmul\GraphStar{\env_1} \ctxsum\added[1]\rord\sgr\ctxmul\CCTX{1}'\ctxmul\GraphStar{\env_1'}\ctxsum\used[1]$ 
\item[4b] $\CCTX{2}\ctxmul\GraphStar{\hat{\env}}\ctxsum\added[2]\rord\rgr\ctxmul\cctx'\ctxmul\GraphStar{\env'}\ctxsum\used[2]$
\end{enumerate}
\begin{description}
\item[\cref{theo:1}] We have to prove $\rgr'\ctxmul\cctx'\ctxmul\GraphStar{\env'}\ctxord\CCTX{\env'}$, and this is [1b].

\item[\cref{theo:3}] 
We choose $\used=\used[1]\ctxsum\used[2]$, $\added=\added[1]\ctxsum\added[2]\ctxsum\VarGrade{\x'}{\sgr}$. Clearly $\Disjoint{\added}{\env}$. We have to prove 
\begin{quoting}
$\CCTX{\env'}\ctxsum\used[1]\ctxsum\used[2]\ctxord\CCTX{\env}\ctxsum\added[1]\ctxsum\added[2]\ctxsum\VarGrade{\x'}{\sgr}$.
\end{quoting}
We have
\begin{quoting}
$\CCTX{\env'}\ctxsum\used[1]\ctxsum\used[2]\ctxord$ by [3b]\\
$\CCTX{\hat{\env}}\ctxsum\added[2]\ctxsum\used[1]=\CCTX{\env'_1}\ctxsum\CCTX{\env_2}\ctxsum(\VarGrade{\x'}{\sgr})\ctxsum\added[2]\ctxsum\used[1]\ctxord$ by [3a]\\
$\CCTX{\env_1}\ctxsum\added[1]\ctxsum\CCTX{\env_2}\ctxsum(\VarGrade{\x'}{\sgr})\ctxsum\added[2]\rord\CCTX{\env}\ctxsum\added[1]\ctxsum\added[2]\ctxsum(\VarGrade{\x'}{\sgr})$
\end{quoting}

\item[\cref{theo:4}] 
We have to prove that there exist $\cctx$ such that
\begin{quoting}
$\cctx\ctxmul\GraphStar{\env} \ctxsum\added[1]\ctxsum\added[2]\ctxsum(\VarGrade{\x'}{\sgr'})\rord\rgr'\ctxmul\cctx'\ctxmul\GraphStar{\env'}\ctxsum\used[1]\ctxsum\used[2]$
\end{quoting}
Since $\CCTX{2}\ctxmul\GraphStar{\env_2}\ctxord\CCTX{\env_2}$ follows from [1b], [3b], and [4b], and $\x'$ is fresh, $\CCTX{2}(\x')=\sgr'\rord\sgr$. Hence, there exists $\CCTX{1}$ such that [4a] holds for such $\sgr'$, that is, 
$\CCTX{1}\ctxmul\GraphStar{\env_1} \ctxsum\added[1]\rord\sgr'\ctxmul\CCTX{1}'\ctxmul\GraphStar{\env_1'}\ctxsum\used[1]$.\\
Let us decompose $\cctx_2$ as $\cctx_2=\cctx_2^\env\ctxsum\cctx_2^{\bar{\env}}$ such that $\ZeroOut{\cctx_2^\env}{\env}$ and $\Disjoint{\cctx_2^{\bar{\env}}}{\env}$.
Moreover:
\begin{quoting}
$(\star)$\Space$\cctx_2^{\bar{\env}}\ctxmul\GraphStar{\hat{\env}}=(\VarGrade{\x'}{\sgr'})\ctxsum\sgr'\ctxmul\CCTX{1}'\ctxmul\GraphStar{\env_1'}$
\end{quoting}
Then, 
we choose $\cctx=\cctx_1\ctxsum\cctx^\env_2$, and we have: 
\begin{quoting}
$(\cctx_1\ctxsum\cctx^\env_2)\ctxmul\GraphStar{\env} \ctxsum\added[1]\ctxsum\added[2]\ctxsum(\VarGrade{\x'}{\sgr'})=$ since $\ZeroOut{\cctx_1}{\env_1}$ and $\ZeroOut{\cctx^\env_2}{\hat{\env}}$ \\
$\cctx_1\ctxmul\GraphStar{\env_1}\ctxsum\cctx^\env_2\ctxmul\GraphStar{\hat{\env}}\ctxsum\added[1]\ctxsum\added[2]\ctxsum(\VarGrade{\x'}{\sgr'})\rord$ \mbox{from [4a]}\\
$\sgr'\ctxmul\CCTX{1}'\ctxmul\GraphStar{\env_1'}\ctxsum\used[1]\ctxsum\cctx^\env_2\ctxmul\GraphStar{\hat{\env}}\ctxsum\added[2]\ctxsum(\VarGrade{\x'}{\sgr'})=$ from $(\star)$\\
$\used[1]\ctxsum\CCTX{2}\ctxmul\GraphStar{\hat{\env}}\ctxsum\added[2]\rord$ from [4b]\\
$\used[1]\ctxsum\rgr\ctxmul\cctx'\ctxmul\GraphStar{\env'}\ctxsum\used[2]$
\end{quoting}
\end{description}
\end{description}
\end{proofOf}

\begin{proofOf}{prop:env-typing} 
{\em If $\IsWFEnv{\cctx}{\env}{\Delta}$ then $\CCTX{\Delta}\ctxsum\CCTX{\env}\ctxmul\Graph{\env}\ctxord\CCTX{\env}$.}\\
Set $\env=\EnvElem{\x_1}{\rgr_1}{\bvPair{\cctx_1}{\val_1}},\ldots,\EnvElem{\x_n}{\rgr_n}{\bvPair{\cctx_n}{\val_n}}$.
To derive $\IsWFEnv{\cctx}{\env}{\Delta}$ we have 
necessarily applied rule \refToRule{t-env}, and then \refToRule{t-val} to the premises, hence 
$\CCTX{\cctx}=\VarGrade{\x_1}{\rgr_1},\ldots,\VarGrade{\x_n}{\rgr_n}$,  and, for some $\cctx_1, \ldots, \cctx_n$, we have $\cctx_i\rord\CCTX{\cctx_i}$ for ${i \in 1..n}$, and $\Delta\ctxsum\sum_{i\in 1..n}\rgr_i\rmul\cctx_i \ctxord \cctx$.
From the last condition, $\CCTX{\Delta}\ctxsum\sum_{i\in 1..n}\rgr_i\rmul\CCTX{\cctx_i}\ctxord \CCTX{\cctx}$, hence, since $\CCTX{\cctx}=\CCTX{\env}$, we get $\CCTX{\Delta}\ctxsum\sum_{i\in 1..n}\rgr_i\rmul\cctx_i\ctxord \CCTX{\env}$.
Finally, we show that $\sum_{i\in 1..n}\rgr_i\rmul\cctx_i=\CCTX{\env}\rmul\Graph{\env}$. Indeed, for $i\in 1..n$, $\rgr_i=\CCTX{\env}(\x_i)$, and $\cctx_i(\x)=\Graph{\env}(\x_i,\x)$, hence 
we have $(\sum_{i\in 1..n}\rgr_i\rmul\cctx_i)(\x)=\sum_{i\in 1..n}\CCTX{\env}(\x_i)\rmul\Graph{\env}(\x_i,\x)$, which is equal to $(\CCTX{\env}\rmul\Graph{\env})(\x) = \sum_{\y\in\Vars} \CCTX{\env}(\y)\cdot\Graph{\env}(\y,\x)$ since $\CCTX{\env}(\y)\neq\rzero$ implies $\y=\x_i$ for some $i\in 1..n$.
\end{proofOf}


\section{Proofs of \cref{sect:soundness}}

\begin{lemma}[Inversion for value expressions]\label[lemma]{lem:inversion}\
\begin{enumerate}

\item \label{lem:inversion:x} If $\IsWFExp{\Gamma}{\x}{\Graded{\tau}{\rgr}}$ then $\Gamma=\VarGradeType{\x}{\rgr'}{\tau},\Gamma'$ and
$\VarGradeType{\x}{\rgr'}{\tau}\ctxord\Gamma$ and $\rgr\rord\rgr'$.

\item \label{lem:inversion:fun} If $\IsWFExp{\Gamma}{\RecFun{\f}{\x}{\e}}{\Graded{\tau}{\rgr}}$ then $\rgr'\ctxmul\Gamma'\ctxord\Gamma$ and $\tau = \funType{\GradedInd{\tau}{1}{\rgr}}{\sgr}{\GradedInd{\tau}{2}{\rgr}}$ such that \\
${\IsWFExp{\Gamma',\VarGradeType{\f}{\sgr}{\funType{\GradedInd{\tau}{1}{\rgr}}{\sgr}{\GradedInd{\tau}{2}{\rgr}}},\VarGradeType{\x}{\rgr_1}{\tau_1}}{\e}{\GradedInd{\tau}{2}{\rgr}}}$ and $\rgr\rord\rgr'$.

\item \label{lem:inversion:unit} If $\IsWFExp{\Gamma}{\unit}{\Graded{\tau}{\rgr}}$ then $\tau = \Unit$ and $\emptyset\ctxord\Gamma$.

\item \label{lem:inversion:pair} If $\IsWFExp{\Gamma}{\PairExp{\rgr_1}{\ve_1}{\ve_2}{\rgr_2}}{\Graded{\tau}{\rgr}}$ then $\rgr'\ctxmul(\Gamma_1 \ctxsum \Gamma_2)\ctxord\Gamma$, $\tau = \PairT{\GradedInd{\tau}{1}{\rgr}}{\GradedInd{\tau}{2}{\rgr}}$ and $\rgr \rord \rgr'$ such that \mbox{$\IsWFExp{\Gamma_1}{\ve_1}{\GradedInd{\tau}{1}{\rgr}}$, $\IsWFExp{\Gamma_2}{\ve_2}{\Graded{\tau_2}{\rgr_2}}$.}

\item \label{lem:inversion:inl} If $\IsWFExp{\Gamma}{\Inl{\rgr_1}{\ve}}{\Graded{\tau}{\rgr}}$ then  $\rgr'\ctxmul\Gamma'\ctxord\Gamma$, $\tau = \SumT{\GradedInd{\tau}{1}{\rgr}}{\GradedInd{\tau}{2}{\rgr}}$ and $\rgr \rord \rgr'$ such that $\IsWFExp{\Gamma'}{\ve}{\GradedInd{\tau}{1}{\rgr}}$.

\item \label{lem:inversion:inr} If $\IsWFExp{\Gamma}{\Inr{\rgr_2}{\ve}}{\Graded{\tau}{\rgr}}$ then  $\rgr'\ctxmul\Gamma'\ctxord\Gamma$, $\tau = \SumT{\GradedInd{\tau}{1}{\rgr}}{\GradedInd{\tau}{2}{\rgr}}$ and $\rgr \rord \rgr'$ such that $\IsWFExp{\Gamma'}{\ve}{\GradedInd{\tau}{2}{\rgr}}$.
\end{enumerate}
\end{lemma}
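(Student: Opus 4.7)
The plan is to prove all six clauses simultaneously by a straightforward induction on the derivation of the premise typing judgement. The observation is that, apart from the subsumption rule \refToRule{t-sub-v}, each value-expression rule in the top section of \cref{fig:typing} is syntax-directed: for each syntactic shape of $\ve$ (variable, function abstraction, $\unit$, pair, left injection, right injection) exactly one non-subsumption rule applies. So for each case, the base step reads off the statement directly from the premises of the corresponding syntax-directed rule, taking the witness grades and contexts equal to those appearing in the consequence and using reflexivity of $\rord$ and $\ctxord$; the inductive step handles \refToRule{t-sub-v} by applying the induction hypothesis to the premise and then stitching together the additional approximations via transitivity.

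More concretely, consider clause (\ref{lem:inversion:x}). If the last rule is \refToRule{t-var}, then $\Gamma = \VarGradeType{\x}{\rgr}{\tau}$, and the thesis holds taking $\rgr'=\rgr$ and $\Gamma'=\emptyset$. If the last rule is \refToRule{t-sub-v}, we have $\IsWFExp{\Delta}{\x}{\Graded{\tau}{\rgr''}}$ with $\rgr\rord\rgr''$ and $\Delta\ctxord\Gamma$; by induction $\Delta = \VarGradeType{\x}{\rgr'''}{\tau},\Delta'$ with $\rgr''\rord\rgr'''$ and $\VarGradeType{\x}{\rgr'''}{\tau}\ctxord\Delta$. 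From $\Delta\ctxord\Gamma$, inspecting the clauses of $\ctxord$ shows that $\Gamma$ must contain a binding $\VarGradeType{\x}{\rgr'}{\tau}$ with $\rgr'''\rord\rgr'$, and all further bindings have discardable grade, hence $\VarGradeType{\x}{\rgr'}{\tau}\ctxord\Gamma$; transitivity of $\rord$ then yields $\rgr\rord\rgr'$. Clauses (\ref{lem:inversion:unit}), (\ref{lem:inversion:pair}), (\ref{lem:inversion:inl}), (\ref{lem:inversion:inr}) are handled identically, with the base step extracting the subterm premises from the unique syntax-directed rule and the subsumption step relying on transitivity of $\ctxord$ applied to the already-approximate form $\rgr'\ctxmul(\Gamma_1\ctxsum\Gamma_2)\ctxord\Delta$ (or analogous). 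Clause (\ref{lem:inversion:fun}) is the same pattern but takes a little more care because \refToRule{t-fun} exposes a context of the form $\rgr\ctxmul\Gamma''$; again, the subsumption step simply transports the $\ctxord$-approximation from $\Delta$ to the enveloping $\Gamma$.

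The only delicate point is verifying, in each clause, that the shape of the type $\tau$ is preserved by subsumption. This is immediate because \refToRule{t-sub-v} leaves the non-graded type $\tau$ itself unchanged, acting only on the grade and on the context; hence the type-level invariant ($\tau=\Unit$, or $\tau=\PairT{\_}{\_}$, or $\tau=\SumT{\_}{\_}$, or $\tau$ an arrow type) established by the syntax-directed base step propagates unaltered through any number of subsumption steps. I expect no serious obstacle: the argument is routine structural bookkeeping, and the main thing to be careful about is the variable case, where one must explicitly invoke the third clause of the definition of $\ctxord$ to justify that the extra bindings accumulated by subsumption carry only discardable grades, so that the required $\VarGradeType{\x}{\rgr'}{\tau}\ctxord\Gamma$ continues to hold.
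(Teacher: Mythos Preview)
Your approach is correct and is the standard proof of an inversion lemma for a syntax-directed type system with a subsumption rule: induction on the typing derivation, with the syntax-directed rule as the base case (using reflexivity of $\rord$ and $\ctxord$) and \refToRule{t-sub-v} as the inductive step (using transitivity). The paper itself states this lemma without proof, treating it as routine, so there is nothing to compare against; your write-up is exactly what one would expect, and your remark about the variable case---that one must check, via the clauses defining $\ctxord$, that the extra bindings introduced by subsumption all carry discardable grades---is the only point that requires a moment's thought.
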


\begin{lemma}[Canonical Forms]\label{lem:cf}\
\begin{enumerate}
\item \label{lem:cf:fun} If $\IsWFExp{\Gamma}{\ve}{\Graded{(\funType{\GradedInd{\tau}{1}{\rgr}}{\sgr}{\GradedInd{\tau}{2}{\rgr}})}{\rgr_3}}$ then 
$\ve  = \RecFun{\f}{\x}{\e}$.
\item \label{lem:cf:unit} If $\IsWFExp{\Gamma}{\ve}{\Unit}$ then 
$\ve  = \unit$.
\item \label{lem:cf:pair}If $\IsWFExp{\Gamma}{\ve}{\Graded{(\PairT{\GradedInd{\tau}{1}{\rgr}}{\GradedInd{\tau}{2}{\rgr}})}{\rgr_3}}$ then 
$\ve  = \PairExp{\rgr_1}{\ve_1}{\ve_2}{\rgr_2}$.
\item \label{lem:cf:sum} If $\IsWFExp{\Gamma}{\ve}{\Graded{\SumT{(\GradedInd{\tau}{1}{\rgr}}{\GradedInd{\tau}{2}{\rgr}})}{\rgr}}$ then 
$\ve  = \Inl{\rgr_1}{\ve_1}$ or $\ve  = \Inr{\rgr_2}{\ve_2}$.
\end{enumerate}
\end{lemma}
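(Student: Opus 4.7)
My plan is to prove each of the four clauses by case analysis on the value expression $\ve$, using the Inversion Lemma (\cref{lem:inversion}) to derive contradictions for every syntactic form except the one the conclusion claims. In each clause, the hypothesis constrains the top-level shape of the type $\tau$ appearing in $\Graded{\tau}{\rgr}$ (a function type, $\Unit$, a product, or a sum, respectively), and each syntactic form of $\ve$ forces $\tau$ to begin with a unique constructor (arrow, $\Unit$, $\otimes$, or $+$). Since the grammar of non-graded types $\tau$ has disjoint head constructors, these shapes are pairwise incompatible, so all but one case collapses.

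In more detail, for clause \labelcref{lem:cf:fun}, I would consider in turn $\ve = \RecFun{\f}{\x}{\e}$ (the desired case), $\ve = \unit$ (excluded by \refItem{lem:inversion}{unit}, which forces $\tau = \Unit$), $\ve = \PairExp{\rgr_1}{\ve_1}{\ve_2}{\rgr_2}$ (excluded by \refItem{lem:inversion}{pair}, which forces a product head), $\ve = \Inl{\rgr_1}{\ve'}$ (excluded by \refItem{lem:inversion}{inl}, which forces a sum head), and $\ve = \Inr{\rgr_2}{\ve'}$ (analogously excluded by \refItem{lem:inversion}{inr}). The variable case $\ve = \x$ is excluded on the tacit understanding that $\ve$ is a value (i.e.\ the metavariable ranges here over the $\val$-fragment of the syntax, for which canonical forms are meaningful); otherwise \refItem{lem:inversion}{x} would put no constraint on $\tau$ beyond what the context already assigns to $\x$. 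The other three clauses are entirely analogous, each eliminating four of the five value-forms by a mismatch in the top-level type constructor and keeping only the one matching the stated conclusion.

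The only subtlety worth highlighting is that types in this system are \emph{equi-recursive}, so formally they are defined coinductively and may be infinite trees (see \cref{fig:types}). This might a priori raise a concern that ``having the shape $\funType{\cdot}{\cdot}{\cdot}$'' is not well-defined up to unfolding. However, the grammar of non-graded types $\tau$ has a unique outermost constructor for each alternative, and the inversion lemma compares types at that outermost level only; recursion appears inside the graded subterms but does not touch the top-level head. So the argument goes through unchanged.

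The main obstacle, therefore, is really only bookkeeping: making sure that for each of the four clauses one invokes exactly the right four inversion items and correctly reads off the top-level type constructor each of them forces. There is no induction and no coinduction required; the lemma follows directly from a finite case split combined with \cref{lem:inversion}.
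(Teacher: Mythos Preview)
Your approach is correct and is exactly the standard one; the paper omits the proof of this lemma entirely, treating it as routine, and what you describe is precisely the routine argument. Your observation about the variable case is apt: as written the lemma uses the metavariable $\ve$ (value expressions, which include variables), but it is only true for values $\val$, and indeed it is only ever needed for values in the soundness proofs, so your ``tacit understanding'' reading is the intended one.
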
 

\begin{lemma}[Renaming]
\label{lem:subst}
If $\IsWFExp{\Gamma,\VarGradeType{\x}{\rgr_1}{\tau_1}}{\e}{\GradedInd{\tau}{2}{\rgr}}$ then 
$\IsWFExp{\Gamma,\VarGradeType{\x'}{\rgr_1}{\tau_1}}{\Subst{\e}{\x'}{\x}}{\GradedInd{\tau}{2}{\rgr}}$  \mbox{with $\x'$ fresh.}
\end{lemma}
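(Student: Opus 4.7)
The plan is to prove the renaming lemma by (mutual) induction on the typing derivation, simultaneously strengthening the statement to cover both the value expression judgement $\IsWFExp{\Gamma,\VarGradeType{\x}{\rgr_1}{\tau_1}}{\ve}{\T}$ and the expression judgement $\IsWFExp{\Gamma,\VarGradeType{\x}{\rgr_1}{\tau_1}}{\e}{\T}$, because the two judgements are mutually recursive through rule \refToRule{t-fun}. We freely assume the standard Barendregt convention that bound variables in $\e$ differ from $\x'$, so the substitution $\Subst{\e}{\x'}{\x}$ does not cause capture; this can be ensured by a preliminary alpha-renaming step, harmless thanks to the fresh-variable side conditions in the rules for binders \refToRule{t-fun}, \refToRule{t-let}, \refToRule{t-match-p}, and \refToRule{t-match-in}.

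The proof proceeds by case analysis on the last rule applied. The subsumption rules \refToRule{t-sub-v} and \refToRule{t-sub} are immediate by applying the induction hypothesis and re-applying the rule, after observing that the context operations $\ctxsum$, $\ctxmul$, and $\ctxord$ all commute with renaming of a single variable. For \refToRule{t-var}, either the variable is $\x$ and we get a derivation of $\IsWFExp{\VarGradeType{\x'}{\rgr_1}{\tau_1}}{\x'}{\Graded{\tau_1}{\rgr_1}}$ directly by \refToRule{t-var}, or the variable is different and the derivation is unchanged (modulo swapping $\x$ with $\x'$ in the context). Rules \refToRule{t-unit}, \refToRule{t-pair}, \refToRule{t-inl}, \refToRule{t-inr}, \refToRule{t-ret}, \refToRule{t-app}, and the match rules without binders \refToRule{t-match-u} are straightforward: apply the induction hypothesis to each subderivation, noting that $\x$ occurs in exactly one component of a sum/product context, and then recompose using the corresponding typing rule.

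The binder cases (\refToRule{t-fun}, \refToRule{t-let}, \refToRule{t-match-p}, \refToRule{t-match-in}) are the only slightly delicate ones. Consider \refToRule{t-let}: we have $\e=\Let{\y}{\e_1}{\e_2}$ and, by the Barendregt convention, $\y\ne\x,\x'$. The context in the premise for $\e_2$ has shape $\Gamma_2,\VarGradeType{\x}{\rgr_1}{\tau_1},\VarGradeType{\y}{\rgr'}{\tau'}$ (after rearranging $\ctxsum$, which is commutative up to reordering). The induction hypothesis applied to this premise yields a derivation with $\VarGradeType{\x'}{\rgr_1}{\tau_1}$ replacing $\VarGradeType{\x}{\rgr_1}{\tau_1}$ and $\Subst{\e_2}{\x'}{\x}$ in place of $\e_2$. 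Applying \refToRule{t-let} then produces the required derivation, since $\Subst{(\Let{\y}{\e_1}{\e_2})}{\x'}{\x} = \Let{\y}{\Subst{\e_1}{\x'}{\x}}{\Subst{\e_2}{\x'}{\x}}$ by our capture-avoiding convention. The cases \refToRule{t-fun}, \refToRule{t-match-p}, and \refToRule{t-match-in} are entirely analogous: one simply keeps the bound variables untouched and renames $\x$ to $\x'$ in the free part of the context.

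The main (mild) obstacle is purely bookkeeping: making sure the context operations and the scalar multiplication $\rgr\ctxmul\Gamma$ in rules \refToRule{t-fun}, \refToRule{t-pair}, \refToRule{t-inl}, \refToRule{t-inr} commute with the renaming of a single free variable. This is a direct consequence of the pointwise definition of these operations in \cref{sect:typesystem}. With this in hand, no grade-algebraic reasoning is needed beyond what is already built into the rules, and the induction goes through uniformly.
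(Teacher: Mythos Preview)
Your approach is correct and is the standard proof of a renaming lemma: mutual induction on the typing derivation, handling each rule case, with the Barendregt convention to avoid capture in the binder cases. The paper itself does not give a proof of this lemma at all---it is simply stated without proof, presumably because it is considered routine. Your write-up fills in exactly the argument one would expect, and no different or more sophisticated idea is needed here.
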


\begin{lemma}\label{lem:sumEnv}
If $\cctx=\cctx_1\ctxsum\cctx_2$ then $\cctx\ctxmul\GraphStar{\env}=(\cctx'_1\ctxmul\GraphStar{\env})\rsum(\cctx'_2\ctxmul\GraphStar{\env})$.
\end{lemma}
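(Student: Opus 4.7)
The statement is really the left-distributivity of multiplication by a grade matrix over the pointwise sum of grade contexts (reading $\cctx'_1,\cctx'_2$ as a typo for $\cctx_1,\cctx_2$). So the plan is a direct unfolding of definitions, using distributivity in the underlying grade algebra pointwise at each variable.

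First I would recall that, for any variable $\x\in\Vars$, the definition of multiplication by a grade matrix gives
\[
(\cctx\rmul\GraphStar\env)(\x) \;=\; \sum_{\y\in\Vars} \cctx(\y)\rmul\GraphStar\env(\y,\x),
\]
and analogously for $\cctx_1$ and $\cctx_2$. All these infinite sums are well-defined because $\cctx,\cctx_1,\cctx_2$ have finite support and $\GraphStar\env$ is a grade matrix, so only finitely many addends are non-zero; this is the point where the algebraic setup from \cref{sect:gr-graph} is needed, but it has already been established.

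Then I would use the fact that context sum is defined pointwise, namely $\cctx(\y)=\cctx_1(\y)\rsum\cctx_2(\y)$ for every $\y\in\Vars$. Applying left-distributivity of $\rmul$ over $\rsum$ in the grade algebra (\cref{def:ord-sr}) inside the sum, one obtains
\[
\sum_{\y\in\Vars}\bigl(\cctx_1(\y)\rsum\cctx_2(\y)\bigr)\rmul\GraphStar\env(\y,\x)
\;=\; \sum_{\y\in\Vars}\cctx_1(\y)\rmul\GraphStar\env(\y,\x)\;\rsum\;\sum_{\y\in\Vars}\cctx_2(\y)\rmul\GraphStar\env(\y,\x),
\]
where splitting the sum is legitimate because both pieces have finite support. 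By definition, the right-hand side is exactly $(\cctx_1\rmul\GraphStar\env)(\x)\rsum(\cctx_2\rmul\GraphStar\env)(\x)$, which is the $\x$-component of $(\cctx_1\rmul\GraphStar\env)\rsum(\cctx_2\rmul\GraphStar\env)$. Since this holds for every $\x$, and equality of grade contexts is pointwise, the thesis follows.

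There is no real obstacle here: the lemma is a routine instance of the fact that the assignment $\cctx\mapsto\cctx\rmul\mtx$ is a left-$\RR$-linear operation on grade contexts, which in turn follows from the semiring axioms. The only care needed is that the infinite summations involved are honest, and that is already guaranteed by the finite-support conditions baked into the definitions of grade contexts and grade matrices.
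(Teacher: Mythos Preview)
Your proof is correct and is exactly the routine pointwise verification one expects; the paper in fact states this lemma without proof, presumably for the same reason you identify, namely that it is an immediate consequence of distributivity in the underlying semiring together with the finite-support conditions that make the sums well-defined. Your observation that $\cctx'_1,\cctx'_2$ must be typos for $\cctx_1,\cctx_2$ is also right.
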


\begin{lemma}\label{lem:splitEnv}
Given $\env=\EnvElem{\x_1}{\rgr_1}{\bvPair{\CCTX{1}}{\val_1}},\ldots,\EnvElem{\x_n}{\rgr_n}{\bvPair{\CCTX{n}}{\val_n}}$, $\Gamma=\VarGradeType{\x_1}{\rgr_1}{\Graded{\tau_1}{\rgr_1}},\ldots,\VarGradeType{\x_n}{\rgr_n}{\Graded{\tau_n}{\rgr_n}}$, $\Delta\ctxord\Gamma$ and for all $i\in 1..n$ it exists $\Gamma_i$ such that $\IsWFExp{\Gamma_i}{\bvPair{\CCTX{i}}{\val_i}}{\Graded{\tau_i}{\rgr_i}}$ and $\CCTX{\Delta}\ctxmul\GraphStar{\env}\rord\CCTX{\env}$ then $\IsWFEnv{\Gamma}{\env}{\Delta}$.
\end{lemma}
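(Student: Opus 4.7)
My plan is to apply rule \refToRule{t-env} with carefully chosen premises. The rule requires, for each $i \in 1..n$, a typing judgement $\IsWFExp{\tilde\Gamma_i}{\bvPair{\cctx_i}{\val_i}}{\Graded{\tau_i}{\rgr_i}}$ together with the side condition $\tilde\Gamma_1 \ctxsum \cdots \ctxsum \tilde\Gamma_n \ctxsum \Delta \ctxord \Gamma$. The strategy is to decompose each hypothesis-provided $\Gamma_i$ into the smallest typing of $\val_i$ whose grade context matches the annotation $\cctx_i$, and then to verify the sum condition using the no-waste assumption.

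First, for each $i$, I invert the given judgement $\IsWFExp{\Gamma_i}{\bvPair{\cctx_i}{\val_i}}{\Graded{\tau_i}{\rgr_i}}$: by examining the structure of rules \refToRule{t-val} and \refToRule{t-sub-v}, I obtain a canonical $\Gamma_i^\star$ such that $\IsWFExp{\Gamma_i^\star}{\val_i}{\Graded{\tau_i}{\rone}}$ with $\cctx_i \ctxord \CCTX{\Gamma_i^\star}$. I aim to choose $\Gamma_i^\star$ so that $\CCTX{\Gamma_i^\star} = \cctx_i$ exactly, relying on the honesty of $\cctx_i$ for $\val_i$: any variable appearing in $\CCTX{\Gamma_i^\star}$ but not in $\cctx_i$ carries a discardable grade and can, by \cref{lem:promotion} and \refToRule{t-sub-v}, be tightened away. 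Setting $\tilde\Gamma_i := \rgr_i \ctxmul \Gamma_i^\star$ and re-applying \refToRule{t-val} then yields the required premises.

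Next, I verify the side condition. Because the types in each $\tilde\Gamma_i$ agree with those recorded in $\Gamma$, the sum is well-defined and the inequality between contexts reduces to a grade-context inequality
\[
\sum_{i} \rgr_i \rmul \CCTX{\Gamma_i^\star} \ctxsum \CCTX{\Delta} \ctxord \CCTX{\Gamma} = \CCTX{\env}.
\]
Mirroring the computation in the proof of \cref{prop:env-typing}, one has $\CCTX{\env} \ctxmul \Graph{\env} = \sum_{i} \rgr_i \rmul \cctx_i$. Substituting $\CCTX{\Gamma_i^\star} = \cctx_i$ rewrites the left-hand side as $\CCTX{\env} \ctxmul \Graph{\env} \ctxsum \CCTX{\Delta}$. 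Unfolding the no-waste hypothesis via \cref{cor:GStarId}, $\CCTX{\Delta}\ctxmul\GraphStar{\env} = \CCTX{\Delta} \ctxsum \CCTX{\Delta}\ctxmul\Graph{\env}\ctxmul\GraphStar{\env} \rord \CCTX{\env}$, and combining this bound with monotonicity of $\ctxmul\Graph{\env}$ delivers the desired inequality.

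The principal obstacle is the first step: justifying that one may always replace $\Gamma_i^\star$ by a typing whose grade context equals $\cctx_i$ exactly rather than merely lying above it. Without such a tightening, the slack $\cctx_i \ctxord \CCTX{\Gamma_i^\star}$ propagates into the sum, and the identity $\sum_i \rgr_i \rmul \cctx_i = \CCTX{\env}\ctxmul\Graph{\env}$—which is the hinge connecting the typing-side inequality to the graph-theoretic no-waste hypothesis—no longer applies. I therefore expect the technical heart of the proof to be an auxiliary subsumption lemma asserting that, whenever $\cctx$ is honest for $\val$ and $\val$ admits a typing with grade context over-approximating $\cctx$, it also admits a typing with grade context exactly $\cctx$, which should be established by induction on the typing rules for value expressions.
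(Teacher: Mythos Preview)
The paper states this lemma without proof, so there is nothing to compare against directly. However, your argument contains a genuine gap that is \emph{not} the one you identify.

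Suppose your tightening sublemma succeeds and you obtain $\CCTX{\Gamma_i^\star}=\cctx_i$ exactly. Then, as you note, the side condition of \refToRule{t-env} reduces to
\[
\CCTX{\env}\ctxmul\Graph{\env}\;\ctxsum\;\CCTX{\Delta}\;\ctxord\;\CCTX{\env}.
\]
This is precisely the \emph{hypothesis} shape of \cref{prop:implication} (with $\cctx=\CCTX{\Delta}$ and $\cctx'=\CCTX{\env}$), whereas what you are given, $\CCTX{\Delta}\ctxmul\GraphStar{\env}\ctxord\CCTX{\env}$, is its \emph{conclusion} shape. You are therefore invoking the converse of \cref{prop:implication}, and that converse fails. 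For a concrete witness in bounded naturals, take $\env=\EnvElem{\x}{2}{\bvPair{\VarGrade{\y}{1}}{\val_1}},\EnvElem{\y}{1}{\bvPair{\emptyset}{\val_2}}$ and $\Delta=\emptyset$: then $\CCTX{\Delta}\ctxmul\GraphStar{\env}=\rzero\ctxord\CCTX{\env}$ holds, but $\CCTX{\env}\ctxmul\Graph{\env}=\VarGrade{\y}{2}$, and $\VarGrade{\y}{2}\not\ctxord\VarGrade{\x}{2},\VarGrade{\y}{1}$. The appeal to ``monotonicity of $\ctxmul\Graph{\env}$'' cannot rescue this: multiplying the no-waste bound by $\Graph{\env}$ on the right yields an inequality with $\CCTX{\env}\ctxmul\Graph{\env}$ on the \emph{right}, not on the left where you need it.

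This same counterexample shows that no other choice of $\tilde\Gamma_i$ can work either: rule \refToRule{t-val} forces $\CCTX{\tilde\Gamma_1}(\y)\geq 2$, so the sum cannot sit below $\CCTX{\Gamma}$. The lemma, as literally stated, therefore appears to require a stronger hypothesis. Examining its sole use in the proof of \cref{lem:ev-split}, one sees that it is invoked only when $\CCTX{\env}=\CCTX{\Delta}\ctxmul\GraphStar{\env}$ holds with \emph{equality}; under that assumption \cref{cor:GStarId} gives $\CCTX{\env}=\CCTX{\Delta}\ctxsum\CCTX{\env}\ctxmul\Graph{\env}$ directly, and your argument (including the tightening step) then goes through. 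So the real fix is to strengthen the hypothesis, not to prove an auxiliary subsumption lemma.
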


\begin{lemma}[Environment splitting]\label{lem:ev-split}
If $\IsWFEnv\Gamma\env\Delta$ and $\Delta_1\rsum\Delta_2 \rord \Delta$, then 
there are $\env_1$ and $\env_2$ such that 
$\env_1\envSum\env_2\ctxord\env$ and 
$\IsWFEnv{\Gamma_1}{\env_1}{\Delta_1}$ and 
$\IsWFEnv{\Gamma_2}{\env_2}{\Delta_2}$ and these are consistent. 
\end{lemma}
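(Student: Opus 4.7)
The plan is to define the split of the environment so that the grades of the two halves are distributed according to how each $\Delta_j$ transitively consumes the resources available in $\env$. Concretely, I will set $\rgr_i^j = (\CCTX{\Delta_j}\rmul\GraphStar{\env})(\x_i)$ for each variable $\x_i\in\dom\env$ and each $j\in\{1,2\}$, and take $\env_j$ to be the environment with the same values and annotations as $\env$ but with $\x_i$ decorated by the grade $\rgr_i^j$. The context $\Gamma_j$ is then the one canonically extracted from $\env_j$.

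First, by inversion on $\IsWFEnv\Gamma\env\Delta$ via rule \refToRule{t-env} (and then \refToRule{t-val} on each of its premises), we obtain $\env = \EnvElem{\x_1}{\rgr_1}{\bvPair{\cctx_1^a}{\val_1}},\ldots,\EnvElem{\x_n}{\rgr_n}{\bvPair{\cctx_n^a}{\val_n}}$, $\Gamma = \VarGradeType{\x_1}{\rgr_1}{\tau_1},\ldots,\VarGradeType{\x_n}{\rgr_n}{\tau_n}$, and for each $i$ a derivation $\IsWFExp{\Gamma_i'}{\val_i}{\Graded{\tau_i}{\rone}}$ with $\cctx_i^a \ctxord \CCTX{\Gamma_i'}$. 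By \cref{prop:env-typing-star} this yields $\CCTX\Delta \rmul \GraphStar\env \ctxord \CCTX\env$. To verify $\env_1\envSum\env_2\ctxord\env$, it suffices to show $\rgr_i^1 \rsum \rgr_i^2 \rord \rgr_i$ for each $i$, which follows from the chain
\[
\rgr_i^1 \rsum \rgr_i^2 \;=\; \bigl(\CCTX{\Delta_1}\rsum\CCTX{\Delta_2}\bigr)\rmul\GraphStar\env\,(\x_i) \;\rord\; \CCTX{\Delta}\rmul\GraphStar\env\,(\x_i) \;\rord\; \CCTX\env(\x_i) \;=\; \rgr_i,
\]
using the hypothesis $\Delta_1\ctxsum\Delta_2\ctxord\Delta$, monotonicity of $\rmul$, and \cref{prop:env-typing-star}.

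Second, to derive $\IsWFEnv{\Gamma_j}{\env_j}{\Delta_j}$ I appeal to \cref{lem:splitEnv}, checking its three hypotheses. (i)~$\Delta_j\ctxord\Gamma_j$: since $\Graph\env$ is acyclic, $\GraphStar\env(\x_k,\x_k)=\rone$ (the empty path contributes $\rone$ and all cycles have weight $\rzero$), hence $\CCTX{\Delta_j}(\x_k) \rord (\CCTX{\Delta_j}\rmul\GraphStar\env)(\x_k) = \rgr_k^j$. (ii)~For each $i$ a typing for $\bv_i = \bvPair{\cctx_i^a}{\val_i}$ at grade $\rgr_i^j$: apply rule \refToRule{t-val} with the very same $\cctx_i^a$ and $\Gamma_i'$ used in the original derivation, yielding $\IsWFExp{\rgr_i^j\rmul\Gamma_i'}{\bv_i}{\Graded{\tau_i}{\rgr_i^j}}$, since the side condition $\cctx_i^a\ctxord\CCTX{\Gamma_i'}$ does not depend on the scalar multiplier. (iii)~The no-waste condition $\CCTX{\Delta_j}\rmul\GraphStar{\env_j}\ctxord\CCTX{\env_j}$: because the values and their annotations in $\env_j$ coincide with those of $\env$, the grade graph is preserved, i.e.\ $\Graph{\env_j}=\Graph\env$ and thus $\GraphStar{\env_j}=\GraphStar\env$, so the condition holds with equality by the very definition of $\rgr_i^j$. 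Consistency of the two resulting judgements with the original one is immediate: both derivations type each $\val_i$ with the same $\Gamma_i'$ and $\tau_i$, differing only in the outer scalar multiplier.

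\textbf{Expected obstacle.} The crux is designing a split of the grades $\rgr_i$ that is simultaneously bounded above by $\rgr_i$ and large enough, at each $\x_k$, to cover the transitive usage induced by $\Delta_j$ through the values stored in $\env$. Using the reflexive and transitive closure $\GraphStar\env$ makes this possible, but relies essentially on the acyclicity of $\Graph\env$ (for $\GraphStar\env$ to be well-defined via \refItem{prop:paths}{2} and for $\GraphStar\env(\x_k,\x_k)=\rone$). A related technical point is that directly verifying the side condition of \refToRule{t-env} for $\env_j$ in the form $\sum_i \rgr_i^j\rmul\Gamma_i'\ctxsum\Delta_j\ctxord\Gamma_j$ would, in general, require the typing contexts $\Gamma_i'$ to coincide with the annotations $\cctx_i^a$ on their support, which is not automatic from \refToRule{t-val}; the appeal to \cref{lem:splitEnv} sidesteps precisely this mismatch by working with $\Graph\env$ (which is built from the annotations) rather than with the potentially larger matrix induced by the $\CCTX{\Gamma_i'}$.
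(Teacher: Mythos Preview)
Your proposal follows essentially the same approach as the paper: define $\env_j$ by assigning to each $\x_i$ the grade $(\CCTX{\Delta_j}\rmul\GraphStar\env)(\x_i)$, note that $\GraphStar{\env_j}=\GraphStar\env$ since the annotated values are unchanged, verify $\env_1\envSum\env_2\ctxord\env$ via $\CCTX{\Delta_1\ctxsum\Delta_2}\rmul\GraphStar\env\ctxord\CCTX\env$, and conclude by \cref{lem:splitEnv}. You are in fact more explicit than the paper in checking the individual hypotheses of \cref{lem:splitEnv} and in arguing consistency (by reusing the same $\Gamma_i'$ and $\tau_i$ in the \refToRule{t-val} premises).
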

\begin{proof}

Let $\env=\EnvElem{\x_1}{\rgr_1}{\bvPair{\cctx_1}{\val_1}},\ldots,\EnvElem{\x_n}{\rgr_n}{\bvPair{\cctx_n}{\val_n}}$ and $\Gamma=\VarGradeType{\x_1}{\rgr_1}{\tau_1},\dots,\VarGradeType{\x_n}{\rgr_n}{\tau_n}$.
From rule \refToRule{t-env}, 
$\sum_{i\in 1..n}\rgr_i\rmul\Gamma'_i\rsum\Delta\rord\Gamma$ where $\IsWFExp{\Gamma'_i}{\bvPair{\cctx_i}{\val_i}}{\Graded{\tau_i}{\rone}}$
 and $\cctx_i\rord\cctx_{\Gamma'_i}$, by rule \refToRule{t-val}. \\
Let  $\cctx'_i=\cctx_{\Delta_i}\rmul\GraphStar{\env}$ and 
$\env_i=\EnvElem{\x_1}{\cctx'_i(\x_1)}{\bvPair{\cctx_1}{\val_1}},\ldots,\EnvElem{\x_n}{\cctx'_i(\x_n)}{\bvPair{\cctx_n}{\val_n}}$ for  $i=1,2$.
From
$\Delta_1 \ctxsum \Delta_2\ctxord\Delta$ and  $\CCTX{\Delta}\ctxmul\GraphStar{\env}\ctxord\CCTX{\env}$ we get
$\cctx_{\Delta_1\rsum\Delta_2}\rmul\GraphStar{\env} \ctxord\CCTX{\env}$. Therefore
$\env_1\rsum\env_2\rord\env$. Let  $\Gamma_i=\VarGradeType{\x_1}{\cctx'_i(\x_1)}{\tau_1},\dots,\VarGradeType{\x_n}{\cctx'_i(\x_n)}{\tau_n}$, since
$\GraphStar{\env}=\GraphStar{\env_i}$  and by definition
$\CCTX{\env_i}\rmul\GraphStar{\env}=\CCTX{\env_i}\rmul\GraphStar{\env_i}=\CCTX{\env_i}$, applying \cref{lem:splitEnv} we get $\IsWFEnv{\Gamma_i}{\env_i}{\Delta_i}$. 

\end{proof}

\begin{proofOf}{theo:adjustV} 
Let $\env=\EnvElem{\x_1}{\rgr_1}{\bvPair{\cctx_1}{\val_1}},\ldots,\EnvElem{\x_n}{\rgr_n}{\bvPair{\cctx_n}{\val_n}}$ and $\Gamma=\VarGradeType{\x_1}{\rgr_1}{\tau_1},\dots,\VarGradeType{\x_n}{\rgr_n}{\tau_n}$.
From rule \refToRule{t-res}, for some $\Delta$, we get $\IsWFExp{\Delta}{\ve}{\Graded{\tau}{\rgr}}$ and $\IsWFEnv{\Gamma}{\env}{\Delta}$. From rule \refToRule{t-env}, 
$\sum_{i\in 1..n}\rgr_i\rmul\Gamma_i\rsum\Delta\rord\Gamma$ where $\IsWFExp{\Gamma_i}{\bvPair{\cctx_i}{\val_i}}{\Graded{\tau_i}{\rone}}$
 and $\cctx_i\rord\cctx_{\Gamma_i}$, by rule \refToRule{t-val}. \\
By induction on the syntax of $\ve$. We show only relevant cases.
\begin{description}
\item[$\ve=\x$]  From $\IsWFExp{\Delta}{\x}{\Graded{\tau}{\rgr}}$ and $\Delta\rord\Gamma$ then $\x=\x_k$ for some $k\in 1..n$ and $\tau_k=\tau$ and $\rgr_k=\sgr$ and
 $\IsWFExp{\Delta}{\x_k}{\Graded{\tau}{\rgr}}$. By  \refItem{lem:inversion}{x} $\Delta=\VarGradeType{\x_k}{\rgr'}{\tau_k},\Delta'$ with $\rgr\rord\rgr'$. 
 Let $\sgr'=\sum_{i\in 1..n}\sgr_i$ where $\sgr_i=\rgr_i\rmul\rgr_{i,k}$ and\\
 \centerline{
 $\env'=\EnvElem{\x_1}{\rgr_1}{\bvPair{\cctx_1}{\val_1}},\ldots,\EnvElem{\x_k}{\sgr'}{\bvPair{\cctx_k}{\val_k}},\ldots,\EnvElem{\x_n}{\rgr_n}{\bvPair{\cctx_n}{\val_n}}$.}
 We want to prove that  $\IsWFConf{\Gamma'}{\bvPair{\cctx_k}{\val}}{\env'}{\Graded{\tau_k}{\rgr}}$
where $\Gamma'$ is $\Gamma$ in which $\rgr_k$ is substituted by $\sgr'$. 
Let  $\Delta''=\rgr\rmul\Gamma_k$, from
  $\IsWFExp{\Gamma_k}{\bvPair{\cctx_k}{\val_k}}{\Graded{\tau}{\rone}}$ we get $\IsWFExp{\Delta''}{\bvPair{\cctx_k}{\val_k}}{\Graded{\tau}{\rgr}}$.
 Let $\Gamma_0=\sum_{i\in 1..n\ \wedge\  i\neq k}\rgr_i\rmul\Gamma_i$.
From $\sum_{i\in 1..n}\rgr_i\rmul\Gamma_i\rsum\Delta=\Gamma''\rsum\sgr\rmul\Gamma_k\rsum\Delta\rord\Gamma$ we have that $\rgr\rsum\sgr'\rord\sgr$.
Therefore $\Delta''\rsum\sgr'\rmul\Gamma_k=(\rgr\rsum\sgr')\rmul\Gamma_k\rord\sgr\rmul\Gamma_k$ and 
$\Delta''\rsum\Gamma''\rsum\sgr'\rmul\Gamma_k\rord\sum_{i\in 1..n}\rgr_i\rmul\Gamma_i\rsum\Delta$.  Since $\Gamma(\x_i)=\Gamma'(\x_i)$ for all $i\in 1..n$ and $i\neq k$
and $(\Delta''\rsum\Gamma''\rsum\sgr'\rmul\Gamma_k)(\x)=\sgr'$ we have that $\Delta''\rsum\Gamma''\rsum\sgr'\rmul\Gamma_k\rord\Gamma'$. Applying rule
\refToRule{t-res} we get   $\IsWFConf{\Gamma'}{\bvPair{\cctx_k}{\val}}{\env'}{\Graded{\tau_k}{\rgr}}$. Finally from \refItem{theo:no-waste-typing}{val} we get 
$\NoWaste{\rgr\ctxmul\cctx_k}{\env'}$, so rule \refToRule{var} can be applied and $\NWreduce{\Conf{\x}{\env}}{\rgr}{\Conf{\bvPair{\cctx_k}{\val}}{\env'}}$.

\smallskip\noindent
For  function and unit values the result derives from \refItem{theo:no-waste-typing}{val}.

\smallskip
\item[$\ve=\PairExp{\sgr_1}{\ve_1}{\ve_2}{\sgr_2}$] From $\IsWFExp{\Delta}{\PairExp{\sgr_1}{\ve_1}{\ve_2}{\sgr_2}}{\Graded{\tau}{\rgr}}$ for some $\Delta$. 
By \refItem{lem:inversion}{pair} 
for some $\Delta_1$ and $\Delta_2$ and $\rgr'$  we get  $\rgr'\ctxmul(\Delta_1 \ctxsum \Delta_2)\ctxord\Delta$, $\tau = \PairT{\GradedInd{\tau}{1}{\rgr}}{\GradedInd{\tau}{2}{\sgr}}$
 and $\rgr \rord \rgr'$ and $\IsWFExp{\Delta_i}{\ve_i}{\GradedInd{\tau}{i}{\sgr}}$ for $i=1,2$.
From  $\rgr'\ctxmul(\Delta_1 \ctxsum \Delta_2)\ctxord\Delta$ and $\rgr \rord \rgr'$ we get $(\rgr\ctxmul\Delta_1)\ctxsum (\rgr\ctxmul\Delta_2)\ctxord\Delta$
and from $\IsWFEnv{\Gamma}{\env}{\Delta}$ and \cref{lem:ev-split} we have that
there are $\env_1$ and $\env_2$ such that 
$\env_1\envSum\env_2\ctxord\env$ and 
$\IsWFEnv{\Gamma'_1}{\env_1}{\rgr\rmul\Delta_1}$ and 
$\IsWFEnv{\Gamma'_2}{\env_2}{\rgr\rmul\Delta_2}$ and these are consistent.  
Therefore from  $\IsWFExp{\rgr\rmul\Delta_i}{\ve_i}{\GradedInd{\tau}{i}{\rgr\rmul\sgr}}$ and 
$\IsWFEnv{\Gamma'_i}{\env_i}{\rgr\rmul\Delta_i}$, applying rule \refToRule{T-conf} we get $\IsWFConf{\Gamma'_i}{\ve_i}{\env_i}{\Graded{\tau}{\GradedInd{\tau}{i}{\rgr\rmul\sgr}}}$.
By inductive hypotheses 
$ \red{\ve_i}{\env_i}{\rgr\rmul\sgr_i}{\bvPair{\cctx''_i}{\val_i}}{\env_i'}$ and therefore 
$\red{\PairExp{\sgr_1}{\ve_1}{\ve_2}{\sgr_2}}{\env}{\rgr}{\bvPair{\cctx''_1\ctxsum\cctx''_2}{\PairExp{\sgr_1}{\val_1}{\val_2}{\sgr_2}}}{(\env_1'\envSum\env_2')}$. 

 \end{description}
\end{proofOf}

\bigskip
\begin{lemma}[Inversion for possibly diverging expressions]\label[lemma]{lem:inversionE}\
\begin{enumerate}
\item \label{lem:inversion:return} If $\IsWFExp{\Gamma}{\Return{\ve}}{\Graded{\tau}{\rgr}}$ then $\IsWFExp{\Gamma'}{\ve}{\Graded{\tau}{\rgr'}}$ with $\rgr\rord\rgr'$, $\Gamma'\ctxord\Gamma$ and $\rgr'\neq\rzero$.
\item \label{lem:inversion:let} If $\IsWFExp{\Gamma}{\Let{\x}{\e_1}{\e_2}}{\Graded{\tau}{\rgr}}$ then $\IsWFExp{\Gamma_1}{\e_1}{\GradedInd{\tau}{1}{\rgr}}$ and $\IsWFExp{\Gamma_2,\VarGradeType{\x}{\rgr_1}{\tau_1}}{\e_2}{\Graded{\tau}{\rgr'}}$ and $\Gamma_1\ctxsum\Gamma_2 \ctxord\Gamma$ and $\rgr\rord\rgr'$.
\item \label{lem:inversion:app} If $\IsWFExp{\Gamma}{\App{\ve_1}{\ve_2}}{\Graded{\tau_2}{\rgr}}$ then we have $\Gamma_1\rsum\Gamma_2\ctxord\Gamma$ and $\rgr \rord \rgr'\rmul\rgr_2$ such that $\IsWFExp{\Gamma_1}{\ve_1}{\Graded{(\funType{\GradedInd{\tau}{1}{\rgr}}{\sgr}{\GradedInd{\tau}{2}{\rgr}})}{\rgr'\rsum\rgr'\rmul\sgr}}$ and $\IsWFExp{\Gamma_2}{\ve_2}{\Graded{\tau_1}{\rgr'\rmul\rgr_1}}$ and $\rgr'\neq\rzero$.

\item \label{lem:inversion:match-unit} If $\IsWFExp{\Gamma}{\MatchUnit{\ve}{\e}}{\Graded{\tau}{\rgr}}$ then we have $\Gamma_1\ctxsum\Gamma_2\ctxord\Gamma$ and $\rgr \rord \tgr$ such that $\IsWFExp{\Gamma_1}{\ve}{\Graded{\Unit}{\rgr'}}$ and $\IsWFExp{\Gamma_2}{\e}{\Graded{\tau}{\tgr}}$.
\item \label{lem:inversion:match-p} If $\IsWFExp{\Gamma}{\Match{\x}{\y}{\ve}{\e}}{\Graded{\tau}{\rgr}}$ then we have $\Gamma_1\ctxsum\Gamma_2\ctxord\Gamma$ and $\rgr\rord\tgr$ such that $\IsWFExp{\Gamma_1}{\ve}{\Graded{(\PairT{\Graded{\tau_1}{\rgr_1}}{\Graded{\tau_2}{\rgr_2}})}{\sgr}}$ and $\IsWFExp{\Gamma_2,\VarGradeType{\x}{\sgr\rmul\rgr_1}{\tau},\VarGradeType{\y}{\sgr\rmul\rgr_2}{\tau_2}}{\e}{\Graded{\tau}{\tgr}}$ and $\sgr\neq\rzero$.
\item \label{lem:inversion:match-in} If $\IsWFExp{\Gamma}{\Case{\ve}{\x}{\e_1}{\x}{\e_2}}{\Graded{\tau}{\rgr}}$ then we have $\Gamma_1\ctxsum\Gamma_2\ctxord\Gamma$ and $\rgr\rord\sgr$ such that $\IsWFExp{\Gamma_1}{\ve}{\Graded{(\SumT{\Graded{\tau_1}{\rgr_1}}{\Graded{\tau_2}{\rgr_2}})}{\rgr'}}$, $\IsWFExp{\Gamma_2,\VarGradeType{\x}{\rgr'\rmul\rgr_1}{\tau_1}}{\e_1}{\Graded{\tau}{\sgr}}$, $\IsWFExp{\Gamma_2,\VarGradeType{\x}{\rgr'\rmul\rgr_2}{\tau_2}}{\e_2}{\Graded{\tau}{\sgr}}$ and $\rgr'\neq\rzero$.
\end{enumerate}
\end{lemma}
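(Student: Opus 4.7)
The plan is to prove each of the six clauses by induction on the derivation of the hypothesis judgment $\IsWFExp{\Gamma}{\e}{\Graded{\tau}{\rgr}}$. The typing rules for possibly diverging expressions in \cref{fig:typing} are essentially syntax-directed, with the sole exception of the subsumption rule \refToRule{t-sub}, so for each construct only two cases arise in the induction: the case where the last rule is the syntax-directed rule corresponding to the construct, and the case where the last rule is \refToRule{t-sub}.

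For the syntax-directed case, the premises of the conclusion are literally the premises of the rule that was applied, so one simply reads them off. For instance, in clause \refItem{lem:inversionE}{return}, if the last rule is \refToRule{t-ret} applied to $\Return\ve$, we have directly $\IsWFExp{\Gamma}{\ve}{\Graded{\tau}{\rgr}}$ with $\rgr\ne\rzero$, and we take $\Gamma' = \Gamma$ and $\rgr' = \rgr$, using reflexivity of $\ctxord$ and $\rord$. The other clauses proceed analogously, reading off the premises of \refToRule{t-let}, \refToRule{t-app}, \refToRule{t-match-u}, \refToRule{t-match-p} and \refToRule{t-match-in} respectively.

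For the subsumption case, the derivation ends with an application of \refToRule{t-sub} from some judgment $\IsWFExp{\widetilde\Gamma}{\e}{\Graded{\tau}{\widetilde\rgr}}$ with $\rgr \rord \widetilde\rgr$ and $\widetilde\Gamma \ctxord \Gamma$. The inductive hypothesis yields the desired decomposition for $\widetilde\Gamma$ and $\widetilde\rgr$; we then compose this with the subsumption facts using transitivity of $\ctxord$ (and monotonicity of $\ctxsum$ with respect to $\ctxord$) to obtain the same decomposition for $\Gamma$, and transitivity of $\rord$ for the grade inequality. For example, in clause \refItem{lem:inversionE}{let}, the inductive hypothesis gives $\Gamma_1 \ctxsum \Gamma_2 \ctxord \widetilde\Gamma$ and $\widetilde\rgr \rord \rgr'$, and then $\widetilde\Gamma \ctxord \Gamma$ together with transitivity yields $\Gamma_1 \ctxsum \Gamma_2 \ctxord \Gamma$, while $\rgr \rord \widetilde\rgr \rord \rgr'$ gives the grade bound. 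The non-zero side conditions (e.g.\ $\rgr'\ne\rzero$ in \refItem{lem:inversionE}{return}, $\sgr\ne\rzero$ in \refItem{lem:inversionE}{match-p}, etc.) are untouched by \refToRule{t-sub} and thus carry through unchanged from the inductive hypothesis.

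The only minor obstacle is bookkeeping in clauses \refItem{lem:inversionE}{app} and \refItem{lem:inversionE}{match-p}, where the original rule's consequence has a grade of the form $\rgr'\rmul\rgr_2$ or $\tgr$ that constrains the outer grade $\rgr$; but again the argument is simply that $\rgr \rord \widetilde\rgr$ composed with the bound from the inductive hypothesis yields the required inequality. No new ideas are needed beyond transitivity and monotonicity, so the proof is routine; its main purpose is to package the inversion information in a form usable by the soundness proofs for expressions.
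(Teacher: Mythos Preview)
The paper states \cref{lem:inversionE} without proof, treating it as a routine inversion lemma; your plan is exactly the standard argument one expects for such a result, and it is correct. The induction on the typing derivation with the two cases (syntax-directed rule versus \refToRule{t-sub}) handled via transitivity of $\rord$ and $\ctxord$ is precisely what is needed here, and matches the style of the analogous (also unproved) \cref{lem:inversion} for value expressions.
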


\begin{lemma}[Environment extension]\label{lem:env-ext}
If $\IsWFEnv{\Gamma_1}{\env_1}{\Delta_1}$ and $\IsWFConf{\Gamma_2}{\bv}{\env_2}{\Graded\tau\rgr}$ and these are consistent,  then 
$\env = \env_1\envSum\env_2$ and $\Gamma = \Gamma_1\ctxsum\Gamma_2$, then  
$\IsWFEnv{\Gamma,\VarGradeType\x\rgr\tau}{\env,\EnvElem\x\rgr\bv}{\Delta_1,\VarGradeType\x\rgr\tau}$. 
\end{lemma}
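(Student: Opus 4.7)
The plan is to unpack both hypotheses, use the consistency assumption to match shared bindings, and then apply rule \refToRule{t-env} to the extended environment.

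First, I would invert the hypotheses. From $\IsWFEnv{\Gamma_1}{\env_1}{\Delta_1}$, an application of \refToRule{t-env} gives, for $\env_1 = \EnvElem{\x^1_1}{\rgr^1_1}{\bv^1_1},\ldots,\EnvElem{\x^1_n}{\rgr^1_n}{\bv^1_n}$, derivations $\IsWFExp{\Theta^1_i}{\bv^1_i}{\Graded{\tau^1_i}{\rgr^1_i}}$ and the inequality $\sum_i \Theta^1_i \ctxsum \Delta_1 \ctxord \Gamma_1$. From $\IsWFConf{\Gamma_2}{\bv}{\env_2}{\Graded{\tau}{\rgr}}$, rule \refToRule{t-res} yields $\IsWFExp{\Delta_2}{\bv}{\Graded{\tau}{\rgr}}$ together with $\IsWFEnv{\Gamma_2}{\env_2}{\Delta_2}$; inverting this last judgment as above gives analogous derivations $\IsWFExp{\Theta^2_j}{\bv^2_j}{\Graded{\sigma_j}{\sgr_j}}$ for $\env_2$ and $\sum_j \Theta^2_j \ctxsum \Delta_2 \ctxord \Gamma_2$.

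Next, I would use consistency to match the typings of shared bindings. For $\z\in\dom{\env_1}\cap\dom{\env_2}$ with value $\bvPair{\cctx}{\val}$ and type $\tau_\z$, consistency provides a common $\Delta_\z$ with $\IsWFExp{\Delta_\z}{\val}{\Graded{\tau_\z}{\rone}}$. Inverting \refToRule{t-val} on the original derivations then lets me take $\Theta^1_\z = \rgr_\z^1 \rmul \Delta_\z$ and $\Theta^2_\z = \rgr_\z^2 \rmul \Delta_\z$, so that $\Theta^1_\z \ctxsum \Theta^2_\z = (\rgr_\z^1 \rsum \rgr_\z^2) \rmul \Delta_\z$, which by \refToRule{t-val} again types $\bvPair{\cctx}{\val}$ at grade $\rgr_\z^1 \rsum \rgr_\z^2$ --- precisely the grade of $\z$ in $\env_1 \envSum \env_2$. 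For variables living in only one of the two domains, I just reuse the corresponding $\Theta$ context unchanged. This produces, for each binding of $\env$, a typing derivation at the appropriate grade; call the resulting contexts $\Theta_k$.

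Finally, I would apply rule \refToRule{t-env} to $\env, \EnvElem{\x}{\rgr}{\bv}$, taking the $\Theta_k$ above as premises for the bindings of $\env$ and $\Delta_2$ as the premise for the new binding (since $\IsWFExp{\Delta_2}{\bv}{\Graded{\tau}{\rgr}}$ from step one). The side condition to check is
\[ \textstyle\sum_k \Theta_k \ctxsum \Delta_2 \ctxsum (\Delta_1, \VarGradeType{\x}{\rgr}{\tau}) \ctxord \Gamma_1 \ctxsum \Gamma_2, \VarGradeType{\x}{\rgr}{\tau}. \]
By the matching above, $\sum_k \Theta_k = \sum_i \Theta^1_i \ctxsum \sum_j \Theta^2_j$, and summing the two inequalities from step one gives $\sum_i \Theta^1_i \ctxsum \sum_j \Theta^2_j \ctxsum \Delta_1 \ctxsum \Delta_2 \ctxord \Gamma_1 \ctxsum \Gamma_2$ by monotonicity of $\ctxsum$; adjoining $\VarGradeType{\x}{\rgr}{\tau}$ to both sides yields the required inequality.

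The main obstacle is the bookkeeping around shared variables: ensuring that when $\env_1$ and $\env_2$ share a binding, the two typing contexts for that value scale to the same $\Delta$ and combine additively in exactly the way needed to reconstruct a valid \refToRule{t-val} premise at the summed grade. This is precisely what the consistency hypothesis is designed to provide, so once the inversion and consistency steps are laid out carefully, the construction and the final inequality follow by monotonicity of $\ctxsum$ and transitivity of $\ctxord$.
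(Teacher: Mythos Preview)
Your proposal is correct and follows essentially the same approach as the paper's proof: invert \refToRule{t-res} and \refToRule{t-env} on both hypotheses, use consistency to obtain a common base context $\Delta_\z$ for each shared binding so that the premise contexts combine additively to $(\rgr_\z^1\rsum\rgr_\z^2)\rmul\Delta_\z$, sum the two resulting inequalities, and reapply \refToRule{t-env} with $\Delta_2$ as the premise for the fresh binding $\x$. The paper organizes the same argument by partitioning $\dom{\env}$ into the three regions (shared, $\env_1$-only, $\env_2$-only) and writing the two inequalities explicitly before summing, but the structure and key steps are identical.
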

\begin{proof}
By consistency, we know that 
$\env = \env_1\envSum\env_2$ and $\Gamma = \Gamma_1\ctxsum\Gamma_2$ are well defined. 
Let 
$\env = \EnvElem{\x_1}{\rgr_1}{\bvPair{\cctx_1}{\val_1}},\ldots,\EnvElem{\x_n}{\rgr_n}{\bvPair{\cctx_n}{\val_n}}$ and 
$\Gamma = \VarGradeType{\x_1}{\rgr_1}{\tau_1},\ldots,\VarGradeType{\x_n}{\rgr_n}{\tau_n}$ and suppose, without loss of generality, that 
there are $\leq h\leq k \leq n+1$ such that 
$\dom{\env_1}\cap\dom{\env_2} = \{\x_1,\ldots,\x_{h-1}\}$, $\dom{\env_1}\setminus\dom{\env_2} = \{\x_h,\ldots,\x_{k-1}\}$ and $\dom{\env_2}\setminus\dom{\env_1} = \{\x_k,\ldots,\x_n\}$. 
Hence, for all $i \in 1..h-1$, we have 
$\rgr_i = \rgr'_i\rsum\rgr''_i$ where 
$\Gamma_1(\x_i) = \Pair{\rgr'_i}{\tau_i}$ and $\Gamma_2(\x_i) = \Pair{\rgr''_i}{\tau_i}$ and similarly for the environments. 
By Rule \refToRule{t-conf}, we  know that 
$\IsWFEnv{\Gamma_2}{\env_2}{\Delta_2}$ and $\IsWFExp{\Delta_2}{\bv}{\Graded\tau\rgr}$.
By consistency and Rules \refToRule{t-env} and \refToRule{t-val},  we deduce that, 
for all $i \in 1..n$, $\IsWFExp{\Theta_i}{\val_i}{\Graded{\rgr_i}{\tau_i}}$ and 
the following inequalities hold: 
\begin{align*} 
\Delta_1 \ctxsum \sum_{i = 1}^{h-1} \rgr'_i\rmul\Theta_i  \ctxsum \sum_{i = h}^{k-1} \rgr_i\rmul\Theta_i &\ctxord \Gamma_1  \\ 
\Delta_2 \ctxsum \sum_{i = 1}^{h-1} \rgr''_i\rmul\Theta_i \ctxsum \sum_{i = k}^n \rgr_i\rmul\Theta_i     &\ctxord\Gamma_2 
\end{align*} 
Combining these inequalities, we derive 
\[ \Delta_1\ctxsum\Delta_2 \ctxsum \sum_{i = 1}^n \rgr_i\rmul\Theta_i \ctxord \Gamma_1\ctxsum\Gamma_2 = \Gamma \] 
Therefore, since $\x\notin\dom{\env}$, by Rule \refToRule{t-env} we immediately get the thesis. 
\end{proof}

\begin{proofOf}{theo:progress} We have $\IsWFConf{\Gamma}{\e}{\env}{\Graded\tau\rgr}$. By rule \refToRule{t-conf} we have $\IsWFEnv{\Gamma}{\env}{\Delta}$ and $\IsWFExp{\Delta}{\e}{\Graded\tau\rgr}$.
Proof is by case analysis on $\e$. We show only one case for the sake of brevity, since other ones are similar.

$\e=\Let{\x}{\e_1}{\e_2}$
By \cref{lem:inversionE}(\ref{lem:inversion:let}) we have $\IsWFExp{\Delta_1}{\e_1}{\GradedInd{\tau}{1}{\rgr}}$ and $\IsWFExp{\Delta_2,\VarGradeType{\x}{\rgr_1}{\tau_1}}{\e_2}{\Graded{\tau}{\rgr'}}$ and $\Delta_1\ctxsum\Delta_2 \ctxord\Delta$ and $\rgr\rord\rgr'$. By \cref{lem:ev-split} we have $\env_1$ and $\env_2$ such that 
$\env_1\envSum\env_2\ctxord\env$ and 
$\IsWFEnv{\Gamma_1}{\env_1}{\Delta_1}$ and 
$\IsWFEnv{\Gamma_2}{\env_2}{\Delta_2}$. By rule \refToRule{t-conf} we have $\IsWFConf{\Gamma_1}{\e_1}{\env_1}{\GradedInd\tau{1}\rgr}$.
We have two cases:
\begin{itemize}
\item $\not\exists$ $\Conf{\bv}{\env_1'}$ such that $\NWreduce{\Conf{\e_1}{\env_1}}{\rgr_1}{\Conf{\bv}{\env_1'}}$
By applying rule \refToRule{let-div1} we have the thesis.

\item $\exists$ $\Conf{\bv}{\env_1'}$ such that $\NWreduce{\Conf{\e_1}{\env_1}}{\rgr_1}{\Conf{\bv}{\env_1'}}$
 
Since we have $\NWreduce{\Conf{\e_1}{\env_1}}{\rgr_1}{\Conf{\bv}{\env_1'}}$ we also have $\NWreduce{\Conf{\e_1}{\envE_1}}{}{\Conf{\bv}{\envE'_1}}$ with $\erase{\env_1}=\envE_1$ and $\erase{\env_1'}=\envE_1'$. By \cref{theo:adjustE} we have $\IsWFConf{\Gamma_1'}{\bv}{\env_1'}{\GradedInd\tau{1}\rgr}$.
Since $\IsWFEnv{\Gamma_2}{\env_2}{\Delta_2}$ and $\IsWFConf{\Gamma_1'}{\bv}{\env_1'}{\GradedInd\tau{1}\rgr}$ and $\hat\env = \env_1\envSum\env_2$ and $\Gamma' = \Gamma_1'\ctxsum\Gamma_2$, then  
$\IsWFEnv{\Gamma',\VarGradeType\x{\rgr_1}{\tau_1}}{\hat\env,\EnvElem\x{\rgr_1}\bv}{\Delta_2,\VarGradeType\x{\rgr_1}{\tau_1}}$.  By renaming we have $\IsWFEnv{\Gamma',\VarGradeType{\x'}{\rgr_1}{\tau_1}}{\hat\env,\EnvElem{\x'}{\rgr_1}\bv}{\Delta_2,\VarGradeType{\x'}{\rgr_1}{\tau_1}}$. By \cref{lem:subst} we have $\IsWFExp{\Delta_2,\VarGradeType{\x'}{\rgr_1}{\tau_1}}{\Subst{\e_2}{\x'}{\x}}{\Graded{\tau}{\rgr'}}$. By \refToRule{t-conf} and \refToRule{t-sub} we have $\IsWFConf{\Gamma',\VarGradeType{\x'}{\rgr_1}{\tau_1}}{\Subst{\e_2}{\x'}{\x}}{\hat\env,\EnvElem{\x'}{\rgr_1}\bv}{\Graded\tau{\rgr}}$.

If $\not\exists$ $\Conf{\bv_2}{\env_3}$ such that $\NWreduce{\Conf{\Subst{\e_2}{\x'}{\x}}{\AddToEnv{\hat\env}{\x'}{\rgr_1}{\bv}}}{\rgr}{\Conf{\bv_2}{\env_3}}$ then by rule \refToRule{let/let-div2} we have the thesis, otherwise we have an absurd, since if $\Conf{\bv_2}{\env_3}$ would exists then, by rule \refToRule{let/let-div2} also $\NWreduce{\conf}{\rgr}{\Conf{\bv_2}{\env_3}}$.
\end{itemize}
\end{proofOf}


\begin{proofOf}{theo:adjustE} 
\newcommand{\mioBv}{{\val_1}^{\cctx_1}}
\newcommand{\mioBvv}{{\val}^{\cctx}}

By induction on the syntax of $\e$. We show the case $\e=\Let{\x}{\e_1}{\e_2}$. \\
From rule \refToRule{t-conf}, for some $\Delta$, we get $\IsWFExp{\Delta}{\e}{\Graded{\tau}{\rgr}}$ and $\IsWFEnv{\Gamma}{\env}{\Delta}$. 
From $\IsWFExp{\Delta}{\Let{\x}{\e_1}{\e_2}}{\Graded{\tau}{\rgr}}$ and  \cref{lem:inversionE}(\ref{lem:inversion:let}) we get 
$\IsWFExp{\Delta_1}{\e_1}{\GradedInd{\tau}{1}{\sgr}}$ and $\IsWFExp{\Delta_2,\VarGradeType{\x}{\sgr_1}{\tau_1}}{\e_2}{\Graded{\tau}{\rgr'}}$ and $\Delta_1\ctxsum\Delta_2 \ctxord\Delta$ and $\rgr\rord\rgr'$. 
From $\NWreduce{\Conf{\Let{\x}{\e_1}{\e_2}}{\envE}}{}{\Conf{\val}\envE'}$ we get
$\NWreduce{\Conf{\e_1}{\env_1}}{}{\Conf{\val'_1}{\envE_1'}}$ and $\NWreduce{\Conf{\Subst{\e_2}{\x'}{\x}}{(\envE_1'\envSum\envE_2),\x:\val'_1}}{}{\Conf{\val}{\envE'}}$ with $\envE'_1\envSum\envE'=\envE$ 
with $\envE=\erase{\env}$. \\
From \cref{lem:ev-split}, there are $\env_1$ and $\env_2$ such that 
$\env_1\envSum\env_2\ctxord\env$ and 
$\IsWFEnv{\Gamma_1}{\env_1}{\Delta_1}$ and 
$\IsWFEnv{\Gamma_2}{\env_2}{\Delta_2}$ and $\env_1$ and $\env_2$ are consistent. By rule \refToRule{t-conf} we have $\IsWFConf{\Gamma_1}{\e_1}{\env_1}{\GradedInd\tau{1}\rgr}$.
Applying  the inductive hypothesis to
$\e_1$, with  $\NWreduce{\Conf{\e_1}{\env_1}}{}{\Conf{\val'_1}{\envE_1'}}$, we get that there are  $\env'_1$, $\cctx_1$  and $\Gamma'_1$ such that
\begin{enumerate}
\item  \label{let1} $\NWreduce{\Conf{\e_1}{\env'_1}}{\sgr_1}{\Conf{\mioBv}{\env'_1}}$ with  $\erase{\env'_1}=\envE'_1$ and 
\item  \label{let2} $\IsWFConf{\Gamma'_1}{\mioBv}{\env'_1}{\GradedInd\tau{1}\sgr}$ and
\item \label{let3}  and $\env_1$ and $\env'_1$ are consistent.
\end{enumerate}
From $\IsWFEnv{\Gamma_2}{\env_2}{\Delta_2}$, (\ref{let2}), (\ref{let3})  and \cref{lem:env-ext} we get that, 
\begin{enumerate}\setcounter{enumi}{3}
\item \label{let4} $\IsWFEnv{(\Gamma'_1\ctxsum\Gamma_2),\VarGradeType{\x'}{\sgr_1}{\tau_1}}{(\env'_1\ctxsum\env_2),\EnvElem{\x'}{\sgr_1}{\bvPair{\cctx_1}{\val_1}   } } {\Delta_2,\VarGradeType{\x'}{\sgr_1}{\tau_1}}$
\end{enumerate}
Let $\e'_2=\Subst{\e_2}{\x'}{\x}$,
from $\IsWFExp{\Delta_2,\VarGradeType{\x}{\sgr_1}{\tau_1}}{\e_2}{\Graded{\tau}{\rgr'}}$ and \cref{lem:subst} we get 
$\IsWFExp{\Delta_2,\VarGradeType{\x'}{\sgr_1}{\tau_1}}{\Subst{\e_2}{\x'}{\x}}{\Graded{\tau}{\rgr'}}$ and
$\IsWFExp{\Delta_2,\VarGradeType{\x'}{\sgr_1}{\tau_1}}{\Subst{\e_2}{\x'}{\x}}{\Graded{\tau}{\rgr}}$ by rule \refToRule{T-sub}.
Therefore from (\ref{let4}) we get 
\begin{enumerate}\setcounter{enumi}{4}
\item \label{let5}
$\IsWFConf{(\Gamma'_1\ctxsum\Gamma_2),\VarGradeType{\x'}{\sgr_1}{\tau_1}}{\e'_2}{(\env'_1\ctxsum\env_2),\EnvElem{\x'}{\sgr_1}{\bvPair{\cctx_1}{\val_1}   } }{\Graded{\tau}{\rgr}}$
\end{enumerate}
Let $\env'_2=(\env'_1\ctxsum\env_2),\EnvElem{\x'}{\sgr_1}{\bvPair{\cctx_1}{\val_1}}$. Applying  the inductive hypothesis to $\e'_2$,  (\ref{let5}) with
$\NWreduce{\Conf{\Subst{\e'_2}{\x'}{\x}}{(\envE_1'\envSum\envE_2),\x:\val'_1}}{}{\Conf{\val}{\envE'}}$,
we have that there are  $\env'$,  $\cctx$  and $\Gamma'$ such that
\begin{enumerate}\setcounter{enumi}{5}
\item \label{let6}   $\NWreduce{\Conf{\e'_2}{\env'_2}}{\rgr}{\Conf{{\val}^{\cctx}}{\env'}}$ with  $\erase{\env'}=\envE'$ and
\item\label{let7}   $\IsWFConf{\Gamma'}{\val^{\cctx}}{\env'}{\Graded\tau\rgr}$ and 
\item\label{let8}   $\env'_2$ and $\env'$ are consistent.
\end{enumerate}
From (\ref{let1}) and (\ref{let6}) and  $\env_1\rsum\env_2\rord\env$ applying rule \refToRule{let} we get
$\NWreduce{\Conf{\Let{\x}{\e_1}{\e_2}}{\env}}{\rgr}{\Conf{{\val}^{\cctx}}{\env'}}$.
Moreover, from  $\env_1$ and  $\env_2$ consistent with $\env$, (\ref{let3}) and definition of $\env'_2$ we have that $\env$ is consistent with $\env'_2$.
Finally, (\ref{let7}) and (\ref{let8}) conclude the proof.

\end{proofOf}

\end{document}